\documentclass{article}
\usepackage[utf8]{inputenc}
\usepackage[a4paper, top=2.2cm, bottom=2.5cm, left=2cm, right=2cm]
{geometry}
\usepackage{amsmath,amssymb,amsfonts,bm,amscd,mathtools}
\usepackage{graphicx}
\usepackage{caption, subcaption}
\usepackage{comment}
\usepackage{cite}
\usepackage{float}
\usepackage{appendix}
\usepackage{physics}
\usepackage{array}
\usepackage{xcolor}
\usepackage{amsthm}
\usepackage{hyperref}
\hypersetup{hidelinks}
\usepackage{authblk} 
\usepackage{microtype}

\setlength{\parskip}{1em}

\newcommand{\Z}{{\mathbb Z}}
\newcommand{\R}{{\mathbb R}}
\newcommand{\C}{{\mathbb C}}
\newcommand{\Q}{{\mathbb Q}}

\newcommand{\m}{{\mathrm{m}}}

\renewcommand{\d}{\partial}
\def\be{\begin{equation}}
\def\ee{\end{equation}}
\def\bea{\begin{align}}
\def\eea{\end{align}}

\newtheorem{theorem}{Theorem}
\newtheorem{corollary}{Corollary}
\newtheorem{lemma}[theorem]{Lemma}
\newtheorem{proposition}{Proposition}[section]
\newtheorem{conjecture}{Conjecture}
\newtheorem{definition}{Definition}
\newtheorem{example}{Example}[section]
\newtheorem{remark}{Remark}[section]

\numberwithin{corollary}{theorem}
\numberwithin{corollary}{proposition}

\title{$\hat{Z}$-TQFT, Surgery Formulas, and New Algebras }

\author{Pedro Guicardi}
\author{Mrunmay Jagadale}
\affil{California Institute of Technology, Pasadena, CA 91125, USA}
\date{\vspace{-1cm}}

\begin{document}
\maketitle

\begin{abstract}
    The $\hat{Z}$ invariants of three-manifolds introduced by Gukov--Pei--Putrov--Vafa have influenced many areas of mathematics and physics. However, their TQFT structure remains poorly understood. In this work, we develop a framework of decorated $\mathrm{Spin}$-TQFTs and construct one based on Atiyah--Segal-like axioms that computes the $\hat{Z}$ invariants. Central to our approach is a novel quantization of $SL(2,\mathbb{C})$ Chern--Simons theory and a $\mathbb{Q}$-extension of the algebra of observables on the torus, from which we obtain the torus state space of the $\hat{Z}$-TQFT. Using the torus state space and topological invariance, we uniquely determine the $\hat{Z}$ invariants for negative-definite plumbed manifolds. Within this TQFT framework, we establish gluing, rational surgery, partial surgery, satellite, and cabling formulas, as well as explicit closed-form expressions for Seifert manifolds and torus link complements. We also generalize these constructions to higher-rank gauge groups.
\end{abstract}

\tableofcontents

\section{Introduction}

A long-standing problem in mathematical physics has been the categorification of the $\mathrm{WRT}$ (Witten--Reshetikhin--Turaev) invariants\cite{Witten89, RT91, CraneFrenkel}. That is, the construction of a $4d$ TQFT that promotes the numerical invariants of 3-manifolds coming from Chern--Simons theory to $3d$ homology theories. This envisioned categorification would also contain powerful new 4-manifold invariants, which could be used to probe smooth phenomena in four-manifold topology. This categorification would be analogous to the categorification of the Jones polynomial $J_K(q)$ by Khovanov Homology \cite{Khovanovhomology, BarNatanKhhomology}.

The Jones polynomial is a Laurent polynomial in $ q = e^{\frac{2 \pi i }{k}}$, with integer coefficients, where $ k$ denotes the level. The integrality of these coefficients naturally suggests a categorification, as they can be interpreted as ranks of some homology groups. Indeed, the coefficient of $q^{n}$ in the Jones polynomial is precisely the graded rank of Khovanov homology in $q$-grading $n$. 

In contrast, the $\mathrm{WRT}$ invariant of a three-manifold is a function of the integer level $k$ with no evident grading structure and no immediate integrality property. These invariants are typically expressed as a finite sum with summation ranges that depend on the integer level $k$, rather than as an analytic function of $q$ \cite{Jeffrey92, FreedGompf}. To interpret $q$ as an honest grading of a homology theory, one must first re-express the three-manifold invariant as an analytic function of $q$ rather than $k$. This reparametrization requires analytic continuation in $k$, and one needs this analytic continuation to produce invariants with integer coefficients, which may then be interpreted as ranks of (co)homology groups. 

The $q$-series valued invariants introduced in \cite{GPPV}, $\hat{Z}_{\alpha}(M,q)$, have integer coefficients and allow one to express the $\mathrm{WRT}$ invariant as a finite linear combination of these $\hat{Z}_{\alpha}$ invariants evaluated at $q= e^{\frac{2 \pi i}{k}}$. Moreover, the invariants $\hat{Z}_\alpha(M,q)$ have a physical definition via M-theory, where they are defined as supersymmetric indices of $M5$-branes wrapping $S^1\times D^2\times M$ in the presence of a certain supersymmetric background in the 11-dimensional spacetime. Crucially, these supersymmetric indices are graded traces over BPS cohomologies $\mathcal{H}_\alpha (M)$. In other words, the $\hat{Z}_\alpha(M,q)$ invariants can be integrated as graded Euler characteristics of an underlying cohomology theory of three-manifolds whose vector spaces are spanned by BPS bound states of $M2$-$M5$-branes. Although the $\hat{Z}_\alpha(M,q)$ invariants are computable, the BPS vector spaces are difficult to compute. To develop techniques for computing these BPS vector spaces, a structural understanding of $\hat{Z}_\alpha(M,q)$ invariants is required, particularly the TQFT structure that computes them.

Preliminary versions of the $\hat{Z}$-TQFT were studied in \cite{GM, MJ}. In \cite{GM}, two-variable series associated to knot complements were introduced as $\hat{Z}$ invariants for the knot complement, along with their gluing rules. In \cite{MJ},  the vector space associated with $T^{2}$ and the $SL(2,\Z)$ action on it was proposed. However, many questions about the structure of the $\hat{Z}$-TQFT remain open. 

 The $\hat{Z}$ invariant for closed three-manifolds requires an additional label whose interpretation has evolved over time. Originally, this label was understood in \cite{GPPV} as an abelian $SL(2,\C)$ flat connection on the closed three-manifold. Later, \cite{GM} proposed that this label should instead be a $\mathrm{Spin}^{c}$-structure. While the sets of $\mathrm{Spin}^{c}$-structures and abelian flat connections on a three-manifold have the same cardinality, their behavior under diffeomorphisms is different. A $\mathrm{Spin}^{c}$-structure on a three-manifold can be decomposed into a $\mathrm{Spin}$-structure together with an element of $H^{1}(\cdot,\Z)$. In this paper, we present this additional label as a $\mathrm{Spin}$-structure and an element of $H^{1}(\cdot,\Q/\Z)$. For rational homology spheres, this distinction doesn't show up since the groups are isomorphic. However, for three-manifolds with boundary, the $H^{1}(\cdot, \Q/\Z)$ formulation enables us to write down functorial cutting and gluing rules. 

Along with developing the cutting and gluing rules for the $\hat{Z}$-TQFT, we describe the TQFT assignments to various manifolds. The vector space assigned to the torus plays a particularly important role due to its close connection with the category of line operators in the $\hat{Z}$-TQFT. We construct this vector space from a novel quantization of the moduli space of $SL(2,\C)$ flat connections on the torus. The resulting vector space is given by an appropriate completion of the following algebra,
$$ \frac{\langle X^{\lambda}, Y^{\mu}\vert \lambda, \mu \in \Q \rangle}{\{ X^{\lambda} Y^{\mu} = q^{-2\lambda \mu} Y^{\mu} X^{\lambda} \} } . $$
This vector space naturally comes with an $SL(2,\Z)$ action inherited from the moduli space of flat connections and a $H^{1}(T^{2}, \Q/\Z)$ grading. 

Another important TQFT assignment is what the theory assigns to tree-link complements, as a large class of three-manifolds can be built by gluing tree-link complements to each other. By a tree-link, we mean a link obtained by placing unknots on the vertices of a tree-graph and linking the unknot components according to the adjacency of the vertices. To a $N$-component tree-link with linking matrix $Q$, $S^{3} \setminus L_{Q}$, the $\hat{Z}$-TQFT assigns,
$$ \hat{Z}_{\mathbf{a}}(S^{3} \setminus L_{Q};, X, Y )  = \sum_{n \in \mathbf{a} + \Z^N } q^{-(n,Q n)}  \prod^N_{i=1} Y_{i}^{n_{i}} (X_i^{1/2} - X_i^{-1/2})^{1-\mathrm{deg}(v_{i}) } X^{(Q n)_{i}}_{i}, $$
where $\mathbf{a}$ is a combination of a $\mathrm{Spin}$-structure and an element of $H^{1}(S^{3} \setminus L_{Q},\Q/\Z)$ as explained in section $\ref{spinstructures_review}$. 

One can also construct a large class of three-manifolds by gluing two knot complements. Using the techniques developed in \cite{GM, S2}, one can write down the two-variable series, $F_{K}(X,q) = \sum_{m \in \frac{1}{2}+ \Z}  f_{K}^{m}(q) X^{m} $, for a large class of knot complements. We give a gluing formula for $\hat{Z}$ in terms of these two-variable series. For a three manifold $M$ obtained by gluing $S^3 \setminus \nu(K_1)$ and $S^3 \setminus \nu(K_2)$ along $T^{2}$ with mapping class group element $\gamma = \begin{pmatrix}
    r & a\\
    p & b
\end{pmatrix}$, we have,
$$  \hat{Z}_{\alpha}(M,q) = \sum_{m_1,m_2 \in \mathbb{Z}+\frac{1}{2}} f^{m_1}_{K_1}(q)\cdot f^{m_2}_{K_2}(q) \cdot q^{-\frac{1}{p}(b m_{1}^{2}+2 m_{1} m_{2}+ r m_{2}^{2})} \cdot \delta^\Z\left(\frac{m_{1}+m_{2} r}{p}+ \frac{\alpha}{p}  \right).  $$
where $f^{m_i}_{K_i}(q)$  are coefficients in the $X$-expansion of the two-variable series and  $\delta^\Z$ is the indicator function of $\Z$. For more details, see Proposition \ref{splicing_theorem}.

More generally, we explain and develop a formalism for gluing any two states along a torus boundary, given a $Spin^c$ structure and a mapping class group element (not just knot complements). This allows us to go even further and derive new formulas associated with the $\hat{Z}$ invariants. For instance, while in \cite{GM} a general formula was conjectured for 3-manifolds resulting from surgery on a knot, here we extend this to surgery on a link. That is, suppose $L$ is an $N$-component link, and $M = S^3_{\frac{p_1}{r_1},...,\frac{p_N}{r_N}}(L)$. Suppose additionally that $Q$ is the linking matrix of $L$ where the diagonals have the rational framings $p_i/r_i$. Define,
$$\mathcal{L}^\alpha_{Q}  (X^\mathbf{\mu}) = \delta^{Q \mathbb{Z}^N} (\mathbf{\mu} - \frac{\alpha}{\mathbf{r}}) \cdot q^{-(\mu,Q^{-1} \mu)}.$$    
Here, $Q$ is an invertible $N\times N $ matrix, $\mu$ and $\alpha$ are $N$-component vectors and vector division is taken element wise, $(\frac{\alpha}{\mathbf{r}})_i = (\frac{\alpha_i}{r_i})$. Then, we conjecture, 
 $$\Hat{Z}_\alpha (S^3_{\frac{p_1}{r_1},..., \frac{p_N}{r_N} } (L)) \cong \mathcal{L}^\alpha_{Q_f} \left( \prod^N_{i=1}(X_i^{\frac{1}{2r_i}} - X_i^{-\frac{1}{2r_i}}) \cdot F_L\right), $$
 where $\alpha$ takes values in $Spin^c(M)$, whose structure we describe in Section \ref{partial_surg_section}, and $F_L$ is the (Gukov--Manolescu) GM series of $L$. We prove this for the case of weakly negative definite plumbings. 

We use this to derive a formula for the $\hat{Z}$ invariants of all Seifert manifolds fibered over $S^2$ (of negative Euler number $\eta$),
$$\Hat{Z}_{\alpha}(M(b,0;\frac{p_1}{r_1},...,\frac{p_d}{r_d});q) \cong \sum_{n\in \mathbb{Z}}q^{-\eta n^2 -2n \xi_\alpha} \cdot \Psi^n_{\Vec{p},\Vec{r},\alpha,0},$$
where, 
$$\Psi^n_{\Vec{p},\Vec{r},\alpha,0} = \underset{i=1,...,d}{\sum_{\epsilon_i=\pm1}} \prod_i \epsilon_i \delta^\mathbb{Z}\left( \frac{r_i n +\alpha_i+\epsilon_i/2}{p_i}\right)\cdot \mathrm{sgn}(m)^d \binom{\frac{d}{2}-2+|m|}{d-3}\delta^\mathbb{Z}\left(m+\frac{d}{2}\right)\bigg|_{m = \eta n_0 + \xi_\alpha- \sum_i\frac{\epsilon_i}{2p_i}}.$$
We also generalize this to more general Riemann surfaces of genus $g$ in the form of a conjecture. 

Similarly, using our newly developed framework, we find a simple expression for the GM series of all torus links $T(sp,tp)$,
    $$F_{T_{tp,sp}}(X_1,...,X_p;q) = \sum_{n\in \mathbb{Z}+\frac{st(p-1) +1}{2}} f^n_{T_{sp,tp}}(q) \prod^p_{i=1}X_i^n,$$
where,    
$$f^n_{T_{sp,tp}}(q) = q^{\frac{n^2}{st}}\sum_{\epsilon,\epsilon' = \pm1} \epsilon \epsilon' \mathrm{sgn}(-2n +\epsilon t +\epsilon' s )^p \binom{\frac{|\epsilon t+\epsilon' s -2n|+pst}{2st}-1}{ p-1} \delta^\mathbb{Z}\left(\frac{\epsilon t+\epsilon' s+pst -2n}{2st}\right) .$$
We use this expression, along with our novel satellite formula, to derive an expression for the GM series for the cabling of a knot or link.

We use our framework of decorated Spin TQFT's to prove and derive many novel formulas in the theory of $\hat{Z}$ invariants. Among these, we emphasize the following, 
\begin{enumerate}
    \item Gluing of two knot complements with an arbitrary $MCG$ element
    \item Partial surgery on links ($-\frac{1}{r}$ and $\infty$ on unknot component and more generally as well)
    \item $\hat{Z}$ amplitude for arbitrary Seifert manifold fibered over $\Sigma_g$
    \item Laplace transform for rational surgery on links 
    \item Satellite formula for GM series
    \item GM series for all torus links 
    \item Cabling formula for GM series
    \item Whitehead doubling formula for GM series
    \item Generalizations to an arbitrary semi-simple Lie group
\end{enumerate}

\subsection*{Structure of the paper}
In Section 2, we review concepts regarding $\mathrm{Spin}$-structure and $\mathrm{Spin}^c$-structure in the context of 3-manifolds with and without boundary. Then, we define the notion of a `decorated $\mathrm{Spin}$ TQFT' with our version of Atiyah--Segal-like axioms. In Section 3, we describe the vector space associated with the torus and point out the $MCG$ representation on it that reproduces the $\hat{Z}$ invariants. In this section, we also define a bilinear form on this space that corresponds to the gluing of vectors along $T^2$.

In Section 4, we study the consistency conditions on the vectors assigned to the solid torus and point out the choice that leads to $\hat{Z}$ invariants. Additionally, we show that this choice of vector assignment together with the structure assigned to the torus uniquely determines the amplitudes assigned to all weakly negative definite plumbed manifolds, thus recovering the plumbing formula of \cite{GPPV}. Finally, we prove that the mapping class group action determines gluing formulas for two knot complements, which reduce to the surgery formula (Laplace transform) of \cite{GM}. We also provide a closed formula for the amplitude associated with any Seifert manifold fibered over $S^2$ with negative Euler number.

In Section 5, we use our newly developed framework to derive new surgery formulas for $\hat{Z}$ and Gukov--Manolescu invariants. We prove that Gukov--Manolescu series for links obey a partial surgery formula as well as a more general Laplace transform for rational surgery on all components. We use these results to find closed expressions for the GM series of all torus links. Furthermore, we derive a formula for the GM series of satellite knots and use it to write down a general cabling formula. Finally, we conjecture formulas for amplitudes of Seifert manifolds fibered over any Riemann surface and Whitehead doubling formulas.  

In Section 6, we repeat all of the above Sections for the case of general semi-simple Lie algebras. We conclude with Section 7, emphasizing future directions and open questions.


\section{TQFT Structure}
In \cite{MJ} and \cite{CGP}, the notion of a `decorated' TQFT (or `non-semi-simple' TQFT) was studied in the context of the $\hat{Z}$ invariant. This is the structure we attribute to TQFT, which computes three-manifold invariants that depend on additional data other than the topology of the three-manifold. In the case of the $\hat{Z}$ invariant, this extra decoration is given by $\mathrm{Spin}^{c}$-structures. 

The set of $\mathrm{Spin}^{c}$-structures on a three-manifold $M$ is affinely isomorphic to $H^{2}(M,\Z)$. Thus, choosing a $\mathrm{Spin}^{c}$-structure on $M$ allows us to identify the set of decorations for $\hat{Z}(M)$ with $H^{2}(M,\Z)$. In fact, it suffices to choose a $\mathrm{Spin}$-structure on $M$ to make this identification, since there exists a canonical map from the set of $\mathrm{Spin}$-structures to the set of $\mathrm{Spin}^{c}$-structures.


\subsection{A Review of \texorpdfstring{$\mathrm{Spin}$-Structures}{Spin-Structures}}\label{spinstructures_review}
To better understand how $\mathrm{Spin}$-structures appear in the $\hat{Z}$ invariants, it is helpful to review the properties of $\mathrm{Spin}$-structures in low dimensions. In this sub-section, we will briefly review $\mathrm{Spin}$-structures on low-dimensional manifolds. We will use the definition of $\mathrm{Spin}$-structure from \cite{Milnor1965}. For more details about $\mathrm{Spin}$-structures on low-dimensional manifolds, see \cite{Milnor1965, AtiyahSpin, KirbyTaylor, Johnson, Beliakova1996, BM1996 }.
\begin{definition}
    Let $F(M)$ denote the frame bundle of the manifold $M$. A $\mathrm{Spin}$-structure on $M$ is a first cohomology class $\mathfrak{s}\in H^{1}(F(M),\Z_{2})$, whose restriction to each fibre is non-zero.
\end{definition}

For $n$-dimensional manifold $M$ we have the following exact sequence\cite{Milnor1965},
\be 
0 \rightarrow H^{1}(M, \Z_{2}) \xrightarrow{\pi^{*}} H^{1}(F(M), \Z_{2}) \xrightarrow{i^{*}} H^{1}(SO(n),\Z_{2}) \xrightarrow{\delta} H^{2}(M,\Z_{2}).
\ee
The group $ H^{1}(SO(n),\Z_{2})\cong \Z_{2}$, and the second Stiefel-Whitney class $\omega_{2}= \delta(1)$. Therefore, $\omega_{2}=0$ if and only if there exists $\mathfrak{s}\in H^{1}(F(M), \Z_{2}) $ such that $i^{*}(\mathfrak{s})=\frac{1}{2} \mod \Z$. Thus, we have a $\mathrm{Spin}$-structure on $M$ if and only if the second Stiefel-Whitney class $\omega_{2}$ vanishes. The exact sequence also tells us that the set of $\mathrm{Spin}$-structures $\{\mathfrak{s}\in H^{1}(F(M), \Z_{2}) \vert i^{*}(\mathfrak{s})= \frac{1}{2} \mod \Z  \}$ is affinely isomorphic to $H^{1}(M,\Z_{2})$. When evaluating a $\mathrm{Spin}$-structure, we will write its value lifted to $\frac{1}{2}\Z$, though it should be understood modulo $\Z$.

Let's now build intuition dimension by dimension.

We begin with $\mathrm{Spin}$-structures on $1$-dimensional manifolds. There are two $\mathrm{Spin}$-structures on a circle. One of the two $\mathrm{Spin}$-structures can be extended to the disc $D^{2}$ bounding the circle, while the other one can not be extended. We call the $\mathrm{Spin}$-structure that can be extended to the disc the bounding $\mathrm{Spin}$-structure on $S^{1}$.

Moving on to two-dimensional manifolds, a closed orientable surface of genus $g$, denoted $\Sigma_{g}$, admits $2^{2g}$ $\mathrm{Spin}$-structures. We can characterize these $\mathrm{Spin}$-structures based on whether or not they are bounding on the canonical homology basis (the $a$ and $b$ cycles) of $\Sigma_{g}$. A $\mathrm{Spin}$-structure is bounding on a cycle if it can be extended to a disc bounding that cycle.

In the rest of the paper, we will be concerned only with 3-manifolds with boundary tori, or the disjoint union of tori, so our discussion will be restricted to link complements. 

Now, let's look at the $\mathrm{Spin}$-structures on three-manifolds. Suppose we have a framed knot in a three-manifold $M$. Using the framing of the knot and the orientation on $M$, we can construct a frame of $M$ at each point on the knot.  Thus, the framed knot gives us a 1-cycle on $F(M)$, and we can evaluate a $\mathrm{Spin}$-structure on it. We can characterize a $\mathrm{Spin}$-structure on a three-manifold by its values on framed knots. If $U$ is a zero-framed un-knot in $M_{3}$, so that $U$ bounds a disc in $M_{3}$, then evaluation of any $\mathrm{Spin}$-structure $\mathfrak{s}\in \mathrm{Spin}(M)$ on $U$ gives us $\frac{1}{2} $. That is $\mathfrak{s}(U)=\frac{1}{2} $ for all $\mathfrak{s}\in \mathrm{Spin}(M)$. We will use this property to characterize $\mathrm{Spin}$-structures on three-manifolds.

\begin{example}[Zero-framed unknot complement]
   Let's consider the zero-framed unknot complement $S^{3}\setminus U$. Any $\mathrm{Spin}$-structure on $S^{3}\setminus U$ is bounding on the longitude. Therefore, $\forall \mathfrak{s} \in \mathrm{Spin}(S^{3}\setminus U)$, $\mathfrak{s}(\ell)=\frac{1}{2}$. A $\mathrm{Spin}$-structure on $S^{3}\setminus U$ may or may not be bounding on the meridian. We denote the $\mathrm{Spin}$-structure bounding on the meridian by $\mathfrak{s}_{\frac{1}{2}}$, and the non-bounding $\mathrm{Spin}$-structure by $\mathfrak{s}_{0}$.
\begin{align}
    \mathfrak{s}_{\frac{1}{2}}(m) & = \frac{1}{2} & \mathfrak{s}_{\frac{1}{2}}(\ell) & = \frac{1    }{2}, \\
    \mathfrak{s}_{0}(m) & = 0 & \mathfrak{s}_{0}(\ell) & = \frac{1}{2}.
\end{align}
where $\ell$ denotes the longitude and $m$ denotes the meridian. 
\end{example}
\begin{example}[Zero-framed Hopf link complement]
     We will denote the longitude and meridians of the Hopf link by $\ell_{i}$ and $m_{i}$, respectively. A $\mathrm{Spin}$-structure $\mathfrak{s}$ is bounding on $m_{1}$ if and only if it is bounding on $\ell_{2}$. Similarly, a $\mathrm{Spin}$-structure $\mathfrak{s}$ is bounding on $m_{2}$ if and only if it is bounding on $\ell_{1}$. Thus, a $\mathrm{Spin}$-structure on the zero-framed Hopf link complement is completely determined by its value on the meridians. Let's denote the four $\mathrm{Spin}$-structures on the zero-framed Hopf link complement by $\mathfrak{s}_{ij}$, where $\mathfrak{s}_{ij}(m_{1})= i$, and $\mathfrak{s}_{ij}(m_{2})= j$. Therefore, we have 
\begin{align}
   \mathfrak{s}_{ij}(m_{1}) &= i & \mathfrak{s}_{ij}(\ell_{1}) &= j & \mathfrak{s}_{ij}(m_{2}) &= j & \mathfrak{s}_{ij}(\ell_{2}) &= i.
\end{align}
\end{example}
\begin{example}[Zero-framed unlink complement]
    Let's contrast this with the zero-framed unlink complement, that is, the complement of the disjoint union of two unknots. Any $\mathrm{Spin}$-structure on the zero-framed unlink complement is bounding on $\ell_{1}$ and $\ell_{2}$. Therefore, we have
\begin{align}
    \mathfrak{s}^{UL}_{ij}(m_{1}) &= i & \mathfrak{s}^{UL}_{ij}(\ell_{1}) &= \frac{1}{2} & \mathfrak{s}^{UL}_{ij}(m_{2}) &= j & \mathfrak{s}^{UL}_{ij}(\ell_{2}) &= \frac{1}{2}.
\end{align}
\end{example}

In the three examples discussed above, the $\mathrm{Spin}$-structures were determined by their value on the meridians. This is true in general. $\mathrm{Spin}$-structures on a link complement are entirely determined by their value on the meridians of the link components. Let $\mathcal{L}$ be a $n$-component framed link, let $\ell_{i}$ and $m_{i}$ denote the longitude and meridian of the $i$th component of $\mathcal{L}$, and let $L$ be the linking matrix of $\mathcal{L}$. The value of $\mathrm{Spin}$-structure $\mathfrak{s}$ on the longitude $\ell_{i}$ is given by \cite{Beliakova1996, BM1996}, 
\be 
\mathfrak{s}(\ell_{i}) = \frac{1}{2} + \frac{1}{2} L_{ii} + \sum_{j=1}^{n} L_{i j}\left(\frac{1}{2}+ \mathfrak{s}(m_{j})\right) \mod \Z.
\ee 
Suppose $S^{3}_{\mathcal{L}_{1},\mathcal{L}}$ is a three-manifold obtained by performing a Dehn surgery on a framed sub-link $\mathcal{L}_{1}$ of $\mathcal{L}$. Suppose the components of $\mathcal{L}_{1}$ are labeled by $\{1,2, \ldots , n_{1} \}$, then for a $\mathrm{Spin}$-structure $\mathfrak{s} \in \mathrm{Spin}(S^{3}_{\mathcal{L}_{1},\mathcal{L}})$ for all $i \in \{1,2, \ldots , n_{1} \}$,
\be 
\frac{1}{2} L_{ii} = \sum_{j=1}^{n} L_{i j}\left(\frac{1}{2}+ \mathfrak{s}(m_{j})\right) \mod \Z.
\ee 
Thus the set of $\mathrm{Spin}$-structures on $S^{3}_{\mathcal{L}_{1},\mathcal{L}}$ is given by 
\be 
\mathrm{Spin}(S^{3}_{\mathcal{L}_{1},\mathcal{L}})= \left\{ \mathfrak{s} \in \left( \frac{1}{2}\Z^{n} \right)\bigg/ \Z^{n} \Bigg\vert \hspace{0.2cm} \frac{1}{2} L_{ii} = \sum_{j=1}^{n} L_{i j}\left(\frac{1}{2}+ \mathfrak{s}_{j} \right) \mod \Z \hspace{0.2cm} \forall i \in \{1,2, \ldots , n_{1} \} \right\}.
\ee 

We will now describe the map from the set of $\mathrm{Spin}$-structures to the set of $\mathrm{Spin}^{c}$-structures on the three-manifold $M_{\mathcal{L}}$ obtained by a surgery on an $n$-component framed link. For further details, see \cite{DM2002}. A combinatorial description of the set of $\mathrm{Spin}^{c}$-structures on $M_{\mathcal{L}}$ is given by 
$$\mathrm{Spin}^{c}(M_{\mathcal{L}}) = \delta + 2 \Z^{n} / 2 L \Z^{n}, $$
where $L$ is the linking matrix, and 
$\delta = \sum_{\substack{j =1 \endline j \neq i}}^{n} L_{ij} .$
The map from $\mathrm{Spin}(M_{\mathcal{L}})$ to $\mathrm{Spin}^{c}(M_{\mathcal{L}})$ is given by 
\begin{align}
    \mathrm{Spin}(M_{\mathcal{L}}) & \rightarrow \mathrm{Spin}^{c}(M_{\mathcal{L}}) \endline
    \mathfrak{s} & \mapsto 2 L \mathfrak{s}.
\end{align}
Suppose we fix a $\mathrm{Spin}$-structure $\mathfrak{s}_{0}$ on $M_{\mathcal{L}}$, then any $\mathrm{Spin}^{c}$-structure $b$ on $M_{\mathcal{L}}$ can be written as $$b = 2 L \mathfrak{s}_{0} + 2 h_{b},$$ where $h_{b}\in \Z^{n}/ L \Z^{n} \cong H^{2}(M_{\mathcal{L}},\Z) $.

To describe the TQFT structure of $\hat{Z}$, it is convenient to work with $\hat{Z}^{\Q/\Z}$ invariants. The invariants $\hat{Z}^{\Q/\Z}(M,q)$ are labeled by a pair consisting of a $\mathrm{Spin}$-structure on $M$ and a first cohomology class in $H^{1}(M,\Q/\Z)$. For the three-manifold $M_{\mathcal{L}}$, the invariant $\hat{Z}^{\Q/\Z}(M_{\mathcal{L}})$ is related to $\hat{Z}(M_{\mathcal{L}})$ by:
\be 
\hat{Z}^{\Q/\Z}_{(\mathfrak{s},\beta)}(M_{\mathcal{L}},q) = \hat{Z}_{ 2 L \mathfrak{s} + 2 Bk(\beta) }(M_{\mathcal{L}},q),
\ee 
where $Bk$ denotes the Bockstein homomorphism $Bk: H^{1}(M_{\mathcal{L}},\Q/\Z) \rightarrow H^{2}(M_{\mathcal{L}},\Z)$ associated with the short exact sequence $0 \rightarrow \Z \rightarrow \Q \rightarrow \Q / \Z \rightarrow 0$. In the case of three-manifold $M_{\mathcal{L}}$, the Bockstein homomorphism takes the explicit form:
\begin{align}
    Bk: H^{1}(M_{\mathcal{L}},\Q/\Z) \cong (L^{-1}\Z^{n}) / \Z^{n} &\rightarrow \Z^{n} / L \Z^{n} \cong H^{2}(M_{\mathcal{L}},\Z) \endline
    \beta &\mapsto L \beta.
\end{align}
Thus, we may write:
\be 
\hat{Z}^{\Q/\Z}_{(\mathfrak{s},\beta)}(M_{\mathcal{L}},q) = \hat{Z}_{2L( \mathfrak{s}+ \beta)}(M_{\mathcal{L}},q).
\ee
The label $(\mathfrak{s}_{1},\beta)$ is equivalent to the label $ (\mathfrak{s}_{2},\beta + \mathfrak{s}_{1}- \mathfrak{s}_{2} )$. Let's unpack what $\beta + \mathfrak{s}_{1}- \mathfrak{s}_{2}$ means. The difference $\mathfrak{s}_{1}- \mathfrak{s}_{2}$ can be thought of as an element of $H^{1}(M_{\mathcal{L}},\Z_{2})$, since the set of $\mathrm{Spin}$-structure is afiinely isomorphic to $H^{1}(M_{\mathcal{L}},\Z_{2})$. In fact, $\mathfrak{s}_{1}- \mathfrak{s}_{2}$ can be thought of as an element of $H^{1}(M_{\mathcal{L}},\Q/\Z)$, since there is a natural inclusion $H^{1}(M_{\mathcal{L}},\Z_{2}) \subset H^{1}(M_{\mathcal{L}},\Q/\Z)$. Therefore, it makes sense to add $\mathfrak{s}_{1}- \mathfrak{s}_{2}$ to $\beta \in H^{1}(M_{\mathcal{L}},\Q/\Z)$. Because of this equivalence we will write the pair $(\mathfrak{s}_{1},\beta)$ as a single element $ \mathbf{a} =\mathfrak{s}_{1} + \beta \in (\frac{1}{2} L^{-1}\Z^{n} )/ \Z^{n}  $. Concretely, the wavefunction labels $(\mathfrak{s},\beta)$ may be written as a single element in the space,
$$(\mathfrak{s},\beta) \longrightarrow \mathbf{a} \in \mathcal{A} = \bigg( (\frac{1}{2} L^{-1}\Z^{n} )/ \Z^{n} \bigg)/\sim$$
$$(\mathfrak{s}_{1},\beta)  \sim (\mathfrak{s}_{2},\beta + \mathfrak{s}_{1}- \mathfrak{s}_{2}) .$$
Note that the distinction between $\hat{Z}^{\Q/\Z}$ and $\hat{Z}$ is notational. They represent the same physical invariants with different labeling conventions, and we will suppress this distinction in our subsequent analysis.


\subsection{Rules of \texorpdfstring{$H^{1}(\cdot, \Q/\Z)$}{H1(.,Q/Z)} decorated \texorpdfstring{$\mathrm{Spin}$}{Spin} TQFT}\label{tqftrules}
The $\hat{Z}$ invariants are decorated by $Spin^c$-structures. In this paper, we will show that they are computed by a $H^{1}(\cdot, \Q/\Z)$ decorated $\mathrm{Spin}$ TQFT. Here, we define what we mean by this. Below, $K$ will denote some background field, but for our discussions, in the remainder of the paper, we take $K$ to be the algebraic closure of the field of formal Laurent series in $q$, $\mathbb{C}((q)) $, so that $K =\mathbb{C}((q))^{alg}$. 
\begin{definition} We say that $Z$ is a decorated, oriented 3d $\mathrm{Spin}$ TQFT decorated by $H^{1}(\cdot,\Q/\Z)$ if $Z$ is the following collection of data subject to the following rules: 
\begin{enumerate}
    \item \textbf{State spaces.} To every oriented 2-manifold $\Sigma$ (possibly with punctures), $Z$ assigns a (possibly infinitely generated) $K$-module $Z(\Sigma)$.
    \item \textbf{Disjoint unions.} For disjoint surfaces, the $K$-module factorizes as, $Z(\Sigma_1 \bigsqcup \Sigma_2) = Z(\Sigma_1) \otimes Z(\Sigma_2)$.
    \item \textbf{Empty surface.} The $K$-module associated with the empty surface is the ground field. That is $Z(\Sigma) = K$.
    \item \textbf{Grading.} $Z(\Sigma)$ is graded by $H^{1}(\Sigma, \Q/\Z)$.
    \item \textbf{States.} To the tuple $(M,\mathfrak{s},h, \varphi)$, where $M$ is a three-manifold with boundary, $\mathfrak{s}$ is a $\mathrm{Spin}$-structure on $M$, $h \in H^{1}(M,\Q/\Z) $, and $\varphi$ is a choice of framing on $M$, we assign a vector (equivalently wavefunction), $$Z(M,\mathfrak{s},h, \varphi) = \ket{M,\mathfrak{s},h, \varphi} \in Z(\partial M).$$
    \item \textbf{Grading of states.} $Z(M,\mathfrak{s},h, \varphi)$ belongs to the $(h + \varphi\vert_{\partial M}^{*}\circ i^{*}(\mathfrak{s}))$-graded subspace of $Z(\partial M)$, where $i$ is the inclusion map from the frame bundle of $\partial M$ to the frame bundle of $M$. That is, the grading of the state is shifted by the $\mathrm{Spin}$ contribution.
    \item \textbf{Mapping class group action.} To every oriented 2-manifold $\Sigma$, $Z$ assigns a representation of $MCG(\Sigma)$ on $Z(\Sigma)$: $$Z(\cdot):MCG(\Sigma) \rightarrow GL(Z(\Sigma)).$$ In fact, $$Z(M,\mathfrak{s}',h', \gamma \cdot \varphi) = Z(\gamma) \cdot Z(M,\mathfrak{s},h, \varphi), $$ 
    where $\mathfrak{s}', h'$ are the corresponding $\mathrm{Spin}$-structure and first cohomology element under the attachment of the mapping cylinder $M_\gamma$ to $M$. Due to this property, we will often ignore the framing of wavefunctions and consider only 0-framed wavefunctions:
    $$\ket{M,\mathfrak{s},h} = \ket{M,\mathfrak{s},h,0}.$$
    \item \textbf{Pairing.} $Z$ is endowed with a bilinear map $\langle \cdot |\cdot \rangle_\Sigma: Z(\Sigma)\times Z(\Sigma) \rightarrow K$. 
    \item \textbf{Gluing rule.} Suppose $M_{1}$ and $M_{2}$ are three-manifolds with boundary. Suppose $\Sigma \subset \partial M_{1}$ and $\Sigma \subset \partial M_{2}$. Suppose $M$ is obtained by gluing $M_{1}$ and $M_{2}$ along $\Sigma$ using a diffeomorphism in mapping class group element $\gamma$. Then 
$$\ket{M, \mathfrak{s}, h, \varphi} = \langle M_{1}, \mathfrak{s}\vert_{M_1} h\vert_{M_1}|  \gamma | M_{2}, \mathfrak{s}\vert_{M_2}, h\vert_{M_2}\rangle_\Sigma,$$
where $\mathfrak{s}\vert_{M_i}$ and $h\vert_{M_i}$ are restrictions of the $\mathrm{Spin}$-structure $\mathfrak{s}$ and the cohomology element $h$ to $M_{i}$.  
\end{enumerate}
\end{definition}

In the following sections, we will explicitly describe various elements of the $H^{1}(\cdot, \Q/\Z)$ decorated TQFT that computes $\hat{Z}$. More precisely, we will define the vector space associated with the torus (and disjoint unions thereof), identify the wavefunction associated with the solid torus, and show that the above rules uniquely determine a large number of amplitudes and wavefunctions, including those explored in \cite{GM} and \cite{GPPV}. Additionally, we will use the decorated TQFT structure to derive several novel rational surgery formulas for $\hat{Z}$ and satellite formulas for $F_K$.


\section{\texorpdfstring{$\Q$}{Q}-extended Quantization of \texorpdfstring{$SL(2,\C)$}{SL(2,C)} Chern--Simons Theory}\label{torus_hilbertspace}
The quantization of $SL(2,\C)$ Chern--Simons theory represents a fundamental challenge in theoretical physics, serving as a prototype for the quantization of gauge theories with non-compact gauge groups. Non-compact gauge groups introduce novel quantization difficulties that remain poorly understood in general. The $\hat{Z}$ invariant of three-manifolds is believed to provide a non-perturbative quantization of complex Chern--Simons theory. In this section, we start with the phase space of complex Chern--Simons theory on the torus and construct from it the vector space associated with the torus in the $\hat{Z}$-TQFT.

The moduli space of the $SL(2,\C)$ flat connections on $T^{2}$ is given by,
\be 
\mathcal{M}_{\text{Flat}}(SL(2,\C),T^{2}) =
\frac{\C^{\times}\times \C^{\times} \text{ with } \mathbb{CP}^{1} \text{ attached at points } (\pm1,\pm 1) }{ \Z_{2} }.
\ee 
where $\Z_{2}$ is the Weyl group of $SL(2,\C)$. We will only consider the abelian flat connections and ignore the $\mathbb{CP}^{1}$s attached at $(\pm1,\pm 1)$. Thus, the phase space of the $SL(2,\C)$ Chern--Simons theory on $T^{2}$ can be taken as $\C^{\times}\times\C^{\times} / \Z_{2}$. If we coordinatize $\C^{\times}\times\C^{\times}$ by $X$ and $Y$, the symplectic form on $\C^{\times}\times\C^{\times}$ is given by $\omega_{T^{2}} = \dd \log{X} \wedge \dd \log{Y}$. Following the canonical quantization prescription, we promote the classical variables to operators satisfying:
$$[\log{X},\log{Y}] = -2\hbar.$$
This immediately implies that the holonomy operators themselves satisfy a $q$-deformed commutation relation:
$$XY = q^{-2} Y X,$$
where $q=e^{\hbar}$. One might therefore expect that the algebra of observables in $SL(2,\C)$ Chern--Simons theory is given by, 
\be 
\mathcal{O}_{\Z}=\frac{\langle X, Y \rangle}{\{ X Y = q^{-2} Y X \} } .
\ee
However, in constructing the $\hat{Z}$-TQFT, we find that extending the algebra to include rational exponents of $X$ and $Y$ leads to well-behaved functorial cutting and gluing rules. Thus, the following algebra is a better approximation of the algebra of observables in $SL(2,\C)$ Chern--Simons theory.
\be 
\mathcal{O}_{\Q}=\frac{\langle X^{\lambda}, Y^{\mu}\vert \lambda, \mu \in \Q \rangle}{\{ X^{\lambda} Y^{\mu} = q^{-2\lambda \mu} Y^{\mu} X^{\lambda} \} } .
\ee 
We can further extend the algebra by including real exponents. 
\be 
\mathcal{O}_{\R}=\frac{\langle X^{\lambda}, Y^{\mu}\vert \lambda, \mu \in \R \rangle}{\{ X^{\lambda} Y^{\mu} = q^{-2\lambda \mu} Y^{\mu} X^{\lambda} \} }.
\ee
Extending $\mathcal{O}_{\Z}$ to $\mathcal{O}_{\Q}$ or $\mathcal{O}_{\R}$ corresponds to analytically continuing in color.

At this stage, it is unclear whether the correct algebra of operators is given by $\mathcal{O}_{\Q}$ or $\mathcal{O}_{\R}$. A deeper understanding of $\hat{Z}$, particularly for manifolds with first Betti number greater than zero, is needed to resolve this. We will mainly work with the algebra $\mathcal{O}_{\Q}$, but most of our discussion applies to both algebras. 

Having established the algebra, let us now look at some structural properties of these algebras. The Weyl symmetry acts on the generators of the algebras by sending $X \rightarrow X^{-1}$ and $Y \rightarrow Y^{-1}$. This $\Z_{2}$-action splits the algebras into symmetric and antisymmetric subalgebras.
\be 
\mathcal{O}_{R} = \mathcal{O}_{R}^{+} \oplus \mathcal{O}_{R}^{-},
\ee 
where $R= \R $ or $\Q$. The algebra $\mathcal{O}_{\Z}^{+}$ is isomorphic to the Kauffman Skein algebra of $T^{2}$ ($KBS_{q}(T^{2})$), the $\mathfrak{sl}_{2}$ Skein Algebra of $T^{2}$ ($SkAlg_{\mathfrak{sl}_{2}}(T^{2})$), and the Spherical Double Affine Hecke Algebra of type $A_{1}$ at $t=1$. This suggests that it might be interesting to look at analogous extensions of the above three algebras isomorphic to $\mathcal{O}_{\Z}^{+}$. 

Along with the $\Z_{2}$ grading induced by the Weyl group action, the algebra $\mathcal{O}_{\Q}$ also has a $(\Q/\Z)^{2}$ grading.
\begin{align}
 \mathcal{O}_{\Q} &= \bigoplus_{(\alpha,\beta)\in (\Q/\Z)^{2} }  y^{\beta} x^{\alpha} \mathcal{O}_{\Z}.
\end{align}
The Weyl group action takes the $(\alpha,\beta)$ graded subspace to the $(-\alpha,-\beta)$ graded subspace. Thus the Weyl group action splits the vector space $\mathcal{V}_{(\alpha,\beta)} = y^{\beta} x^{\alpha}  \mathcal{O}_{\Z} \oplus y^{-\beta} x^{-\alpha}  \mathcal{O}_{\Z} $, into symmetric and anti-symmetric parts.
\be \label{vecab}
\mathcal{V}_{(\alpha,\beta)} = \mathcal{V}^{+}_{(\alpha,\beta)} \oplus \mathcal{V}^{-}_{(\alpha,\beta)}.
\ee


\subsection{Vector space associated with \texorpdfstring{$T^{2}$}{T2}}
Rules of quantum mechanics tell us that the Hilbert space should be a representation of the algebra of observables. A quantization scheme takes the classical phase space of observables as input and produces a Hilbert space associated with the surface. These quantization schemes typically yield a Hilbert space isomorphic to an appropriate space of functions on a Lagrangian submanifold of the phase space. In our case, the quantization schemes produce the space of functions in the variable $X$. However, we find that this vector space must be appropriately enlarged in the following way. 

As above, we let $R = \R$, $\Q,$ or $\C$ (mainly $\Q$) and
$$\mathcal{O}_R =\frac{ \langle X^m, Y^n\rangle_{n,m \in R}}{(X^m Y^n - q^{-2nm} Y^n X^m) },$$
denote the $R$-extended quantum torus. Let $\mathbb{C}_q$ be the algebraic closure of $\mathbb{C}((q))$, the field of formal Laurent series in $q$. We can consider the $\mathcal{O}_R$-module, 
$$\Hat{\mathcal{O}}_R = \mathbb{C}_q \langle \{  Y^n X^m\}_{n,m\in R}\rangle.$$
In the above module, elements are formal sums $\sum_{n,m\in R} c_{n,m} Y^n X^m$, where we allow $c_{n,m} \in \mathbb{C}_q$ to be non-zero for infinitely many values of $n,m \in R$. The Weyl group of $\mathfrak{sl}_2$ has a non-trivial automorphism on $\Hat{\mathcal{O}}_R$,
$$w: X\mapsto X^{-1}  \hspace{7mm}Y\mapsto Y^{-1}.$$

We begin the construction of our decorated $\mathrm{Spin}$ TQFT by declaring the vector space associated with the torus to be, 
\be \mathcal{H}(T^2) = \Hat{\mathcal{O}}_\Q . \ee

\textbf{Remark:} Intuitively, we choose to see this space as the vector space generated by Wilson lines wrapping $X$ and $Y$ cycles of the torus, in Verma module representations of $\mathfrak{sl}_2$ with highest weights being the $X$, $Y$ exponents. 

Concretely, we think of $\mathcal{H}(T^2)$ as the vector space over $\C_q$ spanned by the vectors, $Y^n X^m \ket{0} = \ket{n,m}$, with $m, n \in \Q$.


\subsection{Bilinear Form on \texorpdfstring{$\mathcal{H}(T^2)$}{H(T2)}}
We wish to define closed 3-manifold invariants as partition functions of a decorated TQFT. To this end, we endow the vector space associated with the torus $\mathcal{H}(T^2)$ with a bilinear form (which we will refer to as an ``inner product'' in the physics sense). We will now define this inner product on $\mathcal{O}_{\Q}$, which can then be extended to a suitable subspace of $\mathcal{H}(T^2) = \Hat{\mathcal{O}}_\Q$.

Any element in the algebra $\mathcal{O}_{\Q}$ is a linear combination of terms of the form $X^{\lambda_{1}} Y^{\mu_{1}} \cdots X^{\lambda_{r}} Y^{\mu_{r}}$ for some positive integer $r$. Using the $q$-commutation relation, we can bring all the $Y$s to the left to get,
\be 
X^{\lambda_{1}} Y^{\mu_{1}} \cdots X^{\lambda_{r}} Y^{\mu_{r}} = q^{-2\sum_{i=1}^{r} \mu_{i} \sum_{j=1}^{i} \lambda_{j}} Y^{\sum_{i=1}^{r} \mu_{i}} X^{\sum_{j=1}^{r} \lambda_{j}}.
\ee 
We adopt the convention that the operators in $\mathcal{O}_{\Q}$ are written with $Y$ appearing to the left of $X$. For an element $\psi \in \mathcal{O}_{\Q}$ we can express it as a sum,
\begin{align*}
    \psi(X,Y) &=  \sum_{m,n \in \Q}  \psi_{m,n}(q) Y^{n} X^{m}.
\end{align*}
We define a bilinear form on $\mathcal{O}_{\Q}$ via, 
\be 
\langle \psi | \phi \rangle := \sum_{m,n \in \Q} \psi_{-n,-m}(q) \phi_{n,m}(q).
\ee 
This gives us a well-defined bilinear map,
$$\bra{\cdot}\ket{\cdot}: \mathcal{O}_{\Q} \times \mathcal{O}_{\Q} \rightarrow \C_q .$$
Formally, we will also write the inner product of $\psi, \phi \in \mathcal{O}_{\Q}$ as a double contour integral,
\be \label{innerprod}
\langle \psi | \phi \rangle = \oint \frac{\dd X}{2\pi i X} \oint \frac{\dd Y}{2\pi i Y} \psi(X,Y)^\dagger \phi(X,Y),
\ee
where
\be 
\psi(X,Y)^\dagger = \left( \sum_{n,m \in \Q} \psi_{m,n}(q) Y^n X^m \right)^\dagger := \sum_{n,m \in \Q} \psi_{m,n}(q)X^{-m} Y^{-n}.
\ee 

We can justify the notation of integrals as follows. Let $X= e^{i u}$ and $Y= e^{i v}$. The process of q-commuting all $Y$s to the left of all $X$ gives us a map from the algebra $\mathcal{O}_{\Q}$ to the space of functions $\C(q,u,v)$.  
\be 
\mathcal{F}: \mathcal{O}_{\Q} \rightarrow \C(q,u,v).
\ee 
The above map $\mathcal{F}$ can be restricted to the sub-algebras $\mathcal{O}_{\Q}$ or $\mathcal{O}_{\Z}$. The image of $\mathcal{F}$ is the space of functions such that there exists a pair of integers $(p_{u},p_{v})$ such that the function is periodic in $u$ with period $2 \pi p_{u}$ and periodic in $v$ with period $2 \pi p_{v}$. When restricted to $\mathcal{O}_{\Z}$, the image $\mathcal{F} (\mathcal{O}_{\Z})$ is the space of periodic functions in $u$ and $v$ with period $2 \pi$. Suppose $f, g \in \mathcal{O}_{\Q}$ are such that $\mathcal{F}(f), \mathcal{F}(g)$ are periodic in $u$ with period $2 \pi p_{u}$ and periodic in $v$ with period $2 \pi p_{v}$. Then the inner product of $f$ and $g$ is given by,
\be 
(f,g) := \frac{1}{4 \pi^{2} p_{u} p_{v} } \int_{\mathcal{C}_{p_{u}}} \dd u \int_{\mathcal{C}_{p_{v}}} \dd v  \mathcal{F}(f)(-u,-v) \mathcal{F}(g)(u,v).
\ee 
Where $\mathcal{C}_{p}$ is a sum of contour from $-\pi p+i \epsilon$ to $\pi p+i \epsilon$ and the contour from $-\pi p - i \epsilon$ to $\pi p -i \epsilon$ for a small positive real number $\epsilon$. The $i \epsilon$ prescription originates from the principal value prescription used in the definition of the $\hat{Z}$-invariant in \cite{GPPV}. 

On $\mathcal{H}(T^2) = \hat{\mathcal{O}}_\Q$, the infinite sums prevent the above map from being well-defined everywhere. Nevertheless, the subset $\mathbb{L}$ of $\mathcal{H}(T^2) \times \mathcal{H}(T^2)$ that converges under this bilinear form will be our main interest. Therefore, when we allude to the bilinear form on $\mathcal{H}(T^2)$, we are in fact referring to the map,
$$\bra{\cdot}\ket{\cdot}: \mathbb{L} \rightarrow \C_q , $$
which is defined in the same way as above.


\subsection{\texorpdfstring{$SL(2,\mathbb{Z})$}{SL(2,Z)} Action}\label{sl2z_section}
The vector spaces associated with $T^{2}$ are endowed with an action of the mapping class group of the torus $SL(2,\Z)$. This $SL(2,\Z)$ action plays a crucial role in surgery formulas for three-manifold invariants computed by a TQFT. We will now discuss the $SL(2,\Z)$ action on $\mathcal{H}(T^{2})$. 

The algebra $\mathcal{O}_{\Q}$ has an automorphism subgroup $SL(2,\Z)$ associated with the mapping class group of $T^{2}$.
Recall that $X,Y$ obey the commutation relation, $XY=q^{-2}YX$, in our conventions. The representation of $SL(2,\mathbb{Z})$ is specified by the generators, which act as:
\begin{align}
  \begin{pmatrix}
1 & 1 \\
0 & 1 
\end{pmatrix} & \leftrightarrow \tau_{+}: \hspace{2mm} X \mapsto q X Y \hspace{4mm} Y\mapsto Y ,  \\
\begin{pmatrix}
1 & 0 \\
1 & 1 
\end{pmatrix} &\leftrightarrow \tau_{-}: \hspace{2mm} X \mapsto X \hspace{4mm} Y\mapsto q^{-1} Y X .
\end{align}
Similarly, the automorphism subgroup acts on the logarithmic generators as follows: if $\gamma = \begin{pmatrix} a & b \\ c & d  \end{pmatrix} \in SL(2,\Z)$,
\be 
\begin{pmatrix} a & b \\ c & d  \end{pmatrix} \begin{pmatrix}
    u \\ v
\end{pmatrix} = \begin{pmatrix}
    a u + b v \\ c u + d v
\end{pmatrix}.
\ee 
\begin{lemma}\label{prlemma}
The automorphism $\gamma$ acts on the basis element $Y^n X^m$, for $m,n \in \Q$ of the algebra $\mathcal{O}_{\Q}$ as follows:
\be 
\gamma (Y^n X^m) = Y^{b m+ d n} X^{a m+ c n} q^{- a b m^{2}- 2 b c m n - c d n^{2}}.
\ee 
\end{lemma}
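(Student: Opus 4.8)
The plan is to verify the formula by checking it on the two generators $\tau_{+}$ and $\tau_{-}$ of $SL(2,\Z)$, and then confirming that the proposed action is compatible with composition, so that it extends consistently to all of $SL(2,\Z)$. Since $\gamma \mapsto \big(Y^nX^m \mapsto \cdots\big)$ is claimed to be a group homomorphism into $\mathrm{Aut}(\mathcal{O}_{\Q})$, it suffices to confirm that the stated closed form reproduces the generator actions given above and respects the group law.

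First I would treat the base case. For $\gamma = \tau_{+} = \begin{pmatrix} 1 & 1 \\ 0 & 1 \end{pmatrix}$, we have $(a,b,c,d) = (1,1,0,1)$, so the formula predicts $\tau_{+}(Y^nX^m) = Y^{m+n}X^{m}\, q^{-m^2}$. I would check this directly against the generator rule $X \mapsto qXY,\ Y \mapsto Y$ by computing $\tau_{+}(Y^nX^m) = Y^n (qXY)^m$ and $q$-commuting the factors into the normal-ordered form $Y^{(\cdot)}X^{(\cdot)}$. The key computation is reducing $(XY)^m$: using $XY = q^{-2}YX$ one finds $(XY)^m = q^{-m(m-1)}Y^mX^m$ (for integer $m$, extended to rational $m$ by the defining relation $X^\lambda Y^\mu = q^{-2\lambda\mu}Y^\mu X^\lambda$), and combining with the overall $q^m$ yields precisely the $q^{-m^2}$ prefactor. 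An entirely analogous computation handles $\tau_{-} = \begin{pmatrix} 1 & 0 \\ 1 & 1 \end{pmatrix}$, where $(a,b,c,d) = (1,0,1,1)$ gives the predicted $\tau_{-}(Y^nX^m) = Y^{n}X^{m+n}\, q^{-n^2}$, matched against $X \mapsto X,\ Y \mapsto q^{-1}YX$.

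Next I would establish the inductive step, i.e.\ that the exponent and phase data transform correctly under composition. Observe that the \emph{exponents} transform linearly: writing $\begin{pmatrix} m' \\ n' \end{pmatrix} = \gamma \begin{pmatrix} m \\ n \end{pmatrix}$ with $m' = am+cn$, $n' = bm+dn$ reproduces the new $(X,Y)$-exponents, and this is manifestly consistent with matrix multiplication $\gamma_2\gamma_1$. The only genuine work is checking that the quadratic phase $q^{-abm^2 - 2bcmn - cd n^2}$ composes correctly; this is where I expect the main obstacle to lie. When composing $\gamma_2 \circ \gamma_1$, the phases multiply but there is an \emph{additional} cross-phase generated by $q$-commuting the intermediate monomials back into normal order, and one must verify that this extra $q$-power, together with the two individual quadratic phases evaluated at the respective exponents, sums to the quadratic phase of the product matrix $\gamma_2\gamma_1$. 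Concretely, I would write the phase as a quadratic form $\Phi_\gamma(m,n)$ in the exponents and show the cocycle-type identity $\Phi_{\gamma_2\gamma_1}(m,n) = \Phi_{\gamma_1}(m,n) + \Phi_{\gamma_2}(m',n') + (\text{normal-ordering correction})$, using $ad-bc=1$ repeatedly to collapse the cross-terms.

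An alternative, and perhaps cleaner, route would be to bypass induction entirely: since any $\gamma$ acts on the logarithmic generators by the linear map $\begin{pmatrix} u \\ v \end{pmatrix} \mapsto \gamma \begin{pmatrix} u \\ v \end{pmatrix}$ as stated above, one can compute $\gamma(Y^nX^m)$ in a single step by substituting $X = e^{iu} \mapsto e^{i(au+cv)} = X^a Y^c$ and $Y = e^{iv} \mapsto e^{i(bu+dv)} = X^b Y^d$ (up to the operator-ordering subtleties fixed by the commutation relation), then normal-ordering the result $\,(X^bY^d)^n (X^aY^c)^m\,$ by moving all $Y$'s to the left. Tracking the accumulated $q$-powers from each transposition, and using $ad-bc=1$ to simplify, should directly produce the exponents $am+cn$, $bm+dn$ and the phase $q^{-abm^2-2bcmn-cdn^2}$. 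I would most likely present this direct substitution computation as the main proof, as it makes the origin of each term in the quadratic phase transparent, and is where the careful bookkeeping of the $q$-commutator powers is the crux.
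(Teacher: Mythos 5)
Your preferred route---acting linearly on the logarithmic generators $u,v$ and then normal-ordering---is exactly the paper's proof: the paper writes $Y^{n}X^{m}=q^{mn}e^{inv+imu}$ via the Baker--Campbell--Hausdorff formula, applies the linear substitution $\binom{u}{v}\mapsto\gamma\binom{u}{v}$, and converts back, which is your ``direct substitution'' computation packaged slightly more cleanly. One small arithmetic slip in your generator check: with $XY=q^{-2}YX$ one gets $(XY)^{m}=q^{-m(m+1)}Y^{m}X^{m}$, not $q^{-m(m-1)}Y^{m}X^{m}$; with the corrected exponent the overall factor $q^{m}\cdot q^{-m(m+1)}=q^{-m^{2}}$ comes out as you claim, so the conclusion is unaffected.
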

\begin{proof}
Note that for $\alpha, \beta \in \Q$ using the Baker-Campbell-Hausdorff formula we can show that $Y^{\beta} X^{\alpha} = q^{\alpha \beta} e^{i \beta v + i \beta u}$. Under the automorphism described above $Y^n X^m$ goes to, 
\begin{align}
    \gamma(Y^n X^m) &=\gamma( q^{m n}e^{ i n v+ i m u}) = q^{m n} e^{ i n ( c u + d v)+ i m(a u+ b v)} =q^{m n} e^{ i(b m + d n)v+ i( a m + c n) u}  \endline \gamma(Y^n X^m) &= Y^{b m+ d n} X^{a m+ c n} q^{- a b m^{2}- 2 b c m n - c d n^{2}}. 
\end{align} 
\end{proof}
For example, the generators of $SL(2,\Z)$, $\tau_{+} = \begin{pmatrix}
1 & 1 \\
0 & 1 
\end{pmatrix}$ and $ \tau_{-} = \begin{pmatrix}
1 & 0 \\
1 & 1 
\end{pmatrix} $ act on the generators of the algebra $\mathcal{O}_{\Q} $ as,
\begin{align}
    \tau_{+}( Y^{\mu} X^{\lambda} ) &= q^{-\lambda^{2}}Y^{\lambda+\mu} X^{\lambda} \\
    \tau_{-}( Y^{\mu} X^{\lambda} ) &= q^{-\mu^{2}}Y^{\mu} X^{\lambda + \mu }.
\end{align}

On the vector space associated with the torus $\mathcal{H}(T^2) = \hat{\mathcal{O}}_\Q$, this $SL(2,\Z)$ action obviously extends by declaring the action on each generator $Y^\mu X^\nu$ is as in Lemma \ref{prlemma}. It immediately follows that this representation is faithful. 

\begin{proposition}
    The representation of $MCG(T^2) = SL(2,\Z)$ on $\mathcal{H}(T^2)$, specified by the action on the basis of $\mathcal{H}(T^2)$ in Lemma \ref{prlemma} is faithful.
\end{proposition}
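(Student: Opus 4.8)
The plan is to prove faithfulness by showing directly that the kernel of the homomorphism $Z(\cdot): SL(2,\Z) \to GL(\mathcal{H}(T^2))$ is trivial, exploiting the fact that the action described in Lemma \ref{prlemma} permutes the natural basis up to nonzero scalars. First I would record the two structural facts that make the argument work: the monomials $\{Y^n X^m : m,n \in \Q\}$ form a $\C_q$-basis of $\mathcal{H}(T^2) = \hat{\mathcal{O}}_\Q$ by construction, so two such vectors are equal if and only if their exponent pairs $(m,n)$ coincide; and the scalar $q^{-abm^2 - 2bcmn - cdn^2}$ appearing in Lemma \ref{prlemma} is a unit in $\C_q$, so $\gamma$ genuinely sends each basis vector to a nonzero multiple of another basis vector. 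Together these guarantee that an element of the kernel must fix each basis vector as a vector, not merely up to scaling.

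Next I would suppose $\gamma = \begin{pmatrix} a & b \\ c & d \end{pmatrix}$ lies in the kernel, i.e.\ $\gamma(Y^n X^m) = Y^n X^m$ for every $m,n \in \Q$. Substituting the formula from Lemma \ref{prlemma} and comparing the exponent pairs of the two sides (which must agree, since they index the same basis vector), I obtain the linear constraints $am + cn = m$ and $bm + dn = n$ holding for all $m,n \in \Q$. Evaluating at $(m,n) = (1,0)$ forces $a = 1$ and $b = 0$, while $(m,n) = (0,1)$ forces $c = 0$ and $d = 1$; hence $\gamma = I$. The scalar prefactor then automatically reduces to $1$, but this is not even needed for the conclusion, since the exponent matching alone already pins down $\gamma$.

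Since the identity matrix is the only element acting trivially, the kernel is trivial and the representation is faithful. The argument is essentially immediate once the basis structure is in place, so I do not anticipate any serious obstacle; the only point requiring care is the observation that distinct exponent pairs index linearly independent vectors and that the $q$-prefactor never vanishes, which together ensure that fixing every basis vector forces the induced linear action on exponents to be the identity. If one wishes to phrase the statement at the level of operators on the completed module rather than on basis vectors, I would simply note that an element of $GL(\mathcal{H}(T^2))$ is determined by its values on the basis $\{Y^n X^m\}$, so acting trivially on the basis is equivalent to being the identity operator.
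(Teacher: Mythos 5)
Your proof is correct and matches the paper's (implicit) argument: the paper simply asserts that faithfulness follows immediately from the basis action of Lemma \ref{prlemma}, and your kernel computation — matching exponents $(am+cn,\, bm+dn)=(m,n)$ for all $m,n\in\Q$ to force $\gamma=I$ — is exactly the reasoning being invoked. The supporting observations you flag (monomials with distinct exponent pairs are linearly independent in $\hat{\mathcal{O}}_\Q$, and the $q$-prefactor is a unit) are the right ones and close the argument.
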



\section{Building Blocks in \texorpdfstring{$\hat{Z}$}{Z}-TQFT} 

To describe the $\hat{Z}$-invariants of general three-manifolds, it is useful to first identify a set of elementary ingredients whose wavefunctions can be used to produce $\hat{Z}$ for a large class of three-manifolds. In this section, we develop such building blocks. We begin by analyzing the vector associated with the solid torus, the simplest three-manifold with torus boundary. Next, we consider tree-link complements, using which we reproduce the formula for $\hat{Z}$ of plumbed manifolds from \cite{GPPV}. We also give a neat formula for $\hat{Z}$ for a Seifert manifold over $S^{2}$. Finally, we discuss gluing rules for knot complements, which provide an alternative construction for a large class of closed three-manifolds.

\subsection{Solid Torus Wavefunction} \label{solid_torus_sec}

One of the basic building blocks of three-manifolds is the solid torus $\mathbb{S} = S^1\times D^2$. In the $\hat{Z}$-TQFT, the corresponding wavefunction is realized as a vector in the vector space associated with the torus. Recall that since $H^1(\mathbb{S},\Q/\Z) = \Q/\Z$, the wavefunctions are labeled by $h\in \Q/\Z$ and $\mathfrak{s} \in  \mathrm{Spin}(\mathbb{S})$,
$$\ket{\mathbb{S},h,\mathfrak{s}}\in \mathcal{H}(T^2).$$
By the rule regarding grading of states (Rule 6) from Section \ref{tqftrules}, we can assume the wavefunctions to be of the form, 
\be \label{solidtorus_decorated_sectors} \ket{\mathbb{S},h,\mathfrak{s}} =  \sum_{n,m\in \Z} c^{h,\mathfrak{s}}_{n,m}(q) \cdot Y^{n+h+\mathfrak{s}(\m)}X^{m+\mathfrak{s}(\ell)} . \ee 
Here, we have chosen the convention that under,
$$H^1(\mathbb{S},\Q/\Z) \rightarrow H^1(T^2,\Q/Z)$$
$$h\mapsto (h,0). $$
$Y$ takes the place of the meridian variable. As reviewed in Section \ref{spinstructures_review}, the solid torus $\mathbb{S}$ admits two $\mathrm{Spin}$-structures, $\mathfrak{s}_0$ and $\mathfrak{s}_1$, which, when evaluated on the meridian and longitude, yield, 
\begin{align*}
    \mathfrak{s}_1 (\m)& = \frac{1}{2} & \mathfrak{s}_1 (\ell) &=\frac{1}{2}, \\
    \mathfrak{s}_0 (\m) & = 0 & \mathfrak{s}_0 (\ell) &=\frac{1}{2}.
\end{align*}

Now, we will deduce this wavefunction through topological invariance. First, we note that $\ket{\mathbb{S},h,\mathfrak{s}}$ should be annihilated by the A-polynomial $(Y-1)$. 
\be
(Y-1)\ket{\mathbb{S},h,\mathfrak{s}} =  \sum_{n,m\in \Z} (c^{h,\mathfrak{s}}_{n-1,m}(q)-c^{h,\mathfrak{s}}_{n,m}(q)) \cdot Y^{n+h+\mathfrak{s}(\m)}X^{m+\frac{1}{2}} = 0 .
\ee 
Which implies $c^{h,\mathfrak{s}}_{n-1,m}(q)= c^{h,\mathfrak{s}}_{n,m}(q)$ or in other words $c^{h,\mathfrak{s}}_{n,m}(q)$ is independent of $n$ and we can write $c^{h,\mathfrak{s}}_{n,m}(q) = c^{h,\mathfrak{s}}_{m}(q)$. 
Also, note that $\mathbb{S}$, $h$, and $\mathfrak{s}$ are invariant (up to framing) under $\tau_+^{2}$. Therefore we require $\tau_{+}^{2 a} \ket{\mathbb{S},h,\mathfrak{s}} = q^{f(2a, h,\mathfrak{s})} \ket{\mathbb{S},h,\mathfrak{s}} $, where $q^{f(2a, h,\mathfrak{s})}$ is the framing anomaly. Since $\tau_{+}^{2 a} \ket{\mathbb{S},h,\mathfrak{s}} = \tau_{+}^{2} \cdots \tau_{+}^{2}\ket{\mathbb{S},h,\mathfrak{s}}$, we have $f(2a, h,\mathfrak{s}) = a f(2, h,\mathfrak{s}) $. The action of $\tau_{+}^{2 }$ on $\ket{\mathbb{S},h,\mathfrak{s}}$ is given by
\begin{align}
 \tau_{+}^{2} \ket{\mathbb{S},h,\mathfrak{s}} = &    \sum_{m,n \in \Z } c^{h,\mathfrak{s}}_{m}(q) q^{- \frac{(2m+1)^{2}}{4}}  Y^{n+h+\mathfrak{s}(\m) + 2 m + 1 }X^{m+\frac{1}{2}} = \sum_{m,n \in \Z } c^{h,\mathfrak{s}}_{m}(q) q^{- \frac{(2m+1)^{2}}{4}}  Y^{n+h+\mathfrak{s}(\m)  }X^{m+\frac{1}{2}}.
\end{align}
Therefore, 
\be 
 c^{h,\mathfrak{s}}_{m}(q) (q^{f(2, h,\mathfrak{s})} - q^{- \frac{(2m+1)^{2}}{4}}) =  0,
\ee 
for all $m \in \Z$, and $h$, and $\mathfrak{s}$. Thus for all $m \in \Z$, $c^{h,\mathfrak{s}}_{m}(q)$ can be non-zero only if $q^{f(2, h,\mathfrak{s})} = q^{- \frac{(2m+1)^{2}}{4}}$. Equivalently, this requires $ m = -\frac{1}{2} \pm \sqrt{-f(2, h,\mathfrak{s}) } $, so $f(2, h,\mathfrak{s})$ must be such that $ -\frac{1}{2} \pm \sqrt{-f(2, h,\mathfrak{s}) }$ is an integer and $m$ is equal to that integer. Hence, $\ket{\mathbb{S},h,\mathfrak{s}}$ is of the following form, 
\be 
\ket{\mathbb{S},h,\mathfrak{s}}^{(m)}   =   \sum_{n \in \Z} Y^{n+h+\mathfrak{s}(\m)  } ( c^{h,\mathfrak{s}}_{-}(q) X^{- (m + \frac{1}{2})} + c^{h,\mathfrak{s}}_{+}(q) X^{(m + \frac{1}{2})}    ) .
\ee
We also expect that the solid torus wavefunction is either symmetric or anti-symmetric under the Weyl group action. This implies 
\be 
\ket{\mathbb{S},h,\mathfrak{s}}^{(m, \pm )}   =  \mathcal{N}^{h,\mathfrak{s}}(q) \sum_{n \in \Z} Y^{n+h+\mathfrak{s}(\m)  } (  X^{- (m + \frac{1}{2})} \pm  X^{(m + \frac{1}{2})}    ), 
\ee 
for some function $\mathcal{N}^{h,\mathfrak{s}}(q) $ and a non-negative integer $m$. Among these possibilities, we select the anti-symmetric wavefunction under Weyl action. Furthermore, we take $m=0$, the smallest non-negative integer, and fix the normalization $\mathcal{N}^{h,\mathfrak{s}}(q) = 1$. Other choices of the non-negative integer $m$ correspond to the wavefunction of the solid torus with Wilson line insertions. With these choices, we define our solid torus wavefunction as follows:
\begin{definition}
In the $\hat{Z}$-TQFT, the solid torus $\mathbb{S} = S^1 \times D^2$ is assigned the following wavefunction (or vector)
    $$\ket{\mathbb{S},h,\mathfrak{s}} = \sum_{n\in \Z}Y^{n+h +\mathfrak{s}(\m)}(X^\frac{1}{2}-X^{-\frac{1}{2}} ).$$
\end{definition}

We note here that a different choice of this wavefunction that satisfies the constraints imposed by the rules in Section \ref{tqftrules} would also give a consistent decorated $\mathrm{Spin}$ TQFT. For instance, the `refined' $\hat{\hat{Z}}$ invariant of \cite{AJK} would be given by setting $c^{h,\mathfrak{s}}_{\pm}(q) = t^{\mp} $ instead (where $t$ is some generic complex variable). It is worth noting that due to the framing anomaly discussed above, our theory will only produce topological invariants up to an overall $\pm q^c$ factor with $c\in \mathbb{Q}$. 

As noted in Section \ref{tqftrules}, the solid torus above (equivalently, the unknot complement) is 0-framed. To frame it by an arbitrary rational number $\frac{p}{r}$, the natural next step is to consider how this vector behaves under action by elements of the mapping class group $\gamma_{p/r}\in SL(2,\mathbb{Z})$. Suppose it is of the form, $\gamma_{p/r}= \begin{pmatrix}
b & a \\
p & r 
\end{pmatrix}$, then the wavefunctions of the $\frac{p}{r}$ framed solid torus $\mathbb{S}_{p/r}$ are given by, $$\ket{\mathbb{S}_{p/r} ,\gamma_{p/r}h,\gamma_{p/r}\mathfrak{s}} = \gamma_{p/r}\ket{\mathbb{S},h,\mathfrak{s}} = \sum_{\substack{n \in h + \mathfrak{s} + \Z  \\ \epsilon = \pm 1}}  \epsilon q^{- \tfrac{a b}{4} - \epsilon a p n - p r n^{2} } Y^{\tfrac{\epsilon a }{2} + r n} X^{\tfrac{ \epsilon b}{2} + p n } .$$
as dictated by Rule 5 of Section \ref{tqftrules}.

\subsection{Links and Plumbings}
Thus far, we have studied the structure of $\mathcal{H}(T^2)$, endowed with the $SL(2,\mathbb{Z})$ action and the decorated $\mathrm{Spin}$ TQFT structure. It turns out this structure also determines the amplitudes of Seifert Manifolds (and more generally Graph/Plumbed Manifolds) and wavefunctions for more general knots and links. Throughout this section, we will show that our formalism above and topological invariance are sufficient to uniquely determine the plumbing formula of \cite{GPPV}. That is, if $M$ is a closed 3-manifold acquired by link surgery described by a weakly negative definite plumbing graph with linking matrix $Q$, then the $\hat{Z}$ invariants are defined by, 
\be \label{plumbingformula}\Hat{Z}_\alpha(M;q) = \pm q^c \prod_v\oint \frac{dx_v}{2\pi i x_v} \left[ \prod_v (x_v^{1/2} -x_v^{-1/2})^{2-\mathrm{deg}( v)} \Theta^{-Q}_\alpha  \right], \ee
where,
$$\Theta^{-Q}_\alpha = \sum_{\ell \in \alpha + 2Q\mathbb{Z}^n} q^{-\frac{( \ell,Q^{-1} \ell)}{4}} x^{\frac{\ell}{2}}.$$
and consequently all statements derived from it, including $F_K$ for torus links. This will also allow us to go further and prove more general surgery and satellite formulas in the next section. 

We will build up towards general plumbed manifolds by first considering the Hopf link complement, slowly increasing the complexity of links, and we will derive the wavefunction assigned to tree link complements. Using this wavefunction, the wavefunction assigned to the solid torus and the gluing rule (rule 8) in section \ref{tqftrules}, we will determine the $\hat{Z}$ for plumbed manifolds. 

We will first find the state that corresponds to the Hopf link, $\ket{Hopf}\in \mathcal{H}(T^2)^{\otimes2}$. The most general form allowed by the rules of $H^{1}(\cdot,\Q/\Z)$ decorated $\mathrm{Spin}$ TQFT for $\ket{Hopf}$ is given by,
$$\ket{Hopf, \mathbf{a}} = \sum_{\substack{ n \in (a_{1}, a_{2}) + \Z^{2} \\ m \in (a_{2}, a_{1}) + \Z^{2} }}  H^{(n_1,m_1),(n_2,m_2)} \ket{n_1,m_1}\otimes\ket{n_2,m_2}.$$
The Hopf link complement is topologically equivalent to mapping cylinder $T^{2} \times_{S} [0,1] $, where $S = \begin{pmatrix}
    0 & -1 \\ 1 & 0
\end{pmatrix} \in SL(2,\Z)$. Therefore, gluing a Hopf link complement to a manifold with torus boundary is equivalent to attaching the mapping cylinder $T^{2} \times_{S} [0,1] $. 

Suppose we take a vector $\ket{v} = \sum_{ \substack{n \in a_{2} + \Z \\ m \in - a_{1} + \Z} } c_{n,m} \ket{n,m}  \in \mathcal{H}(T^{2})$ and glue it with Hopf link complement, we get, 
$$ \sum_{\substack{n \in (a_{1}, a_{2}) + \Z^{2} \\ m \in  (a_{2}, a_{1}) + \Z^{2} }} H^{(n_1,m_1),(n_2,m_2)} c_{n_{2}, -m_{2}} \ket{n_{1},m_{1}}. $$
For any $c_{n,m}$ this should be equal to, 
$$ S \ket{v} = \sum_{ \substack{n_{1} \in a_{1} + \Z \\ m_{1} \in a_{2} + \Z }} c_{m_{1},-n_{1}} q^{-2 m_{1} n_{1}} \ket{n_{1},m_{1}},$$
that is,
$$\sum_{\substack{n_{2} \in a_{2}+ \Z \\ m_{2} \in a_{1} + \Z }}  H^{(n_1,m_1),(n_2,m_2)} c_{n_{2}, -m_{2}} =  c_{m_{1},-n_{1}} q^{-2 m_{1} n_{1}}  . $$
In particular, choosing $c_{n,-m} = \delta_{n,k} \delta_{m,\ell}$, we get, 
\be 
H^{(n_1,m_1),(n_2,m_2)} = q^{- 2 m_{1} n_{1}}  \delta_{ n_{1}, m_{2} } \delta_{n_{2}, m_{1}}.
\ee 
Therefore, we can write down $\ket{Hopf}$ as,
$$\ket{Hopf,\mathbf{a}} = {\sum_{n\in \Z^2+\mathbf{a}}} q^{-2n_1 n_2} Y_1^{n_1}Y_2^{n_2} X_1^{n_2} X_2^{n_1} .$$

Now, we shall use the newly found wavefunction for the complement of the Hopf link to construct a more general class of link complements. An intermediate step in this venture will be finding the state, $\ket{H} \in \mathcal{H}(T^2)^{\otimes3}$, associated with the complement of a 3-component link $H$, which is depicted in the figure \ref{Hwavefn}.
\begin{figure}[H] 
\centering
\includegraphics[width=0.3\textwidth]{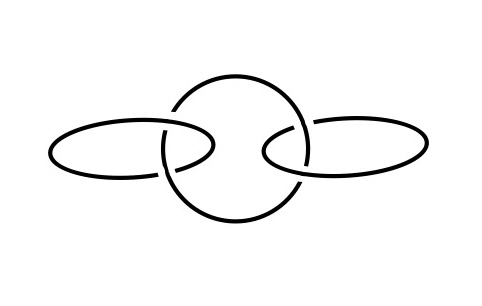}
\caption{Kirby diagram for $\ket{H}\in \mathcal{H}(T^2)^{\otimes3}$ }
\label{Hwavefn}
\end{figure}
Let us assume $\ket{H}$ to be of the most general form:
$$\ket{H,\mathbf{a}} = \sum_{n\in \Z^3+\mathbf{a}} Y_1^{n_1}Y_2^{n_2}Y^{n_3}_3 \cdot H^{n_1, n_2, n_3} (X_1,X_2,X_3).$$
We will also take $1$ to be the index of the central 2-valency vertex, so that the torus corresponding to the boundary of the central unknot is denoted $T^2_1$. Now, we point out that gluing two Hopf links along one component with mapping group element $\tau_-^p$ is topologically equivalent to doing $p$-surgery on the central unknot of $H$ (see Figure \ref{Hopf_gluing}).
\begin{figure}[H] 
\centering
\includegraphics[width=0.7\textwidth]{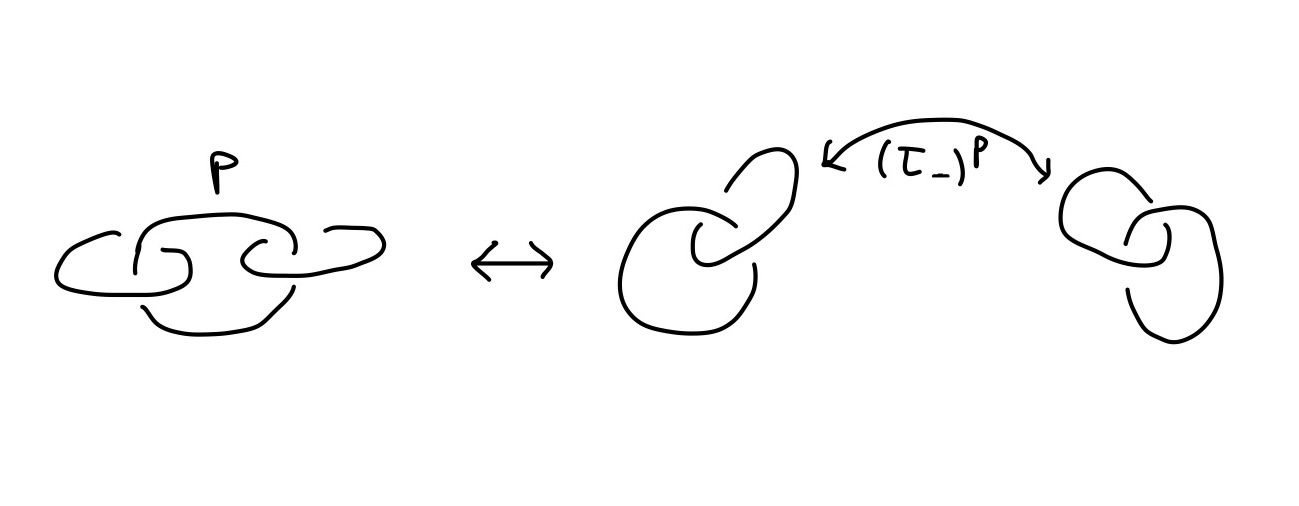}
\caption{The equivalence of $p$-surgery on the central unknot of $H$ and the gluing of two Hopf links}
\label{Hopf_gluing}
\end{figure}
In equation form, this translates to, 
\be \label{Hopf_gluing_eq}
\bra{\mathbb{S},\mathbf{a}_1}( \tau_-^p s)^\dagger \ket{H,\mathbf{a}}_{T^2_1} = \bra{Hopf, (\mathbf{a}_2,\mathbf{a}_1)}\tau_-^p\ket{Hopf,(\mathbf{a}_1,\mathbf{a}_3)}_{T^2_1}.
\ee
The left-hand side amounts to the quantity,
$$\bra{\mathbb{S},\mathbf{a}_1}( \tau_-^p s)^\dagger \ket{H,\mathbf{a}}_{T^2_1} = \sum_{n\in \Z^3+\mathbf{a}}Y_2^{n_2}Y^{n_3}_3 \cdot q^{-p n_1^2} C.T._X\left[X_1^{n_1 p} (X_1^{\frac{1}{2}}-X_1^{-\frac{1}{2}})H^{n_1, n_2, n_3} (X_1,X_2,X_3)\right].$$
Note that the element $(X_1^{\frac{1}{2}}-X_1^{-\frac{1}{2}})$ is an antisymmetric element of $\mathcal{O}_\Q$, independent of $Y$. Therefore, per appendix \ref{high_rank_expansions}, its inverse, compatible with the Weyl group action, is well-defined and unique. We will denote as $(X_1^{\frac{1}{2}}-X_1^{-\frac{1}{2}})^{-1}$ and for the convenience of calculations, we set,
$$H^{n_1, n_2, n_3} (X_1,X_2,X_3) = (X_1^{\frac{1}{2}}-X_1^{-\frac{1}{2}})^{-1}\Tilde{H}^{n_1, n_2, n_3} (X_1,X_2,X_3).$$
So that, 
\be \label{Hopf_LHS} 
\bra{\mathbb{S},\mathbf{a}_1}( \tau_-^p s)^\dagger \ket{H,\mathbf{a}}_{T^2_1} = \sum_{n\in \Z^3+\mathbf{a}}Y_2^{n_2}Y^{n_3}_3 \cdot q^{-p n_1^2} C.T._X\left[X_1^{n_1 p} \tilde{H}^{n_1, n_2, n_3} \right].
\ee
For the right-hand side, we have, 
\be \label{Hopf_RHS}
\bra{Hopf, (\mathbf{a}_2,\mathbf{a}_1)}s^\dagger \tau_-^p\ket{Hopf,(\mathbf{a}_1,\mathbf{a}_3)}_{T^2_1} = \sum_{n\in \Z^3+\mathbf{a}} Y_2^{n_2} Y_3^{n_3}q^{-2n_1 (n_2+n_3)-pn_1^2 } X_2^{n_1} X_3^{n_1} \cdot \delta_{-n_3, n_2+pn_1}.
\ee
The equation \eqref{Hopf_gluing_eq} therefore tells us that the expressions \eqref{Hopf_LHS} and \eqref{Hopf_RHS} are the same, implying that for all $p\in \Z$, we have,  
$$\sum_{n_{1} \in a_{1} +\Z}C.T._{X_{1}}\left[X_1^{n_1 p} \tilde{H}^{n_1, n_2, n_3}(X_{1}, X_{2}, X_{3}) \right] = \sum_{n_{1} \in a_{1} +\Z} q^{-2n_1 (n_2+n_3)} X_2^{n_1} X_3^{n_1} \delta_{n_2+n_3,- n_1p}.$$
Writing, 
$$\tilde{H}^{n_1,n_2,n_3} = \sum_{m_1,m_2,m_3}\tilde{H}^{n_1,n_2,n_3}_{m_1,m_2,m_3} X_1^{m_1} X_2^{m_2} X_3^{m_3}, $$
where the sum over $(m_{1}, m_{2}, m_{3})$ is over $(a_{2}+ a_{3}, a_{1}, a_{1})+\Z^{3}$. The equation above becomes, 
\be \label{prior} 
\sum_{n_1\in \Z+a_1} \tilde{H}^{n_1,n_2,n_3}_{-pn_1,m_2,m_3} = \sum_{n_{1} \in a_{1} +\Z} q^{-2n_1 (n_2+n_3)} \delta_{n_1, m_2} \delta_{n_1, m_3}\delta_{n_2+n_3,- n_1p}. 
\ee
Before we solve this equation, notice that we may take advantage of elements of the mapping class group that leave $H$ invariant. These are easy to spot, as Dehn twists around the meridian of the central unknot should be canceled by Dehn twists around either of the peripheral unknots. More precisely, we have that the elements 
$$\tau_+\otimes \tau_-\otimes 1$$
$$1\otimes \tau_-\otimes \tau_+$$
leave the wavefunction $\ket{H,\mathbf{a}}$ invariant up to framing anomaly and label transformation. Solving for this constraint, we find that topological invariance requires that, 
\begin{align*}
   \tilde{H}^{n_1,n_2,n_3}_{m_1,m_2,m_3} =& q^{2an_1^2}\tilde{H}^{n_1,n_2+am_2,n_3}_{m_1+an_1,m_2,m_3} \\
   \tilde{H}^{n_1,n_2,n_3}_{m_1,m_2,m_3}= &q^{2bn_1^2}\tilde{H}^{n_1,n_2,n_3+bm_3}_{m_1+bn_1,m_2,m_3}.
\end{align*}
Using these recursion relations, one can easily show that equation \eqref{prior} implies, 
$$\sum_{n_1\in \Z+a_1} q^{2pn_1^2}\tilde{H}^{n_1,n_2,n_3}_{0,m_2,m_3} = \sum_{n_{1} \in a_{1} +\Z} q^{2pn_1^2} \delta_{n_1, m_2} \delta_{n_1, m_3}\delta_{n_2+n_3,0}.$$
Matching powers of $q^p$, we find a solution to the $m_1 =0$ case,
$$\tilde{H}^{n_1,n_2,n_3}_{0,m_2,m_3} = \delta_{m_2,n_1}\delta_{m_3,n_1}\delta_{n_2+n_3,0}.$$
The recursion relations then allow us to solve the general case, which yields, 
$$\tilde{H}^{n_1,n_2,n_3}_{m_1,m_2,m_3} = q^{-2m_1n_1}\delta_{m_2,n_1}\delta_{m_3,n_1}\delta_{n_2+n_3,m_1}.$$
So that, 
$$\tilde{H}^{n_1, n_2, n_3}  = q^{-2n_1 (n_2+n_3)} X_1^{n_2+n_3} X_2^{n_1} X_3^{n_1}.$$
Putting everything together, we find:
\be \label{Hform} 
\ket{H,\mathbf{a}} = \sum_{n\in \Z^3+\mathbf{a}} q^{-2n_1(n_2+n_3)} (\prod_{i=1,2,3} Y^{n_i}_i) (X_1^{1/2} - X_1^{-1/2})^{-1} X^{n_2+n_3}_1 X^{n_1}_2X^{n_1}_3 .
\ee

Our next goal is to derive the wavefunction for the ``trinion'' depicted in the figure \ref{trinion}.
\begin{figure}[H] 
\centering
\includegraphics[width=0.5\textwidth]{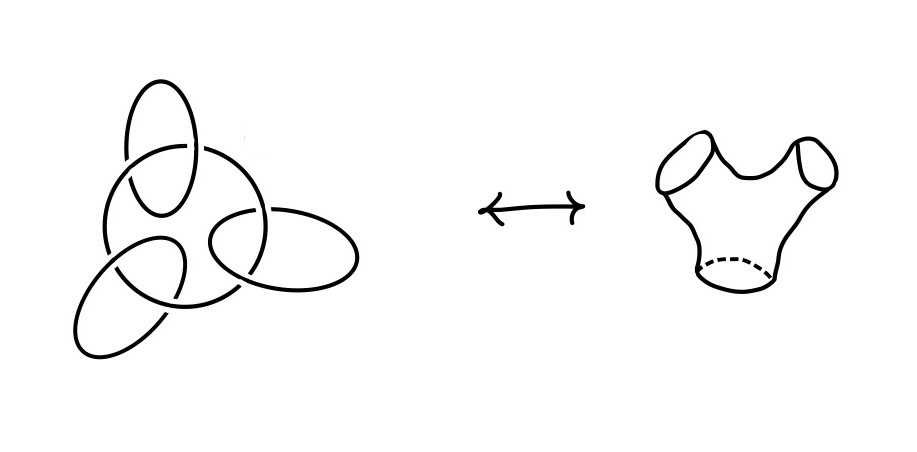}
\caption{Kirby diagram for the trinion wavefunction and its 2d TQFT analogue, the pair of pants. Here, knots with framing indicate glued tori and knots without framing indicate incoming/outgoing states. }
\label{trinion}
\end{figure}
To get the vector corresponding to the trinion $\ket{T} \in \mathcal{H}(T^2)^{\otimes 4}$ (figure \ref{trinion}), or, for that matter, more general tree configurations, we can glue $H$ vectors together using the $S = \begin{pmatrix}
    0&-1\\1&0
\end{pmatrix}$ matrices. 
\begin{figure}[H] 
\centering
\includegraphics[width=0.7\textwidth]{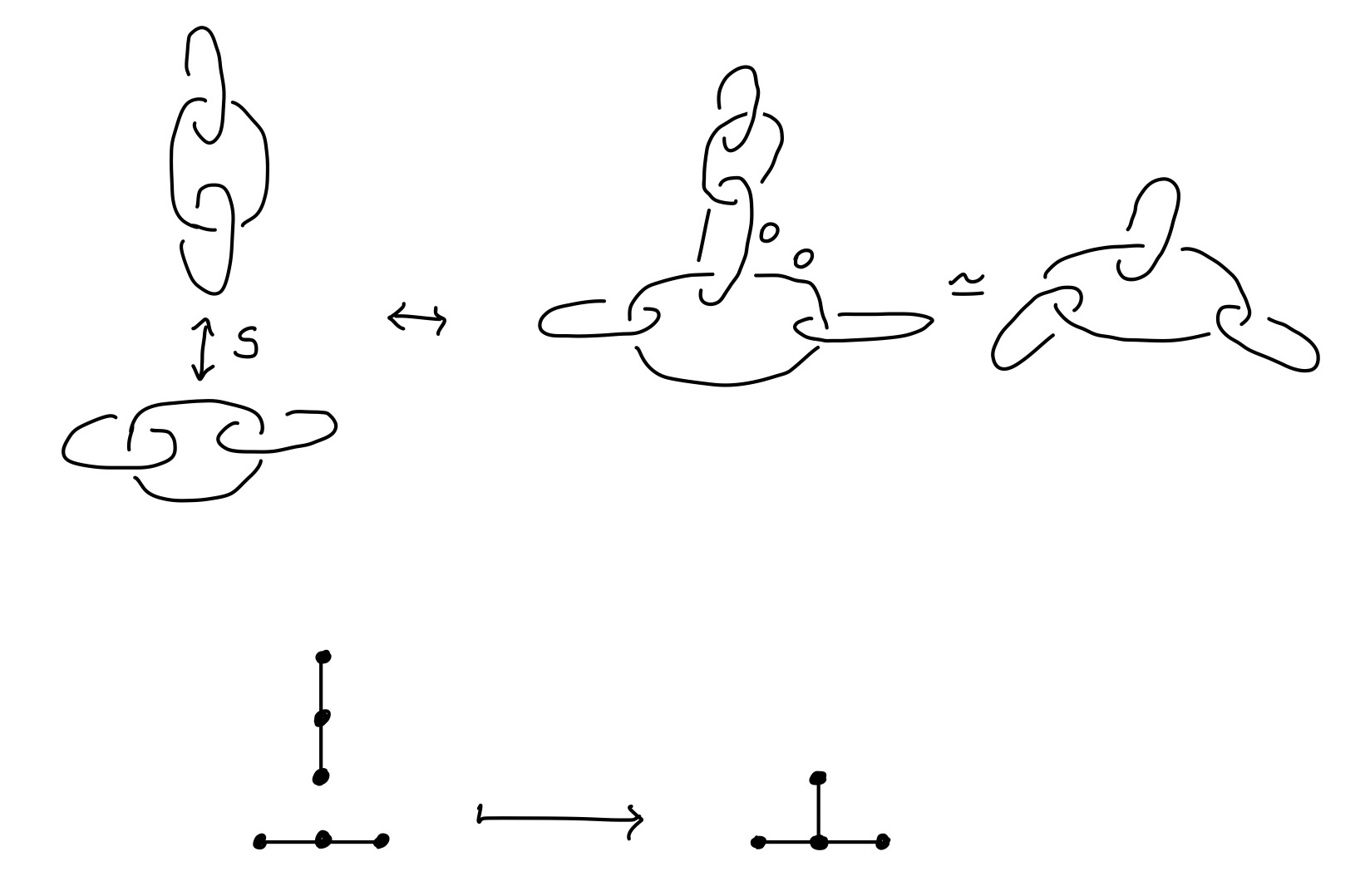}
\caption{Gluing of the 3-manifolds $H$ to yield a 3 valency vertex. Here, the gluing is between the central unknot complement of one $H$ to one of the outermost unknot complements. }
\label{Trinion_gluing}
\end{figure}
In this case, figure \ref{Trinion_gluing} depicts the exact operation needed, and one finds, 
$$\ket{T,\mathbf{a}} = \bra{H,\mathbf{a}_1}S\ket{H,\mathbf{a}_2}_{T^2}.$$
Therefore,
\be \label{trinion_eq} \boxed{\ket{T,\mathbf{a}} = \sum_{\mathbf{n}\in \Z^4+\mathbf{a}} q^{-2n_1(n_2+n_3+n_4)} 
  Y_1^{n_1} Y_2^{n_2} Y_3^{n_3} Y_4^{n_4}(X_1^{1/2} - X_1^{-1/2})^{-2} X_1^{n_2+n_3+n_4} X^{n_1}_2 X^{n_1}_3 X^{n_1}_4.}  \ee

Just as with its 2d analogue, the pair of pants, we can cut and glue the trinion, or more fundamentally, the 3-manifold $H$, arbitrarily many times to acquire more general wavefunctions and amplitudes. Let us define what we mean by this general class of wavefunctions. 
\begin{definition}
    By a \textbf{Tree Link}, we mean a link where every component is the unknot and they are linked as specified by the data of a tree graph such that any two components have either $lk(L_i,L_j)=1$ or $0$, depending on whether their respective vertices in the graph are connected by an edge or not.
\end{definition} 
Then, equations \eqref{trinion_eq} and \eqref{Hform} uniquely specify the form of any tree link, and we have the following result:
\begin{proposition} \label{treelinks_wvfn}
For an $N$-component tree link $L_Q$, the invariant $Z(S^3 \setminus \nu(L_Q),\mathbf{a}) =\ket{L_Q,\mathbf{a}}$ exists and is given by, 
\be \label{treelinks} \boxed{\ket{L_Q,\mathbf{a}} = \sum_{n \in \mathbf{a} + \Z^N } q^{-(Qn,n)} Y^n \prod^N_{i=1}(X_i^{1/2} - X_i^{-1/2})^{1-\mathrm{deg}v_i} X^{Q n}.} \ee
Here $Q$ has $0$ diagonal elements, reflecting that all components are $0$-framed. Additionally, we use the shorthand $Y^n = Y_1^{n_1} Y_2^{n_2}...Y_N^{n_N} $ and $X^n =X_1^{n_1} X_2^{n_2}...X_N^{n_N} $. These are unique and are invariant under Neumann moves (up to an overall framing factor). 
\end{proposition}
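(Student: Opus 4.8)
\textbf{Proof strategy for Proposition \ref{treelinks_wvfn}.}
The plan is to establish \eqref{treelinks} by induction on the number of components $N$, exploiting the fact that every tree may be grown one leaf at a time and that, at the level of wavefunctions, adding a leaf is implemented by an $S$-gluing against the elementary piece $H$ of \eqref{Hform}. The base cases are immediate: for $N=1$ the formula reproduces the solid-torus wavefunction of the preceding definition; for $N=2$ it reproduces $\ket{Hopf,\mathbf{a}}$; for the $3$-vertex path it is exactly \eqref{Hform}; and for the star on four vertices it is \eqref{trinion_eq}. A useful consistency check throughout is that the total power of the factors $(X_i^{1/2}-X_i^{-1/2})$ equals $\sum_i(1-\mathrm{deg}\,v_i)=2-N$ on any tree. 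For the inductive step, given a tree link $L_Q$ on $N+1$ components I pick a leaf $v_{N+1}$ with unique neighbour $v_p$ and let $L_{Q'}$ be the tree link obtained by deleting $v_{N+1}$. I claim the complement of $L_Q$ is the complement of $L_{Q'}$ with a copy of the $H$-complement glued in: one identifies the leaf torus $A$ of $H$ with the $v_p$-torus of $L_{Q'}$ via $S$, so that the centre $c$ of $H$ becomes the new $v_p$ (its valence raised by one) and the remaining leaf $B$ of $H$ becomes $v_{N+1}$. Combinatorially this is the same centre-to-leaf gluing already used to assemble the trinion, and the tree condition guarantees that exactly one of the two glued components is a leaf, so no cycle is created.

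By the gluing rule and the inductive hypothesis, $\ket{L_Q,\mathbf{a}}=\bra{H}\,S\,\ket{L_{Q'}}$, contracted on the $A$/$v_p$ torus with an $S$ insertion, exactly as in the construction of \eqref{trinion_eq}. Using the $S$-action of Lemma \ref{prlemma}, which sends $Y_A^{n}X_A^{m}\mapsto Y_A^{-m}X_A^{n}q^{2mn}$, together with the bilinear form, which contracts $Y_A^{a}X_A^{b}$ against $Y_A^{-a}X_A^{-b}$, the pairing collapses to a sum over the glued index and identifies the centre index $m_c$ of $H$ with the old index $n'_p$ of $L_{Q'}$. Three things then have to be checked: (i) the $q$-quadratic factors combine, namely $H$'s $q^{-2m_c(m_A+m_B)}$, the $S$-twist $q^{2m_cm_A}$, and $L_{Q'}$'s $q^{-(Q'n',n')}$, into exactly $q^{-(Qn,n)}$ for the enlarged linking matrix $Q$, in which the new centre inherits all of $v_p$'s links and acquires a unit link to $B$ while $B$ links only the centre; (ii) the new component $B$ enters with the leaf factor $(X_B^{1/2}-X_B^{-1/2})^{0}=1$; and (iii) the degree factor at the centre comes out correctly.

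Point (iii) is where the argument has real content and is the step I expect to be the main obstacle. The piece $H$ supplies only $(X_c^{1/2}-X_c^{-1/2})^{-1}$, whereas \eqref{treelinks} demands $(X_c^{1/2}-X_c^{-1/2})^{1-\mathrm{deg}(v_p)}$ at the enlarged vertex; the missing powers must be manufactured by the gluing. Writing the $v_p$-factor of $L_{Q'}$ as $(X_p^{1/2}-X_p^{-1/2})^{1-d}=\sum_r\gamma_rX_p^{r}$, with $d$ the valence of $v_p$ in the smaller tree and the Weyl-compatible expansion of appendix \ref{high_rank_expansions}, and carrying it through the contraction, the $S$-transform effectively substitutes $X_p\mapsto X_c^{-1}$ in this factor, because $H$ ties the leaf-$A$ exponent to the centre index through the monomial $X_A^{m_c}$. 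Since $(X^{1/2}-X^{-1/2})^{k}$ is Weyl-(anti)symmetric, this substitution returns $(-1)^{1-d}(X_c^{1/2}-X_c^{-1/2})^{1-d}$, which multiplies $H$'s $(X_c^{1/2}-X_c^{-1/2})^{-1}$ to give precisely $(X_c^{1/2}-X_c^{-1/2})^{-d}$, the required factor at the new vertex, with the sign $(-1)^{1-d}$ absorbed into the overall $\pm q^{c}$ ambiguity already present in the theory. Establishing this transport rigorously, and simultaneously checking that the contracted sums remain in the convergence domain $\mathbb{L}$ of the bilinear form, is the technical heart of the proof; convergence holds because the $Y$-contraction produces a Kronecker delta that collapses the glued index.

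Existence of $Z(S^3\setminus\nu(L_Q),\mathbf{a})$ is then the statement that each inductive pairing converges and returns an element of $\hat{\mathcal{O}}_\Q^{\otimes N}$, which follows from the delta-function structure just described. Uniqueness follows because every building block is itself pinned down by the TQFT rules up to the normalization $\pm q^{c}$ (the grading rule, annihilation by the A-polynomial $Y-1$ on solid-torus factors, invariance under the stabilizing twists $\tau_+\otimes\tau_-$, and Weyl antisymmetry), while the $S$-gluing is canonically fixed by the bilinear form; hence any wavefunction obeying all the rules must agree with the constructed one. Finally, invariance under Neumann moves follows from topological invariance of the assignment together with the mapping-class-group equivariance of the theory: Neumann moves relate tree presentations with diffeomorphic complements, so the assigned vectors coincide up to the framing anomaly $\pm q^{c}$; alternatively one verifies \eqref{treelinks} directly on the two generating moves using the same summation identities that appear in the inductive step.
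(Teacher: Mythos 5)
Your proposal is correct and follows essentially the same route as the paper: induction on the number of components, realizing the addition of a vertex as an $S$-gluing of the elementary piece $\ket{H}$ onto the smaller tree-link state, with uniqueness inherited from the preceding derivation of the building blocks and Neumann invariance from topological invariance (the paper defers the latter to Proposition \ref{plumbed_link_wavfn}). You simply fill in the bookkeeping — the quadratic $q$-factors, the transport of the degree factor $(X^{1/2}-X^{-1/2})^{1-d}$ through the contraction, and convergence via the Kronecker delta — that the paper compresses into ``it is easy to check that \eqref{treelinks} satisfies the induction step.''
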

\textbf{Proof:} Let $Q$ be the linking matrix of a tree graph representing a tree link, and let $Q'$ be the linking matrix corresponding to the tree link acquired from $Q$ by adding a vertex. We proceed by induction by noticing that $\ket{L_{Q'},\mathbf{a}}$ is given by gluing the state $\ket{H,\mathbf{a}_1}$ to the state $\ket{L_Q,\mathbf{a}_2}$ using the $S= \begin{pmatrix}
    0&1\\-1&0
\end{pmatrix}$ mapping class group element of $SL(2,\Z)$. We depict an example of this ``adding a vertex'' operation in figure \ref{Trinion_gluing}. It is easy to check that \eqref{treelinks} satisfies the induction step. Uniqueness is implied by the preceding discussion, and Neumann move invariance will be checked more generally in Proposition \ref{plumbed_link_wavfn}. $\Box$

 We can now easily check that integer surgery in every component of a tree link $L_Q$ recovers the plumbing formula \eqref{plumbingformula} of \cite{GPPV}. That is, let $p=(p_1,p_2,...,p_N)$ denote the set of integer framings on any tree link with $N$ components. Let $M$ be the closed 3-manifold resulting from $p$-surgery on $L_Q$. Now, consider the amplitude,
$$Z(M,\mathbf{a})=\bigotimes^N_{i=1}\bra{\mathbb{S},\mathbf{a}_i} \bigotimes^N_{i=1}(\tau^{p_i}_-)^\dagger \ket{L_Q,\mathbf{a}}.$$
We see that,
\begin{align*}
&\bra{\mathbb{S},\mathbf{a}_i} (\tau^{p_i}_-)^\dagger \sum_{n_i\in \mathbf{a}_i + Z}Y^{n_i}_i (X_i^{1/2} - X_i^{-1/2})^{1-\mathrm{deg}v_i} X^{(Q\mathbf{n})_i}\ket{0}   \\
= & \sum_{n_i\in \mathbf{a}_i + \Z} q^{-p_i n_i^2}\oint \frac{dX_i}{2\pi i X_i} (X_i^{1/2} - X_i^{-1/2})^{2-\mathrm{deg}v_i} X_i^{(Q\mathbf{n})_i + p n_i}.
\end{align*}
We let $Q^f = Q + diag(p_1,p_2,...,p_N)$ denote the framed linking matrix associated to this surgery. Note that above we slightly abuse notation by writing $\mathbf{a}$ for both the label for $M$, and its restriction to $S^3\setminus\nu(L_Q)$. Then, we see that:
$$Z(M,\mathbf{a}) = \prod^N_{i=1}\oint \frac{dX_i}{2\pi i X_i} (X_i^{1/2} - X_i^{-1/2})^{2-\mathrm{deg}v_i} \sum_{\mathbf{n}\in \mathbf{a} + \mathbb{Z}^N} q^{-(Q^f \mathbf{n},\mathbf{n})}X^{Q^f\mathbf{n}}.$$

In light of the plumbing formula definition of $\hat{Z}$ in equation \eqref{plumbingformula} from \cite{GPPV}, we have the following result:
\begin{proposition}\label{Zequalsplumbing}
Let $M$ be a closed 3-manifold acquired by surgery on a weakly negative definite plumbing graph with linking matrix $Q^f$. Then, we have, 
$$Z(M,\mathbf{a}) = \hat{Z}_{Bk(\mathbf{a})}(M;q).$$
Where, from Section \ref{spinstructures_review}, $Bk(\cdot)$ is the Bockstein map induced by the Bockstein homomorphism, $Bk(\mathbf{a}) = Bk(h,\mathfrak{s}) = Q^f(h+\mathfrak{s})\in Spin^c(M)$.

\end{proposition}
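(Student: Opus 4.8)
The plan is to begin directly from the closed-form expression for $Z(M,\mathbf{a})$ obtained in the computation immediately preceding the statement,
$$Z(M,\mathbf{a}) = \prod^N_{i=1}\oint \frac{dX_i}{2\pi i X_i} (X_i^{1/2} - X_i^{-1/2})^{2-\mathrm{deg}\,v_i} \sum_{\mathbf{n}\in \mathbf{a} + \mathbb{Z}^N} q^{-(Q^f \mathbf{n},\mathbf{n})}X^{Q^f\mathbf{n}},$$
and to match it term by term against the GPPV plumbing formula \eqref{plumbingformula}. Since the relevant plumbing graphs are trees, the wavefunction $\ket{L_Q,\mathbf{a}}$ of Proposition \ref{treelinks_wvfn} applies and the integer-surgery computation above is exactly the amplitude $Z(M,\mathbf{a})$. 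The contour prefactors $\prod_i \oint \frac{dX_i}{2\pi i X_i}$ and the vertex factors $(X_i^{1/2}-X_i^{-1/2})^{2-\mathrm{deg}\,v_i}$ already appear identically on both sides, so the entire content reduces to identifying the lattice sum $\sum_{\mathbf{n}} q^{-(Q^f\mathbf{n},\mathbf{n})} X^{Q^f\mathbf{n}}$ with the theta function $\Theta^{-Q^f}_{\alpha}$ and pinning down its label $\alpha$.

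The first concrete step is the change of summation variable $\ell = 2 Q^f \mathbf{n}$. As $\mathbf{n}$ runs over the coset $\mathbf{a} + \mathbb{Z}^N$, the vector $\ell$ runs over $2Q^f\mathbf{a} + 2Q^f\mathbb{Z}^N$; substituting $\mathbf{n} = \tfrac{1}{2}(Q^f)^{-1}\ell$ turns the exponent into $(Q^f\mathbf{n},\mathbf{n}) = \tfrac{1}{4}(\ell,(Q^f)^{-1}\ell)$ and the monomial into $X^{Q^f\mathbf{n}} = X^{\ell/2}$. Hence the lattice sum equals $\sum_{\ell \in 2Q^f\mathbf{a} + 2Q^f\mathbb{Z}^N} q^{-\frac{(\ell,(Q^f)^{-1}\ell)}{4}} X^{\ell/2} = \Theta^{-Q^f}_{2Q^f\mathbf{a}}$, which is precisely the $\Theta$-function of \eqref{plumbingformula}. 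Comparing this coset label against the $\mathrm{Spin}^c$ identification of Section \ref{spinstructures_review}, under which the combined decoration $\mathbf{a} = \mathfrak{s} + h$ is sent to its $\mathrm{Spin}^c$ image by the Bockstein map, confirms $\alpha = Bk(\mathbf{a})$ and completes the algebraic part of the identification. One must keep careful track of the factor of $2$ relating $\mathbf{a}\in (\tfrac{1}{2}(Q^f)^{-1}\mathbb{Z}^N)/\mathbb{Z}^N$ to the $\mathrm{Spin}^c$ lattice $\mathbb{Z}^N/Q^f\mathbb{Z}^N$ fixed in Section \ref{spinstructures_review}.

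Two points then require genuine care. First, for high-valence vertices ($\mathrm{deg}\,v_i > 2$) the factor $(X_i^{1/2}-X_i^{-1/2})^{2-\mathrm{deg}\,v_i}$ is a negative power whose Laurent expansion needs a choice of direction before the constant-term map $\oint \frac{dX_i}{2\pi i X_i}$ is applied; I would invoke the unique Weyl-symmetric inverse of Appendix \ref{high_rank_expansions} that is used throughout the gluing construction, and verify that it coincides with the symmetric expansion under which \eqref{plumbingformula} is read. Second, the overall prefactor $\pm q^c$ in \eqref{plumbingformula} is supplied by the framing anomaly recorded in Section \ref{solid_torus_sec}: the solid-torus and gluing data fix the amplitude only up to an overall $\pm q^c$ with $c \in \mathbb{Q}$, which is exactly the ambiguity present in the GPPV normalization, so the two agree up to this factor.

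The step I expect to be the real obstacle is not the change of variables but the compatibility of the expansion prescription with convergence. Weak negative definiteness of $Q^f$ is the hypothesis that does the work here: it guarantees that after extracting the constant term in $X$ the resulting series is bounded below in $q$-degree, so that both $Z(M,\mathbf{a})$ and the GPPV formula converge to the same element of $\mathbb{C}_q$ rather than merely matching as formal expressions. Establishing that the Weyl-symmetric expansion of the vertex factors used in the TQFT gluing is exactly the one rendering \eqref{plumbingformula} convergent under this hypothesis is where the argument must be made with attention; once this compatibility is in place, the term-by-term matching above yields $Z(M,\mathbf{a}) = \hat{Z}_{Bk(\mathbf{a})}(M;q)$ up to the stated $\pm q^c$.
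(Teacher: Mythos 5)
Your proposal is correct and follows essentially the same route as the paper, which presents the proposition as an immediate consequence of the amplitude computation directly preceding it: the change of variables $\ell = 2Q^f\mathbf{n}$ identifying the lattice sum with $\Theta^{-Q^f}_{2Q^f\mathbf{a}}$ is exactly the (unstated) matching the paper relies on, and your bookkeeping of the factor of $2$ in the $\mathrm{Spin}^c$ label agrees with the convention of Section \ref{spinstructures_review}. The additional care you take with the Weyl-symmetric expansion of the vertex factors, the $\pm q^c$ framing ambiguity, and convergence under weak negative definiteness goes beyond what the paper writes down but is consistent with how those issues are handled elsewhere in the text.
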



\subsubsection{Examples: Seifert Manifolds over \texorpdfstring{$S^2$}{S2}}\label{seifert_section}

We now demonstrate one of the uses of our formalism by computing the $\hat{Z}$ invariants of general Seifert manifolds over $S^2$ analytically, something that would be otherwise tricky using only plumbing graphs with integer coefficients.  

Let $\Sigma_g$ be an oriented Riemann surface of genus $g$. Then we denote by $M\left(b,g; \frac{p_1}{r_1},\frac{p_2}{r_2},...,\frac{p_d}{r_d}\right)$ the Seifert fibration over $\Sigma_g$ with degree $b$ and $d$ exceptional fibers given by the data $\{\frac{p_i}{r_i}\}_{i=1,..,d}$. Though we focus on $\Sigma_g = S^2$ for now, in Section \ref{seifert_sigma_g_section}, we will return to more general Riemann surfaces in some detail. 

The surgery diagram we will be interested in is displayed in the figure \ref{preseifert_manifold}. The wavefunction of the 0-framed link in question is given by,
$$\ket{L_Q,\mathbf{a}} = \sum_{\mathbf{n}\in \Z^{d+1}+\mathbf{a}}q^{-(Q\mathbf{n},\mathbf{n})} Y^\mathbf{n} (X^{\frac{1}{2}}_0 - X^{-\frac{1}{2}}_0)^{1-d} X^{Q\mathbf{n}} = \underset{m_0\in \frac{1}{2}\Z}{\sum_{\mathbf{n}\in \Z^{d+1}+\mathbf{a}}}q^{-(Q\mathbf{n},\mathbf{n})} f^{m_0}_{d+1} Y^\mathbf{n} X^{-m_0}_0 X^{Q\mathbf{n}}.$$
where $f^{m_0}_{d+1}$ coefficients of the $X_{0}$ expansion of $(X^{\frac{1}{2}}_0 - X^{-\frac{1}{2}}_0)^{1-d}$, and
$$Q = \begin{pmatrix}
    0&1&1&\\
    1&0&0 &\cdots\\
    1&0&0 &\\
    &\vdots & &\ddots
\end{pmatrix}.$$
\begin{figure}[H] 
\centering
\includegraphics[width=0.4\textwidth]{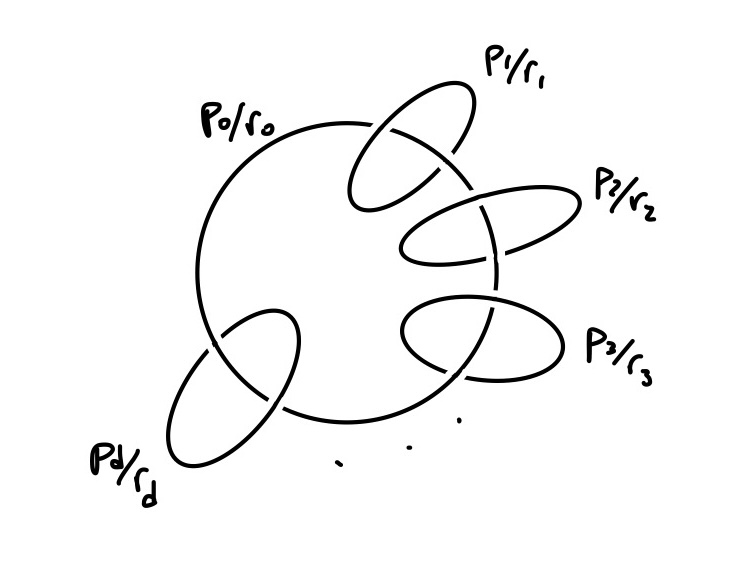}
\caption{A surgery diagram yielding $M\left(b,0; \frac{p_1}{r_1},\frac{p_2}{r_2},...,\frac{p_d}{r_d}\right) $ whenever $\frac{p_0}{r_0} = -b$. Notice that the fiber data $\frac{p_i}{r_i}$ is sometimes written $\frac{r_i}{p_i}$ in the literature, so this figure fixes our conventions.}
\label{preseifert_manifold}
\end{figure}
The $Z(\cdot)$ invariant associated to the resultant 3-manifold decorated by the label $\mathbf{a} = (\mathbf{a}_0,\mathbf{a}_1,...,\mathbf{a}_d)$ is given by,
$$Z(M\left(b,0; \frac{p_1}{r_1},\frac{p_2}{r_2},...,\frac{p_d}{r_d}\right),\mathbf{a})= \left(\otimes^d_{i=0} \bra{\mathbb{S},\mathbf{a}_i}\gamma_{p_i/r_i} ^\dagger \right) \ket{L_Q}.$$
At the vector space factors on the leaves of the graph in figure \ref{preseifert_manifold}, we can repeat the same manipulations in the proof of Theorem \ref{splicing_theorem} and see that,
$$\bra{\mathbb{S},\mathbf{a_i}} \gamma_{p_i/r_i}^\dagger Y^{-n_i}_i X^{n_0}_i\ket{0} \cong  \sum_{\epsilon_i =\pm1} \epsilon_i q^{-n_i^2\frac{p_i}{r_i} -2\frac{ a_i n_i}{r_i}} \delta_{-n_0,\frac{p_i n_i+\alpha_i +\epsilon_i/2}{r_i}} \cdot \delta^\mathbb{Z} \left( \frac{r_i n_0 +\alpha_i+\epsilon_i/2}{p_i}\right),$$
where $(\alpha_0,\alpha_1,...,\alpha_d) = \alpha = Bk(\mathbf{a})$. Similarly, at the central vertex, we have, 
$$\bra{\mathbb{S},\mathbf{a_0}}\gamma_{p_0/r_0}^\dagger Y^{-n_0}_0 X^{k}_0\ket{0} = \sum_{\epsilon_0 =\pm1} \epsilon_0 q^{-n_0^2\frac{p_0}{r_0} -2\frac{a_0 n_0}{r_0}} \delta_{-k,\frac{p_0 n_0+\alpha_0 +\epsilon_0/2}{r_0}} \delta^\mathbb{Z}( n_0 +\frac{r_0+1}{2}),$$
where $k$ is a stand-in variable for $-m_0 + n_1+n_2+...+n_d$. Then, using the Kronecker symbols, we collapse the sums over $n_1,n_2,...,n_d$ by writing,
$$n_i = -\frac{r_i n_0}{p_i}-\frac{\alpha_i}{p_i} -\frac{\epsilon_i}{2p_i},$$
and consequently,
\begin{align*}
    m_0 = & -k +\sum_i n_i = \frac{p_0 n_0+\alpha_0 +\epsilon_0/2}{r_0} - \sum_i \frac{r_i n_0}{p_i}+\frac{\alpha_i}{p_i} +\frac{\epsilon_i}{2p_i} \endline
    = & \eta \cdot n_0  +\xi_\alpha + \frac{\epsilon_0}{2r_0} -\sum_i\frac{\epsilon_i}{2p_i},
\end{align*}
where we have defined,
$$\eta = \frac{p_0}{r_0} - \sum^d_{i=1} \frac{r_i}{p_i}, \hspace{3mm} \xi_\alpha = \frac{\alpha_0}{r_0} - \sum^d_{i=1}\frac{\alpha_i}{p_i}.$$
Computing the $q$ exponent in terms of $n_0$, we find, 
$$ -\eta \cdot n^2_0 - 2n_0 \xi_\alpha + \text{constant terms}.$$
For the Seifert manifold in question, we set $r_0=1$ and the result is then,
\begin{align*}
 & Z(M\left(b,0; \frac{p_1}{r_1},\frac{p_2}{r_2},...,\frac{p_d}{r_d}\right),\mathbf{a}) \endline
 =&  \sum_{n_0\in\mathbb{Z}}q^{-\eta \cdot n_0^2 - 2n_0 \xi_\alpha } \cdot \underset{i=0,1,..,d}{\sum_{\epsilon_i = \pm1}}\epsilon_0 \prod^d_{i=1}\epsilon_i\delta^\mathbb{Z} \left( \frac{r_i n_0 +\alpha_i+\epsilon_i/2}{p_i}\right)  sgn(m)^{d+1} \binom{\frac{d-3+2|m|}{2}}{d-2} \endline  & \hspace{5cm} \times  \delta^\mathbb{Z}(m+\frac{d+1}{2})|_{m = \eta n_0 + \xi_\alpha +\epsilon_0 /2 - \sum^d_{i=1}\epsilon_i/2p_i}.
\end{align*}
Recall that if $A = \prod_{i=1,...,d}p_i$, then the integer homology of $M(b,0;\frac{p_1}{r_1},...,\frac{p_d}{r_d})$ is given by,
\begin{align*}
  H_1 &\simeq \mathbb{Z}^{\delta_0,b}\oplus TorH_1   &  |Tor H_1| &= |A| |b|.
\end{align*}
Then, we have the following result: 
\begin{proposition}
    Let $M= M(b,0;\frac{p_1}{r_1},...,\frac{p_d}{r_d})$ denote the orientable Seifert fibration over $S^2$ of degree $b$ with singular fiber data $(\frac{p_1}{r_1},...,\frac{p_d}{r_d})$. Let $\eta = -b -\sum^d_{i=1}\frac{r_i}{p_i}$ denote its Euler number and suppose $\eta<0$ Fix a $Spin^c$ structure representative, $$(0,\alpha) \in \mathbb{Z}^{\delta_0,b}\oplus TorH_1+\frac{1}{2}, $$
    with $\alpha = (\alpha_0,...,\alpha_d)$ and define $\xi_\alpha = \alpha_0 - \sum_i \frac{\alpha_i}{p_i}$. Then, the $\hat{Z}$ invariant of $M$ is given by, 
    $$\Hat{Z}_{\alpha}(M(b,0;\frac{p_1}{r_1},...,\frac{p_d}{r_d});q) \cong \sum_{n\in \mathbb{Z}}q^{-\eta n^2 -2n \xi_\alpha} \cdot \Psi^n_{\Vec{p},\Vec{r},\alpha,0}$$
    where, 
    $$\Psi^n_{\Vec{p},\Vec{r},\alpha,0} = \underset{i=1,...,d}{\sum_{\epsilon_i=\pm1}} \prod_i \epsilon_i \delta^\mathbb{Z}\left( \frac{r_i n +\alpha_i+\epsilon_i/2}{p_i}\right)\cdot \mathrm{sgn}(m)^d \binom{\frac{d}{2}-2+|m|}{d-3}\delta^\mathbb{Z}\left(m+\frac{d}{2}\right)\bigg|_{m = \eta n_0 + \xi_\alpha- \sum_i\frac{\epsilon_i}{2p_i}}$$
\end{proposition}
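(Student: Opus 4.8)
The plan is to assemble the amplitude $Z(M,\mathbf{a})$ directly from the gluing rule (Rule 8 of Section~\ref{tqftrules}), reusing the vertex evaluations already obtained in the computation preceding the statement, and then to carry out the one remaining summation — over the central sign $\epsilon_0$ — which is where essentially all of the combinatorial content lives. Concretely, I would start from the surgery diagram of Figure~\ref{preseifert_manifold}, cap each of the $d$ leaf tori and the central torus of the tree-link state $\ket{L_Q,\mathbf{a}}$ with a solid-torus wavefunction framed by $\gamma_{p_i/r_i}$, and invoke the leaf and central contractions $\bra{\mathbb{S},\mathbf{a}_i}\gamma_{p_i/r_i}^\dagger Y_i^{-n_i}X_i^{n_0}\ket{0}$ and $\bra{\mathbb{S},\mathbf{a}_0}\gamma_{p_0/r_0}^\dagger Y_0^{-n_0}X_0^{k}\ket{0}$ displayed above. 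These arise from the same manipulations as in the proof of the splicing formula (Proposition~\ref{splicing_theorem}): expand the antisymmetric solid-torus factor, apply Lemma~\ref{prlemma}, and take the constant term, with the negative powers of $(X^{1/2}-X^{-1/2})$ producing the binomial coefficients.

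Next I would specialize to the Seifert-over-$S^2$ case by setting $r_0=1$, $p_0=-b$, so that $\eta=-b-\sum_i r_i/p_i$ and $\xi_\alpha=\alpha_0-\sum_i\alpha_i/p_i$, and then collapse the $d$ leaf sums over $n_1,\dots,n_d$ against the Kronecker deltas, solving $n_i=-(r_in_0+\alpha_i+\epsilon_i/2)/p_i$. Substituting these into the quadratic $q$-exponent and completing the square isolates the factor $q^{-\eta n_0^2-2n_0\xi_\alpha}$, with all cross-terms and $n_0$-independent pieces absorbed into the overall $\pm q^c$ ambiguity (hence the $\cong$). At this stage one is left exactly with the pre-statement display: a single sum over $n_0$ and over all $\epsilon_0,\dots,\epsilon_d=\pm1$, with summand $\epsilon_0\prod_i\epsilon_i\,\delta^\mathbb{Z}(\cdots)\,\mathrm{sgn}(m)^{d+1}\binom{(d-3+2|m|)/2}{d-2}\delta^\mathbb{Z}(m+\tfrac{d+1}{2})$ evaluated at $m=\eta n_0+\xi_\alpha+\epsilon_0/2-\sum_i\epsilon_i/(2p_i)$.

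The crux is then the $\epsilon_0$-summation. Writing $m=m'+\epsilon_0/2$ with $m'=\eta n_0+\xi_\alpha-\sum_i\epsilon_i/(2p_i)$, the weight $\epsilon_0$ converts $\sum_{\epsilon_0=\pm1}$ into the finite difference $g(m'+\tfrac12)-g(m'-\tfrac12)$ of $g(x)=\mathrm{sgn}(x)^{d+1}\binom{(d-3+2|x|)/2}{d-2}\delta^\mathbb{Z}(x+\tfrac{d+1}{2})$. First, the two constraints $\delta^\mathbb{Z}(m'\pm\tfrac12+\tfrac{d+1}{2})$ coincide and equal $\delta^\mathbb{Z}(m'+\tfrac{d}{2})$, reconciling the parity shift between the two forms of $\Psi$. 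The remaining identity, $g(m'+\tfrac12)-g(m'-\tfrac12)=\mathrm{sgn}(m')^d\binom{d/2-2+|m'|}{d-3}\delta^\mathbb{Z}(m'+\tfrac{d}{2})$, is a Pascal-type relation $\binom{N}{k}-\binom{N-1}{k}=\binom{N-1}{k-1}$ dressed with the $\mathrm{sgn}$ bookkeeping that tracks the sign change across $m'=0$; I would verify it by splitting into the regimes $m'\geq\tfrac12$ and $m'\leq-\tfrac12$ (and the integer case, according to the parity of $d$), where the tops of the two binomials differ by one and combine by Pascal's rule, while the difference of the leading signs reproduces the extra $\mathrm{sgn}(m')$. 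This yields exactly $\Psi^n_{\vec{p},\vec{r},\alpha,0}$ and hence the claimed sum over $n_0=n$.

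Finally, to upgrade $Z(M,\mathbf{a})$ to $\hat{Z}_\alpha(M;q)$ I would appeal to Proposition~\ref{Zequalsplumbing}: the hypothesis $\eta<0$ ensures the associated plumbing is weakly negative definite, so the TQFT amplitude computed above agrees with the GPPV invariant, and the label dictionary of Section~\ref{spinstructures_review} identifies $\mathbf{a}=(\mathfrak{s},h)$ with the $\mathrm{Spin}^c$ representative $\alpha=Bk(\mathbf{a})$, matching the $\xi_\alpha$ that appears in both expressions. The main obstacle I anticipate is precisely this $\epsilon_0$-summation identity: forcing the $\mathrm{sgn}$ powers, the simultaneous shift of the binomial's top and bottom entries, and the half-integer versus integer delta constraints to line up at once is delicate, and the boundary behaviour at $m'=0$ (where $\mathrm{sgn}$ vanishes) must be treated separately to rule out spurious contributions.
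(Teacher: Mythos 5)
Your proposal follows essentially the same route as the paper: the same gluing of $\gamma_{p_i/r_i}$-framed solid tori onto the star-shaped tree-link state, the same leaf and central contractions, the same collapse of the $n_1,\dots,n_d$ sums against the Kronecker deltas, and the same identification with $\hat{Z}$ via the weakly negative definite plumbing formula. The only difference is that you spell out the final $\epsilon_0$-summation (the Pascal-type finite-difference identity converting $\mathrm{sgn}(m)^{d+1}\binom{(d-3+2|m|)/2}{d-2}$ into $\mathrm{sgn}(m')^{d}\binom{d/2-2+|m'|}{d-3}$), which the paper performs implicitly when passing from its pre-proposition display to the stated formula; your verification of that step, including the boundary case at $m'=0$, is correct.
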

\begin{remark}
    For $d=3$, the above $q$-series have proved to be interesting modular objects (see e.g. \cite{3dmod}). It is tempting to speculate that the closed formula above could be used to demonstrate more general modular properties of Seifert manifolds with any number of singular fibers. It would be interesting to examine this hypothesis in the future.
\end{remark} 
\begin{example}[Seifert Manifolds] Some familiar examples include, 
\begin{align*}
  \hat{Z}(M(2,0;-2,-3,-2)) & \cong \begin{aligned}
  &  -1+q^4-q^8+q^{20}-q^{28}+q^{48}-q^{60}+q^{88}-q^{104}+q^{140}-q^{160}+\ldots \\
& -\frac{1}{2} q \left(1-q+q^2-q^5+q^7-q^{12}+q^{15}-q^{22}+q^{26}-q^{35}+q^{40}-q^{51}+q^{57}+\ldots\right) \\
  & 1-q^2+q^{10}-q^{16}+q^{32}-q^{42}+q^{66}-q^{80}+q^{112}-q^{130}+q^{170}-q^{192}+\ldots \\
 & \frac{1}{2}  \left(-1+q-q^2+q^5-q^7+q^{12}-q^{15}+q^{22}-q^{26}+q^{35}-q^{40}+q^{51}-q^{57} +\ldots\right) \\ & \vdots
\end{aligned}   \endline
\hat{Z}(M(2,0;-2,-3,-\frac{3}{2})) &\cong \begin{aligned}
& 1+q^2-q^7-q^{13}+q^{23}+q^{33}-q^{48}-q^{62}+q^{82}+q^{100}-q^{125}+\ldots \\ 
  & \frac{1}{2} q^2 \left(-1+2 q-q^3+q^6-2 q^{10}+q^{15}-q^{21}+2 q^{28}-q^{36}+q^{45}-2 q^{55}+q^{66}+\ldots\right)\\
   & 1-2 q+q^3-q^6+2 q^{10}-q^{15}+q^{21}-2 q^{28}+q^{36}-q^{45}+2 q^{55}-q^{66}+q^{78}+\ldots\\
 &  -\frac{1}{2} q  \left(1+q^2-q^7-q^{13}+q^{23}+q^{33}-q^{48}-q^{62}+q^{82}+q^{100}-q^{125}+\ldots\right)\\  & \vdots 
   \end{aligned} \endline 
   \hat{Z}(M(2,0;-2,-3,-\frac{4}{3})) &\cong \begin{aligned}
& -1+q-q^3-q^{14}+q^{15}+q^{34}-q^{42}+q^{49}-q^{71}+q^{80}-q^{92}-q^{133}+\ldots\\
&   \frac{1}{2} q^2   \left(-1+q+q^2+q^9-q^{22}-q^{39}-q^{44}+q^{53}-q^{67}+q^{78}+q^{85}+\ldots\right)\\
&   1-q+q^3+q^{14}-q^{15}-q^{34}+q^{42}-q^{49}+q^{71}-q^{80}+q^{92}+\ldots.\\
&    q^3 \left(-1- q^{10}+ q^{15}+ q^{35}- q^{85}-   q^{125}+\ldots\right)\\ & \vdots \end{aligned}
\end{align*}
We can also compute any example with any number of fibers, for instance, 
\begin{align*}
    \hat{Z}(M(2,0;-2,-\frac{3}{2},-\frac{4}{3}, \frac{5}{7})) &\cong  \begin{aligned}
     &  \frac{1}{2} \left(4-12 q^{81}+18 q^{202}-22 q^{319}+30 q^{578}-34 q^{769}+37 q^{917}+52q^{1800}+\ldots\right) \\
 & \frac{1}{2} q^{21} \left(-7+11 q^{43}+12 q^{56}-18 q^{187}+23 q^{309}-30 q^{547}-41q^{1064}+\ldots\right) \\
& -4-8 q^{30}-18 q^{212}+22 q^{311}+26 q^{431}-30 q^{598}+41 q^{1130}+48q^{1533}+\ldots \\
& \frac{1}{2} q^{216} \left(18-18 q^5-23 q^{124}+30 q^{390}+36 q^{666}+41 q^{887}-41 q^{898}+\ldots\right) 
    \end{aligned}
\end{align*}
\end{example}


\subsection{Inner Product and Surgery Formulas}
The rules in Section \ref{tqftrules} tell us how surgery should be performed in this theory. That is, suppose $M$ is a 3-manifold acquired through gluing two 3-manifolds, $M_1$ and $M_2$, along a common closed boundary $\Sigma$ with an element of the mapping class group $\gamma \in MCG(\Sigma)$. Then, 
$$\ket{M,h,\mathfrak{s}} = \bra{M_1,h_1,\mathfrak{s}_1}\gamma\ket{M_2,h_2,\mathfrak{s}_2}_\Sigma \in \mathcal{H}\bigg((\d M_1 \setminus \Sigma) \sqcup (\d M_2 \setminus \Sigma)\bigg),$$
where $h_i$ and $\mathfrak{s}_i$ are the restrictions $h\vert_{M_i}$ and $\mathfrak{s}\vert_{M_i}$ respectively. As discussed in section \ref{spinstructures_review}, the labels $(h,\mathfrak{s})$ may be recast into a single label $\mathbf{a}\in \mathcal{A}$, whose image under the Bockstein map is a $Spin^c$-structure:
$$Bk(\mathbf{a}) \in Spin^c(M).$$
As such, we write the gluing rule above as,
$$\ket{M,\mathbf{a}} = \bra{M_1,\mathbf{a}\vert _{M_1}}\gamma\ket{M_2,\mathbf{a}\vert _{M_2}}_\Sigma.$$

In \cite{GM}, a two-variable series for a knot complement in $S^3$ was defined, so long as the knot complement could be described by a plumbing graph with one vertex (or unknot) left unsurgered. We shall call these series ``GM series,'' and for a knot $K$, they will be denoted by $F_K(X,q)$. In light of Proposition \ref{Zequalsplumbing}, it follows that for such a knot complement, we have, 
$$\ket{K,\mathbf{a}} = \sum_{n\in \Z +\mathbf{a}}Y^n F_K(X,q).$$

More generally, if $M$ is a plumbed knot complement in a general 3-manifold $M_3$, $M = M_3 \setminus \nu(K)$, then our results imply that the precise relation between the GM invariants from \cite{GM} (or $F_K$ invariants) of plumbed knot complements, and our wavefunctions is given by, 
$$|M,\mathbf{a} \rangle = \sum_{n\in \mathbb{Z}} Y^{n+\mathbf{a}} \psi^{\mathbf{a}}_{n}(M;X,q) \in \mathcal{H}(T^2),$$
where
$$\psi^{\mathbf{a}}_{n}(M;X,q) = \hat{Z}_{Bk(\mathbf{a})}  (M;X,n,q).$$

Suppose we wish to glue two knot complements in 3-manifolds, $M_1$ and $M_2$, to make a closed 3-manifold, $M$. We need a gluing matrix that reverses the orientation of the meridian ($X\rightarrow X^{-1} $) and keeps the orientation of longitude. We will denote this element as $s$. Our inner product \eqref{innerprod} simply tells us this is:
\begin{align*}
    \ket{M,\mathbf{a}} = \bra{M_1,\mathbf{a}_1} s \ket{M_2,\mathbf{a}_2} = & \oint \frac{dX}{2\pi i X} \oint \frac{dY}{2\pi i Y} \sum_{n_i\in \Z} \psi^\mathbf{a_1}_{n_1}  (M_1;X^{-1},q) Y^{-n_1-\mathbf{a}_1} Y^{n_2+\mathbf{a}_2} \psi^\mathbf{a_2}_{n_2}  (M_2;X^{-1},q)  \\
    = & \oint \frac{dX}{2\pi i X} \sum_{n\in \mathbb{Z}} \psi^{\mathbf{a}_1}_{n}  (M_1;X^{-1},q) \cdot \psi^{\mathbf{a}_1}_{n} (M_2;X^{-1},q).
\end{align*}
In the second line, we have used that $\mathbf{a}_2-\mathbf{a}_1 =0$. Seeing as this is the constant term of the integrand, the result is invariant under $X\leftrightarrow X^{-1}$. Therefore, this is exactly the gluing theorem (90) in \cite{GM} (up to a framing factor).

We now specialize to the case of gluing two knot complements along their torus boundaries. This leads to an explicit formula for the $\hat{Z}$ invariant of the resulting closed three-manifold.

Let us consider a special case of the situation above, where both knot complements are in $S^3$. For clarity, we will write $|S^3 \setminus N(K),\mathbf{a}\rangle = |K,\mathbf{a}\rangle $. Again, we take $\Tilde{\gamma}_{p/r} = \tau_+ ^{k_1}\tau_- ^{k_2}\tau_+ ^{k_3}... $ to be the $SL(2,\mathbb{Z})$ element representing the gluing matrix:
\be
 \begin{pmatrix}
1 & k_1 \\
0 & 1 
\end{pmatrix} \begin{pmatrix}
1 & 0 \\
k_2 & 1  \end{pmatrix} ... = \begin{pmatrix}
b & a \\
p & r 
\end{pmatrix}.
\ee
These elements can then be seen as ``adding'' a certain framing to the 0-framed knot complement $|K,\mathbf{a}\rangle$ by acting via $SL(2,\mathbb{Z})$ on the boundary. At this point, it is also convenient to introduce the following notation, 
\begin{definition}
    Let $S$ be any set, then $\delta^S(\cdot)$ denotes the indicator function, 
    $$\delta^S(x)  = \begin{cases} 
       1 & x\in S\\
      0 & \text{otherwise.}
   \end{cases}$$
\end{definition}
Wavefunctions of knot complements are always of the form,
$$\ket{K_i,\mathbf{a}_i} = \underset{n\in \Z+\mathbf{a_i}}{\sum_{m\in \Z+\frac{1}{2}}} f^m_{K_i}(q)  Y^n X^m.$$
Then, the gluing of two knot complements with a general mapping class group element $\gamma = \begin{pmatrix}
    r&a\\
    p&b
\end{pmatrix}$ should be given by:
$$\bra{K_1,\mathbf{a}_1} \gamma \ket{K_2,\mathbf{a}_2}.$$
To this end, we find the following result:
\begin{proposition} \label{splicing_theorem}
Let $M$ be a closed 3-manifold obtained by gluing $S^3 \setminus \nu(K_1)$ and $S^3 \setminus \nu(K_2)$ along $T^2$ with mapping class group element $\gamma = \begin{pmatrix}
    r&a\\
    p&b
\end{pmatrix}$, then, 
    $$Z(M,\mathbf{a}) =\bra{K_1,\mathbf{a}_1}\gamma \ket{K_2,\mathbf{a}_2} = \sum_{m_1,m_2 \in \mathbb{Z}+\frac{1}{2}} f^{m_1}_{K_1}(q)\cdot f^{m_2}_{K_2}(q) \cdot q^{-\frac{1}{p}(b m_{1}^{2}+2 m_{1} m_{2}+ r m_{2}^{2})} \cdot \delta^\Z\left(\frac{m_{1}+m_{2} r}{p}+ \frac{\alpha}{p}  \right).$$
    Whenever the right-hand side converges. Above, we have used that $\delta^\mathbb{Z} (x) = \begin{cases} 
       1 & x\equiv0\hspace{2mm}mod\hspace{2mm}\mathbb{Z}\\
      0 & \text{otherwise.}
   \end{cases}$ and $\alpha = Bk(\mathbf{a})\in Spin^c(M) \cong \frac{p \delta_{0, r \hspace{-0.2cm} \mod 2}}{2}+ \Z/p \Z $. If $M$ is obtained via a weakly negative plumbing, then $Z(M,\mathbf{a}) = \hat{Z}_{Bk(\mathbf{a})}(M)$. 
\end{proposition}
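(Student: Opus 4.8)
The plan is to evaluate the bilinear form $\bra{K_1,\mathbf{a}_1}\gamma\ket{K_2,\mathbf{a}_2}$ head-on, combining the explicit $SL(2,\Z)$ action of Lemma \ref{prlemma} with the contour-integral form of the pairing \eqref{innerprod}. First I would apply $\gamma=\begin{pmatrix} r & a \\ p & b\end{pmatrix}$ to each basis monomial of $\ket{K_2,\mathbf{a}_2}=\sum f^{m_2}_{K_2}(q)\,Y^{n_2}X^{m_2}$. By Lemma \ref{prlemma}, with the dictionary $(a,b,c,d)=(r,a,p,b)$, the monomial $Y^{n_2}X^{m_2}$ is sent to $Y^{am_2+bn_2}X^{rm_2+pn_2}$ times the $q$-power $q^{-ra\,m_2^2-2ap\,m_2 n_2-pb\,n_2^2}$.

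Next I would pair against $\bra{K_1,\mathbf{a}_1}$ using \eqref{innerprod}: normal-ordering the product $\ket{K_1,\mathbf{a}_1}^\dagger\,(\gamma\ket{K_2,\mathbf{a}_2})$ and extracting the constant term in $X$ and $Y$ yields two Kronecker deltas, one forcing the $Y$-powers to agree ($n_1=am_2+bn_2$) and one forcing the $X$-powers to agree ($m_1=rm_2+pn_2$). The first delta eliminates the $n_1$-sum; the second lets me solve $n_2=(m_1-rm_2)/p$ and collapse the $n_2$-sum, leaving a double sum over the half-integers $m_1,m_2$. Substituting this value of $n_2$ into the quadratic $q$-exponent and simplifying with the unimodularity relation $\det\gamma=rb-ap=1$ collapses the cross-term coefficient to $\pm 2/p$ and produces $q^{-\frac{1}{p}(b m_1^2+2m_1 m_2+r m_2^2)}$. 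The sign of the cross-term (and the appearance of $\tfrac{m_1+rm_2}{p}$ rather than $\tfrac{m_1-rm_2}{p}$ below) is fixed by reindexing $m_2\mapsto -m_2$, which is permissible up to the overall sign coming from the Weyl antisymmetry $f^{-m}_{K}=-f^{m}_{K}$ — consistent with the stated $\pm q^{c}$ ambiguity of the theory.

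The genuinely delicate ingredient is the integrality factor $\delta^{\Z}\!\big(\tfrac{m_1+rm_2}{p}+\tfrac{\alpha}{p}\big)$. It arises because the collapsed variable $n_2=(m_1-rm_2)/p$ is constrained to lie in $\Z+\mathbf{a}_2$; converting the condition $n_2\in\Z+\mathbf{a}_2$ into an indicator function and matching the offset $\alpha/p$ requires carefully tracking the half-integer $\mathrm{Spin}$ shifts and the $H^1(\cdot,\Q/\Z)$ grading through the gluing, i.e.\ the bookkeeping of Section \ref{spinstructures_review}. Here I would use that $\mathbf{a}_1,\mathbf{a}_2$ are the restrictions of a single label $\mathbf{a}$ on $M$ (the gluing rule of Section \ref{tqftrules}), compute $Bk(\mathbf{a})$ for this genus-one gluing, and identify $\mathrm{Spin}^{c}(M)\cong \tfrac{p\,\delta_{0,\,r\bmod 2}}{2}+\Z/p\Z$, the parity factor $\delta_{0,\,r\bmod 2}$ reflecting whether the glued $\mathrm{Spin}$-structure is bounding on the relevant cycle. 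This $\mathrm{Spin}^{c}$/grading accounting, rather than the algebra, is where I expect the main obstacle to lie.

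Finally, for the concluding sentence, $Z(M,\mathbf{a})$ is by construction the TQFT amplitude, so when $M$ admits a weakly negative definite plumbing I would invoke Proposition \ref{Zequalsplumbing} directly to conclude $Z(M,\mathbf{a})=\hat{Z}_{Bk(\mathbf{a})}(M)$, after only checking that the two-knot-complement gluing presentation is compatible with the plumbing presentation used there.
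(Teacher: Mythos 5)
Your proposal follows the paper's proof essentially step for step: apply the $SL(2,\Z)$ action of Lemma \ref{prlemma} to $\ket{K_2,\mathbf{a}_2}$, pair via the constant-term bilinear form to get two Kronecker constraints that collapse the $n_1,n_2$ sums, simplify the quadratic $q$-exponent using $br-ap=1$, and reduce the residual integrality conditions to the single $\delta^{\Z}$ by tracking how $\mathbf{a}_1,\mathbf{a}_2$ arise as restrictions of $\mathbf{a}$. The one piece you flag but do not execute is exactly what the paper spends the second half of its proof on: both constraints $n_1\in\Z+\mathbf{a}_1$ and $n_2\in\Z+\mathbf{a}_2$ produce indicator functions, and the compatibility relations between $\mathbf{a}_1$ and $\mathbf{a}_2$ together with $br-ap=1$ are used to show that one of the two reduces to $\delta^{\Z}\!\left(a(m_2-\tfrac12)\right)\equiv 1$, leaving the single $\delta^{\Z}$ with offset $\alpha/p$ in the stated form.
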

\textbf{Proof:} The $SL(2,\Z)$ representation from Section \ref{sl2z_section} tells us that,
$$\gamma \cdot \ket{K_2,\mathbf{a}_2} = \underset{n_2\in \Z+\mathbf{a_2}}{\sum_{m_2\in \Z+\frac{1}{2}}} q^{-n_2^2 p b -2 n_2m_2 a p-m_2^2 a r}f^{m_2}_{K_2}(q)  Y^{n_2 b +m_2 a} X^{n_2 p+m_2 r}.$$
After using orientation reversal and taking the bilinear form with $$\bra{K_1,\mathbf{a}_1} = \underset{n_1\in \Z+\mathbf{a_1}}{\sum_{m_1\in \Z+\frac{1}{2}}} f^{m_1}_{K_1}(q)   X^{-m_1} Y^{-n_1} ,$$ we can eliminate the sums over $n_1,n_2$ by setting, 
\begin{align*}
   n_1 &= - \frac{ b m_{1}  + m_{2}}{p} &  n_2 &= - \frac{m_1 + r m_2}{p},
\end{align*}
to find,
\begin{align*}
& \bra{K_1,\mathbf{a}_1}\gamma \ket{K_2,\mathbf{a}_2} \\
= & \sum_{m_1,m_2\in \Z+\frac{1}{2}}f^{m_2}_{K_2}(q)f^{m_1}_{K_1}(q) q^{-\frac{1}{p}(b m_{1}^{2}+2 m_{1} m_{2}+ r m_{2}^{2})}  \delta^\Z\left(\frac{m_2+m_1b}{p}+\mathbf{a}_1\right)\delta^\Z\left(\frac{r m_2+m_1}{p}+\mathbf{a}_2\right).
\end{align*}
The $\delta^\Z$'s come from the fact that the $n_i$ are restricted to lie in $\Z+\mathbf{a}_i$. Since $(\frac{1}{2},\mathbf{a}_{1})$ and $(\frac{1}{2},\mathbf{a}_{2})$ are restrictions of $\mathbf{a}$, they satisfy the following relation,
\begin{align*}
  -\frac{1}{2} & = \mathbf{a}_{2} p + \frac{r}{2} \mod \Z  & \mathbf{a}_{1} &= \frac{a}{2} + \mathbf{a}_{2} b \mod \Z   .
\end{align*}
These equations tell us that $\mathbf{a}_{1}$ and $\mathbf{a}_{2}$ are rational numbers of the form, 
\begin{align*}
    \mathbf{a}_{1} &= - \frac{1+b}{2 p} - \frac{b k }{p}  \mod \Z & \mathbf{a}_{2} & = - \frac{1+r}{2p }- \frac{ k }{p} \mod \Z, 
\end{align*}
for some integer $k$. We can write them as 
\begin{align}
   \mathbf{a}_{1} & = \frac{a+ b \delta_{0, r \hspace{-0.2cm} \mod 2} }{2}  + \frac{b \alpha^{\prime} }{p} \mod \Z \endline
   \mathbf{a}_{2} & =  \frac{\delta_{0, r \hspace{-0.2cm} \mod 2}}{2}  + \frac{\alpha^{\prime}}{p} \mod \Z.
\end{align}
where $\alpha^{\prime} =  -\frac{1+ r + p\delta_{0, r \hspace{-0.1cm}\mod 2} }{2} - k $ is an integer. Note that $\alpha$ is equivalent to $\alpha+ p$, therefore, $\alpha \in \Z_{p}$, further $\mathbf{a}_{1}$ and $\mathbf{a}_{2}$ are nicely split into $\frac{1}{2}\Z /\Z$ and $\frac{1}{p}\Z /\Z$, which is the split into the $\mathrm{Spin}$-structure on $M$ and $H^1(M,\Q/\Z) \cong \Z_p$. Using the relation between $\mathbf{a}_{1}$ and $\mathbf{a}_{2}$, we can write, 
\begin{align*}
   \delta^\Z\left(\frac{m_2+m_1b}{p}+ \mathbf{a}_{1} \right) \delta^\Z\left(\frac{r m_2+m_1}{p}+\mathbf{a}_2\right) = & \delta^\Z\left(\frac{m_2 + m_1 b }{p}+\frac{a}{2} + \mathbf{a}_{2} b \right)\delta^\Z\left(\frac{r m_2 + m_1}{p} +\mathbf{a}_{2} \right) \endline
   = & \delta^\Z\left(\frac{m_2 + m_1 b }{p}+\frac{a}{2} -\left( \frac{r m_2 + m_1}{p}  \right) b \right)\delta^\Z\left(\frac{r m_2 + m_1}{p} +\mathbf{a}_{2} \right).
\end{align*}
Using $b r - a p =1 $, we can further simplify the above expression and get,
\begin{align*}
   \delta^\Z\left(\frac{m_2 + m_1 b}{p}+\mathbf{a}_1\right) \delta^\Z\left(\frac{r m_2+m_1}{p}+\mathbf{a}_2\right)  = & \delta^\Z\left(\frac{r m_2 + m_1}{p} +\mathbf{a}_{2} \right)\delta^\Z\left( a (m_{2}- \frac{1}{2}) \right) .
\end{align*}
Since $m_{2} \in \frac{1}{2} + \Z$, $ a (m_{1}- \frac{1}{2})$ is always an integer, we have, 
\begin{align*}
    \bra{K_1,\mathbf{a}_1}\gamma \ket{K_2,\mathbf{a}_2} = \sum_{m_1,m_2\in \Z+\frac{1}{2}}f^{m_2}_{K_2}(q)f^{m_1}_{K_1}(q) q^{-\frac{1}{p}(b m_{1}^{2}+2 m_{1} m_{2}+ r m_{2}^{2})}  \delta^\Z\left(\frac{m_{1}+m_{2} r}{p}+ \frac{\alpha}{p}  \right),
\end{align*}
where $\alpha = \frac{p \delta_{0, r \hspace{-0.2cm} \mod 2}}{2} +\alpha^{\prime} $ is the Bockstein image of $\mathbf{a}$ as described earlier. This concludes the proof. $\Box$

A noteworthy special case is when $K_2$ is the unknot. In this case, we recover the surgery formula of \cite{GM}. That is, define the Laplace transform via, 
$$\mathcal{L}^\alpha_{p/r}(X^u) = q^{-u^2\frac{r}{p}}\cdot \delta^\Z(\frac{ru-a}{p}).$$
Then, we have the following result: 
\begin{corollary}
    Let $M$ be the 3-manifold resulting from Dehn surgery along a knot $K$ with coefficients $\frac{p}{r}$. Suppose $K$ is a plumbed knot with a GM series, 
    $$F_K(X,q) = \sum_{m\in \Z+\frac{1}{2}}f^m_K(q) X^m,$$
    then,
    $$Z(M,\mathbf{a}) = \bra{K,\mathbf{a}_1}\gamma\ket{\mathbb{S},\mathbf{a}_2} = q^{-\frac{b}{4p}+\frac{1}{4rp}}\cdot \mathcal{L}^\alpha_{p/r}\left((X^{\frac{1}{2r}}-X^{-\frac{1}{2r}})F_K(X;q)\right).$$
    So long as the right-hand side converges. 
\end{corollary}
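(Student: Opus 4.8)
The plan is to recognize this as the specialization of Proposition \ref{splicing_theorem} to the case $K_2 = U$, the unknot, whose complement is the solid torus $\mathbb{S}$. First I would record that the solid-torus wavefunction of Section \ref{solid_torus_sec} has GM series $F_U(X,q) = X^{1/2}-X^{-1/2}$, i.e. $f^{m_2}_U(q) = \epsilon$ when $m_2 = \epsilon/2$ with $\epsilon = \pm1$, and $f^{m_2}_U = 0$ otherwise. Rather than quoting the master formula verbatim (whose matrix convention $\gamma = \begin{pmatrix} r&a\\p&b\end{pmatrix}$ differs from the $p/r$-surgery convention $\gamma_{p/r} = \begin{pmatrix} b&a\\p&r\end{pmatrix}$ of Section \ref{solid_torus_sec}), I would rerun the same manipulation directly: apply $\gamma_{p/r}$ to $\ket{\mathbb{S},\mathbf{a}_2} = \sum_{n_2}\sum_\epsilon \epsilon\, Y^{n_2}X^{\epsilon/2}$ using Lemma \ref{prlemma}, then pair against $\bra{K,\mathbf{a}_1} = \sum_{n_1,m_1} f^{m_1}_K X^{-m_1}Y^{-n_1}$.

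Second, the bilinear form \eqref{innerprod} forces the $X$- and $Y$-exponents of $\gamma_{p/r}\ket{\mathbb{S}}$ to be $-m_1$ and $-n_1$, which solves $n_2 = -(m_1 + b\epsilon/2)/p$ and collapses the sums over $n_1,n_2$ into two indicator constraints $\delta^\Z(\cdot)$. The resulting $q$-exponent is $-\tfrac{ab}4 - ap\epsilon n_2 - pr n_2^2$ evaluated at this $n_2$; I expect the cross term and the constant term to simplify using $\det\gamma_{p/r} = rb - ap = 1$, via $a - rb/p = -1/p$ and $\tfrac b4\,(a - rb/p) = -\tfrac b{4p}$, so that the exponent becomes exactly $-\tfrac b{4p} - \tfrac rp m_1^2 - \tfrac{\epsilon m_1}p$. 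In parallel, substituting $n_2$ into the $n_1$-constraint gives $n_1 = (r m_1 + \epsilon/2)/p$, and the two $\delta^\Z$'s reduce to a single one using the relations between $\mathbf{a}_1$ and $\mathbf{a}_2$ exactly as in the proof of Proposition \ref{splicing_theorem} (the redundant factor being $\delta^\Z(a(m_2 - \tfrac12)) = 1$ since $m_2\in\tfrac12+\Z$); this is where the spin$^c$ label $\alpha = Bk(\mathbf{a})$ enters.

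Finally, I would repackage the surviving sum $\sum_{m_1,\epsilon}\epsilon f^{m_1}_K\, q^{-\tfrac b{4p} - \tfrac rp m_1^2 - \tfrac{\epsilon m_1}p}\,\delta^\Z(\cdots)$ as a Laplace transform. The key algebraic move is completing the square: factoring out $q^{-b/(4p)+1/(4rp)}$ lets the remaining weight be written as $-\tfrac rp\big(m_1 + \tfrac\epsilon{2r}\big)^2$, so that with $u = m_1 + \tfrac\epsilon{2r}$ the summand is precisely $\mathcal{L}^\alpha_{p/r}(X^u)$ applied to the coefficient $\epsilon f^{m_1}_K$ of $X^{m_1 + \epsilon/(2r)}$ in $(X^{1/(2r)}-X^{-1/(2r)})F_K(X;q)$. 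Matching the indicator argument $\tfrac{ru - a}p$ against $\tfrac{r m_1 + \epsilon/2}p$ (shifted by the spin$^c$ data) completes the identification. The main obstacle is this last step: getting the completion of the square to reproduce exactly the half-integer shift $u = m_1 \pm 1/(2r)$ inside $F_K$ while simultaneously threading the $\alpha$-dependence through the indicator so that the two-delta product of the direct computation matches the single-delta kernel of $\mathcal{L}^\alpha_{p/r}$; the convention mismatch between the two gluing matrices — which governs whether the quadratic weight is $r/p$ or $b/p$ — must be tracked carefully throughout.
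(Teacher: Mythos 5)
Your proposal is correct and follows essentially the same route as the paper: the corollary is obtained by specializing Proposition \ref{splicing_theorem} to the unknot, whose GM series is $X^{1/2}-X^{-1/2}$, and then completing the square in the $q$-exponent so that the $\frac{1}{4rp}$ term cancels and the sum repackages as $\mathcal{L}^{\alpha}_{p/r}$ applied to $(X^{\frac{1}{2r}}-X^{-\frac{1}{2r}})F_K$. Your extra care with the two gluing-matrix conventions (which controls whether the quadratic weight on $m_1$ is $r/p$ or $b/p$) is exactly the right bookkeeping, and the computation closes as you describe.
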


As with most results in this paper, we also expect the gluing formula to hold whenever we have well-defined GM series, not just for the plumbed case. 

\begin{conjecture}[Gluing Formula]
    Let $K_1,K_2$ be knots with a well-defined GM series. Let $M = (S^3 \setminus\nu(K_2))\cup_\gamma (S^3 \setminus\nu(K_1))$ be the manifold resulting from gluing with $\gamma = \begin{pmatrix}
    r&a\\
    p&b
\end{pmatrix}$ as described above. Then,
    
    $$\hat{Z}_{\alpha}(q) = \sum_{m_1,m_2 \in \mathbb{Z}+\frac{1}{2}} f^{m_1}_{K_1}(q)\cdot f^{m_2}_{K_2}(q) \cdot  q^{-\frac{1}{p}(b m_{1}^{2}+2 m_{1} m_{2}+ r m_{2}^{2})} \cdot \delta^\Z\left(\frac{m_{1}+m_{2} r}{p}+ \frac{\alpha}{p}  \right). $$
So long as the right-hand side converges. 
\end{conjecture}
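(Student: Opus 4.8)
The plan is to exploit the fact that the derivation of Proposition~\ref{splicing_theorem} is \emph{entirely formal}: the evaluation of $\bra{K_1,\mathbf{a}_1}\gamma\ket{K_2,\mathbf{a}_2}$ used only the ansatz
$$\ket{K_i,\mathbf{a}_i} = \underset{n\in \Z+\mathbf{a}_i}{\sum_{m\in \Z+\frac{1}{2}}} f^m_{K_i}(q)\, Y^n X^m,$$
the $SL(2,\Z)$ action of Lemma~\ref{prlemma}, and the bilinear form~\eqref{innerprod}. Nowhere did the argument invoke that $K_i$ is plumbed: the appearance of the two Kronecker constraints, the collapse of the sums over $n_1,n_2$, the simplification via $br-ap=1$, and the $\mathrm{Spin}^c$ bookkeeping that produces $\alpha=Bk(\mathbf{a})$ are all algebraic identities on the coefficients $f^m_{K_i}(q)$. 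Hence the conjecture reduces to two assertions that genuinely require geometric input: (i) that an arbitrary knot complement with a well-defined GM series is assigned the boundary wavefunction $\ket{K,\mathbf{a}}=\sum_{n\in\Z+\mathbf{a}} Y^n F_K(X,q)$, and (ii) that the gluing rule of Section~\ref{tqftrules}, applied to such wavefunctions, computes the genuine closed-manifold invariant $\hat{Z}_\alpha$. Granting (i) and (ii), the formula follows by transcribing the proof of Proposition~\ref{splicing_theorem} word for word.

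First I would establish (i) by verifying that $F_K$ satisfies the same defining constraints that pinned down the solid-torus and tree-link wavefunctions in Section~\ref{solid_torus_sec} and Proposition~\ref{treelinks_wvfn}: annihilation by the quantum $A$-polynomial of $K$, the correct $H^1(\cdot,\Q/\Z)$-grading with the $\mathrm{Spin}$-shift of Rule~6, and $SL(2,\Z)$-equivariance under Rule~7. For plumbed knots these hold by Proposition~\ref{Zequalsplumbing}; for general $K$ they are exactly the structural features of $F_K$ developed in \cite{GM, S2}. I would realize the passage from the plumbed case as analytic continuation in color, namely the extension $\mathcal{O}_\Z\subset\mathcal{O}_\Q$ used throughout, applied to the recursions known to be satisfied by the coefficients $f^m_K(q)$.

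Next I would dispatch the convergence caveat, the only analytic point. Once the two indicator factors are fused into the single constraint $\delta^\Z\!\left(\tfrac{m_1+m_2 r}{p}+\tfrac{\alpha}{p}\right)$, the surviving lattice sum carries the quadratic weight $q^{-\frac{1}{p}(b m_1^2+2 m_1 m_2+ r m_2^2)}$. I would impose a growth hypothesis on the $q$-adic valuation of $f^m_{K_i}(q)$ — that it grows at least quadratically in $|m|$, which holds for every GM series produced in \cite{GM, S2} — and combine it with the (weak) negative-definiteness of the framed quadratic form restricted to the constraint lattice. Together these force the $q$-exponents to tend to $+\infty$ along the constraint, so that each power of $q$ receives only finitely many contributions and the right-hand side is a well-defined element of $\C((q))$.

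The hard part, and the reason this remains a conjecture, is assertion (ii): at present there is no intrinsic definition of $\hat{Z}_\alpha$ for an arbitrary closed three-manifold against which to measure the right-hand side, and the identity $\ket{K,\mathbf{a}}=\sum_{n} Y^n F_K(X,q)$ is a theorem only on the plumbed locus via Proposition~\ref{Zequalsplumbing}. A complete proof would therefore require either an independent (for instance M-theoretic, BPS-index) definition of $\hat{Z}$ for the glued manifolds together with a surgery-invariance theorem, or an extension of the uniqueness argument underlying Propositions~\ref{treelinks_wvfn} and~\ref{Zequalsplumbing} — whereby topological invariance and the $MCG$ action fix all amplitudes on the plumbed locus — to knot complements lying outside that class. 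Securing the functoriality of the $\hat{Z}$-TQFT beyond plumbings is precisely the gap separating Proposition~\ref{splicing_theorem} from the conjecture.
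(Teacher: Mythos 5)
The statement you are asked to prove is stated in the paper as a \emph{conjecture}, and the paper offers no proof of it: its only justification is Proposition~\ref{splicing_theorem}, which establishes the identical formula when both knot complements are plumbed, followed by the remark that verification beyond that case is difficult. Your proposal accurately reproduces this situation. You correctly observe that the computation in the proof of Proposition~\ref{splicing_theorem} is purely formal in the coefficients $f^{m}_{K_i}(q)$ — it uses only the ansatz $\ket{K_i,\mathbf{a}_i}=\sum_{n,m}f^{m}_{K_i}(q)\,Y^nX^m$, Lemma~\ref{prlemma}, and the bilinear form — and you correctly locate the two genuine gaps: that the wavefunction assignment $\ket{K,\mathbf{a}}=\sum_n Y^n F_K(X,q)$ is only a theorem on the plumbed locus (via Proposition~\ref{Zequalsplumbing}; for general links the analogous statement is itself Conjecture~\ref{general_link_wavefun_conj}), and that the identification $Z(M,\mathbf{a})=\hat{Z}_{Bk(\mathbf{a})}(M)$ is only established for weakly negative definite plumbings. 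Since you explicitly flag these as unproven rather than claiming to close them, your proposal does not overreach; it is an honest reduction of the conjecture to the same open inputs the paper implicitly relies on. Your additional remarks — verifying quantum $A$-polynomial annihilation and grading constraints to pin down $\ket{K,\mathbf{a}}$, and imposing a quadratic growth condition on the $q$-adic valuation of $f^m_{K_i}$ to control convergence along the constraint lattice — go beyond anything in the paper and are reasonable directions, but they remain programmatic: neither is carried out, and the second would need care since the framed form $\frac{1}{p}(bm_1^2+2m_1m_2+rm_2^2)$ need not be negative definite for arbitrary $\gamma$, which is presumably why the paper retains the blanket caveat ``so long as the right-hand side converges.''
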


It would be interesting to verify this in some examples when one of the knots is not plumbed. However, this is difficult, as we are not aware of many results on knot gluing.


\section{Link Surgeries and States}\label{amplitudesandstates}

\subsection{Surgeries on Links}\label{partial_surg_section}

In the previous section, we derived \eqref{treelinks}, hence re-deriving the plumbing formula \eqref{plumbingformula} for negative-definite 3-manifolds and the surgery formula for knot complements (Laplace transform) from more fundamental principles. In particular, we showed that the results of \cite{GM} follow from the module associated with the torus $\mathcal{H}(T^2)$ and the associated mapping class group representation. 

In this work, we are largely interested in the generalization to links. To this end, we introduce the following notion, 
\begin{definition} Suppose $Z$ assigns well-defined vectors for a link complement $\ket{L,\mathbf{a}}$. We define the GM series for the corresponding $N$-component link as, 
$$F_L(X_1,X_2,...,X_N;q) = \sum_{m\in \frac{1}{2}\Z^N} \ket{0,m} \braket{0,m}{L,\mathbf{0}},$$
where $\mathbf{0}$ above means $(0,\mathfrak{s}_0)$ (see appendix \ref{notationsandconventions} for the definition of $\mathfrak{s}_{0}$). 
\end{definition}
\textbf{Remark:} Notice that in the language of \cite{GM}, we can equivalently define the above objects in terms of unintegrated vertices. That is, the plumbing formula \eqref{plumbingformula} with $N$ vertices left unintegrated produces objects of the form,
$$\hat{Z}_{\alpha} (M;X_1,...,X_N,n_1,...,n_N;q)$$
The GM series for the $N$-component link complement in this language is simply, 
$$F_L(X_1,...,X_N;q) = \hat{Z}_{0} (M;X_1,...,X_N,0,...,0;q)$$
For instance, the GM series for tree links follows from Proposition \ref{treelinks_wvfn}:
$$F_{L_Q}(X;q) = \prod^N_{i=1}(X_i^{\frac{1}{2}} - X_i^{-\frac{1}{2}})^{1-\mathrm{deg} v_i}$$
If we hope to leave the realm of `tree links,' a natural next step is to consider what happens to the wavefunctions \eqref{treelinks} under partial surgeries. To this end and its consequences, we dedicate this section. 

Let us first briefly recall the rational calculus of Rolfsen \cite{Rolf}. Let $L=\cup^N_{i=1}L_i$ be any link with an unknot component $L_1$. Suppose all components $L_i$ have some rational surgery framing $\gamma_i$. Then we may replace $L$ by $L'$ as specified by figure \ref{Rolfsencalc} provided we change the surgery framings via,
\begin{align*}
 \gamma_1 & \longrightarrow \gamma_1'=\frac{1}{\frac{1}{\gamma_1}+ \tau }    &  \gamma_i &\longrightarrow \gamma_i+\tau \cdot lk(L_1,L_i)^2
\end{align*}
\begin{figure}[H] 
\centering
\includegraphics[width=0.7\textwidth]{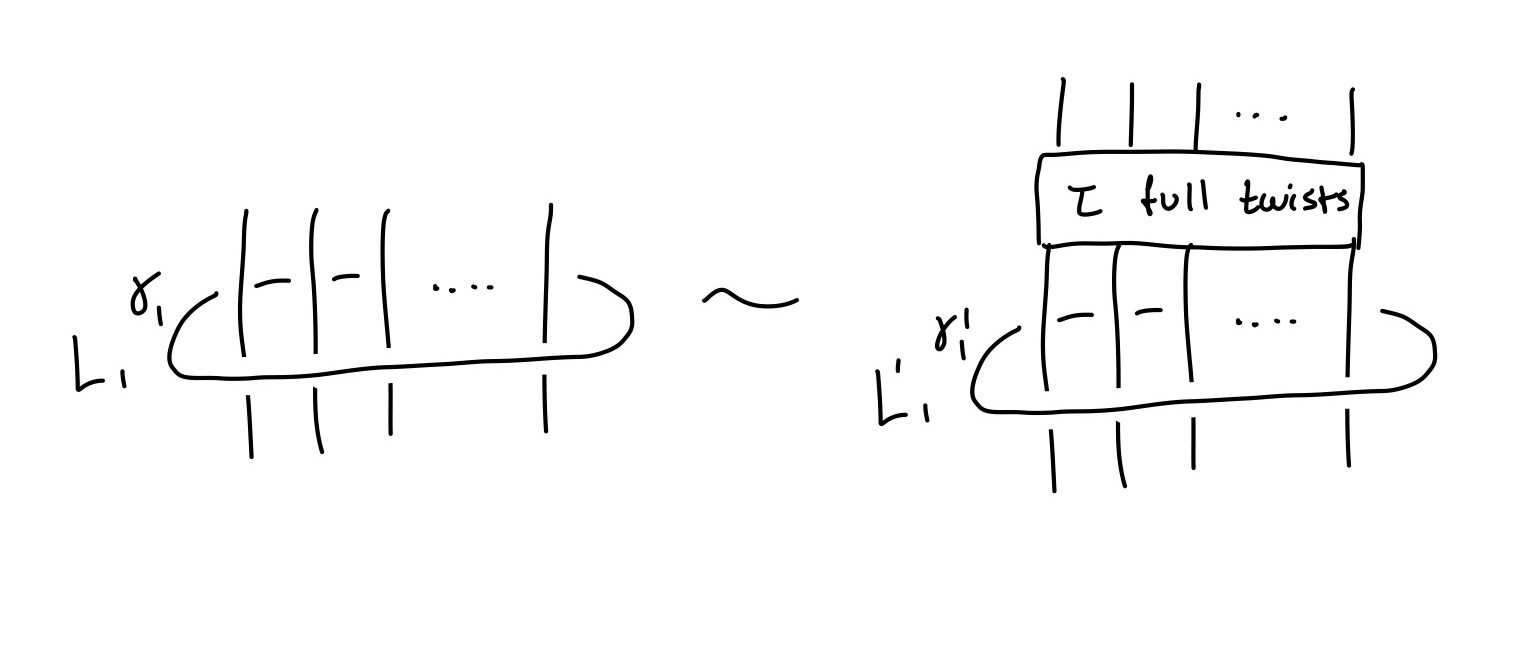}
\caption{Partial surgery of unknot component $L_1$}
\label{Rolfsencalc}
\end{figure}
This means that if a link, $L$, whose complement in $S^3$ has a surgery diagram given by some partial assignment of rational surgeries on a tree link of the form \eqref{treelinks}, then we are able to acquire its wavefunction $\ket{L}$ through performing the appropriate partial surgeries. We shall call such links plumbed, thus warranting the following definition.
\begin{definition}
Let $L$ be an $N$-component link and $M_L = S^3 -\nu(L)$. We will call $L$ plumbed if $M_L$ has a surgery presentation as a plumbing graph with $N$ vertices viewed as solid tori complements.  
\end{definition}
For instance, torus links and satellites thereof are all plumbed links. With this in mind, we find the following result,
\begin{proposition}\label{plumbed_link_wavfn}
    Let $M$ denote the complement of an $N$-component plumbed link, $L$, in $S^3$. Additionally, let $Q$ denote the linking matrix of $L$ with $0$ on the diagonals (reflecting that all components are taken to be 0-framed). $Z(M) \in \mathcal{H}(T^2)^{\otimes N}$ is given by the following vector,
    $$\ket{L,\mathbf{a}} = \underset{m\in \frac{1}{2}\Z^N}{\sum_{\mathbf{n}\in \Z^N+\mathbf{a}}} q^{-(Q\mathbf{n},\mathbf{n})} \cdot f^m_L(q) \prod^N_{i=1} Y^{n_i}_i X_i^{m_i+(Q n)_i} $$
    Where $F_L(X_1,X_2,...,X_N;q) = \sum_{\mathbf{m}}f^\mathbf{m}_L(q) \cdot X^\mathbf{m}$ is the GM series for the link $L$.
\end{proposition}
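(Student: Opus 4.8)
The plan is to prove the manifestly stronger, presentation-independent statement that $\ket{L,\mathbf a}$ factorizes as a theta-type sum times a single $X$-series, $\ket{L,\mathbf a}=\sum_{\mathbf n\in\mathbf a+\Z^N}q^{-(Q\mathbf n,\mathbf n)}\,Y^{\mathbf n}\,X^{Q\mathbf n}\,G(X;q)$, with $Q$ the zero-diagonal linking matrix of $L$. The identification $G=F_L$ is then free: by the definition of the GM series, $F_L$ is exactly the $Y^0$ (i.e. $\mathbf n=0$) sector of $\ket{L,\mathbf 0}$, and in this expression that sector is precisely $G(X)$. I would establish the factorized form by induction on the number of internal, surgered vertices in a plumbing presentation of $M_L$. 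The base case is a tree link, handled by Proposition \ref{treelinks_wvfn}, whose formula is already of the stated shape with $G=\prod_i(X_i^{1/2}-X_i^{-1/2})^{1-\mathrm{deg}\,v_i}$, the tree-link GM series recorded above.

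For the inductive step I would build $L$ from an $(N+1)$-component plumbed link $\hat L$ --- assumed in factorized form --- by the single partial (Rolfsen) surgery recalled before the statement, that is, by gluing a framed solid torus $\mathbb S_{p/r}$ onto the boundary $T^2_0$ of the component $L_0$ to be absorbed. By the gluing rule this reads $\ket{L,\mathbf a}=\bra{\mathbb S,\mathbf a_0}\,\gamma_{p/r}^{\dagger}\big|_{T^2_0}\,\ket{\hat L,\hat{\mathbf a}}$, and the factor-$0$ pairing is precisely the one already performed in Section \ref{seifert_section}: applying $\gamma_{p/r}$ via Lemma \ref{prlemma} and contracting against the antisymmetric solid-torus vector yields an $\epsilon=\pm1$ sum, a $q$-power quadratic in the surgered exponent $n_0$, a Kronecker delta tying $n_0$ to the $X_0$-exponent $(\hat Q\hat{\mathbf n})_0=(\mathbf c,\mathbf n)$, and a $\delta^{\Z}$ constraint carrying the $Spin^c$ label. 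The delta collapses the $n_0$-sum; after substitution the quadratic exponent together with the surgery term reduces to $-(Q\mathbf n,\mathbf n)$ with $Q$ the Schur complement of $\hat Q$ along the surgered direction, and the surviving $X$-dependence reorganizes into a single series $G$.

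The main obstacle is this reduction, where two points must be pinned down. First, I must check that eliminating $n_0$ really produces the Schur complement and that its off-diagonal entries coincide with Rolfsen's rank-one update $lk(L_i,L_j)\mapsto lk(L_i,L_j)+\tau c_ic_j$ of the linking numbers --- so that the algebraic ``integrating out'' of a vertex is the topological partial surgery --- while its diagonal entries are induced self-framings that are stripped by the meridian Dehn twists implementing the $0$-framing convention on $L$; crucially these twists fix the $\mathbf n=0$ sector, so they do not disturb the identification $G=F_L$. Tracking the half-integer shifts from the $X^{1/2}-X^{-1/2}$ factors, the $\epsilon$-sums, and the $\delta^{\Z}$'s so that they assemble consistently into $F_L$ and into $Spin^c(M_L)$ is the delicate bookkeeping. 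Second, for well-definedness (the Neumann-invariance claim deferred from Proposition \ref{treelinks_wvfn}) I would note that any two plumbing presentations of $M_L$ differ by Neumann moves, so it suffices to verify invariance under the generating blow-up/blow-down of a $\pm1$-framed unknot; each is a $\pm1$ integer surgery, and the same pairing computation shows it acts as the identity on the factorized form up to the overall $\pm q^{c}$ framing anomaly. This yields the stated expression for every plumbed link.
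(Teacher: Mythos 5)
Your proposal is correct and follows essentially the same route as the paper: induction from the tree-link base case of Proposition \ref{treelinks_wvfn}, with the inductive step being the gluing of a framed solid torus implementing a Rolfsen twist, collapsing the resulting Kronecker delta, and verifying that the quadratic form updates to the new linking matrix (your Schur-complement packaging is exactly the paper's rank-one update $Q'_{k,l}=Q_{k,l}+rQ_{k,1}Q_{1,l}$). Your explicit bookkeeping of the induced diagonal framings and the Neumann-move check is, if anything, slightly more careful than the paper's "the others are straightforward."
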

\textbf{Proof:} By induction, we may assume $L$ is a tree link. The main moves to check are the Rolfsen twists. The others are straightforward. 

Therefore, we are interested in $\frac{1}{r}$ surgeries on a unknot $L_1$, inside of $L$. Let $Q_{i,j} = lk(L_i,L_j)$ be the linking matrix associated with $L$ (again, the diagonals are framing coefficients). We can immediately deduce the change in $Q$ under the Rolfsen move above:

$$\{Q_{i,j}\}_{i,j=1,2,..,N} \longrightarrow \{Q'_{k,l}\}_{k,l=2,..,N} $$
$$Q'_{k,l} =  \begin{cases} 
      Q_{k,l} & Q_{1,k} = 0 \hspace{2mm} or\hspace{2mm}Q_{1,l} =0 \\
      Q_{k,l} + r Q_{1,k}Q_{1,l} & Q_{1,k} \neq 0 \hspace{2mm} and \hspace{2mm}Q_{1,l} \neq0\\
      Q_{k,l} + r Q_{1,k}^2 & j=k  
   \end{cases}
$$
Or, put more simply:
$$Q'_{k,l} = Q_{k,l} + r Q_{k,1}Q_{1,l}$$
We frame a solid torus state by $\gamma_{-1/r} = \begin{pmatrix}
    0&1\\
    -1&r
\end{pmatrix}$ to find
$$\gamma_{-1/r} \ket{\mathbb{S},\mathbf{a}_1} = \sum_{n'\in \Z+\mathbf{a}_1,m'=\pm1/2} \delta_{|m'|,1/2} \mathrm{sgn}(m') q^{2m'n' +n'^2r} Y^{n'r+m'} X^{-n'} $$
Gluing this along $L_1$, we see that:
$$\bra{\mathbb{S},\mathbf{a}_1}\gamma_{-1/r}^\dagger\ket{{n_1},{m_1+(Q \mathbf{n})_1}} = \sum_{m'} \delta_{|m'|,1/2} \mathrm{sgn}(m')\cdot  q^{m_1^2 r - 2m'm_1+(Q\mathbf{n})_1 \cdot (2m' - 2m_1 r -(Q\mathbf{n})_1 r )} \cdot \delta_{n_1, r(m_1+(Q \mathbf{n})_1)+m'}$$
Let $k>1$, after collapsing the sum over $n_1$, we see that:
$$(Q \mathbf{n})_l = n_1 Q_{l,1} + \sum_{k=2}^N Q_{l,k}n_k = (r(m_1+(Q \mathbf{n})_1)+m') Q_{l,1} + \sum_{k=2}^N Q_{l,k}n_k $$
Note that, $Q_{1,1} =0$ by assumption (i.e. $(Q\mathbf{n})_1 = \sum_{k>1} Q_{1,k}n_k$), so: 
\begin{align*}
   (Q \mathbf{n})_l  =& (rm_1 +m')Q_{l,1} +\sum^N_{k=2} (Q_{l,k}+r Q_{1,k} Q_{l,1})n_k  \endline
   =& (rm_1 +m')Q_{l,1} + (Q' \mathbf{n'})_l ,
\end{align*}
where if $\mathbf{n}=(n_1,n_2,...,n_N)$, we let $\mathbf{n'} = (n_2,...,n_N)$. These will be part of the exponents of $X_i$. Similarly, for the $q$ exponents, we note that: 
\begin{align*}
    & -(Q\mathbf{n},\mathbf{n}) + m_1^2 r - 2m'm_1+(Q\mathbf{n})_1 \cdot (2m' - 2m_1 r -(Q\mathbf{n})_1 r ) \\
    = & -2n_1Q^{1k}n_k -n_kQ^{kl}n_l + m_1^2 r - 2m'm_1+(Q\mathbf{n})_1 \cdot (2m' - 2m_1 r -(Q\mathbf{n})_1 r ) \\
    = & -2[r(m_1+(Q \mathbf{n})_1)+m']Q^{1k}n_k -n_kQ^{kl}n_l + m_1^2 r - 2m'm_1+(Q\mathbf{n})_1 \cdot (2m' - 2m_1 r -(Q\mathbf{n})_1 r ) \\
    = & -n_kQ^{kl}n_l - rn_kn_lQ^{1k}Q^{1k}+m_1^2r - 2m' m_1 \\
    = & -(Q'\mathbf{n'},\mathbf{n'})+ m_1^2r - 2m' m_1
\end{align*}
Therefore, we may write:
\begin{align}\label{partialsurgwavfn}
   &  \bra{\mathbb{S},\mathbf{a}_1}\gamma_{-1/r}^\dagger\ket{L,\mathbf{a}} \endline
   = & \underset{m\in \frac{1}{2}\Z^N,m'=\pm 1/2}{\sum_{\mathbf{n}'\in \Z^{N-1}+\mathbf{a}'}}  \delta_{|m'|,1/2} \mathrm{sgn}(m')q^{-(Q'\mathbf{n'},\mathbf{n'})} \cdot f^\mathbf{m}_L(q) \left[q^{m_1^2r +2m' m_1} \cdot \prod_{k\neq 1} X^{(rm_1+m')lk(L_1,L_k)}_k\right] \endline & \hspace{2.5cm} \times \prod_{k\neq1} Y^{n_k}_k X_k^{m_k+(Q' n)_k},
\end{align}
where $\mathbf{a}' = (\mathbf{a}_2,...,\mathbf{a}_{N})$. This implies the result. $\Box$

Of course, we expect the form of these wavefunctions to hold more generally for any link with a well-defined GM series. 
\begin{conjecture}\label{general_link_wavefun_conj}
Let $L$ be any link with a well-defined GM series,
$$F_L(X_1,X_2,...,X_N;q) = \sum_{\mathbf{m}\in \frac{1}{2}\Z^N}f^\mathbf{m}_L(q) \cdot X^\mathbf{m}$$
Then, its wavefunction in $\mathcal{H}(T^2)^{\otimes N}$ exists and is of the form,
 \be \ket{L,\mathbf{a}} = \underset{m\in \frac{1}{2}\Z^N}{\sum_{\mathbf{n}\in \Z^N+\mathbf{a}}} q^{-(Q\mathbf{n},\mathbf{n})} \cdot f^\mathbf{m}_L(q) \prod^N_{i=1} Y^{n_i}_i X_i^{m_i+(Q n)_i} \label{general_link_wavfn} \ee
 Where $Q$ is the linking matrix of $L$, with $0$ diagonal.  
\end{conjecture}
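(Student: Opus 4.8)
The plan is to pin down the wavefunction by combining the grading axiom with topological invariance under the boundary mapping class group, mirroring the derivation of the $H$-wavefunction \eqref{Hform} but without assuming a plumbing presentation. By Rule 6 of Section \ref{tqftrules}, the $H^{1}(\partial M,\Q/\Z)$-grading forces $\ket{L,\mathbf{a}}$ into the shape
$$\ket{L,\mathbf{a}} = \sum_{\mathbf{n}\in \Z^{N}+\mathbf{a}} \left(\prod_{i=1}^{N} Y_{i}^{n_{i}}\right) G^{\mathbf{n}}(X_{1},\dots,X_{N}),$$
so everything reduces to identifying the coefficient functions $G^{\mathbf{n}}$. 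By the definition of the GM series as the $Y$-degree-zero projection, $\braket{0,\mathbf{m}}{L,\mathbf{0}}=f^{\mathbf{m}}_{L}(q)$ with $\mathbf{0}=(0,\mathfrak{s}_{0})$, so I would first fix the base case $G^{\mathbf{0}}(X)=F_{L}(X,q)$.

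The engine for propagating to the other sectors is the $SL(2,\Z)^{N}$ action of Lemma \ref{prlemma} on the boundary tori. For each component I would use the framing-change operation $\tau_{+}$ on the $i$-th boundary together with the compensating twists on the linked components, with amounts dictated by the linking numbers $\mathrm{lk}(L_{i},L_{j})$; this is an isotopy of the $0$-framed complement (the Neumann/Rolfsen move already invoked in Propositions \ref{treelinks_wvfn} and \ref{plumbed_link_wavfn}, and visible in the $H$-case as the invariance under $\tau_{+}\otimes\tau_{-}\otimes 1$ and $1\otimes\tau_{-}\otimes\tau_{+}$). Demanding that $\ket{L,\mathbf{a}}$ be invariant under each such move, after applying Lemma \ref{prlemma}, yields recursion relations on the $G^{\mathbf{n}}$ of exactly the type $\tilde{H}^{n_1,n_2,n_3}_{m_1,m_2,m_3}=q^{2an_1^2}\tilde{H}^{\dots}$ encountered in deriving \eqref{Hform}. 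Solving these recursions with the base case above forces
$$G^{\mathbf{n}}(X) = q^{-(Q\mathbf{n},\mathbf{n})}\prod_{i=1}^{N} X_{i}^{(Q\mathbf{n})_{i}}\, F_{L}(X,q),$$
which is precisely \eqref{general_link_wavfn}; the off-diagonal entries of $Q$ enter through the linking numbers appearing in the compensating twists, while the $0$ diagonal reflects the $0$-framing convention.

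Convergence is the easy half: the formal sum lies in $\hat{\mathcal{O}}_{\Q}^{\otimes N}$ exactly when $F_{L}$ is a well-defined GM series, so existence of $\ket{L,\mathbf{a}}$ is equivalent to existence of $F_{L}$ and no extra analytic input is needed. The genuine obstacle is making the recursion step unconditional for non-plumbed $L$: the proof of Proposition \ref{plumbed_link_wavfn} derives these relations by explicitly gluing $H$-pieces along a tree, and that induction is unavailable when $L$ admits no plumbing graph. To close the gap I would instead start from an arbitrary surgery presentation of $M = S^{3}\setminus\nu(L)$ obtained by surgering an auxiliary framed link $\mathcal{L}'$ linked with $L$, compute $\ket{L,\mathbf{a}}$ by gluing framed solid tori onto a tree-link wavefunction \eqref{treelinks}, and show that the contributions of the surgered components are absorbed entirely into the $Y$-degree-zero data (hence into $F_{L}$), while the surviving linking matrix $Q$ reproduces the Gaussian factor. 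The crux — and the reason this remains a conjecture — is verifying that the resulting expression depends only on the pair $(Q,F_{L})$ and not on the chosen presentation, i.e.\ invariance under the Kirby/Rolfsen moves relating different surgery diagrams of the same $L$.
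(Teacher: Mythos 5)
The statement you are proving is stated in the paper as a \emph{conjecture}: the paper proves the formula only for plumbed links (Proposition \ref{plumbed_link_wavfn}, by induction on Rolfsen twists starting from the tree-link wavefunctions \eqref{treelinks}) and offers no proof of the general case. Your proposal faithfully reconstructs that plumbed-case argument and its motivation, but it does not close the gap, and the gap you name at the end is not a peripheral technicality --- it is the entire content of the conjecture. Concretely, the ``engine'' you rely on to propagate from $G^{\mathbf{0}}=F_L$ to the other $Y$-graded sectors is a family of self-homeomorphisms of $S^3\setminus\nu(L)$ acting as $\tau_{+}$ on one boundary torus and compensating $\tau_{-}$-twists on the linked components. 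Such a homeomorphism exists precisely when the relevant component is an unknot bounding an embedded disk met transversally by the others (a Rolfsen twist region); this is what makes the $H$-wavefunction computation and the plumbed induction work. For a general link no component need be unknotted, so there is in general no nontrivial subgroup of $\prod_i MCG(T^2_i)$ realized by homeomorphisms of the complement, and the grading axiom (Rule 6) together with the base case $G^{\mathbf{0}}=F_L$ simply does not determine $G^{\mathbf{n}}$ for $\mathbf{n}\neq 0$. The recursion relations you invoke have no unconditional source.

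A second, independent issue is that you treat ``existence of the wavefunction'' as equivalent to convergence of the formal series in $\hat{\mathcal{O}}_{\Q}^{\otimes N}$. The conjecture asserts more: that the TQFT of Section \ref{tqftrules} \emph{assigns} this state to $S^3\setminus\nu(L)$, i.e.\ that the formula is compatible with all gluings (capping components by framed solid tori must reproduce the correct invariants of the resulting manifolds, independently of the surgery presentation chosen). Your fallback strategy --- computing $\ket{L,\mathbf{a}}$ from an auxiliary surgery presentation and checking Kirby/Rolfsen invariance --- is the right shape of argument, but as you acknowledge, the presentation-independence is exactly what is unproven, and moreover the auxiliary presentation would itself have to be of a type whose wavefunction is already known (effectively, plumbed), which is circular for a genuinely non-plumbed $L$. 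In short: your write-up is an accurate account of why the conjecture is plausible and of how the paper proves the special case, but it is not a proof of the statement, and the paper does not claim one either.
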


As a side remark, we note that just as in the knot case, taking the constant term in the $Y$'s gives us back $F_L$, as expected. 

As an immediate consequence, we may take the $Y$ constant term of equation \eqref{partialsurgwavfn} to find, 

\begin{theorem}[Partial Surgery Theorem] \label{partialsurgerytheorem}  Let $L$ be any plumbed link with unknot component $L_1$. If $L'$ is the link resulting from partial $-1/r$ ($r>0$) on $L_1$, then the GM series of $L$ and $L'$ are (up to an overall $\pm q^c$ factor) related by:
$$F_{L'}(X_2,...,X_N;q) = \mathcal{L}_{X_1,-1/r}\left((X_1^{\frac{1}{2r}} - X_1^{-\frac{1}{2r}})\cdot F_L(X_1,X_2,...,X_N;q)\right),$$
where the Laplace transform here acts as:
$$\mathcal{L}_{X_1,-1/r}: X_1^m \mapsto q^{m^2 r}\prod_{k\neq 1} X^{r\cdot  m \cdot lk(L_1,L_k)} $$
For $r=0$, we find the $\infty$-surgery formula for $L$:
$$F_{L'}(X_2,...,X_N;q) = \sum_{m'=\pm \frac{1}{2}}  \mathrm{sgn}(m') \prod_{k\neq 1} X^{m' lk(L_1,L_k)}_k  F_L(X_1 = q^{2m'},X_2,..,X_N) $$
\end{theorem}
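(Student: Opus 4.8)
The plan is to obtain both formulas directly from the partial-surgery wavefunction already computed in equation \eqref{partialsurgwavfn}, using the fact (recorded in the remark following Proposition \ref{plumbed_link_wavfn}) that the GM series of a link is recovered as the $Y$-constant term of its wavefunction carrying the label $\mathbf{0}$. Since the gluing rule gives $\ket{L',\mathbf{a}'} = \bra{\mathbb{S},\mathbf{a}_1}\gamma_{-1/r}^\dagger\ket{L,\mathbf{a}}$, and this is by construction the manifold produced by partial $-1/r$ surgery on $L_1$, the theorem reduces to extracting the $Y^0$ coefficient of the right-hand side of \eqref{partialsurgwavfn}. Because Proposition \ref{plumbed_link_wavfn} already reduces an arbitrary plumbed link to a tree link by induction on vertices, no further topological input is needed here; the content is a bookkeeping match.

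First I would set $n_k = 0$ for every $k \neq 1$, i.e. take the label so that $\mathbf{a}' = \mathbf{0}$ and read off the constant term in each remaining $Y_k$. This collapses the sum over $\mathbf{n}'$ to the single term $\mathbf{n}' = 0$, killing the quadratic factor $q^{-(Q'\mathbf{n}',\mathbf{n}')}$ and the framing shift $(Q'n)_k$ in the $X_k$ exponents. Since $m' = \pm\tfrac12$ forces $\delta_{|m'|,1/2} = 1$, what survives is
$$F_{L'}(X_2,\dots,X_N;q) = \sum_{\mathbf{m}\in\frac12\Z^N}\ \sum_{m'=\pm\frac12}\mathrm{sgn}(m')\, f^{\mathbf{m}}_L(q)\, q^{r m_1^2 + 2 m' m_1}\prod_{k\neq 1} X_k^{m_k + (r m_1 + m')\,lk(L_1,L_k)}.$$

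Next I would recognize this as the claimed Laplace transform. Expanding $(X_1^{1/2r}-X_1^{-1/2r})\,F_L$ and applying $\mathcal{L}_{X_1,-1/r}\colon X_1^{m}\mapsto q^{m^2 r}\prod_{k\neq1}X_k^{r\,m\,lk(L_1,L_k)}$ to the two monomials $X_1^{m_1\pm 1/2r}$, the key step is completing the square: $(m_1 \pm \tfrac{1}{2r})^2 r = r m_1^2 \pm m_1 + \tfrac{1}{4r}$, so that the $X_1^{m_1+1/2r}$ term reproduces the $m'=+\tfrac12$ summand and the $-X_1^{m_1-1/2r}$ term reproduces the $m'=-\tfrac12$ summand (with its sign). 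The leftover constant $q^{1/4r}$ factors out uniformly and is exactly the overall $\pm q^{c}$ ambiguity permitted in the statement. For the $r=0$ case I would instead substitute $r=0$ into the displayed sum: the exponent collapses to $q^{2m'm_1}$, the $X_k$-shift becomes $m'\,lk(L_1,L_k)$ independent of $m_1$, and resumming $\sum_{\mathbf{m}} f^{\mathbf{m}}_L(q)\,q^{2m'm_1}\prod_{k\neq1}X_k^{m_k}$ yields precisely $F_L(X_1=q^{2m'},X_2,\dots,X_N)$, giving the stated $\infty$-surgery formula.

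The main obstacle is not conceptual but one of label and Weyl-symmetry bookkeeping: I must confirm that the constant-term extraction is carried out with the label on $L'$ that restricts to $\mathbf{0}$, and that the antisymmetric solid-torus data (the $\mathrm{sgn}(m')$ and the $m'=\pm\tfrac12$ sum inherited from $\gamma_{-1/r}\ket{\mathbb{S},\mathbf{a}_1}$) is exactly what manufactures the antisymmetric prefactor $X_1^{1/2r}-X_1^{-1/2r}$ and the signed sum in the $r=0$ formula. Finally I would verify that the only discrepancy between the two sides is the scalar $q^{1/4r}$ together with any framing anomaly, so that the identification holds precisely up to the advertised $\pm q^{c}$.
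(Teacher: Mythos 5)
Your proposal is correct and follows essentially the same route as the paper, which obtains the theorem precisely by taking the $Y$-constant term of equation \eqref{partialsurgwavfn} at the label $\mathbf{0}$; your completion of the square $(m_1\pm\tfrac{1}{2r})^2 r = rm_1^2 \pm m_1 + \tfrac{1}{4r}$ and the $r=0$ specialization are exactly the bookkeeping the paper leaves implicit. The identification of the residual $q^{1/4r}$ with the allowed $\pm q^c$ ambiguity is also consistent with the paper's conventions.
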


This provides partial proof of the conjecture by Park \cite{S1}, which we restate here:
\begin{conjecture} \label{partial_surg_conj} The Partial Surgery Theorem holds for any link with a well-defined GM series whenever the Laplace transform converges. Similarly for $\infty$-surgery. 
\end{conjecture}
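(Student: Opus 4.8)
The plan is to decouple the topological content of partial surgery from the combinatorial plumbing presentation, thereby isolating exactly where the plumbed hypothesis of Theorem~\ref{partialsurgerytheorem} was used. Partial $-1/r$ surgery on the unknot component $L_1$ is, regardless of whether $L$ is plumbed, the operation of gluing the framed solid torus $\gamma_{-1/r}\ket{\mathbb{S},\mathbf{a}_1}$ to the $L_1$-boundary torus of the complement $M_L$ via the bilinear form of Rule~8 in Section~\ref{tqftrules}. First I would observe that the algebraic identity driving the proof of Proposition~\ref{plumbed_link_wavfn}---namely that pairing against $\gamma_{-1/r}^{\dagger}\ket{\mathbb{S},\mathbf{a}_1}$ sends $X_1^{m}$ to $q^{m^2 r}\prod_{k\neq 1}X_k^{r\,m\,lk(L_1,L_k)}$ after collapsing the $n_1$-sum---is purely a computation in the quantum torus $\mathcal{O}_\Q^{\otimes N}$ and makes no reference to the plumbing of $L$. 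Taking the $Y$-constant term of \eqref{partialsurgwavfn} then yields the Laplace-transform relation verbatim. Consequently the conjecture follows once one knows that $\ket{L,\mathbf{a}}$ exists and has the form \eqref{general_link_wavfn}; that is, the Partial Surgery Conjecture is implied by Conjecture~\ref{general_link_wavefun_conj}, and the substance of the problem is the latter.

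To attack Conjecture~\ref{general_link_wavefun_conj} intrinsically (without a surgery diagram), I would characterize $\ket{L,\mathbf{a}}\in \mathcal{H}(T^2)^{\otimes N}$ as the cyclic vector generated from its $Y$-constant term $F_L$ under the quantum peripheral relations on each boundary torus $T^2_i$. Concretely, the line-operator structure of the $\hat{Z}$-TQFT equips $\mathcal{H}(T^2)^{\otimes N}$ with commuting families of difference operators---the quantum A-polynomials/peripheral ideals of $M_L$---whose annihilation conditions generalize the $(Y-1)$ relation used for the solid torus in Section~\ref{solid_torus_sec}. The claim would be that these relations propagate $F_L$ from the $n=0$ sector to all of $\ket{L,\mathbf{a}}$, fixing the $q$-weights $q^{-(Qn,n)}$ and the shifts $X_i^{(Qn)_i}$ entirely in terms of the linking matrix $Q$. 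For knot complements this is precisely the recursion satisfied by $F_K$ established in \cite{GM,S2}, so the program is to promote that single-variable recursion to the multi-component peripheral module of a general link complement.

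The hard part will be establishing this peripheral-module structure for links that do not admit a plumbing presentation, since that is exactly the functoriality of the $\hat{Z}$-TQFT line-operator category over all link complements, which is not yet available. Two partial routes seem viable. The first is to assemble $L$ by gluing knot (or tree-link) complements along tori whenever such a decomposition exists: the wavefunctions of the pieces are known from \cite{GM,S2} and the gluing axioms produce $\ket{L,\mathbf{a}}$, reducing the task to checking that the assembled vector matches the intrinsic $F_L$ on $Y$-constant terms. The second is to prove the Laplace-transform identity at the level of the colored (WRT-type) link invariants---where $-1/r$ surgery inserts the standard surgery kernel, which is the finite-level analogue of $\mathcal{L}_{X_1,-1/r}$---and then analytically continue in the colors, as in the $\mathcal{O}_\Z\to\mathcal{O}_\Q$ extension of Section~\ref{torus_hilbertspace}. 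The delicate analytic point, and the true obstacle, is showing that the Laplace transform commutes with this analytic continuation and preserves the convergence hypothesis in the statement. Finally, the $\infty$-surgery ($r=0$) case is the same gluing with the capping solid torus, evaluated at $X_1=q^{2m'}$ for $m'=\pm\tfrac12$, and follows from the identical reduction once the wavefunction form is known.
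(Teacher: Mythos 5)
First, a point of order: the statement you are proving is stated as a conjecture in the paper, so there is no internal proof to match; the paper proves the Partial Surgery Theorem (Theorem~\ref{partialsurgerytheorem}) only for plumbed links and explicitly leaves the general case open (it is attributed to Park~\cite{S1}). Your first paragraph is correct and agrees with the paper's own logic: the Laplace-transform identity is obtained by pairing $\gamma_{-1/r}^{\dagger}\ket{\mathbb{S},\mathbf{a}_1}$ against the wavefunction and taking the $Y$-constant term of \eqref{partialsurgwavfn}, a computation in $\mathcal{O}_\Q^{\otimes N}$ that uses the plumbed hypothesis only through the form \eqref{general_link_wavfn} of $\ket{L,\mathbf{a}}$. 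So you have correctly reduced the conjecture to Conjecture~\ref{general_link_wavefun_conj} --- but that reduction is already implicit in the paper (Conjecture~\ref{general_link_wavefun_conj} immediately precedes the statement), and everything after it in your proposal is a research program, not a proof. The conjecture remains open at the end of your argument.

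The concrete gaps are in your proposed attack on Conjecture~\ref{general_link_wavefun_conj}. (i) Annihilation by peripheral/quantum A-polynomial ideals does not by itself pin down a state: already for the solid torus in Section~\ref{solid_torus_sec}, the relation $(Y-1)\ket{\mathbb{S}}=0$ left an infinite family, and Dehn-twist invariance, Weyl antisymmetry, and a normalization were needed on top --- and even then a discrete family $\ket{\mathbb{S}}^{(m,\pm)}$ survived. Your claim that the peripheral relations ``propagate $F_L$ from the $n=0$ sector'' and fix the weights $q^{-(Qn,n)}$ would need a uniqueness lemma that is nowhere established for multi-component complements, and for non-plumbed links the existence of the annihilating ideal acting on $\hat{\mathcal{O}}_\Q^{\otimes N}$ is itself conjectural (for knots it is a conjecture of \cite{GM,S2}, not a theorem). (ii) Your first partial route, assembling $L$ from knot or tree-link complements glued along tori, only reaches graph/satellite-type links; a hyperbolic link complement admits no essential-torus decomposition into such pieces, so this route cannot cover the generality of the statement. (iii) Your second route founders on exactly the point you flag: interchanging the finite-level surgery kernel with analytic continuation in color and with the infinite sums defining $F_L$ is the unresolved analytic content of the conjecture, and the paper's own examples show it is genuinely delicate --- e.g., in the Whitehead-double discussion the analogous formula \eqref{double_2} is manifestly divergent at $r=0$, so convergence and the order of limits cannot be taken for granted. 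In short: the reduction is sound but not new relative to the paper, and the step that would constitute an actual proof is identified but not carried out.
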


We can also easily perform surgery on all components of a link at once and acquire a relatively compact formula. If $L$ is an $N$-component link $L = \cup_i L_i$,  recall that $\{p_i/r_i\}_{i=1,...,N}$ Dehn surgery on $L$ yields a closed 3-manifold $M$ with,
$$H_1(M) = \Z^N/\tilde{Q}\Z^N$$
where,
$$\tilde{Q}_{ij} = \biggl\{ \begin{matrix}
    p_i &i=j\\
    r_i lk(L_i,L_j) & i\neq j
\end{matrix}$$
and the $Spin^c$-structures are identified with,
$$Spin^c(M) = \Z^N/\tilde{Q}\Z^N +\delta,$$
where
$$\delta_i = \frac{1}{2}(1+r_i)+\frac{r_i}{2} \biggl(\sum_{j\neq i}lk(L_j,L_i)\biggr) \mod \Z . $$

\begin{proposition}{(Rational Laplace Transform for Links)}\label{laplace_transform_links}
    Let $\ket{L,\mathbf{a}}$ be wavefunctions in $\mathcal{H}(T^2)^{\otimes N}$ of the form \eqref{general_link_wavfn}, 
    $$\ket{L,\mathbf{a}} = \underset{}{\sum_{\mathbf{n}\in \mathbf{a} + \Z^N}}  q^{-(Q\mathbf{n},\mathbf{n})} \cdot  \prod^N_{i=1} Y^{n_i}_i X_i^{(Q n)_i}F_L(X;q)$$
    Additionally, suppose $Q_f = Q+diag(\frac{p_1}{r_1}, ...,\frac{p_N}{r_N} )$ is invertible (with $gcd(p_i,r_i) =1$ for all $i=1,...,N$).
    Then, we have, 
    \be \label{rational_link_surg}\left(\otimes^N_{i=1} \bra{\mathbb{S},\mathbf{a}_i} \gamma_{p_i/r_i}^\dagger s\right) \ket{L} = \mathcal{L}^\alpha_{Q_f} \left( \prod^N_{i=1}(X_i^{\frac{1}{2r_i}} - X_i^{-\frac{1}{2r_i}}) \cdot \tilde{F_L}\right)\ee
    whenever the left-hand side converges. Where the multivariable Laplace transform is defined by, 
    $$\mathcal{L}^\alpha_{Q}  (X^\mathbf{\mu}) = \delta^{Q \mathbb{Z}^N} (\mathbf{\mu} - \frac{\alpha}{\mathbf{r}}) \cdot q^{-(\mu,Q^{-1} \mu)}$$    
    and $\alpha = Bk(\mathbf{a})$. Here, $Q$ an invertible $N\times N $ matrix and vector division is taken element wise, $(\frac{\alpha}{\mathbf{r}})_i = (\frac{\alpha_i}{r_i})$.  In the case that $\ket{L}$ is the wavefunction for an $N$-component plumbed link, $L$, and $M = S^3_{\frac{p_1}{r_1},...,\frac{p_N}{r_N}}(L)$ is a weakly negative definite 3-manifold, then this is the $\hat{Z}$ invariant of $S^3_{\frac{p_1}{r_1},..., \frac{p_N}{r_N} } (L)$ up to an overall $\pm q^c$ factor,
    $$\Hat{Z}_\alpha (S^3_{\frac{p_1}{r_1},..., \frac{p_N}{r_N} } (L)) \cong \mathcal{L}^\alpha_{Q_f} \left( \prod^N_{i=1}(X_i^{\frac{1}{2r_i}} - X_i^{-\frac{1}{2r_i}}) \cdot F_L\right) $$
    Where $\alpha$ takes values in $Spin^c(M) \cong \Z^N/\tilde{Q}\Z^N +\delta$ described above and $F_L$ is the GM series of $L$.     
\end{proposition}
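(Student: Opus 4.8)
The plan is to run the same machinery used in Proposition \ref{splicing_theorem} and its corollary, only now with all $N$ torus boundaries glued simultaneously. First I would substitute the explicit wavefunction
$$\ket{L,\mathbf{a}} = \underset{\mathbf{m}\in\frac12\Z^N}{\sum_{\mathbf{n}\in\mathbf{a}+\Z^N}} q^{-(Q\mathbf{n},\mathbf{n})}\, f^{\mathbf{m}}_L(q) \prod_{i=1}^N Y_i^{n_i} X_i^{m_i+(Q\mathbf{n})_i}$$
together with, for each component $i$, the framed orientation-reversed solid-torus bra $\bra{\mathbb{S},\mathbf{a}_i}\gamma_{p_i/r_i}^\dagger s$, whose explicit form follows from Lemma \ref{prlemma} applied to the solid-torus vector of Section \ref{solid_torus_sec}. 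The $s$ inserts $X_i\mapsto X_i^{-1}$, and the bilinear form then contracts the $Y_i$ and $X_i$ slots componentwise.

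Next I would carry out the per-component contraction exactly as in the Seifert computation of Section \ref{seifert_section}: pairing against $Y_i^{n_i} X_i^{m_i+(Q\mathbf{n})_i}$, the $Y_i$-contraction produces a Kronecker delta tying $n_i$ to the solid-torus summation index and to a sign $\epsilon_i=\pm1$, while the $X_i$ constant-term extraction fixes the matching of $X$-powers and produces the factor $(X_i^{1/(2r_i)}-X_i^{-1/(2r_i)})$ as the rescaled image of the antisymmetric solid-torus factor $(X^{1/2}-X^{-1/2})$. Using these $N$ Kronecker deltas I would eliminate all the $n_i$, solving for them in terms of the GM exponents $\mathbf{m}$ and the $\epsilon_i$; after this substitution the surviving $X$-exponent becomes the variable $\mu$ appearing in the multivariable Laplace kernel.

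The crux of the argument is the quadratic form. I would combine the $q$-exponent $-(Q\mathbf{n},\mathbf{n})$ coming from $\ket{L}$ with the per-component framing exponents $\sim -p_i r_i n_i^2$ produced by the $\gamma_{p_i/r_i}$ action, and complete the square in the eliminated variables. The key identity to verify is that the resulting total quadratic form in $\mu$ is precisely $-(\mu,Q_f^{-1}\mu)$ with $Q_f = Q+\mathrm{diag}(p_1/r_1,\dots,p_N/r_N)$; this is where the off-diagonal linking entries of $Q$ and the diagonal rational framings must recombine into $Q_f$ and the Gaussian summation must invert it. Simultaneously I would assemble the product of the $\delta^\Z$ constraints into the single coset indicator $\delta^{Q_f\Z^N}(\mu-\alpha/\mathbf{r})$, and check, exactly as in Proposition \ref{splicing_theorem}, that the labels $\mathbf{a}_i$ split into a $\mathrm{Spin}$ part and an $H^{1}(\cdot,\Q/\Z)$ part so that $\alpha = Bk(\mathbf{a})$ lands in $Spin^c(M)\cong \Z^N/\tilde{Q}\Z^N+\delta$ with the stated $\delta$. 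I expect this matrix-and-coset bookkeeping, carried out with the convergence caveat in force throughout, to be the main obstacle.

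Finally, for the $\hat{Z}$ statement I would specialize to a weakly negative definite plumbed link: Proposition \ref{plumbed_link_wavfn} guarantees that $\ket{L,\mathbf{a}}$ has exactly the form \eqref{general_link_wavfn} assumed above, and Proposition \ref{Zequalsplumbing} identifies the glued amplitude with $\hat{Z}_{Bk(\mathbf{a})}(M)$, so the surgery formula follows up to the overall $\pm q^c$ framing ambiguity already noted for the theory.
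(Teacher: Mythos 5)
Your proposal follows essentially the same route as the paper's proof: compute the per-component contraction of the framed solid-torus bra against $Y_i^{n_i}X_i^{m_i+(Q\mathbf{n})_i}$ to produce the $\epsilon_i$-signs, the Kronecker delta tying $Q_f\mathbf{n}$ to $\mathbf{m}-\alpha/\mathbf{r}+\epsilon/(2\mathbf{r})$, and the $\delta^{\Z}$ constraints; then eliminate $\mathbf{n}$ using invertibility of $Q_f$ so the quadratic form recombines into $-(\mu,Q_f^{-1}\mu)$ up to an overall $\pm q^c$, and conclude the $\hat{Z}$ statement by Proposition \ref{plumbed_link_wavfn}. This matches the paper's argument step for step (the paper phrases the last comparison as matching both sides on the delta support rather than explicitly completing the square, but the manipulation is the same).
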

\textbf{Proof:} We are interested in the amplitudes, $\bra{\mathbb{S},\mathbf{a}_i} \gamma_{p_i/r_i}^\dagger s^\dagger \ket{n_i, m_i -(Qn)_i}$. Repeating essentially the same calculation we've done several times by now, we find, 
$$\bra{\mathbb{S},\mathbf{a}_i} \gamma_{p_i/r_i}^\dagger s^\dagger \ket{n_i, m_i -(Qn)_i} \cong \sum_{\epsilon_i = \pm1} \epsilon_i q^{-n_i^2 \frac{p_i}{r_i} - 2n_i \frac{\alpha_i}{r_i}} \delta_{m_i - \frac{\alpha_i}{r_i} +\frac{\epsilon_i}{2 r_i} , (Qn)_i + \frac{p_i}{r_i} n_i} \delta^\mathbb{Z}(n_i)  $$
Plugging this into the left-hand side of \eqref{rational_link_surg}, we find, 
$$\left(\otimes^N_{i=1} \bra{\mathbb{S},\mathbf{a}_i} \gamma_{p_i/r_i}^\dagger s\right) \ket{L} = \underset{ \mathbf{n} \in \mathbb{Z}^N, \mathbf{m}\in \frac{1}{2}\Z ^N}{\sum_{\epsilon \in \{\pm1 \}^N}} \sigma(\epsilon) f^\mathbf{m}_L(q) \cdot q^{-(Q_f \mathbf{n} , \mathbf{n}) -2 (\mathbf{\frac{\alpha}{r}}, \mathbf{n})} \cdot  \delta_{\mathbf{m} -\mathbf{\frac{\alpha}{r}} +\mathbf{\frac{\epsilon}{2r}} , Q_f \mathbf{n}} \cdot  \delta^{Q_f\mathbb{Z}^N}(\mathbf{m} -\mathbf{\frac{\alpha}{r}}+\mathbf{\frac{\epsilon}{2r}})$$
On the right-hand side, we have, 
\begin{align*}
   \mathcal{L}^\alpha_{Q_f} \left( \sum_{\epsilon \in \{\pm1 \}^N} \sigma(\epsilon) X^{\mathbf{\frac{\epsilon}{2r}}} \cdot \tilde{F_L}\right)  = &  \underset{\mathbf{m}\in \frac{1}{2}\Z ^N}{\sum_{\epsilon \in \{\pm1 \}^N} }\sigma(\epsilon) \cdot f^\mathbf{m}_L \cdot \delta^{Q_f\mathbb{Z}^N}(\mathbf{m} -\mathbf{\frac{\alpha}{r}}+\mathbf{\frac{\epsilon}{2r}}) \cdot q^{- (\mathbf{m} +\mathbf{\frac{\epsilon}{2 r}},Q_f^{-1} (\mathbf{m} +\mathbf{\frac{\epsilon}{2 r}}))}.
\end{align*}
Therefore, the expressions match up to an overall $\pm q^{c}$ factor. For the second part of the statement, simply apply Proposition \ref{plumbed_link_wavfn}. $\Box$

This proves the conjecture by Park \cite{S2} in the case of integer surgery on plumbed links. This also provides a generalization to rational surgery coefficients. Furthermore, we conjecture the above surgery formula for $\hat{Z}$ invariants holds for any link with a well-defined GM series so long as the Laplace transform converges. 

An immediate application of the above discussion is given by the satellite operation, which we now describe. Let $P$, $K$ be any two knots. Let $X = S^3 \setminus int(\nu(K))$ denote the exterior of $K$. Additionally, let $P$ be a knot inside the solid torus $S^1\times D^2$ and let $Y = S^1\times D^2 \setminus int(\nu(K)) $ denote its complement inside the solid torus. If $C_P(K)$ denotes the satellite of $K$ with pattern $P$, then the complement of $C_P(K)$ in $S^3$ is given by gluing the boundary of $X$ to the exterior boundary of $Y$ via an $S$ transformation (that is, $\bra{X}S\ket{Y}$ where the inner product takes place in the described factor of $\mathcal{H}(T^2)^{\otimes 2} \ni \ket{Y} $). To define a knot $P$ inside the solid torus, it is sufficient to consider a link in $S^3$ consisting of 2 components, one of these being an unknot, and then take the complement of the unknot component. We call the knot sitting inside $S^3$, which defines the pattern $P'$. More generally, we can allow for the pattern $P$ to be a link of $N-1$ components, and $P'$ will therefore have $N$ components.\footnote{We would like to thank Sunghyuk Park for sharing his notes on the satellite operation with us for this description.}

Then, for such a gluing, we repeat the calculation leading to \eqref{partialsurgwavfn}, except the only change is, of course, the replacement $\delta_{|m'|,1/2} \mathrm{sgn}(m') \rightarrow f^{m'}_K(q)$. This gives us:

\begin{theorem}[Satellite Formula]\label{satellite}
    Let $P'$ be a plumbed link with an unknot component $P'_1$ and GM series,
    $$F_{P'} = \sum_{n\in \frac{1}{2}\Z} f^n_{P'}(X_2,X_3,...) \cdot X_1^n $$
    and $K$ a plumbed knot with GM series,
    $$F_K = \sum_{m\in \Z+\frac{1}{2}} f^m_{K} \cdot X^m$$
    then the satellite of $K$ with pattern $P$ (defined by taking the complement of $P'_1$), $C_P(K)$ has corresponding GM series given by:
\be \label{sat_1} 
F_{C_P(K)}(X_2,...,X_N;q) = \sum_{m\in \Z+\frac{1}{2}} f^{m}_K \prod_{k\neq 1} X^{m lk(L_1,L_k)}_k  F_{P'}(X_1 = q^{2m},X_2,..,X_N)
\ee 
Equivalently, we may write this as, 
\be \label{sat_2}
F_{C_P(K)}(X_2,...,X_N;q) = \sum_{n\in \frac{1}{2}\Z} f^n_{P'} (X_2,X_3,...) \cdot F_K(X = q^{2n } \prod_{k\neq 1} X^{lk(L_1,L_k)}_k)
\ee     
\end{theorem}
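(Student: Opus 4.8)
The plan is to realize the satellite complement directly as a gluing in the sense of the gluing rule of Section \ref{tqftrules} and then reduce the computation to the one already carried out for partial surgery. Writing $X = S^3\setminus\nu(K)$ and $Y$ for the complement of the distinguished unknot $P'_1$ inside the solid torus, the complement of $C_P(K)$ is $\bra{K}S\ket{P'}_{T^2_1}$, the pairing over the torus $T^2_1$ bounding $P'_1$. I would insert the explicit wavefunctions: the knot complement $\ket{K,\mathbf{a}_1}=\sum_{n,m}f^m_K(q)\,Y^n X^m$ on the $T^2_1$ factor, and the plumbed-link wavefunction $\ket{P',\mathbf{a}}$ of Proposition \ref{plumbed_link_wavfn}. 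Acting by $S$ on the first tensor factor via Lemma \ref{prlemma} and then applying the bilinear form \eqref{innerprod} over $X_1,Y_1$ collapses one summation through a Kronecker delta, exactly as in the derivation of \eqref{partialsurgwavfn}.

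The key observation is that the unknot is the $K=U$ instance of this gluing: its solid-torus wavefunction has $X$-expansion coefficients $\mathrm{sgn}(m')\,\delta_{|m'|,1/2}$, which are precisely the GM coefficients $f^{m'}_U$, and the $S$-gluing of a solid torus is the $\infty$-surgery already treated in Theorem \ref{partialsurgerytheorem}. Since the gluing computation is linear in the coefficients of the object glued in along $T^2_1$ and is otherwise insensitive to their origin, replacing the solid torus by $\ket{K}$ amounts to the substitution $\mathrm{sgn}(m')\delta_{|m'|,1/2}\mapsto f^{m'}_K(q)$ together with enlarging the sum from $m'=\pm\tfrac12$ to $m\in\Z+\tfrac12$. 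The $Y_1$-pairing enforces the evaluation $X_1=q^{2m}$ in $F_{P'}$ and produces the linking factors $\prod_{k\neq1}X_k^{m\,lk(L_1,L_k)}$, exactly as in the $\infty$-surgery branch of Theorem \ref{partialsurgerytheorem}. Taking the constant term in the remaining $Y_2,\dots,Y_N$ (which is how $F_{C_P(K)}$ is extracted from the glued vector) then yields \eqref{sat_1}.

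To pass to the equivalent form \eqref{sat_2}, I would expand the pattern series $F_{P'}=\sum_n f^n_{P'}(X_2,X_3,\dots)\,X_1^n$, substitute $X_1=q^{2m}$, and interchange the two summations. The inner sum over $m$ reassembles $\sum_m f^m_K(q)\,q^{2mn}\prod_{k\neq1}X_k^{m\,lk(L_1,L_k)}=F_K\!\left(X=q^{2n}\prod_{k\neq1}X_k^{lk(L_1,L_k)}\right)$, using $F_K(X)=\sum_m f^m_K X^m$; this is a purely formal rearrangement once convergence is granted.

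The main obstacle is not the algebra but the analytic control of the gluing. The bilinear form is only defined on the convergent subspace $\mathbb{L}\subset\mathcal{H}(T^2)\times\mathcal{H}(T^2)$, so one must check that the pairing $\bra{K}S\ket{P'}_{T^2_1}$ actually lands there; this is where the hypotheses that $P'$ be plumbed and that $K$ have a well-defined GM series are used, and it is also why the general case is left to the surrounding conjectures. A secondary, bookkeeping obstacle is to track the $\mathrm{Spin}$ and $H^{1}(\cdot,\Q/\Z)$ labels through the gluing, confirm that the restricted labels $\mathbf{a}_1$ and $\mathbf{a}\vert_Y$ assemble into the $\mathbf{0}$ label needed to read off the GM series, and absorb the inevitable framing-anomaly factor $\pm q^c$ to which our theory is only determined.
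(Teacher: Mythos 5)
Your proposal is correct and follows essentially the same route as the paper: the satellite complement is realized as the $S$-gluing $\bra{K}S\ket{P'}_{T^2_1}$, and the computation is the one leading to \eqref{partialsurgwavfn} (equivalently the $\infty$-surgery branch of Theorem \ref{partialsurgerytheorem}) with the solid-torus coefficients $\mathrm{sgn}(m')\,\delta_{|m'|,1/2}$ replaced by $f^{m'}_K(q)$, followed by an interchange of sums to obtain \eqref{sat_2}. Your additional remarks on convergence and label bookkeeping are consistent with the paper's treatment and do not change the argument.
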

The two different ways of writing the GM series for the satellite come from interchanging two convergent sums. In general, for two general knots/links with well-defined GM series, it may be that only one or neither of the formulas above converges. Thus, we are led to the following conjecture,
\begin{conjecture}
    For any knot $K$ and any pattern $P$ with pattern-defining link $P'$ with a well-defined GM series, the GM series of the satellite $C_{P}(K)$ is given by either \eqref{sat_1} or \eqref{sat_2} (so long as one of them converges). If they both converge, then they are equal.  
\end{conjecture}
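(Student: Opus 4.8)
The plan is to realize the satellite exterior as a single torus gluing and then to observe that the resulting amplitude is computed by the \emph{same} manipulation already performed in the proof of Proposition \ref{plumbed_link_wavfn}, with only the coefficients of the glued-in wavefunction altered. By the satellite description recalled above, the complement of $C_P(K)$ is obtained from the link wavefunction $\ket{P',\mathbf{a}} \in \mathcal{H}(T^2)^{\otimes N}$ by gluing the knot complement $S^3 \setminus \nu(K)$ into the torus factor $T^2_1$ bounding the distinguished unknot $P'_1 = L_1$ via an $S$-transformation. The gluing rule of Section \ref{tqftrules} expresses this as the pairing
$$ \ket{C_P(K),\mathbf{a}} = \bra{K,\mathbf{a}_1}\, S\, \ket{P',\mathbf{a}}_{T^2_1} \in \mathcal{H}(T^2)^{\otimes(N-1)}, $$
and the GM series $F_{C_P(K)}$ is recovered as the $Y$-constant term of this vector with trivial decoration $\mathbf{a} = \mathbf{0}$, per the definition of the GM series of a link.

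First I would substitute the explicit wavefunctions: $\ket{P',\mathbf{a}}$ from Proposition \ref{plumbed_link_wavfn}, and the knot complement $\ket{K,\mathbf{a}_1} = \sum_{n'} Y^{n'} F_K(X) = \sum_{n',m'} f^{m'}_K(q)\, Y^{n'} X^{m'}$. The key point is that this is precisely the $\infty$-surgery ($r=0$) specialization of the partial surgery computation that produced \eqref{partialsurgwavfn}: there the glued-in object was the solid torus $\ket{\mathbb{S}} = \sum_n Y^n (X^{1/2} - X^{-1/2})$ --- the unknot complement --- whose $X$-expansion coefficients are $\mathrm{sgn}(m')\,\delta_{|m'|,1/2}$. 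Since $S^3 \setminus \nu(K)$ enters the gluing exactly as the solid torus did, differing only in that its $X$-coefficients are $f^{m'}_K(q)$, every step of the earlier calculation carries over verbatim under the replacement
$$ \mathrm{sgn}(m')\,\delta_{|m'|,1/2} \;\longmapsto\; f^{m'}_K(q). $$
Concretely, the $S$-transformation followed by the contour pairing along $T^2_1$ collapses the sum over the $n_1$ index through a Kronecker delta, evaluates $X_1 = q^{2m}$ (with $m = m' \in \Z + \tfrac12$) inside $F_{P'}$, and deposits the holonomy factors $\prod_{k\neq 1} X_k^{m\, lk(L_1,L_k)}$ coming from the off-diagonal linking entries $Q_{1k} = lk(L_1,L_k)$. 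Taking the $Y$-constant term then produces \eqref{sat_1} directly.

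To obtain the equivalent form \eqref{sat_2}, I would interchange the two summations --- the outer sum over the $K$-coefficients $m \in \Z+\tfrac12$ and the inner sum over the $P'$-coefficients $n \in \tfrac12\Z$ implicit in $F_{P'} = \sum_n f^n_{P'} X_1^n$ --- and then resum the $m$-sum using $F_K(X) = \sum_m f^m_K X^m$, which reassembles into $F_K$ evaluated at $X = q^{2n}\prod_{k\neq 1} X_k^{lk(L_1,L_k)}$.

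The hard part will not be the algebra but the analysis: justifying the interchange of summation that converts \eqref{sat_1} into \eqref{sat_2}, and, more basically, ensuring that the double sum converges as a Laurent series in $X_2,\dots,X_N$ with $\C_q$ coefficients so that the evaluation $X_1 = q^{2m}$ and the constant-term extraction are well-defined. In the plumbed setting asserted by the theorem this is controlled, since $F_K$ and $F_{P'}$ are genuine $q$-series from the plumbing formula and the powers $q^{2m}$, $q^{2n}$ separate terms by $q$-degree; in general only one ordering may converge, which is exactly why the unrestricted statement is relegated to a conjecture. A secondary item to check is the bookkeeping of the decoration labels and their restrictions under the $S$-gluing, confirming that the $\mathrm{Spin}^c$ structure and $H^1(\cdot,\Q/\Z)$-grading on the surviving torus factors are the ones belonging to $C_P(K)$.
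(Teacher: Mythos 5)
Your proposal follows essentially the same route as the paper: the paper's proof of the plumbed-case Theorem \ref{satellite} is exactly your argument — realize the satellite exterior as an $S$-gluing into the distinguished torus factor and repeat the computation leading to \eqref{partialsurgwavfn} with the replacement $\mathrm{sgn}(m')\,\delta_{|m'|,1/2}\mapsto f^{m'}_K(q)$, then interchange the two sums to pass between \eqref{sat_1} and \eqref{sat_2}. Note only that the statement you were given is the \emph{conjecture}, which the paper does not prove; your closing paragraph correctly locates the two obstructions that keep it conjectural, namely that for non-plumbed $P'$ the wavefunction form rests on Conjecture \ref{general_link_wavefun_conj} rather than Proposition \ref{plumbed_link_wavfn}, and that the interchange/convergence of the double sum is unjustified in general.
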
 

\subsubsection{Example: Torus Links and Cabling}\label{torus_links_section}

Let us try to derive some useful specific instances of Theorem \ref{satellite}. We will provide explicit formulas for cabling with an $(tp,sp)$ torus link as well as a conjecture for Whitehead doubling (and an infinite class of its generalizations). 

In \cite{GM}, a closed formula was found for the GM series of torus knots $(r,s)$ with $gcd(r,s)=1$. Here, we will generalize their formulas to $(sp,tp)$ torus \textit{links}. Our strategy will be to derive a surgery diagram for the complement of any such torus link, and then carry out partial surgeries on its wavefunction per the previous section. 

\begin{figure}[H] 
\centering
\includegraphics[width=0.7\textwidth]{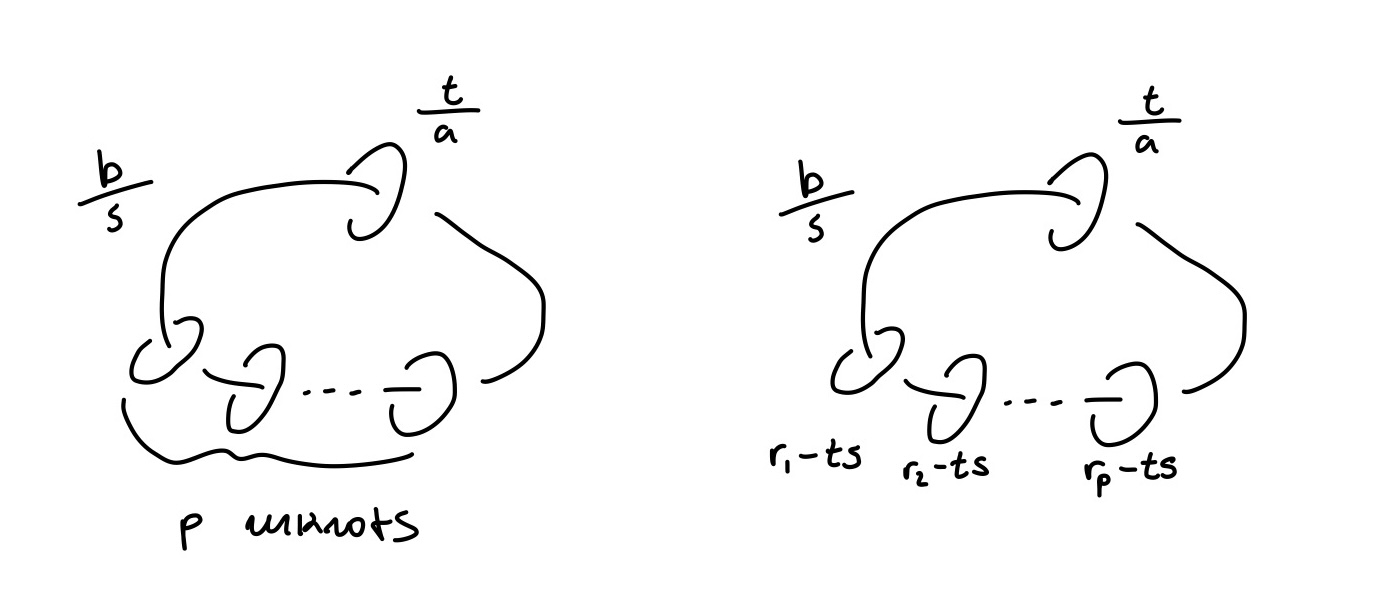}
\caption{Surgery diagram for the complement of a $T(sp,tp)$ torus link in $S^3$ (left) and the diagram for surgery on a torus link (right). We provide the picture on the right for completeness, and it should be understood that the link component $L_i$ ($i\in \{1,2,..,p\}$) has rational framing $r_i$. On the left, we exclude the framing to indicate that each unknot is the complement of an unknot (no surgery is being done).  }
\label{torus_links}
\end{figure}

\begin{proposition}
    Let $T(sp,tp)$ denote a torus link and $M$ its complement in $S^3$. Let $a,b$ be solutions to Bezout's identity $sa-tb = 1$. Then $M$ has the surgery description in figure \ref{torus_links}.
\end{proposition}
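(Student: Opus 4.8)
The plan is to identify $M = S^3 \setminus \nu(T(sp,tp))$ with a Seifert fibered space and then read off the surgery diagram from its Seifert invariants. First I would note that since $sa - tb = 1$ forces $\gcd(s,t) = 1$, the link $T(sp,tp)$ is exactly $p$ parallel copies of the torus knot $T(s,t)$ lying on the standard Heegaard torus $T \subset S^3$; hence it is a $p$-component link, each component a $(s,t)$ curve. The essential observation is that $S^3$ carries a Seifert fibration over $S^2$ with exactly two exceptional fibers, of orders $s$ and $t$, in which every $(s,t)$ curve on $T$ is a \emph{regular} fiber. Consequently all $p$ components of our link are regular fibers, and $M$ is the Seifert fibered space over the $p$-holed sphere with two exceptional fibers of orders $s$ and $t$.

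Second, I would convert this Seifert data into surgery. Writing $S^3$ as the Seifert fibration with a central framing and two legs encoding the exceptional fibers, the standard dictionary between Seifert invariants and star-shaped surgery diagrams presents $S^3$ as surgery on a central unknot with two meridional unknots carrying framings determined by $s$ and $t$. The consistency condition that this surgery actually reproduces $S^3$ (equivalently, that its Euler number is $\pm 1/(st)$) is precisely the Bezout identity $sa - tb = 1$; this is where $a,b$ enter and fix the framings of the two exceptional legs. Drilling out $p$ regular fibers then corresponds to adjoining $p$ meridional unknots to the central component and leaving them \emph{unsurgered} (the complement of the link, rather than a filling). This yields exactly the diagram on the left of Figure~\ref{torus_links}.

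Third, to make the identification rigorous I would verify it by Kirby/Rolfsen calculus: starting from the diagram in the figure, re-fill the $p$ unsurgered components by the slope-surgeries returning the regular fibers, and use slam-dunks, blow-downs, and the Rolfsen twists recalled above to reduce the resulting closed diagram back to the empty diagram (i.e. to $S^3$), checking that the $p$ filled cores sit on $T$ as parallel $(s,t)$ curves. Throughout, the framings produced by the reduction must match those drawn in the figure, and the matching is controlled by $sa - tb = 1$.

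The main obstacle I expect is bookkeeping rather than concept: pinning down the exact framings and orientations on the two exceptional legs so that they agree with the figure's conventions, and confirming that the $p$ meridional unknots are isotopic in the surgered manifold to the desired parallel torus-link components (as opposed to fibers of a different slope). The sign convention for the Euler number and the choice of Bezout representative $(a,b)$ (defined only modulo $(t,s)$) are the delicate points; once a consistent convention is fixed, the Rolfsen-twist reduction is routine and mirrors the manipulations already used in Proposition~\ref{plumbed_link_wavfn}.
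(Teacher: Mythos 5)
Your argument is correct, but it takes a genuinely different route from the paper. The paper's proof is a direct Rolfsen-calculus computation in the spirit of \cite{KS}: it places two auxiliary $\infty$-framed unknots $O_m, O_l$ on either side of the Heegaard torus, expands $t/s$ as a continued fraction $(k_1,\dots,k_n)$, and performs alternating Rolfsen twists along $O_m$ and $O_l$ that unwind $T(sp,tp)$ into $p$ parallel unknots while the framings accumulate to $b/s$ and $t/a$; the identification of those accumulated continued fractions with $b/s$ and $t/a$ comes from the factorization of $\begin{pmatrix} s& b\\ t&a\end{pmatrix}$ into $S\tau_-^{k_i}$ generators. You instead invoke the classical fact (going back to Moser) that the components of $T(sp,tp)$ are $p$ regular fibers of the Seifert fibration of $S^3$ with two exceptional fibers of orders $s$ and $t$, so that $M$ is Seifert fibered over the $p$-holed sphere, and then read the star-shaped diagram off the standard Seifert-to-surgery dictionary, with the Bezout identity appearing as the condition $|H_1|=1$ (equivalently the Euler number being $\pm 1/(st)$) for the filled diagram to be $S^3$ --- indeed, filling the two framed components of Figure \ref{torus_links} gives a lens space with $|H_1| = |sa-tb|$, so this checks out. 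Your approach buys the \emph{shape} of the diagram and the role of Bezout essentially for free, and it generalizes immediately (e.g.\ to links of regular fibers in other Seifert manifolds); what it does not do by itself is pin the coefficients down to exactly $b/s$ on the central component and $t/a$ on the leg as drawn, since the Seifert data is only determined up to the usual normalization ambiguity (the Bezout pair $(a,b)$ is defined mod $(t,s)$, and shifting it performs a Rolfsen twist on the central vertex that reframes the $p$ unfilled components). Resolving that --- which matters downstream, since Proposition \ref{toruslinksGM} uses the $0$-framed components --- is precisely the bookkeeping you defer to a Kirby-calculus verification, and that verification is, in essence, the paper's proof. So the two arguments are complementary: yours explains \emph{why} the diagram must exist, the paper's computes \emph{which} diagram it is.
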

\textbf{Proof:} As a torus link, $T(sp,tp)$ has the obvious description of wrapping the $(sp,tp)$ curve in $\pi_1(T^2)$. Inspired by the techniques in \cite{KS}, we embed the torus in $S^3$ and put two $\infty$-framed unknots wrapping the meridian and longitude of a slightly larger and slightly smaller torus as in figure \ref{inf_fram}. Call them $O_l$ and $O_m$.

\begin{figure}[H] 
\centering
\includegraphics[width=0.4\textwidth]{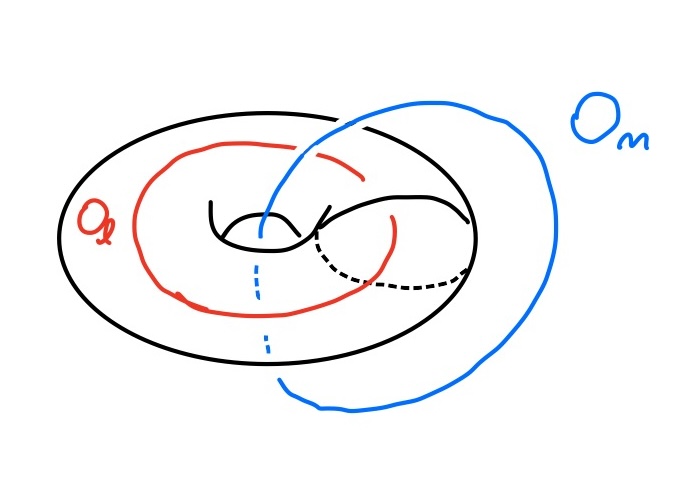}
\caption{$O_l$ and $O_m$ (both $\infty$ framed) in relation to the torus. It is to be understood that the torus link lies in a thickened version of the $T^2$ in the figure. }
\label{inf_fram}
\end{figure}

By assumption, there exists integers $a,b$ such that $\begin{pmatrix}
    s&b\\
    t&a
\end{pmatrix}\in SL(2,\mathbb{Z})$. Additionally, $\frac{t}{s}$ has some continued fraction $(k_1,k_2,...,k_n)$:
$$\frac{t}{s} = k_1+\frac{1}{k_2+\frac{1}{k_3+...}}$$
Therefore, we can begin by doing $-k_1$ Rolfsen twists along $O_m$. Recall that torus links have the braid description of having $s p$ strands with $tp$ sweeps along them. Without loss of generality, we assume there are $sp$ strands running transversely to $O_m$. As such, the twists turn the $T(sp,tp)$ torus link into a $T(sp,tp-k_1ps)$ torus link. A key observation here is that $O_l$ does \textit{not} leave its place in the core of the torus. The framings change per figure \ref{Rolfsencalc}:
\begin{align*}
&O_l: \infty +k_1 = \infty \\
&O_m: \frac{1}{k_1}
\end{align*}
Notice that:
$$\frac{s}{t-sk_1} = k_2+\frac{1}{k_3+\frac{1}{k_4+...}}$$
So that the natural next step is to do $-k_2$ Rolfsen twists on $O_l$. Since the longitude and meridian directions are interchangeable, we similarly find the resulting torus link to be $T(p(s-k_2(t-k_1s)),p(t-k_1 s))$. Again, $O_m$ stays put, and the framings change to:
\begin{align*}&O_l: \frac{1}{\frac{1}{\infty}+k_2} = \frac{1}{k_2}\\
&O_m: \frac{1}{k_1}+k_2\end{align*}
We proceed until the torus link is completely unknotted into a disjoint union of $p$ unknots. The final framings will be:
\begin{align*}&O_l: \frac{1}{\frac{1}{\frac{1}{...+ k_{n-2}}+k_{n-1}}+k_n} = \frac{1}{(k_n,k_{n-1},...,k_2)}\\
&O_m: \frac{1}{\frac{1}{...+ k_{n-2}}+k_{n-1}}+k_n = (k_n,...,k_1)\end{align*}

We have already seen automorphisms $\begin{pmatrix}
    s&b\\
    t&a
\end{pmatrix}\in SL(2,\mathbb{Z})$ decompose into\footnote{note this corresponds to a continued fraction with negative signs instead, but one can always invert $k_i\rightarrow-k_i$ in the appropriate places to move between the two descriptions}:
$$\begin{pmatrix}
    s&b\\
    t&a
\end{pmatrix} = S\tau^{k_1}_-S\tau^{k_2}_-S...\tau^{k_n} = \begin{pmatrix}
    k_1&1\\
    -1&0
\end{pmatrix}\begin{pmatrix}
    k_2&1\\
    -1&0
\end{pmatrix}...\begin{pmatrix}
    k_n&1\\
    -1&0
\end{pmatrix} $$
Taking the transpose of both sides, one finds:
$$\frac{b}{s} = (k_n,...,k_1)$$
Similarly taking the transpose and then inverting $S\tau^{k_1}_-$ on the right gives:
$$\frac{a}{t} = (k_n,...,k_2)$$
The resulting surgery diagram is then that of figure \ref{torus_links}. $\Box$

Once we have a plumbed description for the $T(sp,tp)$ links, the formalism we have developed thus far allows immediate and simple calculations for the gluing. That is, we have,

\begin{proposition}\label{toruslinksGM}
    The GM series of the $T(tp,sp)$ torus link is given by,
    $$F_{T_{tp,sp}}(X_1,...,X_p;q) = \sum_{n\in \mathbb{Z}+\frac{st(p-1) +1}{2}} f^n_{T_{sp,tp}}(q) \prod^p_{i=1}X_i^n$$
    
    $$f^n_{T_{sp,tp}}(q) = q^{\frac{n^2}{st}}\sum_{\epsilon,\epsilon' = \pm1} \epsilon \epsilon' \mathrm{sgn}(-2n +\epsilon t +\epsilon' s )^p \begin{pmatrix}
        \frac{|\epsilon t+\epsilon' s -2n|+pst}{2st}-1\\
        p-1
    \end{pmatrix} \delta^\mathbb{Z}\left(\frac{\epsilon t+\epsilon' s+pst -2n}{2st}\right) $$
\end{proposition}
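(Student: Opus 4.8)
The plan is to combine the surgery description just established (figure \ref{torus_links}) with the tree-link wavefunction of Proposition \ref{treelinks_wvfn}, and then to surger the two auxiliary unknots exactly as in the Seifert computation of Section \ref{seifert_section}. Concretely, figure \ref{torus_links} presents $S^3\setminus T(sp,tp)$ as a star-shaped tree link: a central unknot (say $O_m$) of valence $p+1$, linked once to each of the $p$ unsurgered torus-link components $L_1,\dots,L_p$ and once to the auxiliary unknot $O_l$, with $O_m$ and $O_l$ carrying the rational framings read off from the continued-fraction data of the previous proof (encoding $s$ and $t$). First I would write down $\ket{\mathcal T,\mathbf a}$ from \eqref{treelinks}; the only nontrivial vertex factor is the central one, $(X_{O_m}^{1/2}-X_{O_m}^{-1/2})^{1-\deg O_m}=(X_{O_m}^{1/2}-X_{O_m}^{-1/2})^{-p}$, while every leaf (both $O_l$ and the $L_i$, all of valence one) contributes the trivial factor $1$. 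Expanding this central factor via the Weyl-symmetric inverse of appendix \ref{high_rank_expansions} produces exactly coefficients of the form $\mathrm{sgn}(\cdot)^p\binom{|\cdot|+\tfrac{p}{2}-1}{p-1}$, which I expect to be the source of the binomial and the $\mathrm{sgn}(\cdot)^p$ in the claimed $f^n_{T_{sp,tp}}$, precisely as the central degree-$d$ factor supplies $\binom{\cdot}{d-2}$ in the Seifert case.

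Next I would perform the two surgeries $\bra{\mathbb S,\mathbf a_{O_l}}\gamma_{O_l}^\dagger$ and $\bra{\mathbb S,\mathbf a_{O_m}}\gamma_{O_m}^\dagger$, which is the identical manipulation carried out for the leaves and central vertex in Section \ref{seifert_section} and in \eqref{partialsurgwavfn}. Each glued solid torus is antisymmetric, $X^{1/2}-X^{-1/2}\mapsto\sum_{\epsilon=\pm1}\epsilon X^{\epsilon/2}$, so the two surgeries contribute the two independent sign sums $\sum_{\epsilon,\epsilon'=\pm1}\epsilon\epsilon'$; the $Y$-pairings together with the rational framings produce the Kronecker and $\delta^{\mathbb Z}$ constraints relating the surgery variables on $O_l,O_m$ to the common exponent $n$ carried by the $p$ components (they share a single exponent because each $L_i$ links only $O_m$). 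Because the $L_i$ are left unsurgered, $n$ survives as the summation variable of the final $q$-series, playing the role that $n_0$ plays in the Seifert computation.

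I would then collapse the $O_l,O_m$ sums against their $\delta$-constraints and simplify using Bezout's identity $sa-tb=1$. The key simplification is that the $2\times2$ framing block on $\{O_m,O_l\}$, namely $\left(\begin{smallmatrix}b/s&1\\1&t/a\end{smallmatrix}\right)$, has determinant $-1/(sa)$, so the relevant entry of its inverse is $-st$; this is what converts the surgery $q$-exponent into the single term $q^{n^2/st}$ and turns the two antisymmetrization shifts $\epsilon/2,\epsilon'/2$ into the combination $\epsilon t+\epsilon' s$. After this substitution the binomial argument becomes $\tfrac{|\epsilon t+\epsilon' s-2n|+pst}{2st}-1$ and the surviving integrality constraint becomes $\delta^{\mathbb Z}\!\left(\tfrac{\epsilon t+\epsilon' s+pst-2n}{2st}\right)$. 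Taking the $Y$-constant term at the label $\mathbf 0=(0,\mathfrak s_0)$ (the definition of $F_L$ from Section \ref{partial_surg_section}) then reads off $f^n_{T_{sp,tp}}(q)$, and matching the $\mathrm{Spin}$-structure conventions of Section \ref{spinstructures_review} against the $(Q\mathbf n,\mathbf n)$ bookkeeping fixes the coset $n\in\mathbb Z+\tfrac{st(p-1)+1}{2}$.

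The main obstacle I anticipate is precisely this last bookkeeping step: verifying that all dependence on the auxiliary Bezout integers $a,b$ cancels and that the two separately-framed surgeries recombine into the fully symmetric expression in $s,t,p$, together with pinning down the half-integer shift $\tfrac{st(p-1)+1}{2}$ and the overall $\pm q^c$ ambiguity. A secondary point to check is that the infinite sum produced by expanding $(X_{O_m}^{1/2}-X_{O_m}^{-1/2})^{-p}$ interacts well with the surgery $\delta$'s, so that the resulting $q$-series converges and the constant-term extraction is legitimate; the truncation forced by the binomial vanishing for negative lower range should handle this, but it must be checked. As a sanity check I would specialize to $p=1$, where $\binom{\cdot}{0}=1$ and the formula must reduce to the known torus-knot $F_K$ of \cite{GM}.
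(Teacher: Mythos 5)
Your proposal follows essentially the same route as the paper: the paper's proof consists precisely of reading off the star-shaped tree-link wavefunction from \eqref{treelinks} (central vertex $O_m$ of valence $p+1$ carrying $(X_0^{1/2}-X_0^{-1/2})^{-p}$), applying $\gamma^\dagger_{b/s}$ and $\gamma^\dagger_{t/a}$ at the two auxiliary unknots with trivial decorations, and taking the $Y$-constant term. Your additional commentary on the origin of the binomial coefficients, the $\epsilon,\epsilon'$ sums, and the appearance of $st$ via the determinant $-1/(sa)$ correctly fills in the computation the paper leaves implicit.
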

\textbf{Proof:} The wavefunction of the tree link displayed in figure \ref{torus_links} can immediately be read off from \eqref{treelinks},
$$\ket{L,\mathbf{a}} = \sum_{n\in \Z^{p+2}+\mathbf{a}} q^{(Qn,n)} Y^n (X_0^{\frac{1}{2}}-X_0^{-\frac{1}{2}})^{-p} X^{Qn} \in \otimes^{p+1}_{i=0}\mathcal{H}(T^2)$$
Where we take $0$ to be the index of the center vertex and $p+1$ to be the index of the leaf with surgery framing $\frac{t}{a}$. The result is then immediately given by taking the $Y$ constant term of,
$$\bra{\mathbb{S},\mathbf{a}_0}\otimes\bra{\mathbb{S},\mathbf{a}_1} (\gamma^\dagger_{b/s}\otimes\gamma^\dagger_{t/a})\ket{L,\mathbf{a}}$$
Notice that the labels $\mathbf{a}_1$ and $\mathbf{a}_0$ on the unknots being surgered must correspond to the trivial cohomology element since the result of the gluings is $S^3$
$\Box$

\textbf{Remark:} Though we have done partial surgeries in the calculation above, it is worth mentioning that these do \textit{not} fall under the Partial Surgery Theorem. That is, we have performed more general rational surgeries. Such glueings for arbitrary wavefunctions will generically yield link complements in arbitrary 3-manifolds $M$, not $S^3$. These are perfectly fine wavefunctions, but fall outside the definition of the GM series. Here, we know the end result is a link complement in $S^3$ by construction, so there are no issues. Moreover, we emphasize that our formalism allows for easy computations of such cases, therefore generalizing the Partial Surgery Theorem for link complements. 

However, if we want a cabling formula, we still have some more work to do. Let's say we wish for the pattern link $P$ to be a torus link $T(tp,sp)$ where $sp$ strands wrap along the interior of $S^1\times D^2$ (note that this is \textit{not} symmetric in $s$ and $t$, we are picking a preferred direction). Then the unknot $L_0$, whose complement is the $S^1\times D^2$, wraps around the $sp$ strands. 

For instructive purposes, let us see how this works in the simple case of $s=1$ links. The surgery diagram is simply a $p$-valency vertex with central framing $-1/t$. Applying the Partial Surgery Theorem for $-\frac{1}{r}$, we find, 
$$\mathcal{L}_{-1/r,X_0} \left(\frac{X^{1/2r}_0-X^{-1/2r}_0}{(X^{1/2}_0 -X^{-1/2}_0)^{p-1}}\right) = \sum_{m,\epsilon}\epsilon f^m_{p+1} \cdot q^{rm^2+\epsilon m} \prod^p_{i=1}X^{r m +\epsilon/2}_i $$
We separate $r = r' +t$, and rewrite the above as,
$$\mathcal{L}_{-1/r,X_0} \left(\frac{X^{1/2r}_0-X^{-1/2r}_0}{(X^{1/2}_0 -X^{-1/2}_0)^{p-1}}\right)= \mathcal{L}_{-1/r',X_0}\left( (X^{1/2r'}_0-X^{-1/2r'}_0)\cdot \sum_m f^m_{p+1}X_0^m q^{tm^2}\prod^p_{i=1}X_i^{tm}\right)$$
Which implies the GM series of the pattern defining link $P'$ for this case is given by,
\be \label{t,tp case}F_{P'}(X) = \sum_m f^m_{p+1}X_0^m q^{tm^2}\prod^p_{i=1}X_i^{tm}\ee
$$f^m_{p+1} = \mathrm{sgn}(m)^p \binom{\frac{p}{2}-\frac{3}{2}+m}{p-2} \delta^{\mathbb{Z}}\left(m-\frac{p}{2}-\frac{1}{2}\right)$$

In the case of general $s$, $-\frac{1}{r}$ surgery on $L_0$ should return us a torus link yet again, namely $T(p(t+rs), sp)$. Using this observation, we see the following result,
\begin{lemma} \label{patterndefiningtoruslink}
    The GM series for the pattern defining link $P'$ of the $P = T(tp,sp)$ torus link is given by (up to an overall $\pm q^c$ factor),
    $$F_{P'} \cong \sum_{m\in \frac{1+sp}{2}} f^m_{P'}(X) \cdot X^m_0$$
    
    $$f^m_{P'}(X) = \left[ \sum_{\epsilon=\pm1} \epsilon \cdot \mathrm{sgn}(m-\frac{\epsilon}{2})^p\binom{\frac{|2m -\epsilon|+ps}{2s}-1}{p-1} \delta^\mathbb{Z}\left( \frac{2m-\epsilon-ps}{2s}\right)\right] \cdot q^{m^2 \frac{t}{s} }X^{mt}$$
    Where $X$ is shorthand for $\prod^p_{i=1}X_i$.
\end{lemma}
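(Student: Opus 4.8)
The plan is to obtain $F_{P'}$ by inverting a single partial surgery against the torus-link formula of Proposition \ref{toruslinksGM}. The starting point is a topological observation: since $L_0$ encircles the $sp$ longitudinal strands of the pattern, a $-1/r$ Rolfsen twist on $L_0$ adds $rsp$ to the longitudinal count and turns $T(tp,sp)$ into $T(p(t+rs),sp)$. Hence the complement of $T(p(t+rs),sp)$ is exactly the $-1/r$ partial surgery on $L_0$ inside $P'$, and the Partial Surgery Theorem (Theorem \ref{partialsurgerytheorem}), with $lk(L_0,L_k)=s$ for every $k$, gives
$$F_{T(p(t+rs),sp)}(X_1,\ldots,X_p;q) = \mathcal{L}_{X_0,-1/r}\!\left((X_0^{1/2r}-X_0^{-1/2r})\,F_{P'}\right),$$
where $\mathcal{L}_{X_0,-1/r}\colon X_0^{\mu}\mapsto q^{r\mu^2}X^{rs\mu}$ and $X=\prod_{i=1}^{p}X_i$. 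This holds for every $r>0$ while the sought $F_{P'}$ is $r$-independent, so the strategy is to exhibit an $r$-independent solution of this whole family and read it off (this mirrors the $s=1$ illustration in \eqref{t,tp case}, where the separation $r=r'+t$ was used).

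First I would substitute the ansatz $F_{P'}=\sum_m f^m_{P'}(X)\,X_0^m$, expand $(X_0^{1/2r}-X_0^{-1/2r})=\sum_{\epsilon=\pm1}\epsilon\,X_0^{\epsilon/2r}$, and apply the Laplace transform to get
$$\sum_{m,\epsilon}\epsilon\, f^m_{P'}(X)\, q^{r(m+\epsilon/2r)^2}\,X^{rs(m+\epsilon/2r)}.$$
Using the claimed factorization $f^m_{P'}(X)=[\,\cdots\,]\,q^{m^2 t/s}X^{mt}$, the total power of $X$ becomes $X^{m(t+rs)+\epsilon s/2}$, so the monomial $X^n$ on the right is indexed by $n=m(t+rs)+\epsilon s/2$, i.e. $2n=2m(t+rs)+\epsilon s$. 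I would then compare this, term by term, against Proposition \ref{toruslinksGM} applied with $t\mapsto t+rs$.

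The core of the argument is the combinatorial matching. Writing the torus-link double sum with signs $\epsilon_1,\epsilon_2$, the substitution for $2n$ turns the argument $-2n+\epsilon_1(t+rs)+\epsilon_2 s$ into $(\epsilon_1-2m)(t+rs)+(\epsilon_2-\epsilon)s$. The key step is that the integrality constraint $\delta^{\mathbb Z}\!\big(\tfrac{\epsilon_1(t+rs)+\epsilon_2 s+ps(t+rs)-2n}{2s(t+rs)}\big)$ forces $\epsilon_2=\epsilon$, since the leftover contribution $\tfrac{\epsilon_2-\epsilon}{2(t+rs)}$ is non-integral otherwise; it then collapses to $\delta^{\mathbb Z}\!\big(\tfrac{2m-\epsilon_1-ps}{2s}\big)$, matching the $F_{P'}$ delta under $\epsilon_1\leftrightarrow\epsilon$. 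With $\epsilon_2=\epsilon$ the sign argument reduces to $(\epsilon_1-2m)(t+rs)$, so $\mathrm{sgn}(\cdots)^p=\mathrm{sgn}(m-\epsilon_1/2)^p$ up to a factor $(-1)^p$, and the binomial argument $\tfrac{|(\epsilon_1-2m)(t+rs)|+ps(t+rs)}{2s(t+rs)}-1$ cancels the factor $(t+rs)$ to give $\tfrac{|2m-\epsilon_1|+ps}{2s}-1$, exactly the claimed binomial; the product $\epsilon\,\epsilon''$ of the Laplace sign and the internal sign reassembles into the torus $\epsilon_1\epsilon_2$. Finally I would check the $q$-exponents: $r(m+\epsilon/2r)^2+m^2 t/s$ differs from $\tfrac{n^2}{s(t+rs)}$ only by the $m$-independent constant $\tfrac1{4r}-\tfrac{s}{4(t+rs)}$, so it, together with the residual $r$- and $m$-independent sign $(-1)^p$, is absorbed into the overall $\pm q^c$ permitted by the statement.

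I expect the main obstacle to be precisely this delta-function collapse together with the bookkeeping of the three signs $(\epsilon_1,\epsilon_2,\epsilon)$ and the antisymmetrization factor: one must verify that the constraint genuinely pins $\epsilon_2=\epsilon$ for all admissible $r$ and that the surviving binomial/sign data recombine into the single-$\epsilon$ expression of the Lemma rather than producing spurious cross terms. A secondary point requiring care is the index range $m\in\frac{1+sp}{2}+\mathbb Z$ and the claim that the $r$-independent solution is the \emph{unique} one consistent with the entire family of partial-surgery identities; this amounts to the injectivity of the antisymmetric Laplace map on the relevant completion (as already used for tree links in Proposition \ref{plumbed_link_wavfn}), which is what upgrades the matching from exhibiting a candidate to actually determining $F_{P'}$.
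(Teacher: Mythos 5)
Your proposal is correct and follows essentially the same route as the paper: both rest on the observation that a $-1/r$ Rolfsen twist on $L_0$ turns $T(tp,sp)$ into $T(p(t+rs),sp)$, apply the Partial Surgery Theorem to an ansatz $F_{P'}=\sum_m f^m_{P'}(X)X_0^m$, and match the result against Proposition \ref{toruslinksGM} via the reindexing $2n=2m(t+rs)+\epsilon s$. The only (cosmetic) difference is that the paper solves for $f^m_{P'}$ directly from the matching, whereas you verify the candidate and appeal to injectivity of the antisymmetrized Laplace map for uniqueness; your bookkeeping of the delta-function collapse, the $(-1)^p$ sign, and the $m$-independent $q$-shift all agree with the paper's computation.
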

\textbf{Proof}:   Let $P'$ be the pattern defining link for $P=T(tp,sp)$. Assume its GM series to be of the form,
$$F_{P'} = \sum_{m\in \frac{1}{2}\Z} f^m_{P'}(X) X^m_0$$
Applying the Partial Surgery Theorem for $-\frac{1}{r}$ surgery on the $L_0$ component, we find,
$$\sum_{m \in \frac{1}{2} \Z}\sum_{\epsilon' =\pm 1} \epsilon' f^m_{P'}(X) q^{rm^2 +\epsilon' m} X^{s(rm+\epsilon'/2)}$$
Where $X$ is a placeholder for $\prod^p_{i=1}X_i$. Using Proposition \ref{toruslinksGM}, we may set the above equal to,
$$F_{T_{p(t+ rs),ps}} = \sum_{n\in \mathbb{Z}}\sum_{\epsilon,\epsilon'} \epsilon \epsilon ' \mathrm{sgn}(2n-p)^p\binom{|2n-p|+p-1}{p-1} q^{...} X^{-stn +\epsilon t/2  +pst/2}\cdot  X^{rs(-sn+\epsilon/2 +ps/2)+ \epsilon's/2  } $$
Which we may rewrite as,
$$\sum_{m \in \frac{1}{2}\Z}\sum_{\epsilon,\epsilon'=\pm1} \epsilon \epsilon ' \mathrm{sgn}(m-\frac{\epsilon}{2} )^p\binom{\frac{|2m/s -\epsilon/s|+p}{2}-1}{p-1} \delta^\mathbb{Z}\left( \frac{2m-\epsilon-ps}{2s}\right)q^{m^2 (r+\frac{t}{s})+m\epsilon'} X^{m (rs + t)+ s\epsilon'/2}$$
Therefore, we directly deduce,
$$f^m_{P'}(X) = \left[ \sum_{\epsilon=\pm 1} \epsilon \cdot \mathrm{sgn}(m-\frac{\epsilon}{2} )^p \binom{\frac{|2m -\epsilon|+ps}{2s}-1}{p-1} \delta^\mathbb{Z}\left( \frac{2m-\epsilon-ps}{2s}\right)\right] \cdot q^{m^2 \frac{t}{s} }X^{mt}  $$
Which implies the result. $\Box$

Again, for instructive purposes, let us see how this agrees with \eqref{t,tp case}. Consider just $F^+_{P'}$ (only positive exponents). The $s=1$ case of Lemma \ref{patterndefiningtoruslink} reads,
$$f^m_{P'}(X) = \left[ \sum_\epsilon \epsilon \binom{\frac{2m -\epsilon+p}{2}-1}{p-1} \right]\delta^\mathbb{Z}\left( \frac{2m-1-p}{2}\right) \cdot q^{m^2 t }X^{mt}$$
It follows from Pascal's identity that,
$$\binom{\frac{p}{2}-\frac{3}{2}+m}{p-2} = \binom{\frac{p}{2}-\frac{1}{2}+m}{p-1}-\binom{\frac{p}{2}-\frac{3}{2}+m}{p-1}$$
So, indeed, they agree as required.

Now, we come to our main result,
\begin{theorem}[Cabling Formula]
    Let $K$ be a plumbed knot with GM series $F_K (X_0) = \sum_n f^n_K(q)X^n_0$. Then the $(tp,sp)$-cabling of $K$ ($gcd(s,t)=1$) is given by (up to an overall $\pm q^c$ factor),
    $$F_{C_{(tp,sp)}(K)} = \sum_{m\in \frac{1+sp}{2}} f^m_{t,s,p}(X) \cdot F_K(X_0 = q^{2m} \prod^p_{i=1}X^s_i) $$
    $$f^m_{t,s,p}(X) = \left[ \sum_\epsilon \epsilon \cdot \mathrm{sgn}(m-\frac{\epsilon}{2} )^p \binom{\frac{|2m -\epsilon|+ps}{2s}-1}{p-1} \delta^\mathbb{Z}\left( \frac{2m-\epsilon-ps}{2s}\right)\right] \cdot q^{m^2 \frac{t}{s} }(\prod^p_{i=1}X_i)^{mt}$$
    Note that by $(tp,sp)$-cabling, we mean the pattern torus link is embedded in the curve $(tp,sp)$ in $T^2 \subset S^1 \times D^2$, where the $(0,1)$ cycle is contractible in $S^1 \times D^2$. 
\end{theorem}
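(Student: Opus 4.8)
The plan is to derive the cabling formula as a direct specialization of the Satellite Formula (Theorem \ref{satellite}, in the form \eqref{sat_2}), applied to the pattern-defining link $P'$ of the torus link $T(tp,sp)$, whose GM series is already given by Lemma \ref{patterndefiningtoruslink}. By definition the $(tp,sp)$-cabling is the satellite $C_P(K)$ with pattern $P = T(tp,sp)$ embedded so that $sp$ strands run longitudinally inside $S^1 \times D^2$, and $P'$ is obtained by adjoining the unknot $L_0$ whose complement is the ambient solid torus.

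First I would extract the relevant linking data. Because the $sp$ longitudinal strands are distributed evenly among the $\gcd(sp,tp)=p$ components of $T(tp,sp)$, each component $L_k$ passes through the meridian unknot $L_0$ exactly $s$ times, so $lk(L_0,L_k) = s$ for every $k$. Hence the substitution appearing in \eqref{sat_2}, namely $X_0 = q^{2n}\prod_{k\neq 1} X_k^{lk(L_0,L_k)}$, collapses to $X_0 = q^{2n}\prod_{i=1}^p X_i^{s}$, which is exactly the argument of $F_K$ in the claimed formula.

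Next I would insert the coefficients $f^m_{P'}(X)$ from Lemma \ref{patterndefiningtoruslink} into \eqref{sat_2}. Since that lemma gives $f^m_{P'}$ supported on $m \in \tfrac{1+sp}{2}+\Z$ with $f^m_{P'}(X) = [\,\cdots\,]\, q^{m^2 t/s}(\prod_i X_i)^{mt}$, the pattern sum in \eqref{sat_2} becomes precisely the sum over $m$ in the statement, and $f^m_{t,s,p}(X)$ is identified verbatim with $f^m_{P'}(X)$. This final matching is purely mechanical once the linking numbers and the substitution have been pinned down.

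I expect the main obstacle to be geometric bookkeeping rather than computation: one must confirm that Lemma \ref{patterndefiningtoruslink} has been set up with the orientation convention stated here (the $(0,1)$ cycle contractible, with $sp$ strands longitudinal rather than meridional), so that the relevant linking numbers are genuinely $s$ and not $t$, and that the $\pm q^c$ framing-anomaly ambiguities accumulated across the intermediate gluings are absorbed consistently into the overall prefactor. As elsewhere in the paper, convergence of the doubly-indexed sum must be assumed, since \eqref{sat_2} itself holds only where that sum converges.
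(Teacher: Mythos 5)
Your proposal is correct and follows exactly the paper's own proof, which simply combines the Satellite Formula (Theorem \ref{satellite}) with Lemma \ref{patterndefiningtoruslink}; your additional bookkeeping of the linking numbers $lk(L_0,L_k)=s$ makes explicit what the paper leaves implicit, but the route is the same.
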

\textbf{Proof:} Simply apply Theorem \ref{satellite} and Lemma \ref{patterndefiningtoruslink}. $\Box$

We are also led to immediately conjecture, 
\begin{conjecture}
    The cabling formula above holds for any knot with a well-defined GM series, so long as the right-hand side converges.
\end{conjecture}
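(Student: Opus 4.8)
The plan is to reduce the conjecture to the conjectural inputs it genuinely depends on by isolating exactly where the ``plumbed'' hypothesis on $K$ was used in the proof of the Cabling Formula Theorem. That proof combined only two ingredients: the Satellite Formula (Theorem \ref{satellite}) and the explicit GM series of the pattern-defining link $P'$ of the torus link $T(tp,sp)$ (Lemma \ref{patterndefiningtoruslink}). The key observation is that the second ingredient is completely independent of $K$: the pattern-defining link $P'$ is built from a torus link and is therefore always plumbed, so Lemma \ref{patterndefiningtoruslink} holds unconditionally (its own proof uses only the Partial Surgery Theorem \ref{partialsurgerytheorem} and Proposition \ref{toruslinksGM}, both established for plumbed/torus links). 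Consequently the only place the hypothesis that $K$ be plumbed enters is through the Satellite Formula, and the cabling conjecture reduces to establishing the satellite formula for an arbitrary knot $K$ with a well-defined GM series, paired with the (plumbed) pattern $P'$.

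First I would establish the satellite formula in this more general setting. The derivation of Theorem \ref{satellite} proceeded through the gluing pairing $\bra{X}S\ket{Y}$ along the torus separating the knot exterior $X = S^3\setminus\nu(K)$ from the pattern complement inside the solid torus, using the $S$-transform of the $SL(2,\Z)$ action on $\mathcal{H}(T^2)$. Every step of that computation is purely algebraic: it uses only the form of the wavefunction $\ket{K,\mathbf{a}} = \sum_n Y^n F_K(X,q)$ and the action of $S$ on the basis $Y^n X^m$ (Lemma \ref{prlemma}), both of which are defined for arbitrary rational exponents on all of $\mathcal{H}(T^2)$. Thus, if one grants Conjecture \ref{general_link_wavefun_conj}, namely that a knot with a well-defined GM series is assigned precisely the wavefunction $\ket{K,\mathbf{a}} = \sum_{n}Y^n F_K(X,q)$, the gluing calculation reproduces the satellite formula verbatim, now for general $K$. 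Substituting the unconditional Lemma \ref{patterndefiningtoruslink} and reading off the linking numbers $lk(L_0,L_i)=s$ of the torus-link pattern then yields exactly the claimed expression for $F_{C_{(tp,sp)}(K)}$, up to the usual $\pm q^c$ framing factor.

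The second task is to control convergence, where the ``so long as the right-hand side converges'' hypothesis does real work. Unlike the plumbed case, where the substitution $X_0 = q^{2m}\prod_i X_i^s$ acts on a GM series with controlled support, for general $K$ the coefficients $f^n_K(q)$ may have infinite support. I would verify that, after this substitution, each fixed monomial in the $X_i$ receives contributions from only finitely many $(m,n)$ at each fixed $q$-order, so that the double sum defines an element of $\mathbb{C}_q$; concretely this amounts to checking that the pairing $\bra{X}S\ket{Y}$ lands in the convergent locus $\mathbb{L}\subset\mathcal{H}(T^2)\times\mathcal{H}(T^2)$ of the bilinear form. When both orderings of the sum converge, their equality (the second half of the general satellite conjecture) follows by Fubini once absolute $q$-adic summability is in hand.

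The hard part will be Conjecture \ref{general_link_wavefun_conj} itself: one must supply, for a non-plumbed knot complement, a definition of the $\hat{Z}$-TQFT wavefunction and then prove it takes the form $\sum_n Y^n F_K(X,q)$. For plumbed complements this followed from topological invariance, the $A$-polynomial annihilation by $(Y-1)$ together with the $\tau_+$-consistency constraints, applied to an explicit plumbing presentation; for a general knot no such presentation exists. One would therefore need either an intrinsic characterization of the wavefunction via the peripheral torus and the $A$-polynomial ideal of $K$, or an independent construction of $F_K$ (for instance via the large-color $R$-matrix limit of colored invariants as in \cite{GM}) together with a proof that it obeys the gluing axioms of Section \ref{tqftrules}. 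Settling this is essentially equivalent to placing the full $F_K$ formalism inside the decorated $\mathrm{Spin}$ TQFT for arbitrary knots, which is precisely why the statement remains a conjecture rather than a theorem.
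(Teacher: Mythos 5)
The statement you are addressing is a \emph{conjecture} in the paper: the authors give no proof of it, only the derivation of the plumbed case (via Theorem \ref{satellite} and Lemma \ref{patterndefiningtoruslink}) followed by a single consistency check, namely that the $(2r+1,2)$-cabling formula applied to $K=4_1$ reproduces the results of Chae. Your proposal is therefore not being measured against an actual proof, and you should be aware that what you have written is likewise not a proof but a (correct) analysis of why the statement is plausible and where the genuine difficulty lies.

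That said, your reduction is accurate and matches the paper's implicit logic. You correctly observe that Lemma \ref{patterndefiningtoruslink} is unconditional because the pattern-defining link of a torus link is always plumbed, so the only place the hypothesis on $K$ enters is the satellite gluing; and you correctly trace the remaining obstruction to Conjecture \ref{general_link_wavefun_conj} (that a general knot with well-defined GM series is assigned the wavefunction $\sum_n Y^n F_K(X,q)$) together with the convergence of the pairing in $\mathbb{L}$. This is exactly the dependency structure the paper relies on when it promotes the plumbed theorem to a conjecture. The gap is the one you name yourself in your final paragraph: you do not construct the wavefunction of a non-plumbed knot complement inside the decorated $\mathrm{Spin}$ TQFT, nor do you establish the general satellite formula, so the conjecture is reduced to other open conjectures rather than proved. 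Two smaller points: your Fubini argument for the equality of the two orderings in Theorem \ref{satellite} needs more care than ``absolute $q$-adic summability,'' since the paper explicitly warns (e.g.\ in the Whitehead-double discussion at $r=0$) that one ordering can diverge while the other converges, so interchange of sums is not automatic even when both expressions individually make sense; and your proposal would be strengthened by citing the paper's own numerical evidence (the $4_1$ cable check), which is the only support the paper actually offers for the conjecture.
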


As a special case of the above conjecture, let us specify $p=1,s=2,t=2r+1$, so that we have $(2r+1,2)$-cabling. Again, for $F^+$, one can easily check,
$$F_{C_{(2r+1,2)}(K)} = \sum_{m\geq0}(-1)^m q^{(m+\frac{1}{2})^2 \frac{2r+1}{2}}X^{(m+\frac{1}{2})(2r+1)} F_K(X_0 = q^{2m+1}X^2)$$
This agrees with the results of \cite{John} and \cite{John2} whenever $K = 4_1$. 

We now move to a more speculative formula for the Whitehead double and its generalizations. 

\subsubsection{Examples: Twist Links and Whitehead Doubles}
Let us define,
\begin{align*}
&[n]= \frac{q^{\frac{n}{2}}-q^{-\frac{n}{2}}}{q^{\frac{1}{2}}-q^{-\frac{1}{2}}}\\
&[n]! = \prod^n_{i=1}[i]\\
& \begin{bmatrix}
    n\\k
\end{bmatrix} = \frac{[n]!}{[k]![n-k]!}
\end{align*}

Then, from \cite{S2}, we may write the closed form of the GM series for the Borromean rings as, 
$$F_{Bor}(X_1,X_2,X_3) = \sum_{n\geq 0}(-1)^n q^{-\frac{3n^2+n}{2}} (q^{n+1},q)_n^2 \Phi_n(X_1)\Phi_n(X_2)\Phi_n(X_3) $$
Where,
$$\Phi_n(X) = \sum_{k\geq 0} X^{n+k +\frac{1}{2} } \begin{bmatrix}
    2n+k\\2n
\end{bmatrix}$$
Using the Partial Surgery conjecture \ref{partial_surg_conj}, we perform $r$ Rolfsen twists on one of the components of the Borromean rings to find the twist link $K_r$ depicted in figure \ref{twist_link_fig}. 
\begin{figure}[H] 
\centering
\includegraphics[width=0.4\textwidth]{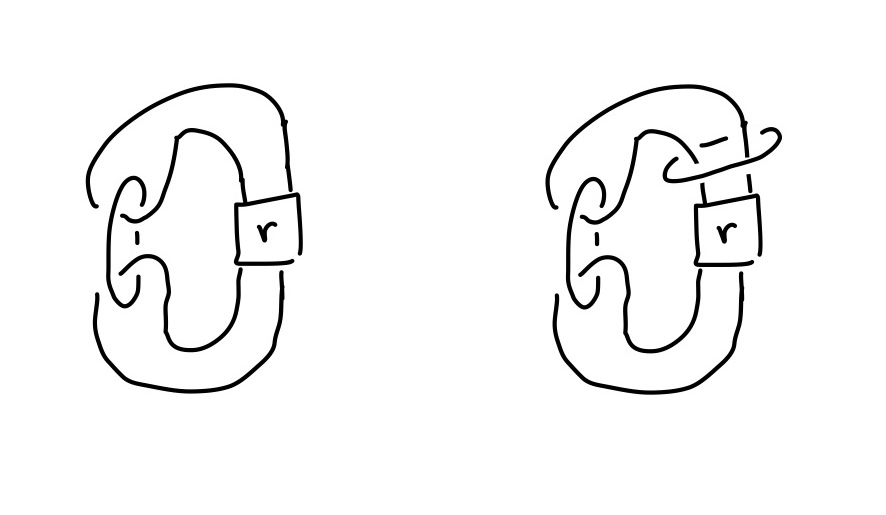}
\caption{The twist link $K_r$ is depicted on the left, while its pattern-defining link $K_r'$ is shown on the right. The box with $r$ inside indicates $r$ full twists.}
\label{twist_link_fig}
\end{figure}
Then one can solve for the pattern-defining link, $K_r'$, using topological invariance (and carrying out the same procedure as the previous section). The result for $F_{K_r'}$ is given by the Borromean rings formula above with the replacement, 
$$\Phi_n(X) \longmapsto \Phi^{r}_n(X) $$
$$\Phi^{r}_n(X) = \sum_{k\geq0} q^{(n+k+\frac{1}{2})^2r} X^{n+k+\frac{1}{2}}\begin{bmatrix}
    2n+k\\2n
\end{bmatrix}$$
in one of the $X_i$'s (doesn't matter which, Borromean rings are symmetric). We may rewrite the GM series for $F_{K_r}$ as,
$$F_{K_r'} = \underset{m\in \mathbb{Z}+\frac{1}{2}}{\sum_{m\geq0}} f^m_{K_r'}(X_1,X_2)\cdot X_3^m$$
\be f^m_{K_r'}(X_1,X_2) = q^{m^2r} \sum^{\lfloor m-\frac{1}{2}\rfloor}_{n\geq 0} \begin{bmatrix}
    n+m-\frac{1}{2}\\2n
\end{bmatrix} (-1)^n q^{-\frac{3n^2+n}{2}} (q^{n+1},q)_n^2 \Phi_n(X_1)\Phi_n(X_2) \ee 
Applying the satellite formula, we find that the GM series for the satellite of $K$ with pattern the $K_r$ link is, 
\be \label{double_1} F_{C_{K_r}(K)}(X_1,X_2) = \sum_{m}f^m_{K_r'}(X_1,X_2) F_K(X=q^{2m})\ee
Or, 
\be \label{double_2}F_{C_{K_r}(K)}(X_1,X_2) = \sum_n f^n_{K}(q) F_{K_r'}(X_1,X_2,X_3=q^{2n}) \ee
Where, again, we conjecture whichever of these (\eqref{double_1} or \eqref{double_2}) converges (if any) yields the correct formula. Of special note is the case $r=0$, in which case the satellite operation yields the \textit{Whitehead double}. At $r=0$, we see \eqref{double_2} is manifestly divergent and we must use \eqref{double_1} instead. Specifically, \eqref{double_1} naively converges if $F_K (X=q^\lambda)$ converges for all $\lambda$, in which case we expect $F_K (X=q^\lambda) = J_\lambda(q)$, for $J_\lambda(q)$ the colored $SU(2)$ Jones polynomial in representation $\lambda$. 

\subsubsection{Example: Seifert Fibrations over \texorpdfstring{$\Sigma_g$}{Sigma}}\label{seifert_sigma_g_section}
In Section \ref{seifert_section}, we found a simple formula for Seifert fibrations over $S^2$. Here, we will extend this formula to all genii $g$ using Conjecture \ref{general_link_wavefun_conj} and our partial surgery formalism described above. First, recall that $M\left(b,g; \frac{p_1}{r_1},\frac{p_2}{r_2},...,\frac{p_d}{r_d}\right)$ has a surgery diagram displayed in figure \ref{genus_g_seifert_fig}.  

\begin{figure}[H] 
\centering
\includegraphics[width=0.4\textwidth]{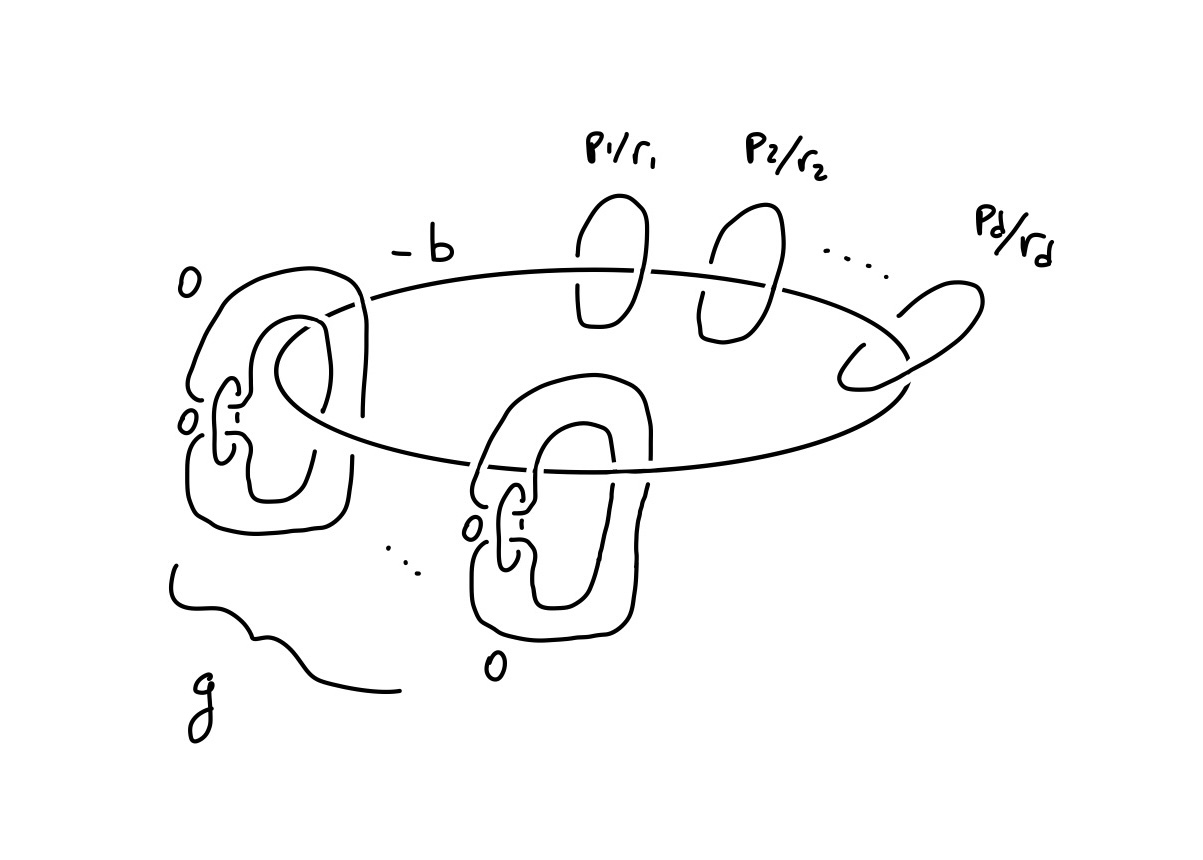}
\caption{Surgery diagram for $M\left(b,g; \frac{p_1}{r_1},\frac{p_2}{r_2},...,\frac{p_d}{r_d}\right)$ as defined in Section \ref{seifert_section}. See, for instance, \cite{Hansen}.}
\label{genus_g_seifert_fig}
\end{figure}

We will denote by $\ket{L_{g,d},\mathbf{a}} \in \mathcal{H}(T^2)^{\otimes(d+1)}$ the wavefunctions for the link complement whose surgery (specified by figure \ref{genus_g_seifert_fig}) yields $M\left(b,g; \frac{p_1}{r_1},\frac{p_2}{r_2},...,\frac{p_d}{r_d}\right)$. That is, $-b, \frac{p_1}{r_1}, ... , \frac{p_d}{r_d}$ surgery on $L_{g,d}$ yields  $M\left(b,g; \frac{p_1}{r_1},\frac{p_2}{r_2},...,\frac{p_d}{r_d}\right)$. 

\begin{figure}[H] 
\centering
\includegraphics[width=0.6\textwidth]{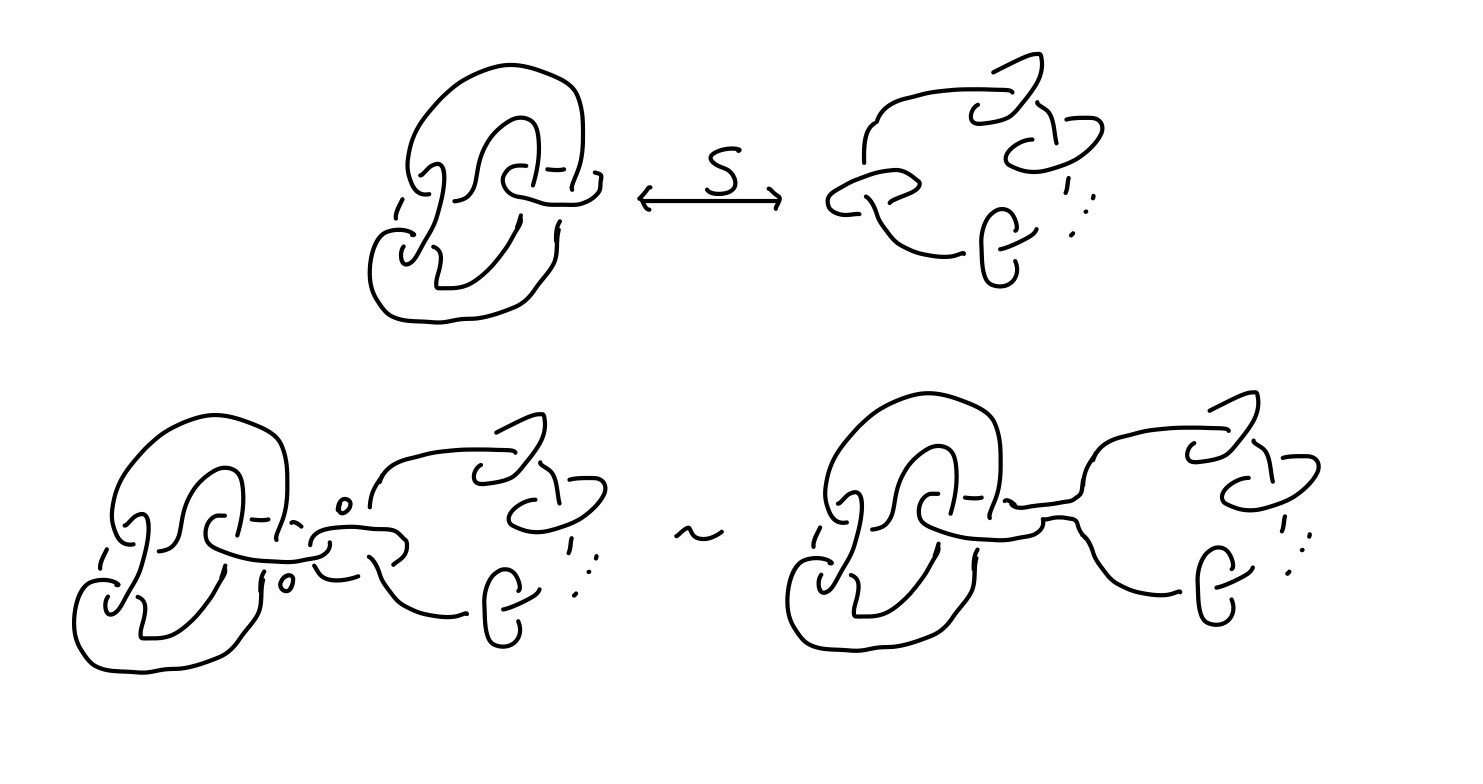}
\caption{$S$ transform between the two link components shown above yields a link complement of the form in figure \ref{genus_g_seifert_fig}}
\label{bor_rings_genus_g_fig}
\end{figure}

Let $\ket{Bor,\mathbf{a}} \in \mathcal{H}(T^2)^{\otimes3}$ denote the wavefunctions for the Borromean rings and $\ket{Bor_{0,0},(\mathbf{0},\mathbf{0},\mathbf{a}_3)} = \frac{1}{4} \bra{\mathbb{S},\mathbf{0}}\otimes \bra{\mathbb{S},\mathbf{0}} \ket{Bor,(\mathbf{0},\mathbf{0},\mathbf{a}_3)}$ the result of $0$ surgery on 2 of its components (it doesn't matter which, Borromean rings are symmetric). Then topological invariance requires that,

$$\bra{Bor_{0,0},(\mathbf{0},\mathbf{0},\mathbf{0})}s \cdot S\ket{L_{0,d+1},(\mathbf{a_1,...,a_d,0})} = \ket{L_{1,d},\mathbf{a}} $$

This is schematically portrayed in figure \ref{bor_rings_genus_g_fig}. Thankfully, \cite{S2} found a closed formula for $F_{Bor}$ in the form of  the `inverse Habiro series.' For us, it will only be important to mention,
$$F_{Bor}(X_1,X_2,X_3) = \sum_{m_1,m_2,m_3\in \mathbb{Z}+\frac{1}{2}} f^{m_1,m_2,m_3}_{Bor}(q) \cdot  X_1^{m_1} X_2^{m_2},X_3^{m_3}$$
\begin{align*}
&f^{m_1,m_2,m_3}_{Bor}(q) = f^{\sigma(m_2,m_1,m_3)}_{Bor}(q) \hspace{3mm} \forall\sigma\in S_3\\
&f^{+\frac{1}{2},m_2,m_3}_{Bor}=\mathrm{sgn}(m_2)\mathrm{sgn}(m_3)
\end{align*}

The linking matrix of $Bor$ is trivial, implying by means of Conj. \ref{general_link_wavefun_conj},
$$\ket{Bor,\mathbf{a}} = \underset{m\in \Z^3+\frac{1}{2}}{\sum_{n\in \Z^3+\mathbf{a}}} f^{\mathbf{m}}_{Bor} Y^\mathbf{n} X^{\mathbf{m}}$$

Then, it follows,
\begin{align*}
\ket{Bor_{0,0},(\mathbf{0},\mathbf{0},\mathbf{a}_3)} & = \frac{1}{4} \bra{\mathbb{S},\mathbf{0}}\otimes \bra{\mathbb{S},\mathbf{0}} \ket{Bor,(\mathbf{0},\mathbf{0},\mathbf{a}_3)} \\ &= \frac{1}{4} \sum_{n-\mathbf{a}_3,m-\frac{1}{2}\in \Z} Y^n (\sum_{\epsilon_1,\epsilon_2=\pm1} \epsilon_1 \epsilon_2 f^{\epsilon_1 \frac{1}{2},\epsilon_2 \frac{1}{2},m}_{Bor} X^m)\\
&=\frac{1}{4} \sum_{n-\mathbf{a}_3,m-\frac{1}{2}\in \Z} Y^n (\sum_{\epsilon_1,\epsilon_2=\pm1} \epsilon_1^2 \epsilon_2^2f^{ \frac{1}{2}, \frac{1}{2},m}_{Bor} X^m)\\
&=\sum_{n-\mathbf{a}_3,m-\frac{1}{2}\in \Z} Y^n \mathrm{sgn}(m) X^m \\
&= \sum_{n-\mathbf{a}_3,m-\frac{1}{2}\in \Z} Y^n f^m_{3} X^{-m}
\end{align*}

\textbf{Remark:} In this step, we have picked the labels corresponding to the trivial cohomology element on the 0-surgeries of the components of the Borromean rings. Note, this does \textit{not} have to be the case. We do not, as of the writing of this paper, understand how to precisely move to non-trivial $\mathrm{Spin}^{c}$-structures whenever the summand is freely generated (i.e. $\mathbb{Z}$). In any case, our wavefunctions $\ket{B_{0,0}}$ should, in truth, depend on this choice of $\mathrm{Spin}^{c}$-structure, as we are not in a manifold with trivial Betti number. We should also mention that factors of $2$ will show up in link surgeries, but they are easily traceable through Kirby invariance. As such, we do not mention them elsewhere in this text. 

The relevant tree link for the gluing displayed in figure \ref{bor_rings_genus_g_fig} is given by,
\begin{align*}
\ket{L_Q,\mathbf{a}} & = \underset{}{\sum_{n\in \Z^{d+2} + \mathbf{a}}} q^{-(Q\mathbf{n},\mathbf{n})} Y^{n_0}_0X_0^{n_1+n_2+...+n_{d+1}}(X^{\frac{1}{2}}_0-X^{-\frac{1}{2}}_0)^{-d}\prod^{d+1}_{i=1}Y_i^{n_i}X^{n_0}_i\\
&=\underset{m_0\in \frac{1}{2}\Z}{\sum_{n\in \Z^{d+2} + \mathbf{a}}} q^{-(Q\mathbf{n},\mathbf{n})} Y^{n_0}_0 f^{m_0}_{d+2} X_0^{-m_0+n_1+n_2+...+n_{d+1}} \prod^{d+1}_{i=1}Y_i^{n_i}X^{n_0}_i
\end{align*}

After applying the $S$ transform, we see,
\be \label{bor_ring_S_transf} \bra{B_{0,0},(\mathbf{0},\mathbf{0},\mathbf{0})}s \cdot S \cdot Y_i^{n_i}X_i^{n_0}\ket{0} = q^{2n_i n_0}f^{-n_i}_{3}\ee
Let $i=d+1$ be the index of the leaf being glued to the Borromean rings component via the $S$ transform. Let $Q'$ be the linking matrix associated with the tree link resulting from the removal of the $d+1$ leaf. Similarly, let $\mathbf{n'} = (n_0,n_1,...,n_d)$ be the new summation index. Then we see from \eqref{bor_ring_S_transf},
\begin{align*}
  & \bra{B_{0,0},(\mathbf{0},\mathbf{0},\mathbf{0})}s\cdot S\ket{L_Q,(\mathbf{a}',\mathbf{0})}   \endline
   = & \sum_{\mathbf{n'}\in \Z^{d+1}+\mathbf{a'},n_{d+1}} q^{-(Q'\mathbf{n'},\mathbf{n'})} Y^{n_0}_0 \left((X^{\frac{1}{2}}_0-X^{-\frac{1}{2}}_0)^{-d} f^{-n_{d+1}}_{3} X_0^{n_{d+1}}\right) X_0^{n_1+n_2+...+n_d} \prod^{d}_{i=1}Y_i^{n_i}X^{n_0}_i \endline
   = & \sum_{\mathbf{n'}\in \Z^{d+1}+\mathbf{a'}} q^{-(Q'\mathbf{n'},\mathbf{n'})} Y^{n_0}_0 \left((X^{\frac{1}{2}}_0-X^{-\frac{1}{2}}_0)^{-d} (X^{\frac{1}{2}}_0-X^{-\frac{1}{2}}_0)^{-1}\right) X_0^{n_1+n_2+...+n_d} \prod^{d}_{i=1}Y_i^{n_i}X^{n_0}_i
\end{align*}
The expression above is naively nonsensical since multiplication is not defined in $\mathcal{H}(T^2)$. However, we can define multiplication in this case since both elements have well-defined inverses (in the sense described in Appendix \ref{high_rank_expansions}). As such, to the product of inverses, we assign the inverse of the product.  Concretely, we have that for any $p,q\in\mathbb{Z_+}$, we have,
$$(X^{\frac{1}{2}}-X^{-\frac{1}{2}})^{-p}(X^{\frac{1}{2}}-X^{-\frac{1}{2}})^{-q} \equiv  (X^{\frac{1}{2}}-X^{-\frac{1}{2}})^{-q-p} = \sum_{m} f^{m}_{p+q+2}X^{-m}$$
Therefore,
$$\ket{L_{1,d},\mathbf{a}'} = \sum_{\mathbf{n'}\in \Z^{d+1}+\mathbf{a}'} q^{(Q'\mathbf{n'},\mathbf{n'})} Y^{n_0}_0 \cdot (X^{\frac{1}{2}}_0-X^{-\frac{1}{2}}_0)^{1-(d+2)}  \cdot X_0^{n_1+n_2+...+n_d} \prod^{d+1}_{i=1}Y_i^{n_i}X^{n_0}_i$$
Here, by $\mathbf{a}'$ we mean $(\mathbf{0},\mathbf{a}')$, where we take the $\mathbf{0}$ label on the $\Q/\Z$ factors of $H^1(M;\Q/\Z) = (\Q/\Z)^{b_1} \oplus Tor$. We can easily repeat this procedure $g$ times on a $ d+g$-pointed star graph to find,

$$\ket{L_{g,d},\mathbf{a}'} = \sum_{\mathbf{n'}\in \Z^{d+1}+\mathbf{a}'} q^{(Q'\mathbf{n'},\mathbf{n'})} Y^{n_0}_0 \cdot (X^{\frac{1}{2}}_0-X^{-\frac{1}{2}}_0)^{1-(d+2g)}  \cdot X_0^{n_1+n_2+...+n_d} \prod^{d+1}_{i=1}Y_i^{n_i}X^{n_0}_i$$

Now, for the $\Hat{Z}$ invariant of $M\left(b,g; \frac{p_1}{r_1},\frac{p_2}{r_2},...,\frac{p_d}{r_d}\right)$, we repeat exactly the same procedure as the $g=0$ case in Section \ref{seifert_section} to find:
\begin{conjecture}
    Let $M= M(b,g;\frac{p_1}{r_1},...,\frac{p_d}{r_d})$ denote the orientable Seifert fibration over $\Sigma_g$ of degree $b$ with singular fiber data $(\frac{p_1}{r_1},...,\frac{p_d}{r_d})$. Let $\eta = -b -\sum^d_{i=1}\frac{r_i}{p_i}$ and suppose $\eta<0$ Fix a $Spin^c$ structure representative $(0,\alpha) \in \mathbb{Z}^{2g+\delta_0,b}\oplus TorH_1$ with $\alpha = (\alpha_0,...,\alpha_d)$ and define $\xi_\alpha = \alpha_0 - \sum_i \frac{\mathbf{a_i}}{p_i}$. Then, up to an overall $\pm q^c$ factor, the $\hat{Z}$ invariant of $M$ is given by, 
    $$\Hat{Z}_{\alpha}(M(b,g;\frac{p_1}{r_1},...,\frac{p_d}{r_d});q) \cong \sum_{n\in \mathbb{Z}}q^{-\eta n^2 -n \xi_\alpha} \cdot \Psi^n_{\Vec{p},\Vec{r},\alpha,g}$$
    Where, 
    $$\Psi^n_{\Vec{p},\Vec{r},\alpha,g} = \underset{i=1,...,d}{\sum_{\epsilon_i=\pm1}} \prod_i \epsilon_i \delta^\mathbb{Z}\left( \frac{2r_i n +\alpha_i+\epsilon_i}{2p_i}\right)\cdot \mathrm{sgn}(m)^d \binom{\frac{d}{2}+g-2+|m|}{d+2g-3}\delta^\mathbb{Z}\left(m+\frac{d}{2}\right)\bigg|_{m = \eta n_0 + \frac{\xi_\alpha}{2}- \sum_i\frac{\epsilon_i}{2p_i}}$$
\end{conjecture}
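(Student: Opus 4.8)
The plan is to reduce the genus-$g$ computation to the genus-zero case already carried out in Section \ref{seifert_section}, isolating the single place where $g$ enters. The starting point is the link-complement wavefunction $\ket{L_{g,d},\mathbf{a}'}$ derived immediately above, obtained (conditional on Conjecture \ref{general_link_wavefun_conj}) by gluing $g$ copies of the $0$-$0$ surgered Borromean rings onto the $d$-legged star via $S$-transforms. The net effect of these $g$ gluings is purely to raise the exponent of the antisymmetrizer attached to the central vertex from $1-d$ (the genus-zero value of $1-\deg v_0$) to $1-(d+2g)$; the $Y$- and $X$-dependence on the $d$ leaves, the linking data $Q'$, and the central framing are otherwise identical to the genus-zero diagram. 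I would therefore begin by recording this wavefunction and expanding $(X_0^{1/2}-X_0^{-1/2})^{1-(d+2g)}$ into its coefficients, writing it as $\sum_{m_0} f^{m_0}_{d+2g}(q)\, X_0^{-m_0}$, so that the only manifest dependence on $g$ sits in the single integer $d+2g$.

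Next I would perform the rational surgeries $\gamma_{p_i/r_i}$ on the $d$ leaves and $\gamma_{-b}$ (i.e.\ $p_0/r_0=-b$, $r_0=1$) on the central vertex, using exactly the leaf-by-leaf and central-vertex gluing identities established in the proof of Proposition \ref{splicing_theorem} and reused in Section \ref{seifert_section}. Each leaf contributes a factor $\sum_{\epsilon_i=\pm1}\epsilon_i\, q^{\cdots}\,\delta^{\mathbb{Z}}(\cdots)$ that collapses the sum over $n_i$ by forcing $n_i=-(r_i n_0+\alpha_i+\epsilon_i/2)/p_i$; these are unchanged from genus zero, since the leaves are untouched by the Borromean gluings. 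The central vertex produces the outer $\epsilon_0$ sum together with the constraint identifying $m_0$ with $\eta n_0+\tfrac{\xi_\alpha}{2}-\sum_i \epsilon_i/(2p_i)$, and assembling the $n_0$-dependence of the $q$-exponent gives $-\eta n_0^2-n_0\xi_\alpha$ with $\eta=-b-\sum_i r_i/p_i$, exactly as before. Substituting the binomial form of $f^{m_0}_{d+2g}$ — which is the genus-zero coefficient $\mathrm{sgn}(m)^d\binom{\frac{d}{2}-2+|m|}{d-3}$ with $d$ replaced by $d+2g$, namely $\mathrm{sgn}(m)^d\binom{\frac{d}{2}+g-2+|m|}{d+2g-3}$ — then yields $\Psi^n_{\vec p,\vec r,\alpha,g}$ and the claimed formula.

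Two points require care. First, in forming $\ket{L_{g,d}}$ one repeatedly multiplies factors $(X_0^{1/2}-X_0^{-1/2})^{-d}$ and $(X_0^{1/2}-X_0^{-1/2})^{-1}$, a product not a priori defined on $\mathcal{H}(T^2)$; I would justify it as above, declaring the product of the two Weyl-compatible inverses to be the inverse of the product (Appendix \ref{high_rank_expansions}), so that $(X_0^{1/2}-X_0^{-1/2})^{-d-1}$ is unambiguous and its coefficients are the signed binomials quoted. Second, and more seriously, the derivation fixes the trivial cohomology label on each of the Borromean components that are $0$-surgered; since $H^1(M;\mathbb{Q}/\mathbb{Z})=(\mathbb{Q}/\mathbb{Z})^{b_1}\oplus\mathrm{Tor}$ now carries a free summand, the wavefunction genuinely depends on a $\mathrm{Spin}^c$ choice along this free direction that the present framework does not control. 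This is the main obstacle and the reason the statement remains a conjecture: the computation pins down $\hat Z_\alpha$ only for the displayed representative $(0,\alpha)$, and extending it across the free part of $H_1$ — equivalently, understanding the surgery gluing when the summation over a $0$-framed component is freely generated — is precisely the gap flagged in the remark preceding the conjecture. Convergence of the resulting $n_0$-sum, guaranteed by the hypothesis $\eta<0$, would then close the argument within this conditional framework.
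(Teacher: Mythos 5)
Your proposal follows essentially the same route as the paper's own derivation: reduce to the genus-zero computation of Section 4 by observing that the $g$ Borromean-ring gluings only shift the central antisymmetrizer exponent from $1-d$ to $1-(d+2g)$, then rerun the leaf and central-vertex surgeries and substitute $d\to d+2g$ in the binomial coefficients. You also correctly identify the two caveats the paper itself flags — the ad hoc product of Weyl-compatible inverses and the uncontrolled $\mathrm{Spin}^c$ choice along the free part of $H_1$ — as the reasons the statement remains a conjecture.
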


\textbf{Remark:} As evidence for the above conjecture, we specify to the case where $M$ is a degree $p$ $S^1$-bundle over $\Sigma_g$
$$S^1 \overset{p}{\rightarrow} M \rightarrow\Sigma_g$$
If $T[M]$ denotes the 3d $\mathcal{N}=2$ SCFT resulting from compactification of $2$ coincident $M5$ branes on $M$, its superconformal half-index taken with certain 2d $\mathcal{N}= (0,2)$ boundary conditions is expected to coincide with that of a theory with $\mathcal{N}=2$ vector multiplet with Neumann b.c. and level $p$ CS term and $2g+1$ $\mathcal{N}=2$ chirals in the adjoint representation ($1+g$ with Neumann b.c. and $g$ with Dirichilet b.c.). The superconformal half index of this theory (which we emphasize is \textit{not} itself $T[M]$ \cite{3d3d}) agrees with the conjecture above \cite{GPPV}.

\section{Higher Rank \texorpdfstring{$\hat{Z}$}{Z}-TQFT}
$\hat{Z}$ for higher rank gauge groups was studied by Park in \cite{P3}. In this section, we present the TQFT for the higher rank $\hat{Z}$. While the previous sections focused on the $SL(2,\C)$ gauge group and the rational extension of the $\mathfrak{sl}_2$ non-commutative torus algebra, the framework generalizes naturally to arbitrary simply laced sem-simple Lie algebras. Since many of the techniques and calculations parallel those in the $\mathfrak{sl}_{2}$ case, we will omit them in this section. Consequently, this section will serve as a concise overview of the entire paper. 

We first fix some notation. Let $\mathfrak{g}$ be the semi-simple Lie algebra associated to the complexified gauge group of $G_\mathbb{C}$ Chern--Simons theory. 
\begin{itemize}
    \item By $R$, we mean the root system associated to $\mathfrak{g}$.
    \item $P$ and $P^\vee$ will denote the weight and coweight lattice of $R$,respectively. 
    \item $Q$ and $Q^\vee$ will denote the root and coroot lattice of $R$, respectively.
    \item $W$ will be the Weyl group of $G$. 
    \item $\{w_i\}$ will denote the fundamental weights and $\{\alpha_i\}$ will denote simple roots. Their duals will be given by $w_i^\vee$ and $\alpha_i^\vee$. 
    \item $m$ will denote the minimum integer such that $(P,P^\vee) = \frac{1}{m}\mathbb{Z}$.
\end{itemize}

\subsection{Rules of \texorpdfstring{$H^{1}(\cdot, \Q/\Z \otimes Q )$}{H1(.,Q/ZxQ)} decorated \texorpdfstring{$\mathrm{Spin}$}{Spin} TQFT}

In the case of higher rank gauge group, the $\hat{Z}^{\Q/\Z}(M)$ invariants are decorated by $H^{1}(M,\Q/\Z \otimes Q)$ and a $\mathrm{Spin}$-structure. Here we write down the rules of $H^{1}(\cdot, \Q/\Z \otimes Q )$ decorated $\mathrm{Spin}$ TQFT, thus generalizing the discussion in Section \ref{tqftrules}. Below, $K$ will denote some background field, but for the purposes of our discussions, in the remainder of the paper, we take $K$ to be the algebraic closure of the field of formal Laurent series in $q$, $\mathbb{C}((q)) $, so that $K =\mathbb{C}((q))^{alg} = \C_q$.

\begin{definition} We say that $Z$ is a decorated, oriented 3d $\mathrm{Spin}$ TQFT decorated by $H^{1}(\cdot,\Q/\Z\otimes Q)$ if $Z$ is the following collection of data subject to the following rules: 
\begin{enumerate}
    \item \textbf{State spaces.} To every oriented 2-manifold $\Sigma$ (possibly with punctures), $Z$ assigns a (possibly infinitely generated) $K$-module $Z(\Sigma)$.
    \item \textbf{Disjoint unions.} For disjoint surfaces, the $K$-module factorizes as, $Z(\Sigma_1 \bigsqcup \Sigma_2) = Z(\Sigma_1) \otimes Z(\Sigma_2)$.
    \item \textbf{Empty surface.} The $K$-module associated with the empty surface is the ground field. That is $Z(\Sigma) = K$.
    \item \textbf{Grading.} $Z(\Sigma)$ is graded by $H^{1}(\Sigma, \Q/\Z\otimes Q)$.
    \item \textbf{States.} To the tuple $(M,\mathfrak{s},h, \varphi)$, where $M$ is a three-manifold with boundary, $\mathfrak{s}$ is a $\mathrm{Spin}$-structure on $M$, $h \in H^{1}(M,\Q/\Z\otimes Q) $, and $\varphi$ is a choice of framing on $M$, we assign a vector (equivalently wavefunction), $$Z(M,\mathfrak{s},h, \varphi) = \ket{M,\mathfrak{s},h, \varphi} \in Z(\partial M).$$
    \item \textbf{Grading of states.} $Z(M,\mathfrak{s},h, \varphi)$ belongs to the $(h + 2 \varphi\vert_{\partial M}^{*}\circ i^{*}(\mathfrak{s})  \rho )$-graded subspace of $Z(\partial M)$, where $i$ is the inclusion map from the frame bundle of $\partial M$ to the frame bundle of $M$, and $\rho$ is the Weyl vector.
    \item \textbf{Mapping class group action.} To every oriented 2-manifold $\Sigma$, $Z$ assigns a representation of $MCG(\Sigma)$ on $Z(\Sigma)$: $$Z(\cdot):MCG(\Sigma) \rightarrow GL(Z(\Sigma)).$$ In fact, $$Z(M,\mathfrak{s}',h', \gamma \cdot \varphi) = Z(\gamma) \cdot Z(M,\mathfrak{s},h, \varphi), $$ 
    where $\mathfrak{s}', h'$ are the corresponding $\mathrm{Spin}$-structure and first homology element under the attachment of the mapping cylinder $M_\gamma$ to $M$. Due to this property, we will often ignore the framing of wavefunctions and consider only 0-framed wavefunctions:
    $$\ket{M,\mathfrak{s},h} = \ket{M,\mathfrak{s},h,0}.$$
    \item \textbf{Pairing.} $Z$ is endowed with a bilinear map $\langle \cdot |\cdot \rangle_\Sigma: Z(\Sigma)\times Z(\Sigma) \rightarrow K$. 
    \item \textbf{Gluing rule.} Suppose $M_{1}$ and $M_{2}$ are three-manifolds with boundary. Suppose $\Sigma \subset \partial M_{1}$ and $\Sigma \subset \partial M_{2}$. Suppose $M$ is obtained by gluing $M_{1}$ and $M_{2}$ along $\Sigma$ using a diffeomorphism in mapping class group element $\gamma$. Then 
$$\ket{M, \mathfrak{s}, h, \varphi} = \langle M_{1}, \mathfrak{s}\vert_{M_1} h\vert_{M_2}|  \gamma | M_{2}, \mathfrak{s}\vert_{M_2}, h\vert_{M_2}\rangle_\Sigma.$$
\end{enumerate}
\end{definition}

\subsection{Torus State Space and \texorpdfstring{$SL(2,\mathbb{Z})$}{SL(2,Z)} automorphisms}

The quantized torus algebra associated to $\mathfrak{g}$ ,$QT_\mathfrak{g}$ is given by,
$$QT_\mathfrak{g} = \frac{\langle X^\mu ,Y^\lambda\rangle_{\mu\in P,\lambda\in P^\vee}}{(X^\mu Y^\lambda = q^{-4(\mu,\lambda)} Y^\lambda X^\mu)}.$$
This is an algebra over $\mathbb{C}_q$. This algebra carries a natural $SL(2,\mathbb{Z})$ action given by, 
\begin{align}
 \begin{pmatrix}
1 & 1 \\
0 & 1 
\end{pmatrix} &\leftrightarrow \tau_{+}: \hspace{0.5cm} X^{w_i} \mapsto  Y^{w_i^\vee} X^{w_i} q^{-2(w_i,w_i)} \hspace{1cm}  Y^\lambda \mapsto Y^\lambda, \\
\begin{pmatrix}
1 & 0 \\
1 & 1 
\end{pmatrix} &\leftrightarrow \tau_{-}: \hspace{0.5cm}  X^\mu \mapsto X^\mu \hspace{3.2cm}   Y^{w_i^\vee}  \mapsto q^{2(w_i,w_i)} X^{w_i}  Y^{w_i^\vee}.
\end{align}
Our conventions are such that when $\mathfrak{g} = \mathfrak{sl}_2$, $(w_1,w_1) = \frac{1}{2}$ and the relations above become that of Section \ref{sl2z_section}. Moreover, the set $\{Y^\lambda, X^\mu\}_{\mu\in P,\lambda\in P^\vee}$ provides a basis for $QT_\mathfrak{g}$ as a free $\mathbb{C}_q$-module. It will prove useful to know the action of $SL(2,\mathbb{Z})$ on this basis. The action of generators $\tau_{\pm}$ on the basis above is described in the following lemma. 
\begin{lemma}\label{tau_mat_lemma} Let $\tau_-$ and $\tau_+$ be as above, then,
\begin{align*}
   \tau_-^{p}(Y^\lambda X^\mu ) = & q^{-2p(\lambda,\lambda)} Y^\lambda X^{\mu+p\lambda}, \\
   \tau_+^{a}(Y^\lambda X^\mu ) = & q^{-2a(\mu,\mu)} Y^{\lambda +\mu a}X^{\mu}.
\end{align*}
\end{lemma}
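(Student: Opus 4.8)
The plan is to adapt the proof of Lemma \ref{prlemma} to the vector-valued setting, trading the scalar exponents $m,n\in\Q$ for the lattice vectors $\mu\in P$, $\lambda\in P^\vee$. The guiding idea is that $SL(2,\Z)$ acts \emph{linearly} on a suitable pair of logarithmic generators, so that raising $\tau_\pm$ to a power amounts to raising a $2\times2$ integer matrix to a power; this lets me treat all $p$ (resp. all $a$) simultaneously rather than inducting step by step.

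First I would introduce logarithmic coordinates $u,v$ (valued in the Cartan and its dual) with $X^\mu = e^{(\mu,u)}$ and $Y^\lambda=e^{(\lambda,v)}$, chosen so that $[(\mu,u),(\lambda,v)] = -4(\mu,\lambda)\log q$ is central. Since the commutator is central, the Baker--Campbell--Hausdorff identity $e^{A+B}=e^{-\frac12[A,B]}e^Ae^B$ applies verbatim and yields the ordered-product dictionary
\[
Y^\lambda X^\mu = q^{2(\lambda,\mu)}\, e^{(\lambda,v)+(\mu,u)},
\]
which is the exact analogue of the relation $Y^\beta X^\alpha = q^{\alpha\beta}e^{i\beta v + i\alpha u}$ used in Lemma \ref{prlemma}.

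Next I would check that the generators act linearly on $(u,v)$. Using the defining formulas for $\tau_\pm$ together with the dictionary above, a short computation (in the simply-laced case, where $P\cong P^\vee$ identifies weights and coweights and $(w_i,w_i^\vee)=(w_i,w_i)$) shows $\tau_-:(u,v)\mapsto(u,u+v)$ and $\tau_+:(u,v)\mapsto(u+v,v)$, matching the matrices $\begin{pmatrix}1&0\\1&1\end{pmatrix}$ and $\begin{pmatrix}1&1\\0&1\end{pmatrix}$. Iterating the linear maps gives $\tau_-^p:(u,v)\mapsto(u,pu+v)$ and $\tau_+^a:(u,v)\mapsto(u+av,v)$. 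Substituting into the dictionary and re-expressing in ordered form, e.g.
\[
\tau_-^p(Y^\lambda X^\mu)=q^{2(\lambda,\mu)}e^{(\lambda,v)+(\mu+p\lambda,u)}=q^{-2p(\lambda,\lambda)}\,Y^\lambda X^{\mu+p\lambda},
\]
and symmetrically for $\tau_+^a$, reproduces both claimed formulas; the quadratic $q$-exponent falls out of the difference between the $q^{2(\cdot,\cdot)}$ prefactors before and after the shift.

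I expect the main obstacle to be the careful setup of the vector-valued logarithmic generators and verifying that the central-commutator BCH bookkeeping is consistent with the pairing conventions --- in particular making sure the identification $P\cong P^\vee$ used to read expressions such as $\mu+p\lambda$ is the one compatible with the bilinear form, and that the prefactors $(w_i,w_i^\vee)$ collapse correctly. A fully algebraic alternative avoids logarithms entirely: one first establishes the single-step formulas $\tau_-(Y^\lambda X^\mu)=q^{-2(\lambda,\lambda)}Y^\lambda X^{\mu+\lambda}$ and $\tau_+(Y^\lambda X^\mu)=q^{-2(\mu,\mu)}Y^{\lambda+\mu}X^\mu$ by expanding $X^\mu,Y^\lambda$ into generators and reordering with the $q$-commutation relation, and then inducts on $p$ (resp. $a$); the induction is immediate because the single-step $q$-factor depends only on $\lambda$ (resp. $\mu$), not on the current $X$- (resp. $Y$-) exponent, so the exponents simply add.
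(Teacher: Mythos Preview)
Your proposal is correct. Your primary route---introducing logarithmic generators, using the central-commutator BCH identity to pass between ordered products and exponentials of sums, and then exploiting that $\tau_\pm$ act linearly on $(u,v)$ so that their powers are computed by matrix powers---is a genuine alternative to what the paper does. The paper instead takes the purely algebraic path you sketch at the end: it expands $X^\mu=\prod_i X_i^{\mu_i}$ with $X_i=X^{w_i}$, applies the defining rule $\tau_+(X_i)=q^{-2(w_i,w_i)}Y_iX_i$ factor by factor, reorders all the $Y$'s to the left using the $q$-commutation relation (tracking the cross terms $q^{-4\mu_i\mu_j(w_i,w_j)}$ for $j<i$), and recognises the accumulated exponent as $-2(\mu,\mu)$; the general power then follows by a one-line iteration $\tau_+^a(X^\mu)=\tau_+^{a-1}(q^{-2(\mu,\mu)}Y^\mu X^\mu)=\cdots$.

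The trade-off: your BCH approach is more conceptual and treats all $p$ (resp.\ $a$) at once, directly parallelling Lemma~\ref{prlemma}, but it leans on the simply-laced identification $P\cong P^\vee$ and on setting up the vector-valued logarithms carefully (as you flag). The paper's approach avoids logarithms entirely and works directly inside the algebra with the explicit generators and relations, at the cost of a slightly more combinatorial bookkeeping of the $q$-exponents; it is perhaps more transparent that nothing beyond the defining $q$-commutation is used.
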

\textbf{Proof:} We will prove only the $\tau_+$ relation, as the proof for $\tau_{-}$ is almost identical. Denote $X_i = X^{w_i}$ and $Y_i = Y^{w_i^\vee}$. Then notice that,
$$\tau_+(X^\mu) = \prod^r_{i=1}\tau_+(X_i)^{\mu_i} = q^{-2\sum_i \mu_i(w_i,w_i)} \prod^r_{i=1}(Y_i X_i)^{\mu_i}.$$
Repeatedly using the $q$ commutation relation, we see that, 
$$(Y_i X_i)^{\mu_i} = q^{-2(\mu_i-1)\mu_i (w_i,w_i)}Y^{\mu_i}X^{\mu_i}.$$
As we commute the $Y$'s all the way to the left, we notice that for $j<i$,
$$X^{\mu_j}_jY^{\mu_i}_{i} = q^{-4\mu_i \mu_j (w_j,w_i)}Y^{\mu_i}_{i} X^{\mu_j}_j.$$
Putting these together, we get,
\begin{align*}
 \tau_+(X^\mu) = q^{-2 \sum^r_{i=1} \mu_i^2 (w_i,w_i) -4 \underset{j<i}{\sum^r_{i,j=1}} \mu_i\mu_j (w_i,w_j)}Y^\mu X^\mu  = q^{ -2(\mu,\mu)}Y^\mu X^\mu . 
\end{align*}
Then, 
$$\tau_+^a(X^\mu) = \tau^{a-1}_+(q^{-2(\mu,\mu)}Y^\mu X^\mu ) = \tau^{a-2}_+(q^{-4(\mu,\mu)}Y^{2\mu} X^\mu ) = \cdots =q^{-2a(\mu,\mu)}Y^{a\mu} X^\mu .$$
The statement follows immediately. $\Box$ \newline
For a general element of $SL(2,\Z)$, the action of the above basis is given by the following lemma.
\begin{lemma}\label{sl2z_act_lemma}
    Let $\gamma = \begin{pmatrix}
        b&a\\
        p&r
    \end{pmatrix} \in SL(2,\mathbb{Z})$. Then,
    $$\gamma(Y^\lambda X^\mu) = q^{-2(\mu,\mu)ab -4 (\mu,\lambda)ap -2(\lambda,\lambda)pr} Y^{\lambda r+\mu a } X^{\lambda p + \mu b}.$$    
\end{lemma}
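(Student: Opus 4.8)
The plan is to mimic the proof of Lemma~\ref{prlemma} from the $\mathfrak{sl}_2$ case by passing to logarithmic (Heisenberg) coordinates in which the $SL(2,\Z)$ action is manifestly linear. Throughout I work in the simply-laced setting, so that $P = P^\vee$ and $(\cdot,\cdot)$ is the symmetric normalized Killing form; this is what makes the mixed lattice expressions $\lambda r + \mu a$ and the pairings below well-defined and symmetric. I would introduce $\mathfrak{h}$-valued formal variables $u,v$ and write the generators as Weyl-ordered exponentials $X^\mu = e^{(\mu,u)}$ and $Y^\lambda = e^{(\lambda,v)}$. The defining relation $X^\mu Y^\lambda = q^{-4(\mu,\lambda)} Y^\lambda X^\mu$ is then equivalent, via Baker--Campbell--Hausdorff, to the statement that the linear functionals have the central scalar commutator $[(\mu,u),(\lambda,v)] = -4(\mu,\lambda)\log q$; because this commutator is central, BCH truncates after the first term and all manipulations below are exact rather than merely formal.

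First I would record the normal-ordering identity obtained from BCH,
$$Y^\lambda X^\mu = q^{2(\mu,\lambda)}\, e^{(\mu,u)+(\lambda,v)},$$
the higher-rank analogue of the identity $Y^\beta X^\alpha = q^{\alpha\beta} e^{i\alpha u + i\beta v}$ used in Lemma~\ref{prlemma}. Its value is that it trades the noncommutative monomial for a single symmetric exponential whose exponent is a \emph{linear} functional of $(u,v)$, on which $SL(2,\Z)$ acts by ordinary matrix multiplication. Next I would establish that $\gamma = \begin{pmatrix} b & a \\ p & r \end{pmatrix}$ acts on the coordinates by $u \mapsto bu + av$, $v \mapsto pu + rv$. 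This action is forced on the generators $\tau_{\pm}$ by Lemma~\ref{tau_mat_lemma} (equivalently by the stated action on $X^{w_i}, Y^{w_i^\vee}$), and since it is linear it is automatically a homomorphism $SL(2,\Z) \to GL_2$, hence valid for every $\gamma$.

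Applying $\gamma$ to the exponent sends $(\mu,u)+(\lambda,v)$ to $(\lambda p + \mu b,\, u) + (\lambda r + \mu a,\, v)$, which is exactly the exponent attached to $Y^{\lambda r + \mu a} X^{\lambda p + \mu b}$. Re-expressing this back in normal order with the identity above and collecting the two prefactors gives
$$\gamma(Y^\lambda X^\mu) = q^{\,2(\mu,\lambda) - 2(\lambda p + \mu b,\ \lambda r + \mu a)}\, Y^{\lambda r + \mu a}\, X^{\lambda p + \mu b}.$$
Expanding the second pairing yields $pr(\lambda,\lambda) + ab(\mu,\mu) + (ap+br)(\mu,\lambda)$, and using $\det\gamma = br - ap = 1$ the coefficient of $(\mu,\lambda)$ simplifies as $2 - 2(ap+br) = -4ap$. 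The total exponent is therefore $-2ab(\mu,\mu) - 4ap(\mu,\lambda) - 2pr(\lambda,\lambda)$, which is precisely the claimed prefactor, completing the proof.

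The main obstacle is making the logarithmic step fully rigorous in the vector-valued setting: one must check that $u,v$ can consistently be taken $\mathfrak{h}$-valued so that every commutator $[(\mu,u),(\lambda,v)]$ is a central scalar, and that the $SL(2,\Z)$ action is the ``diagonal'' one, acting by the same $2\times 2$ matrix uniformly across all lattice directions. If one prefers to avoid the exponential formalism, the alternative route is to verify the displayed formula directly on the generators $\tau_{\pm}$ (already done in Lemma~\ref{tau_mat_lemma}) and then check by an explicit quadratic-form cocycle computation that the right-hand side, viewed as $\gamma \mapsto \Phi_\gamma$, is a homomorphism; agreement on generators together with the homomorphism property then forces equality for all $\gamma$. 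This second route is more elementary but concentrates all the difficulty in the cocycle identity for the $q$-exponent, which the BCH argument instead renders automatic through the linearity of the coordinate action.
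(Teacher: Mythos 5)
Your proposal is correct, and the algebra checks out: with the convention $[(\mu,u),(\lambda,v)]=-4(\mu,\lambda)\log q$ one gets $Y^\lambda X^\mu = q^{2(\mu,\lambda)}e^{(\mu,u)+(\lambda,v)}$, the linear substitution $u\mapsto bu+av$, $v\mapsto pu+rv$ reproduces the $\tau_\pm$ actions on $X^{w_i},Y^{w_i^\vee}$, and the exponent $2(\mu,\lambda)-2\big(\lambda p+\mu b,\ \lambda r+\mu a\big)$ collapses to $-2ab(\mu,\mu)-4ap(\mu,\lambda)-2pr(\lambda,\lambda)$ via $br-ap=1$. However, your route is genuinely different from the paper's. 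The paper proves this lemma by ``a standard proof by induction,'' i.e.\ by writing $\gamma$ as a word in $\tau_\pm$ and repeatedly applying Lemma~\ref{tau_mat_lemma}, tracking how the quadratic $q$-exponent accumulates at each step; this stays entirely inside the generators-and-relations presentation of $QT_{\mathfrak g}$ and needs no auxiliary realization, but it hides the cocycle identity inside the induction and one must check that the result depends only on the matrix, not on the chosen word. You instead transplant the Baker--Campbell--Hausdorff argument of Lemma~\ref{prlemma} to the higher-rank setting: since the commutator of the $\mathfrak h$-valued linear functionals is central, BCH is exact, the $SL(2,\Z)$ action becomes an honest linear action on $(u,v)$, and both the closed form and the homomorphism/well-definedness property come for free. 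The price is exactly the point you flag, namely justifying the Heisenberg realization of the $R$-extended quantum torus; but the paper itself already uses this device in the rank-one case, and your observation that simply-lacedness (so $P=P^\vee$) is what makes the mixed exponents $\lambda r+\mu a$ meaningful is a worthwhile clarification that the paper leaves implicit. Your fallback route (verify on $\tau_\pm$ and check the cocycle identity for $\gamma\mapsto\Phi_\gamma$ directly) is essentially a repackaging of the paper's induction.
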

\textbf{Proof:} After repeatedly applying Lemma \ref{tau_mat_lemma}, this is a standard proof by induction. $\Box$ 

We aim to extend this algebra to define a module, which will then be used to construct the vector space of our theory on a torus. There is a very natural way to do this. The weight lattice is usually written as,
$$P = \mathbb{Z}w_1 + \mathbb{Z}w_2 +... +\mathbb{Z}w_r,$$
such that $\mu \in P$ is always of the form $\mu = \sum_i \mu_i w_i$ for $\mu_i \in \mathbb{Z}$. Then the obvious extension is given by,
$$P_F = P\otimes_\mathbb{Z}F = Fw_1 + Fw_2 +... +Fw_r,$$
where $F=\mathbb{Q}$ or $\mathbb{R}$. Then the relevant algebra of operators is given by,
$$\boxed{\mathcal{O}^\mathfrak{g}_{F} = QT^{F}_\mathfrak{g} = \frac{\langle X^\mu ,Y^\lambda\rangle_{\mu,\lambda\in P_F}}{(X^\mu Y^\lambda - q^{-2(\mu,\lambda)} Y^\lambda X^\mu)},}$$
where we have relabeled $q^2\rightarrow q$ for convenience. The commutation relation easily extends to $P_F$ by inspection. For the $SL(2,\mathbb{Z})$ action, we use Lemma \ref{sl2z_act_lemma} to extend the action.
$$\gamma(Y^\lambda X^\mu) = q^{-(\mu,\mu)ab -2 (\mu,\lambda)ap -(\lambda,\lambda)pr} Y^{\lambda r+\mu a } X^{\lambda p + \mu b}, \hspace{4mm} \mu,\lambda \in P_F.$$
For our immediate purpose, unless otherwise stated, we will take $F=\mathbb{Q}$. See section \ref{torus_hilbertspace} for a discussion of this dichotomy. 

Now, we can construct the $\mathcal{O}^\mathfrak{g}_F$-module, $$\hat{\mathcal{O}}^\mathfrak{g}_F = \mathbb{C}_q \langle \{  Y^\lambda X^\mu\}_{\lambda,\mu \in P_F}\rangle,$$
which consists of $\C_q$-linear combinations of elements $Y^\lambda X^\mu$ (allowing for infinite sums). Under the $SL(2,\Z)$ action described above, $\hat{\mathcal{O}}^\mathfrak{g}_F$ becomes a $SL(2,\Z)$ representation. Furthermore, we have:
\begin{lemma}
$\hat{\mathcal{O}}^\mathfrak{g}_\mathbb{Q}$ forms a faithful representation of $SL(2,\mathbb{Z})$ under the action,
$$\gamma(Y^\lambda X^\mu) = q^{-2(\mu,\mu)ab -2 (\mu,\lambda)ap -(\lambda,\lambda)pr} Y^{\lambda r+\mu a } X^{\lambda p + \mu b}, \hspace{4mm} \mu,\lambda \in P_F.$$
\end{lemma}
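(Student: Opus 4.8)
The plan is to establish injectivity of the representation homomorphism $\rho: SL(2,\mathbb{Z}) \to GL(\hat{\mathcal{O}}^\mathfrak{g}_\mathbb{Q})$ directly, by showing that its kernel is trivial. Suppose $\gamma = \begin{pmatrix} b & a \\ p & r \end{pmatrix}$ lies in the kernel, so that $\gamma$ acts as the identity operator on the module. The essential structural input is that the monomials $\{Y^\lambda X^\mu\}_{\lambda,\mu \in P_\mathbb{Q}}$ form a free $\mathbb{C}_q$-basis of $\hat{\mathcal{O}}^\mathfrak{g}_\mathbb{Q}$, so that distinct monomials are $\mathbb{C}_q$-linearly independent.

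By the formula of Lemma \ref{sl2z_act_lemma}, extended to $P_\mathbb{Q}$, the action carries each basis monomial to a nonzero scalar multiple of a single basis monomial,
$$\gamma(Y^\lambda X^\mu) = q^{-2(\mu,\mu)ab -2 (\mu,\lambda)ap -(\lambda,\lambda)pr}\, Y^{\lambda r+\mu a } X^{\lambda p + \mu b}.$$
Since $\gamma$ fixes every vector, for each pair $(\lambda,\mu)$ the target monomial must coincide with $Y^\lambda X^\mu$; by linear independence this forces the index transformation $(\lambda,\mu) \mapsto (r\lambda + a\mu,\, p\lambda + b\mu)$ on $P_\mathbb{Q} \times P_\mathbb{Q}$ to be the identity. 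Observe that once the indices agree the prefactor is automatically $q^{0}=1$, so no separate verification of the scalar is required.

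It then suffices to test this index condition at a single nonzero weight. Because $\mathfrak{g}$ is semisimple of rank $\geq 1$, the fundamental weight $w_1 \in P_\mathbb{Q}$ is nonzero. Taking $(\lambda,\mu) = (w_1, 0)$ yields $(r w_1,\, p w_1) = (w_1, 0)$, hence $r = 1$ and $p = 0$; taking $(\lambda,\mu) = (0, w_1)$ yields $(a w_1,\, b w_1) = (0, w_1)$, hence $a = 0$ and $b = 1$. Therefore $\gamma = I$, the kernel is trivial, and $\rho$ is faithful.

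I do not expect a genuine obstacle: this is the rank-$r$ analogue of the $\mathfrak{sl}_2$ statement established after Lemma \ref{prlemma}, and the only points requiring care are the freeness of the monomial basis — already asserted for $QT_\mathfrak{g}$ and inherited verbatim by the $P_\mathbb{Q}$-graded module — and the nonvanishing of $w_1$, which is guaranteed by semisimplicity. One could alternatively phrase the conclusion via $\mathbb{Q}$-linear independence of the fundamental weights, but evaluating at the single weight $w_1$ is the most economical route.
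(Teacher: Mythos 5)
Your proposal is correct and takes essentially the same route as the paper: the paper's proof consists of the single observation that $\{Y^\lambda X^\mu\}_{\lambda,\mu\in P_\mathbb{Q}}$ is a free $\mathbb{C}_q$-basis, from which faithfulness "follows from the discussion above," and your argument simply makes the implicit kernel computation explicit (the index map $(\lambda,\mu)\mapsto(r\lambda+a\mu,\,p\lambda+b\mu)$ must be the identity, which forces $\gamma=I$ upon evaluation at $(w_1,0)$ and $(0,w_1)$). Your filled-in details, including the remark that the scalar prefactor is automatically $1$ once $\gamma=I$, are all sound.
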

\textbf{Proof:} The set of elements $\{ Y^\lambda X^\mu\}_{\mu,\lambda\in P_\mathbb{Q}}$ are a basis for $\mathcal{O}^\mathfrak{g}_\mathbb{Q}$ as a free $\mathbb{C}_q$-module, so the statement follows from discussion above. $\Box$ 

With this, we are ready to define our vector space $\mathcal{H}(T^{2})$, which is, 
$$\mathcal{H}(T^{2}) = \hat{\mathcal{O}}^\mathfrak{g}_F,$$
and is spanned by elements of the form,
\be \label{states_general_g} \ket{\lambda,\mu} \equiv Y^\lambda X^\mu \ket{0}\hspace{4mm} \mu,\lambda \in P_F,\ee
where, again, $\ket{0}$ is some vacuum state.

We give the bilinear map by specifying it on the basis of $\mathcal{H}(T^{2})$,
$$\bra{\lambda_1,\mu_1}\ket{\lambda_2,\mu_2} = \delta_{\lambda_1,\lambda_2}\delta_{\mu_1,\mu_2}.$$
Equivalently, we can take formal contour integrals that pick out constant term expressions in both $X$ and $Y$, 
$$\bra{\Psi_1}\ket{\Psi_2} = \oint \frac{dX}{2\pi i X}\oint \frac{dY}{2\pi i Y} \Psi_1(X,Y)^\dagger \Psi_2(X,Y),$$
where the states $\ket{\Psi} = \Psi(X,Y)\ket{0}$ are given by $\Psi(X,Y) = \sum_{\mu,\lambda\in P_\mathbb{Q}} \psi_{\lambda,\mu}(q) Y^\lambda X^\mu $ so that,
$$\Psi(X,Y)^\dagger = \sum_{\mu,\lambda\in P_\mathbb{Q}} \psi_{\lambda,\mu}(q) X^{-\mu} Y^{-\lambda} .$$
Following our discussion in the previous sections, we must pick a suitable wavefunction for the solid torus. \footnote{Suitable here meaning invariant under the right $SL(2,\mathbb{Z})$ subgroup.} Arguing as in Section \ref{solid_torus_sec}, we pick, 
\be \label{highranksolidtorus} \ket{\mathbb{S}, \mathfrak{s}, h} = \sum_{\substack{ \lambda\in Q \\ w\in W }}  \epsilon(w) \ket{\lambda + 2 \rho \mathfrak{s}(m) + h(m) ,w(\rho)},\ee 
where $\rho = \sum_i w_i$ is the Weyl vector and $\epsilon(w)$ is the signature of $w\in W$. We will, for the remainder of this section, unify $(h,\mathfrak{s})$ into one label $\mathbf{a}$ as in Section \ref{spinstructures_review}, so that we may write, 
$$Z(\mathbb{S}, \mathbf{a}) =\ket{\mathbb{S}, \mathbf{a}} = \sum_{\substack{ \lambda\in \mathbf{a} + Q \\ w\in W }}  \epsilon(w) \ket{\lambda,w(\rho)}.$$

\subsection{Amplitudes and States in \texorpdfstring{$\mathcal{O}^\mathfrak{g}_\mathbb{Q}$}{OgQ}}
In section \ref{amplitudesandstates} we showed that for the $\mathfrak{g} = \mathfrak{sl}_2$ case, given $\mathcal{H}(T^{2})$ endowed with the $SL(2,\mathbb{Z})$ action, the solid torus wavefunction $\ket{\mathbb{S},h,\mathfrak{s}}$, and topological invariance \textbf{uniquely} determines the amplitudes for plumbed 3-manifolds. Though we will not carefully document the calculations as before, the amplitude structure for general $\mathfrak{g}$ is also uniquely determined by these choices.

\begin{proposition}
     Let $L_M$ be an $N$-component tree link specified by the linking matrix $M$ (taken with 0 diagonal). $Z(S^3 \setminus \nu(L_M),\mathbf{a})$ is uniquely determined by topological invariance and takes the form,
\be \label{treelink_general_g} 
\ket{L_{M}, \mathbf{a} } = \sum_{\lambda \in \mathbf{a} + Q^N} q^{-(M\lambda,\lambda)} \prod^N_{i=1} Y^{\lambda_i}_{i} X_i^{(M\lambda)_i} \left(\sum_w \epsilon(w)X_i^{w(\rho)}\right)^{1-\mathrm{deg}(v_{i})} .
\ee 
Moreover, integer surgery on every component recovers the higher rank plumbing formula for the $\Hat{Z}$ invariant. That is, if $X$ is the closed 3-manifold resulting from $(p_1,p_2,..,p_N)$ surgery on $L_M$, then
\begin{align}
  Z(X,\mathbf{a}) =\Hat{Z}^\mathfrak{g}_\alpha(X;q) & = \oint  \prod_{i}   \frac{dX_i}{2\pi i X_i}  \prod_{i} \left(\sum_w \epsilon(w)X_i^{w(\rho)}\right)^{2-\mathrm{deg}( v_i)}  \sum_{ \ell \in  \alpha + M^{f} Q^{N }} q^{-(\ell,(M^f)^{-1}\ell)} \prod_{i} X_{i}^{\ell},
\end{align}
where $Bk(\mathbf{a}) =\alpha = (\alpha_1,\alpha_2,...,\alpha_N)$ denotes the generalized $Spin^c$ structure, $N$ the number of vertices, and $M^f$ is the linking matrix with framing included $M^f = M + diag(p_1,p_2,..,p_N)$. 
\end{proposition}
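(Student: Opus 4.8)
The plan is to follow the $\mathfrak{sl}_2$ arguments of Propositions \ref{treelinks_wvfn} and \ref{Zequalsplumbing} essentially verbatim, replacing the single antisymmetric element $X^{1/2}-X^{-1/2}$ by the Weyl denominator $\sum_{w}\epsilon(w)X^{w(\rho)}$ and the scalar products of integer exponents by the bilinear form $(\cdot,\cdot)$ on $P_F$. The wavefunction \eqref{treelink_general_g} is established by induction on the number of vertices, and the surgery formula then follows by a direct gluing computation against the solid-torus state \eqref{highranksolidtorus}.

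First I would fix the two base building blocks. The Hopf-link complement is diffeomorphic to the mapping cylinder $T^{2}\times_{S}[0,1]$, so gluing it to an arbitrary state must implement the operator $S$ of Lemma \ref{sl2z_act_lemma}; matching against the $S$-action on a basis vector $\ket{\lambda,\mu}$ forces
$$\ket{Hopf,\mathbf{a}} = \sum_{\lambda\in Q^2+\mathbf{a}} q^{-2(\lambda_1,\lambda_2)}\, Y_1^{\lambda_1}Y_2^{\lambda_2}X_1^{\lambda_2}X_2^{\lambda_1},$$
the precise constants being read off from Lemma \ref{sl2z_act_lemma}. Next I would determine the three-component state $\ket{H}$ attached to a central $2$-valent vertex exactly as in the rank-one case: imposing that gluing two Hopf links along the central component with $\tau_-^{p}$ reproduces $p$-surgery on the central unknot, together with the stabilizing mapping-class elements that fix $\ket{H}$ up to framing, yields a recursion whose Weyl-antisymmetric solution is
$$\ket{H,\mathbf{a}} = \sum_{\lambda\in Q^3+\mathbf{a}} q^{-2(\lambda_1,\lambda_2+\lambda_3)}\Big(\prod_{i}Y_i^{\lambda_i}\Big)\Big(\sum_{w}\epsilon(w)X_1^{w(\rho)}\Big)^{-1}X_1^{\lambda_2+\lambda_3}X_2^{\lambda_1}X_3^{\lambda_1},$$
where the inverse of the Weyl denominator is the unique Weyl-antisymmetric inverse furnished by Appendix \ref{high_rank_expansions}.

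With these in hand, I would prove \eqref{treelink_general_g} by induction: adding a leaf to $L_M$ amounts to gluing $\ket{H}$ to $\ket{L_M}$ along one component with the $S$ matrix, as in Figure \ref{Trinion_gluing}. The inductive step is a direct application of Lemma \ref{sl2z_act_lemma}, checking that the quadratic form $(M\lambda,\lambda)$ and the product of Weyl-denominator factors transform correctly and that the degrees $1-\mathrm{deg}(v_i)$ update as a new edge is added; Neumann-move invariance follows because the $S$-gluing is insensitive to the order of attachment. For the surgery claim I would compute
$$Z(X,\mathbf{a}) = \Big(\bigotimes_{i}\bra{\mathbb{S},\mathbf{a}_i}\Big)\Big(\bigotimes_{i}(\tau_-^{p_i})^\dagger\Big)\ket{L_M,\mathbf{a}},$$
inserting \eqref{highranksolidtorus}. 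Each solid torus contributes one extra Weyl sum $\sum_{w}\epsilon(w)$, raising the denominator exponent from $1-\mathrm{deg}(v_i)$ to $2-\mathrm{deg}(v_i)$ and converting the pairing into the contour integral $\oint dX_i/(2\pi i X_i)$, while $\tau_-^{p_i}$ shifts $M\mapsto M^{f}=M+\mathrm{diag}(p_1,\dots,p_N)$ in both the quadratic $q$-exponent and the $X$-powers. A final change of variables $\ell=M^{f}\lambda$, using $(M^{f}\lambda,\lambda)=(\ell,(M^{f})^{-1}\ell)$ and $M^{f}(\mathbf{a}+Q^N)=\alpha+M^{f}Q^N$ with $\alpha=Bk(\mathbf{a})$, produces precisely the stated higher-rank plumbing integral.

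The main obstacle is the bookkeeping of the Weyl structure. Unlike the rank-one case, where $X^{1/2}-X^{-1/2}$ is a single element with an obvious inverse, one must manipulate products and negative powers of the Weyl denominator $\sum_{w}\epsilon(w)X^{w(\rho)}$ and verify that the inverses defined via the completion of Appendix \ref{high_rank_expansions} are compatible with the $SL(2,\mathbb{Z})$ action and with the gluing pairing. Equally delicate is confirming that the cross-terms of the quadratic $q$-exponents assemble into $(M\lambda,\lambda)$, and after surgery into $(\ell,(M^{f})^{-1}\ell)$, through the symmetric bilinear form rather than through the naive products of integer exponents used in the $\mathfrak{sl}_2$ proof; this is the one place where the argument must be genuinely re-derived rather than merely transcribed.
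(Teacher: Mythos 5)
Your proposal is correct and follows essentially the same route the paper intends: the paper explicitly omits the higher-rank proof, stating only that the calculations parallel the $\mathfrak{sl}_2$ case of Propositions \ref{treelinks_wvfn} and \ref{Zequalsplumbing}, and your argument is precisely that transcription, with the Hopf-link/$\ket{H}$ building blocks, the leaf-adding induction, and the framed solid-torus gluing all matching the rank-one template. If anything, your write-up supplies more detail than the paper does, and you correctly flag the two points (Weyl-denominator inverses in the completion and the reassembly of cross-terms into $(M\lambda,\lambda)$ via the bilinear form) that genuinely require re-derivation rather than verbatim copying.
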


These invariants were originally defined and studied in \cite{P3}, where higher rank GM series (which we denote by $F^\mathfrak{g}_F$ for a knot $K$) were also derived from the plumbing formula. For instance, the higher rank GM series for all torus knots $T(t,s)$ was found in \cite{P3}. From our point of view, the generalization to $(ps,pt)$ torus links is easily computable via,
$$F^\mathfrak{g}_{T(pt,ps)} = C.T._Y \left[ \bra{\mathbb{S},\mathbf{0}}\otimes\bra{\mathbb{S},\mathbf{0}} (\gamma^\dagger_{b/s}\otimes\gamma^\dagger_{t/a})\ket{L,\mathbf{0}}\right],$$
where $L$ and the framings are as in Section \ref{torus_links_section}. More generally, we can also compute wavefunctions of any plumbed knot \textit{or} link $L$ (that yield a convergent $q$-series), which are restricted to be of the form,
$$\ket{L,\mathbf{a}} = \underset{}{\sum_{\lambda \in Q^N+\mathbf{a}}} q^{-(M\lambda,\lambda)} \cdot \prod^N_{i=1} Y^{\lambda_i}_{i}  X_i^{(M\lambda)_i} F^\mathfrak{g}_L(X;q).$$

The above wavefunctions are invariant under Rolfsen moves. This can be seen by repeating the arguments of Section \ref{partial_surg_section}. A rational surgery formula for the GM series of knots was conjectured in \cite{P3} (and proved for plumbed knot complements). It is given by, 
$$\Hat{Z}^\mathfrak{g}_\alpha (S^3_{p/r}(K),q) \cong \mathcal{L}^\alpha_{p/r}\left((\sum_{w\in W}\epsilon(w)X^{\frac{w(\rho)}{r}})\cdot F^\mathfrak{g}_K\right),$$
and the Laplace transform here is defined by,
$$\mathcal{L}^\alpha_{p/r} (X^\mu) = q^{(\mu,\mu)\frac{r}{p}} \cdot \delta^Q\left(\frac{r\mu-\alpha}{p}\right),$$
where, analagously to the $\mathfrak{sl_2}$ case with $\delta^\mathbb{Z}$, we define,
$$\delta^Q(x) = \begin{cases}
    1 &x\in Q\\
    0 & \text{else.}
\end{cases} $$

We will show that the $SL(2,\mathbb{Z})$ action on our vector space leads to this surgery formula as well. First, however, we determine the higher rank gluing formula. Let $K_1$, $K_2$ be knots with well-defined higher rank GM series,
$$F^\mathfrak{g}_{K_i} = \sum_{\mu\in Q+\rho} f^\mu_{K_i}(q) \cdot X^\mu,$$
then, the result for gluing is as expected:
\begin{proposition}
    Let $\gamma = \begin{pmatrix}
        r&a\\ p&b
    \end{pmatrix} \in SL(2,\mathbb{Z})$ be the element of $MCG(T^2)$ determining the 3-manifold, $M$, resulting from the gluing of two knot complements: $M = (S^3 \setminus\nu K_2)\cup_\gamma (S^3 \setminus\nu K_1)$. Furthermore, assume $K_1$ and $K_2$ are plumbed knots so that their wavefunctions are well-defined. Then,
    \be \label{higher_rank_splicing_eq} Z(M,\mathbf{a}) = \bra{K_2,\mathbf{a}_2}\gamma\ket{K_1,\mathbf{a}_1} \cong \boxed{ \sum_{\mu_1,\mu_2}f^{\mu_2}_{K_2}\cdot f^{\mu_2}_{K_2} \cdot q^{-\frac{r}{p} (\mu_2,\mu_2) + \frac{2}{p}(\mu_2,\mu_1) -\frac{b}{p} (\mu_1,\mu_1)} \cdot \delta^Q\left( \frac{r \mu_2 -\mu_1 - \alpha}{p}\right).} \ee

    Whenever the right-hand side converges. Here, $\mathbf{a}_1$ and $\mathbf{a}_2$ are the restrictions of $\mathbf{a}$ to the corresponding knot complement and $Bk(\mathbf{a}) = \alpha \in Spin^c_\mathfrak{g}(M) \cong Q/pQ + \rho \cdot p\cdot \delta_{0,r\mod 2}$. Of course, in the case where $M$ has a weakly negative plumbing description, we have,
$$\hat{Z}^\mathfrak{g}_{\alpha}(M) = Z(M,\mathbf{a}) .$$
\end{proposition}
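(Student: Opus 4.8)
The plan is to mirror the proof of Proposition \ref{splicing_theorem} almost verbatim, replacing the $\Z$-valued exponents by weight and coweight data and the half-integer shift by the Weyl vector $\rho$. First I would write both knot wavefunctions explicitly as $\ket{K_i,\mathbf{a}_i} = \sum_{\lambda_i \in Q + \mathbf{a}_i,\ \mu_i \in Q + \rho} f^{\mu_i}_{K_i}(q)\, Y^{\lambda_i} X^{\mu_i}$, and apply the mapping class group element to $\ket{K_1,\mathbf{a}_1}$ using Lemma \ref{sl2z_act_lemma}. This produces, term by term, the monomial $Y^{\lambda_1 r + \mu_1 a} X^{\lambda_1 p + \mu_1 b}$ weighted by $q^{-2(\mu_1,\mu_1)ab - 2(\mu_1,\lambda_1)ap - (\lambda_1,\lambda_1)pr}$ in the relabeled $q$.

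Next I would pair against $\bra{K_2,\mathbf{a}_2}$ through the orientation-reversing gluing $s$, using the orthonormal-basis bilinear form $\bra{\lambda,\mu}\ket{\lambda',\mu'} = \delta_{\lambda,\lambda'}\delta_{\mu,\mu'}$. The two Kronecker deltas force $\lambda_2 = \lambda_1 r + \mu_1 a$ and $\mu_2 = -(\lambda_1 p + \mu_1 b)$, the sign on the latter coming from $s: X \mapsto X^{-1}$. As in the rank-one case, these relations eliminate the free lattice variables: solving the $X$-matching yields $\lambda_1 = -(\mu_2 + \mu_1 b)/p$, and the requirement $\lambda_1 \in Q + \mathbf{a}_1$ becomes a $\delta^Q$ constraint, while $\lambda_2 \in Q + \mathbf{a}_2$ supplies a second one.

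I would then substitute $\lambda_1$ back into the $q$-exponent and use $br - ap = 1$ to collapse the quadratic form to $-\frac{r}{p}(\mu_2,\mu_2) + \frac{2}{p}(\mu_2,\mu_1) - \frac{b}{p}(\mu_1,\mu_1)$, exactly parallel to the simplification in Proposition \ref{splicing_theorem}. The remaining task is the label bookkeeping: I must show that the compatibility of $(\rho,\mathbf{a}_1)$ and $(\rho,\mathbf{a}_2)$ as restrictions of a single $\mathbf{a}$ on $M$ forces $\mathbf{a}_1,\mathbf{a}_2$ into the form dictated by $Spin^c_\mathfrak{g}(M) \cong Q/pQ + \rho\, p\, \delta_{0,\, r \bmod 2}$, and that the two $\delta^Q$ factors combine — again via $br - ap = 1$ together with $\mu_2 \in Q + \rho$ so that $a(\mu_2 - \rho) \in Q$ — into the single constraint $\delta^Q\!\left(\frac{r\mu_2 - \mu_1 - \alpha}{p}\right)$ with $\alpha = Bk(\mathbf{a})$.

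The main obstacle I anticipate is precisely this last step: in the higher-rank setting one must verify that the Weyl-vector shift $\rho$ interacts correctly with the root lattice $Q$ under the $SL(2,\Z)$ transformation, so that the parity factor $\delta_{0,\, r \bmod 2}$ emerges and the stated description of $Spin^c_\mathfrak{g}(M)$ is consistent. The quadratic-form manipulation is routine once the rank-one template is in hand, but tracking the $\rho$-shift through the gluing, and confirming that it reproduces the generalized $Spin^c$ structure, is where care is needed. Finally, the identification $\hat{Z}^\mathfrak{g}_{\alpha}(M) = Z(M,\mathbf{a})$ for weakly negative-definite plumbings follows at once from the higher-rank plumbing formula established earlier in this section, since any such $M$ admits a plumbing presentation on which the gluing computes the same contour integral.
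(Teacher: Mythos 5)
Your proposal matches the paper's argument: the paper's entire proof of this proposition is the remark that it is essentially identical to the rank-one gluing formula (Proposition \ref{splicing_theorem}), which is precisely the template you follow, with the weight/coweight lattices and the Weyl vector $\rho$ replacing the integer exponents and the half-integer shift. Your step-by-step transcription (applying Lemma \ref{sl2z_act_lemma}, pairing with the orthonormal bilinear form, collapsing the lattice sums into $\delta^Q$ constraints, and simplifying via $br-ap=1$) is exactly the intended computation, and you correctly flag the $\rho$-shift bookkeeping as the only point requiring genuine care.
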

\textbf{Proof:} The proof is essentially identical to the one in Theorem \ref{splicing_theorem}.$\Box$

The surgery formula of \cite{P3} is then immediately implied by letting $K_1 = O$, where, 
$$F_O(X) = \sum_{w\in W}\epsilon(w) X^{w(\rho)}.$$
Then, the result is: 
\begin{corollary}
     Let $K$ be a plumbed knot with well-defined wavefunctions and GM series $F_K^\mathfrak{g}$, and let $\gamma = \begin{pmatrix}
        b&a\\ p&r
    \end{pmatrix} \in SL(2,\mathbb{Z})$ be the element of $MCG(T^2)$ determining the 3-manifold, Then,
    \be Z(M,\mathbf{a}) = \bra{K_2,\mathbf{a}_2}\gamma\ket{\mathbb{S},\mathbf{a}_1} =\mathcal{L}^\alpha_{p/r}\left((\sum_{w\in W}\epsilon(w)X^{\frac{w(\rho)}{r}})\cdot F^\mathfrak{g}_K\right),  \ee
    where again, $Bk(\mathbf{a}) = \alpha$. In particular, if $M$ is a weakly negative definite plumbed 3-manifold, then, 
    $$Z(M,\mathbf{a}) \cong \Hat{Z}^\mathfrak{g}_\alpha (M) .$$
\end{corollary}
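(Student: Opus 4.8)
The plan is to obtain this corollary as a direct specialization of the higher-rank gluing formula \eqref{higher_rank_splicing_eq}, taking the knot on the ket side to be the unknot $O$. First I would record that the solid-torus wavefunction \eqref{highranksolidtorus} is precisely $\ket{O,\mathbf{a}_1}=\ket{\mathbb{S},\mathbf{a}_1}$, so that its higher-rank GM series is $F_O(X)=\sum_{w\in W}\epsilon(w)X^{w(\rho)}$; equivalently its coefficients are $f^{\mu_1}_{O}(q)=\sum_{w\in W}\epsilon(w)\,\delta_{\mu_1,w(\rho)}$. Substituting $K_1=O$ into \eqref{higher_rank_splicing_eq} and using these coefficients to collapse the sum over $\mu_1$ (setting $\mu_1=w(\rho)$) turns the amplitude into a double sum over $\mu_2\in Q+\rho$ and $w\in W$, namely $\sum_{\mu_2,w}\epsilon(w)f^{\mu_2}_K\,q^{-\frac{r}{p}(\mu_2,\mu_2)+\frac{2}{p}(\mu_2,w(\rho))-\frac{b}{p}(w(\rho),w(\rho))}\,\delta^Q((r\mu_2-w(\rho)-\alpha)/p)$.

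Next I would expand the claimed right-hand side in parallel. Writing $F^\mathfrak{g}_K=\sum_\mu f^\mu_K X^\mu$, the product $\big(\sum_{w}\epsilon(w)X^{w(\rho)/r}\big)\cdot F^\mathfrak{g}_K$ equals $\sum_{w,\mu}\epsilon(w)f^\mu_K X^{\mu+w(\rho)/r}$, and applying $\mathcal{L}^\alpha_{p/r}(X^\nu)=q^{(\nu,\nu)r/p}\,\delta^Q((r\nu-\alpha)/p)$ termwise produces the same type of double sum: the argument of $\delta^Q$ becomes $(r\mu+w(\rho)-\alpha)/p$ and the $q$-exponent becomes $\tfrac{r}{p}(\mu+w(\rho)/r,\,\mu+w(\rho)/r)=\tfrac{r}{p}(\mu,\mu)+\tfrac{2}{p}(\mu,w(\rho))+\tfrac{1}{rp}(w(\rho),w(\rho))$. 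The verification then reduces to matching this against the gluing expression, the cross term $\tfrac{2}{p}(\mu,w(\rho))$ being the genuinely $w$-dependent piece that agrees on the nose.

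The main obstacle is exactly this quadratic bookkeeping together with the sign and normalization conventions. Two features must be reconciled: the $\delta^Q$ arguments differ by the sign of $w(\rho)$, and the $(\mu,\mu)$-terms differ in sign. I would resolve both by invoking the antisymmetry of the Weyl denominator $\sum_{w}\epsilon(w)X^{w(\rho)}$ under $X\mapsto X^{-1}$ (a reindexing of the Weyl sum that flips $w(\rho)\mapsto-w(\rho)$ at the cost of an overall sign $\epsilon(w_0)$), combined with the orientation-reversal built into the gluing and the $q^2\to q$ relabeling fixed earlier, which is the source of the relative sign on the $(\mu,\mu)$ term and which pins down the convention for $\mathcal{L}^\alpha_{p/r}$ used in the statement. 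Since $(w(\rho),w(\rho))=(\rho,\rho)$ is Weyl-invariant and independent of $\mu$, the pure $w(\rho)$-squared contribution is a single overall factor $q^{c}$, harmless under the $\cong$ (equality up to $\pm q^c$); I would also check that $\alpha$ and $\rho\,p\,\delta_{0,r\bmod 2}$ lie in the asserted $Spin^c_\mathfrak{g}(M)$ torsor so that the two $\delta^Q$ conditions cut out the same set. Finally, the "in particular" clause follows immediately from the higher-rank analogue of Proposition \ref{Zequalsplumbing}: for weakly negative-definite plumbed $M$ the amplitude $Z(M,\mathbf{a})$ already equals $\Hat{Z}^\mathfrak{g}_\alpha(M)$, so nothing beyond that identification is required. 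Convergence of the right-hand side is assumed throughout, exactly as in \eqref{higher_rank_splicing_eq}.
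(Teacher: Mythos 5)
Your proposal is correct and follows essentially the same route as the paper: specialize the higher-rank gluing formula to $K_1=O$ with $F_O(X)=\sum_{w\in W}\epsilon(w)X^{w(\rho)}$, expand the Laplace transform termwise, and match the two double sums using that $(w(\rho),w(\rho))=(\rho,\rho)$ contributes only an overall $q$-power. Your extra care with the sign conventions on the $(\mu,\mu)$ term and the $\delta^Q$ argument is a reasonable addition, since the paper's own displayed expansion is silently inconsistent in sign with its stated definition of $\mathcal{L}^\alpha_{p/r}$, but the underlying argument is the same.
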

\textbf{Proof:} Notice that, 
$$\mathcal{L}^\alpha_{p/r}\left((\sum_{w\in W}\epsilon(w)X^{\frac{w(\rho)}{r}})\cdot F^\mathfrak{g}_K\right) = \underset{\mu}{\sum_{w\in W}} \epsilon(w) f^\mu_K \cdot q^{-\frac{r}{p}(\mu,\mu) -\frac{2}{p}(\mu,w(\rho)) -(w(\rho),w(\rho))\frac{1}{pr}} \cdot \delta^Q\left( \frac{r \mu_2 +w(\rho) - \alpha}{p}\right). $$
Since $(w(\rho),w(\rho)) = (\rho,\rho)$, we see this agrees with \eqref{higher_rank_splicing_eq} up to an overall $q$ factor. $\Box$ 

\begin{example}[Seifert Manifolds]
 Just as in the $\mathfrak{g}= \mathfrak{sl}_2$  case, we can also compute the amplitude for Seifert manifolds using the method in Section \ref{seifert_section}. Let $M= M(b,g;\frac{p_1}{r_1},...,\frac{p_d}{r_d})$ denote the orientable Seifert fibration over $\Sigma_g$ of degree $b$ with singular fiber data $(\frac{p_1}{r_1},...,\frac{p_d}{r_d})$. Let $\eta = -b -\sum^d_{i=1}\frac{r_i}{p_i}$ denote its Euler number and suppose $\eta<0$. Fix a $Spin^c$ structure representative $(0,\alpha) \in \left(\mathbb{Z}^{2g+\delta_0,b}\oplus TorH_1\right)\bigotimes Q+\delta \cdot \rho$ with $\alpha = (\alpha_0,...\alpha_d)$ and define $\xi_\alpha = \alpha_0 - \sum_i \frac{\alpha_i}{p_i}$. Then, 

$$\boxed{ \Hat{Z}^\mathfrak{g}_\alpha\left( M\left(b,g;\frac{p_1}{r_1},...,\frac{p_d}{r_d}\right) \right) \cong \sum_{\mu \in Q}q^{-\eta (\mu,\mu) -(\mu, \xi_\alpha)} \cdot \prescript{}{\mathfrak{g}}{\Psi}^\mu_{\Vec{p},\Vec{r},\alpha,g},}$$
where, 
$$\prescript{}{\mathfrak{g}}{\Psi}^\mu_{\Vec{p},\Vec{r},\alpha,g} = \underset{i=1,...,d}{\sum_{w_i\in W}} \prod_i \epsilon(w_i) \delta^Q\left( \frac{r_i \mu +\alpha_i+w_i(\rho)}{2p_i}\right)\cdot f^{\eta \mu  + \frac{\xi_\alpha}{2}- \sum_i\frac{w_i(\rho)}{p_i}}_{d+2g+2}, $$
and $f^\mu_p$ is as in \eqref{high_rank_f} in Appendix \ref{high_rank_expansions}. The case $g=0$ is a Theorem. For $g>0$ this is a conjecture based on the proposal in Section \ref{seifert_sigma_g_section}.    
\end{example}

\section{Conclusion and Future Directions}

\subsection{Towards DAHA}
The algebras of operators we have used to (partially) define the decorated TQFTs in this paper are a certain extension of $t=1$ Spherical Double Affine Hecke Algebras (sDAHA). It is natural to ask: how is the full sDAHA (at arbitrary $t$) related to our TQFTs? In this section, we provide some clues to the answer. 

From the superstring perspective, the $\hat{Z}$ Gukov--Manolescu invariants are described by the following brane construction. For a 3-manifold $M$, the M-theory spacetime is:
$$    S^1 \times T^*M \times TN$$
With $TN$ the Taub-NUT spacetime being twisted by $S^1$ as one goes around $z_1\rightarrow q z_1$, $z_2 \rightarrow \tau^{-1} z_2$. For now, we take $\tau=q$. As in \cite{AS}, we denote the generators of these rotations as $S_1$ and $S_2$, respectively, when represented in the M-theory Hilbert space. One then places $N$ M5-branes on:
$$S^1 \times M \times D^2$$
One finds the $U(N)$/$SU(N)$ $\hat{Z}$ invariants \footnote{As usual, the center of mass mode of the $M5$-brane stack decouples and the $U(N)$ and $SU(N)$ partition functions are the same up to a factor} for closed 3-manifolds by computing the $M5$ brane index \cite{GPPV},

\be \Hat{Z}_a(M;q) = Tr_{Z_a} (-1)^F q^{S_1 - S_2}\ee 

Where $Z_a$ denotes the space of M2-M5 BPS bound states where the boundary of the M2s are restricted to wrap $a$ 1-cycles in $M$. Equivalently, if we partially twist along $M$ with the residue R-symmetry of the worldvolume theory and compactify on $M$, one finds $T[M]$ in the infrared, a 3d $\mathcal{N}=2$ SCFT whose signed partition function on $S^1\times D^2$ (with certain 2d $\mathcal{N}= (2,0)$ boundary conditions on $\d D^2$) is just the so-called half-index (which is $\hat{Z}$ up to a normalization).

So far, we have been working with the ``unrefined'' DAHA $\tau = t q = q$, where the wavefunctions admit significant simplifications. More importantly, from the M-theory viewpoint, we only expect such a refinement by $\tau = q^{\beta+1}$ to exist in the instance that the 3-manifold $M$ admits a semi-free action by $U(1)$ (i.e. $M$ is Seifert). Such 3-manifolds allow us to define a nowhere vanishing vector field $V$ in $T^*M$, that generates this $U(1)$ action. Since the M5-branes (whose index computes $\hat{Z}$) wrap the Lagrangian submanifold $M\subset T^* M$, $V$ allows us to see the action of a global symmetry $U(1)$ rotating the fibers of $T^* M$. This is a global symmetry of the worldvolume theory in the sense that it is a symmetry of spacetime acting transverse to the M5-brane directions.  It is this $U(1)$-symmetry that defines a refined supersymmetric index. That is, it allows us to find mutually commuting charges $S_1'$ and $S_2',$ by mixing $S_1,S_2$ and this new $U(1)$-symmetry associated to Seifert manifolds (see \cite{AS} for the details of this). These refined indices are the refined $\hat{Z}$ invariants of \cite{GPPV},

\be \label{index_def_Zhat}\hat{Z}_a(M;q,t) = Tr_{Z_a} (-1)^F q^{S'_1} \tau^{- S'_2}\ee 

Let us restrict our attention to the solid torus, $M = S^1\times D^2$. As in \cite{AS}, we may reduce on the M-theory circle to Type IIA superstring theory, where \eqref{index_def_Zhat} is now a count of D2-D4 BPS particles.\footnote{We remark that this cannot be done in general. Reducing on the M-theory circle is a delicate procedure \cite{GHP}, and it is somewhat unexpected that this trick works here. We thank Sergei Gukov for pointing this out to us.} This system is more tractable, and the calculation can be directly completed. One finds, 

\be \label{refsolidtorus} \hat{Z}_0 (\mathbb{S};X,q,t) = \prod_{m=0}^\beta \prod_{1\leq i < j \leq N} \left( q^{-m/2} X_j ^{1/2}X_i ^{-1/2}-q^{m/2} X_i ^{1/2}X_j ^{-1/2} \right)\ee

Where $N$ is the number of coincident M5-branes giving rise to $U(N)$ refined Chern--Simons dual to the refined open topological A-model. For the remainder of this section, we will work with $SU(2)$ gauge group for simplicity. In which case, \eqref{refsolidtorus} is,

$$\hat{Z}_0 (\mathbb{S}_0;X,q,t) = \prod_{m=0}^\beta \left( q^{-m/2} X ^{-1/2}-q^{m/2} X ^{1/2} \right) =q^{-\frac{\beta(\beta+1)}{4}}X^{-(\beta+1)/2} (X,q)_{\beta+1}$$
It is worth noting that, 
$$\hat{Z}_0 (\mathbb{S}_0;X,q,t) \hat{Z}_0 (\mathbb{S}_0;X^{-1},q,t) = q^{-\frac{\beta(\beta+1)}{2}} \frac{(X,q)_\infty (X^{-1},q)_\infty}{(Xqt,q)_\infty (X^{-1}qt,q)_\infty}$$
The earlier sections of this paper teach us that we must define a wavefunction for the solid torus in the full $\mathfrak{sl}_2$ $sDAHA$ algebra. That is, let $\ddot{H}$ denote the Double Affine Hecke Algebra \cite{C2} of type $A_1$. By $s\ddot{H}$, we will then mean its spherical subalgebra. There, $X$ and $Y$ do not $q$-commute any longer. Rather, they obey,
$$XY = q^{-2}YX T^2$$
Note that at $t=1$, $T^2=1$, and we recover the unrefined results. The $SL(2,\mathbb{Z})$ action on $sDAHA$ is the same as before (see Section \ref{sl2z_section}), which implies the basis $Y^n X^m \ket{0}$ are no longer permuted by $SL(2,\mathbb{Z})$ automorphisms if $t$ is generic. It is convenient for us to work with a family of elements in $s\ddot{H}$ for which the elements of the mapping class group act as permutations up to a $q$-factor. To this end, we define,
$$\psi_{n,0} = e Y^n e$$ 
where $e$ is the indempotent used to define $s\ddot{H} = e \ddot{H}e$ (for a semi-simple lie algebra $\mathfrak{g}$ with Weyl group $W$, this would be $e = \frac{\sum_{w\in W}t_w T_w}{\sum_{w\in W}t_w^2}$ ). Let $\gamma = \begin{pmatrix}
    r&a\\p&b
\end{pmatrix} \in SL(2,\mathbb{Z})$. Then, we define,
$$\psi_{nr,np} = q^{n^2p r}\gamma(\psi_{n,0})$$
Since $gcd(r,p)=1$, this uniquely specifies all elements $\psi_{n,m} \in s\ddot{H}$. Then, we define the states,
$$\ket{n,m} = \psi_{n,m}\ket{0}$$
For instance, some simple examples of these elements in the PBW basis would be,
\begin{align*}
 q^{-2p}\psi_{1,p} &= q^{-p} \cdot \tau^p_-(\psi_{1,0}) = e \cdot  X^p Y\cdot e \\
 q^{-8} \psi_{2,2} &=  e\cdot X^2 Y^2 -\tilde{t} X^{-2}T \cdot e \\
 q^{-16} \psi_{2,4} &=   e\cdot X^4Y^2 -\tilde{t} X^4 T -\tilde{t} q^{-4} X^2T\cdot e\\
 q^{-2}\psi_{-1,-1} &=   e\cdot X^{-1} Y^{-1} -\tilde{t} XY (T-\tilde{t})\cdot e\\
 q^{-18}\psi_{3,3} &=   e\cdot X^3 Y^3 - \tilde{t} X^3(Y+Y^{-1})T -\tilde{t} q^{-4} XY^{-1} T\cdot e\\
 q^{-36} \psi_{3,6} &=   e\cdot X^6 Y^3 - \tilde{t} X^6(Y+Y^{-1})T -\tilde{t} q^{-4} X^4Y^{-1} T  -\tilde{t} q^{-8}(X^4YT +X^{2} Y^{-1}T -\tilde{t} X^4Y)  \\ & \hspace{0.7cm} - \tilde{t}q^{-12} (Y^{-1} T -\tilde{t} X^2Y)\cdot e,
\end{align*}
where $\tilde{t} = t-t^{-1}$. Due to the growing complexity of these elements in the PBW basis, one is tempted to conclude that they may be more suitable than the PBW basis in defining wavefunctions inside the torus Hilbert space. That is, by construction, these elements fulfill, 
$$\gamma \ket{n,m} = q^{-n^2pr -2m n ap -m^2 ab}\ket{nr+ma,pn+mb}$$
At this point, we can do the most naive thing imaginable, which is to repeat our procedure for the unrefined $\hat{Z}$ theory by considering a wavefunction for the solid torus of the form,
\be \label{wrong_ref_torus_wavfn}\ket{\mathbb{S}} = \sum_{n,m} c_m \psi_{n,m}\ket{0},\ee 
where $c_m \in \mathbb{C}_{q,t}$ are the coefficients of a formal $X$-expansion of,
$$\hat{Z}_0 (\mathbb{S}_0;X,q,t) = q^{-\frac{\beta(\beta+1)}{4}}X^{-(\beta+1)/2} (X,q)_{\beta+1} = \sum_{m\in I} c_m X^m $$
Where both the index $I$ and $c_m$ depend explicitly on $\beta$. In accordance with Section \ref{solid_torus_sec}, we must specify an inner product, or metric, on the Hilbert space if we wish to compute any 3-manifold invariants. The elements $\psi_{m,n}$ are manifestly linearly independent over $s\ddot{H}$, so we can partially define a metric by requiring the states $\psi_{n,m}\ket{0}$ to be orthonormal, 
$$\bra{n,m}\ket{n',m'} = \delta_{n,n'}\delta_{m,m'} $$
More generally, the elements must be orthogonal, but of course, we can absorb the overall factor into our wavefunctions if necessary. This consists of a \textit{partial} definition for the inner product as $\psi_{n,m}$ are not a basis for $s\ddot{H}$ (treated as a free $\mathbb{C}_{q,t}$-module).

We can now compute the amplitude for lens spaces $L(p,1)$. We find, 
\begin{align*}
Z(L(p,1),\mathbf{a})& =\bra{\mathbb{S},\mathbf{a}_1}\tau_-^p  \ket{\mathbb{S},\mathbf{a}_2}  \\
&= \sum_{n,n',m',m} q^{-n^2p -2n\alpha}  c_m c_{m'} \bra{n',m'}\ket{n,m+np+\alpha}\\
&= \sum_{n,m',m} q^{-n^2p -2n\alpha}  c_m c_{m'} \delta_{m',m+np+\alpha} \\
& \label{refined_lp1}\cong \oint\frac{dX}{2\pi i X} \frac{(X,q)_\infty (X^{-1},q)_\infty}{(Xqt,q)_\infty (X^{-1}qt,q)_\infty} \cdot \sum_{n\in\mathbb{Z}}q^{-\frac{(pn+\alpha)^2}{p}} X^{np+\alpha}
\end{align*}
We note this is precisely the half-index of $T[L(p,1)]$, a 3d $\mathcal{N}=2$ theory specified by a vector multiplet (Neumann b.c.) with level $p$ CS term and an adjoint chiral (also Neumann b.c.) with $R$ charge 2 with boundary $2d$ $\mathcal{N}=(2,0)$ theory canceling the anomaly inflow from the CS term \cite{GPPV}\footnote{Strictly, speaking it is the index divided by Cartan contributions from the both the vector and chiral multiplets}. So, we recover the correct L(p,1) amplitude for $\Hat{Z}$. 

However, in the approach described above, we have ignored a fatal flaw: \textit{the solid torus wavefunction is not invariant under Dehn twists}. Therefore, the theory resulting from this choice of wavefunction will be manifestly non-topological... That is, a general ansatz for $\ket{\mathbb{S}}$ is,
$$\ket{\mathbb{S}} = \sum_{n,m} S^{n,m} \psi_{n,m}, \hspace{4mm} S^{n,m}\in \mathbb{C}_{q,t}$$
Recalling the argument in Section \ref{solid_torus_sec}, topological invariance (up to a $q$-factor) requires invariance under Dehn twists,
$$\tau_+^a \ket{\mathbb{S}} = q^{f(a)}\ket{\mathbb{S}}$$
\be \label{dehn_twist_invariance} \Leftrightarrow S^{n,m} = S^{m} \cdot q^{-m n-f(\frac{n}{m})} \ee 
For some $S^m$. One immediately sees \eqref{wrong_ref_torus_wavfn} violates this directly. However, if we start from this requirement, it is impossible to recover the correct expression for $L(p,1)$ (term by term, at least). 

We are led to conclude that the problem of realizing the decorated TQFT associated with refined $\hat{Z}$ invariants (and thus generalizing the results of this paper) via $s\ddot{H}$ can be broken down to,
\begin{enumerate}
    \item Extending the spherical Double Affine Hecke Algebras $s\ddot{H}$ to some rational extension, $s\ddot{H}^{ext}$, which include rational exponents of $X,Y$ and obeys, $$s\ddot{H}^{ext} \underset{t\rightarrow1}{\longrightarrow }\mathcal{O}^{\mathfrak{sl}_2}_\mathbb{Q}$$ This will serve as the algebra of operators in the refined theory, $\mathcal{O} = s\ddot{H}^{ext} $, and therefore defines the torus Hilbert space $Z_0(T^2)$ by taking the $\mathbb{C}$-span of all $\mathbb{C}_{q,t}$ linearly independent elements in $s\ddot{H}^{ext}$.
    \item Specifying a wavefunction for the solid torus $\ket{\mathbb{S}} \in Z_0(T^2)$, which is invariant, up to a $q$ factor, under Dehn twists, $\tau_+^a \ket{\mathbb{S}} = q^{f(a)}\ket{\mathbb{S}}$ and which recovers the unrefined solid torus wavefunction from Section \ref{solid_torus_sec} under $t\rightarrow1$.
    \item Specifying a metric $\bra{\cdot}\ket{\cdot}:s\ddot{H}^{ext}\times s\ddot{H}^{ext} \rightarrow\mathbb{C}_{q,t}[[q]]$ and ensuring the amplitude $\bra{\mathbb{S}}\tau_-^p  \ket{\mathbb{S}}$ recovers the $L(p,1)$ amplitude up to a $q$ factor. 
\end{enumerate}

We are tempted to conjecture that a solution to the 3 constraints above exists and is in fact unique. Not only that, but by the same arguments in this paper, this starting point would uniquely constrain the amplitudes of all Seifert manifolds with $b_1 =0$. This theory would also immediately contain line operators (that is, Wilson loops), whose expectation values in $S^3$ would recover those of refined Chern--Simons theory. Moreover, one would be able to compute refined knot invariants (with integral coefficients) in any background Seifert manifold (where the knots are allowed to wrap any non-trivial homotopy cycles).\footnote{ These Wilson loop operators exist in the unrefined theory as well and have a rich topological and algebraic structure, which we will explore in an upcoming paper} It would be very interesting to define and solve the refined theory in the future, not only due to its interesting algebraic and representation theoretic structures, but also because it seems to be easier to categorify. We hope to explore this and related problems in the future.


\subsection{Line Operators and Complex Extensions}
The $\hat{Z}$-TQFTs we have developed in this work are expected to provide a non-perturbative completion of $G_{\C}$ Chern—Simons theory. Moreover, from the $M$-theory perspective discussed in the previous subsection, we can consider $M$-branes in the $M$-theory space-time configurations, which can be viewed as line defects inside the three-manifold. As such, we expect the $\hat{Z}$-TQFTs to have line operators. 

In Chern--Simons theory with non-abelian gauge field $A$, the Wilson loop operators are defined as,
$$W_\lambda (\gamma) = \Tr_\lambda \mathcal{P}exp\left( i \oint _\gamma A\right).$$ 
Recall that $X$ and $Y$ are holonomies around the generating cycles of the 2-torus. Thus, if we wish to consider Wilson loop operators, wrapping the meridian cycle of the knot complement, the appropriate object to consider in $\mathcal{H}(T^2)$ is, 
$$W_\lambda (X) = \Tr_\lambda(X),$$
where we allow $\lambda$ to be any representation of the Lie algebra $\mathfrak{g}$ (so long as $\Tr_\lambda(X) \in \mathcal{H}(T^2)$). Then, we may define an operator that instantiates wrapping a Wilson loop operator around the longitude of a knot complement, 
$$W_\lambda : \mathcal{H}(T^2) \rightarrow \mathcal{H}(T^2),$$
whose action is given as follows,
$$W_\lambda \ket{n,m} = W_\lambda Y^n X^m\ket{0} =Y^n X^m \chi _\lambda(X)\ket{0}, $$
where $\chi_\lambda (\cdot)$ is the character of the representation. Thus, for instance, the unknot complement with a Wilson line charged under $\lambda$ would be given by,
$$W_\lambda \ket{\mathbb{S},\mathbf{a}}.$$
Let $\mathfrak{g = sl_2}$ for simplicity and let $\lambda$ denote the highest weight of a Verma module representation of $\mathfrak{sl_2}$. In this case, we have, 
$$W_\lambda \ket{\mathbb{S},\mathbf{a}} = \sum_{n\in \Z+\mathbf{a}}Y^n X^\lambda. $$
From general principles, we expect that the amplitude, 
\be \label{wilson_line_knot_complement}\bra{\mathbb{S},\mathbf{a}_2} W_\lambda ^\dagger S^\dagger \ket{K,\mathbf{a}_1},\ee
for generic $\lambda$, should recover the GM series for the knot complement $F_K$. Computing this amplitude, we find, 
$$\bra{\mathbb{S},\mathbf{a}_2} W_\lambda ^\dagger S^\dagger \ket{K,\mathbf{a}_1} = \delta^\Z(\mathbf{a}_1+\lambda) \sum_{m\in \Z+\frac{1}{2}}q^{2 m\lambda} f^m_K(q),$$
where we have taken $\mathbf{a}_2 = (0,\mathfrak{s}_{\frac{1}{2}})$. It is perhaps not too surprising that introducing non-local defects affects the $\mathrm{Spin}^c$-structures of the 3-manifolds in question, but the details of this we leave for future work. For now, we take this to be the case and set $\mathbf{a}_1 = -\lambda \mod \Z$. Whenever $\lambda \in \frac{1}{2}\Z$, the quantity above recovers the colored Jones polynomial of $K$. If instead we take $\lambda$ to be a generic complex number ($\lambda \in \C$), then we can recover the GM series $F_K$ by redefining $X=q^{2\lambda}$. In a way, this discussion draws the first connection to the work of \cite{S2}. More importantly, however, it suggests the complex extension of the quantum torus algebra $QT$ rather than the rational extension. In this paper, we needed the $\Q$-extension to write down functorial cutting and gluing rules. However, in this context, it may be that a more complete framework involves enlarging $\Q \rightarrow \C$ and consequently the construction of a $H^1(\cdot, \C/\Z)$ decorated $\mathrm{Spin}$ TQFT with functorial cutting and gluing rules. In this scenario, the module associated with the torus would instead be,
$$\boxed{\mathcal{H}(T^2) = \hat{\mathcal{O}}^\mathfrak{g}_\C.}$$
The $SL(2,\Z)$ action would remain the same, so this would automatically be a representation of $SL(2,\Z)$. It is an interesting future problem to explore other implications of this.


\subsection{Other Future Prospects}
In light of this work, several interesting open questions remain:
\begin{enumerate}
    \item \textbf{Extension to other TQFTs:} In this work, we established rules for decorated $\mathrm{Spin}$ TQFTs. Which other known TQFTs fall within this construction? For instance, we suspect that $U(1|1)$ Chern--Simons theory, which computes the Reidemeister torsion invariants (see e.g. \cite{Mik}), and the TQFT that computes Akhmecher--Johnson--Krushkal \cite{AJK} fit within this framework. It would be valuable to verify this explicitly.
    \item \textbf{General boundaries:} Can we extend the discussion in this paper to three-manifolds with arbitrary boundaries? In particular, can we find the mapping class group representations associated with higher-genus Riemann surfaces?
    \item \textbf{Quantum moduli spaces and skein algebras:} A large portion of this paper relied on an extension of the well-known quantum torus algebra, which arises from the novel quantization of the moduli space of flat connections on the torus. Could the quantizations of the moduli space of flat connections for more general Riemann surfaces prove useful in studying $\hat{Z}$ invariants of Heegaard splittings? How do skein algebras fit into this picture, and can they help us understand $\hat{Z}$ for Heegaard splittings?
    \item \textbf{Quantum Modularity:} Do the explicit formulas for $\hat{Z}$ of Seifert manifolds shine any light on quantum modularity of the $\hat{Z}$ invariants?
    \item \textbf{Line operators:} As mentioned in the previous section, we know that line operators are part of this theory. Is there a state-line operator correspondence? If so, it would be interesting to establish a precise state-operator correspondence using the results developed in this work. This would enable us to study the braiding and fusion of operators colored by infinite-dimensional representations of the Lie algebra of the gauge group and determine general structural aspects of this theory.  
\end{enumerate}

\section*{Acknowledgement}
We would like to thank John Chae, Davide Passaro, Lara San Martin Suarez, and Josef Svoboda for helpful discussions. We extend special thanks to Sergei Gukov for many insightful discussions and mentorship, and to Sunghyuk Park for insightful discussions and for sharing his notes on cabling and satellites. This work is supported by a Simons Collaboration Grant on New Structures in Low-Dimensional Topology. The work of MJ is supported by the Walter Burke Institute for Theoretical Physics, the U.S. Department of Energy, Office of Science, Office of High Energy Physics, under Award No. DE-SC0011632.

\appendix

\section{Framed Solid Torus Comparison} 

In this appendix, we compare the solid torus wavefunctions from \cite{GM} and this work. In \cite{GM}, the authors determined the $\Hat{Z}$ invariant for $\mathbb{S}_{p/r}$ with a certain decoration via a plumbing formula. That is, equation (87) from \cite{GM} reads (up to an overall $q$ factor due to framing anomaly),

$$\Hat{Z}_\alpha(\mathbb{S}_{p/r};X,n,q) = \sum_{\epsilon=\pm 1}\epsilon q^{- \frac{(p n + \alpha)^{2}}{p r}}\cdot X^{\frac{n p + \alpha}{r} + \frac{\epsilon}{2r}} \cdot \delta^\mathbb{Z}\left( \frac{p n}{r}+\frac{\alpha}{r}+ \frac{\epsilon}{2r} +\frac{1}{2}\right),$$
where $\delta^\mathbb{Z}(x) = \delta_{0,x\mod{\mathbb{Z}}}$. 
The wavefunction for the framed solid torus given in \ref{solid_torus_sec} can be written as, 
\begin{align*}
    \ket{\mathbb{S}_{p/r} , \mathbf{a} } =\gamma_{p/r}\cdot \ket{\mathbb{S},\mathbf{a}}= & \sum_{\epsilon = \pm 1} \sum_{k \in \Z}  \epsilon q^{- \tfrac{a b}{4} - a p \epsilon (k+\mathbf{a}) - p r (k+\mathbf{a})^{2} } Y^{ \frac{a \epsilon}{2} + r (k + \mathbf{a})} X^{\frac{b \epsilon}{2} + p(k + \mathbf{a}) }.
\end{align*}
The label $\mathbf{a}$ can, in general, take any value in $\Q/\Z$. The agreement with \cite{GM} will occur with specific values of $\mathbf{a}$, namely, the values that come from restrictions of closed 3-manifolds resulting from gluing of the torus boundary. That is, suppose $\mathbf{a}$ is such that $\mathbf{a} = \frac{b x}{p} + \frac{a}{2}$ for some $x \in \frac{r+1}{2} + \Z$, and let $m = k r + \frac{a(2 x + r + \epsilon)}{2}$. Therefore, $k + \mathbf{a} = \frac{m}{r} + \frac{b x}{p} - \frac{a(2 x + \epsilon)}{2r} $. Now, we can write $\ket{\mathbb{S}_{p/r} , \mathbf{a} } $ as, 
\begin{align*}
& \ket{\mathbb{S}_{p/r} , \mathbf{a} } \endline = & \sum_{\epsilon = \pm 1} \sum_{m \in \Z} \epsilon q^{- \frac{a(b r - a p)}{4 r} - \frac{m^{2}p}{r} - \frac{ 2 m x (b r - a p)}{r} - \frac{(b r - a p )^{2} x^{2}}{p r} }  Y^{ m + \frac{(b r - a p) x}{p}} X^{ \frac{ m p }{ r} + \frac{(b r - a p)(2 x + \epsilon)}{2 r} } \delta^{\Z}\left( \frac{2 m-a(2x+ r + \epsilon)}{2r} \right),   \endline
 = & \sum_{\epsilon = \pm 1} \sum_{m \in \Z} \epsilon q^{- \frac{a}{4 r} - \frac{(m p + x)^{2}}{p r} }  Y^{ m + \frac{ x}{p}} X^{ \frac{ m p + x}{ r} + \frac{ \epsilon}{2 r} } \delta^{\Z}\left( \frac{2 m-a(2x+ r + \epsilon)}{2r} \right) .
\end{align*}
In the second equality, we have used $ b r - a p =1$. Since $gcd(p,r)=1$ and $ x+ \frac{r + \epsilon}{2} \in \Z$, we have,
\begin{align*}
    \delta^{\Z}\left( \frac{2 m-a(2x+ r + \epsilon)}{2r} \right) = & \delta^{\Z}\left(  \frac{p m }{r} - \frac{a p (2 x + r + \epsilon)}{2 r}  \right) \endline
   =  & \delta^{\Z}\left(  \frac{p m }{r} - \frac{(b r -1) ( x + \frac{r + \epsilon}{2})}{ r}  \right) \endline
   = & \delta^{\Z}\left(  \frac{p m }{r} + \frac{x}{r} + \frac{\epsilon}{2r} + \frac{1}{2}  \right).
\end{align*}
In the second equality, we have used $ b r - a p =1$, in the third equality we have used that $ x+ \frac{r + \epsilon}{2} \in \Z$. Using this, we get, 
\begin{align*}
 \ket{\mathbb{S}_{p/r} , \mathbf{a} } = &   \sum_{\epsilon = \pm 1} \sum_{m \in \Z} \epsilon q^{- \frac{a}{4 r} - \frac{(m p + x)^{2}}{p r} }  Y^{ m + \frac{ x}{p}} X^{ \frac{ m p + x}{ r} + \frac{ \epsilon}{2 r} }  \delta^{\Z}\left(  \frac{p m }{r} + \frac{x}{r} + \frac{\epsilon}{2r} + \frac{1}{2}  \right).
\end{align*}
Letting $x=\alpha$, we find an equality up to a factor of $q$:
$$\hat{Z}_\alpha(\mathbb{S}_{p/r};X,n,q) = q^{a/4r}\sum_{m\in \Q}\ket{0,m} \braket{{n+\frac{\alpha}{p},m}, {\mathbb{S}_{p/r} , \mathbf{a} } }.$$

\section{Inverses in the Algebra of Operators}\label{high_rank_expansions}
\subsection*{Definition}
For this section, we are only concerned with the inverses of elements depending only on $X$. As such, we use the notation, $$\mathcal{C}^\mathfrak{g}_R \equiv \mathcal{O}^\mathfrak{g}_R \vert_{Y=1}$$ The corresponding space $\Hat{\mathcal{C}}^\mathfrak{g}_R = \hat{\mathcal{O}}^\mathfrak{g}_R|_{Y=1}$ is an $\mathcal{C}^\mathfrak{g}_R$-module, but has no internal multiplication that can be defined. Despite this, it is easy to make sense of an inverse element in $r \in \Hat{\mathcal{C}}^\mathfrak{g}_R$ of $e$, so long as $e \in \mathcal{C}^\mathfrak{g}_R$. We do this by requiring that inverses be compatible with the Weyl group action on $\Hat{\mathcal{C}}^\mathfrak{g}_R$. Explicitly,
\begin{definition}\label{weylcompatibleinverse}
    If $e \in \mathcal{C}^\mathfrak{g}_R$ is any element, we say $r \in \Hat{\mathcal{C}}^\mathfrak{g}_R$ is an inverse of $e$ in $\Hat{\mathcal{C}}^\mathfrak{g}_R$ if the following conditions are satisfied,
    $$e\cdot r = 1$$
    $$w(e\cdot r) = w(e) w(r) =1$$
    for all $w\in W$. 
    
\end{definition}

It is helpful to also define another $\mathcal{C}^\mathfrak{g}_R$-module, 
$$\ddot{\mathcal{C}}^\mathfrak{g}_R = \C_q ((\{X^\mu \}_{\mu \in P_R}))$$
This is actually a field of power series in the $X$ variables, which fits inside $\hat{\mathcal{C}}^\mathfrak{g}_R$ as a $\mathcal{C}^\mathfrak{g}_R$-modules. 
$$\mathcal{C}^\mathfrak{g}_R \subset \ddot{\mathcal{C}^\mathfrak{g}_R} \subset \Hat{\mathcal{C}^\mathfrak{g}_R}$$
Now, we are ready to prove the main result of this section, 
\begin{lemma}
    Suppose $e \in \mathcal{C}^\mathfrak{g}_R$ is Weyl symmetric or antisymmetric. Then, $e$ has a unique inverse (in the sense above) in $\Hat{\mathcal{C}^\mathfrak{g}_R}$
\end{lemma}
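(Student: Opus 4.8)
The plan is to treat the two requirements of Definition \ref{weylcompatibleinverse} separately: first produce a genuine inverse by working inside the field $\ddot{\mathcal C}^\mathfrak{g}_R$, then repair its Weyl behaviour by averaging, and finally pin down uniqueness by controlling the annihilator of $e$ inside $\hat{\mathcal C}^\mathfrak{g}_R$.

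For existence, note that $e\neq 0$ and that $\ddot{\mathcal C}^\mathfrak{g}_R\supset\mathcal C^\mathfrak{g}_R$ is a field, so $e$ has a genuine multiplicative inverse $e^{-1}\in\ddot{\mathcal C}^\mathfrak{g}_R\subset\hat{\mathcal C}^\mathfrak{g}_R$, which already satisfies $e\cdot e^{-1}=1$ but need not transform correctly under $W$. Since $e$ is symmetric or antisymmetric it transforms through a single order-two character: write $w(e)=\chi_e(w)\,e$ with $\chi_e(w)\in\{\pm1\}$, where $\chi_e\equiv 1$ in the symmetric case and $\chi_e=\epsilon$ in the antisymmetric case. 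I would then set
$$ r \;=\; \frac{1}{|W|}\sum_{w\in W}\chi_e(w)\,w\!\left(e^{-1}\right)\in\hat{\mathcal C}^\mathfrak{g}_R. $$
The key computation is $e\cdot w(e^{-1})=w\big(w^{-1}(e)\cdot e^{-1}\big)=\chi_e(w)\,w(1)=\chi_e(w)$, using that $w$ commutes with module multiplication by the finite element $e$ and that $\chi_e$ has order two. Summing gives $e\cdot r=|W|^{-1}\sum_w\chi_e(w)^2=1$, and the same reindexing shows $w'(r)=\chi_e(w')\,r$, so that $w(e)\,w(r)=\chi_e(w)^2\,e\cdot r=1$ for all $w$. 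Hence $r$ is an inverse in the required sense.

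For uniqueness, suppose $r_1,r_2$ both qualify and set $d=r_1-r_2$, so that $e\cdot d=0$ and $w(d)=\chi_e(w)\,d$. I would decompose $\hat{\mathcal C}^\mathfrak{g}_R=\bigoplus_{c}\hat{\mathcal C}_c$ into cosets of the subgroup $\Lambda_e\subset P_R$ generated by the differences of the exponents occurring in $e$; multiplication by $e$ preserves this grading up to a fixed coset shift, so $e\cdot d=0$ reduces to a homogeneous scalar recursion on the coefficients of $d$ within each coset. Solving it shows that the annihilator of $e$ in $\hat{\mathcal C}^\mathfrak{g}_R$ is spanned by full-coset sums $\Sigma_c=\sum_{\mu\in c}X^\mu$, each of which is Weyl-\emph{invariant} on its $W$-orbit. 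Because $e$ is (anti)symmetric, its support and hence $\Lambda_e$ are $W$-stable, and the coset forced by $e\cdot d=0$ is itself $W$-stable; on such a coset the annihilator is generated by a single $W$-invariant element $\Sigma$ carrying the trivial character, so the constraint $w(d)=\chi_e(w)\,d$ forces the $\chi_e$-isotypic part of $d$ to vanish, giving $d=0$.

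The main obstacle is precisely this uniqueness step: the annihilator of $e$ in $\hat{\mathcal C}^\mathfrak{g}_R$ is genuinely nonzero (the full-coset sums exist as formal elements), so uniqueness is not a formal consequence of being in an integral domain, and the Weyl-covariance must be used to eliminate it. In the antisymmetric case the mismatch between the sign character $\epsilon$ and the trivial character of $\Sigma$ closes the argument immediately; in the symmetric case one must additionally exclude a symmetric full-coset-sum ambiguity, which I expect to require restricting attention to the coset actually reached by $e$ (equivalently, to the inverse that specialises to the $\ddot{\mathcal C}^\mathfrak{g}_R$-inverse away from the symmetrisation). The careful bookkeeping of supports in that reduction is where the real work lies.
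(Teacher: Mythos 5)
Your existence argument is essentially identical to the paper's: invert $e$ in the Laurent-series field $\ddot{\mathcal C}^{\mathfrak g}_R$, then average over $W$ against the character $\chi_e$; the computation $e\cdot w(e^{-1})=\chi_e(w)$ and the reindexing that gives $w(r)=\chi_e(w)\,r$ are exactly the paper's steps (the paper simply fixes the antisymmetric case ``without loss of generality'' rather than carrying the character $\chi_e$ through both cases).

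The uniqueness step is where you genuinely diverge from the paper, and it contains a real gap: the annihilator of $e$ in $\hat{\mathcal C}^{\mathfrak g}_R$ is \emph{not} spanned by full-coset sums. The equation $e\cdot d=0$ is a linear recurrence on the coefficients of $d$ whose order equals the diameter of the support of $e$, so its solution space per coset has that dimension, not dimension one; it is one-dimensional only when $e$ has exactly two monomials. Already for $e=(X^{1/2}-X^{-1/2})^{2}$ the recurrence is $d_{m-1}-2d_m+d_{m+1}=0$, solved by $d_m=a+bm$, so $\sum_m mX^{m}$ annihilates $e$. This does not just obstruct the symmetric case you flag: for $e=(X^{1/2}-X^{-1/2})^{3}$ the element $d=\sum_{m\in\frac12+\Z}mX^{m}$ satisfies $e\cdot d=0$ and $w(d)=-d=\chi_e(w)\,d$, so $r+d$ satisfies both conditions of Definition \ref{weylcompatibleinverse} whenever $r$ does, and the character-mismatch argument fails in the antisymmetric case as well. (This is in fact a counterexample to uniqueness as literally defined; the paper's own proof avoids it only by comparing candidate inverses lying in the image of the averaging map $a$ applied to field elements, and its final step ``$a(s)=0$ for $s$ symmetric'' uses symmetry of the rational function rather than coefficientwise symmetry of its re-expansion, which is exactly where elements such as $\sum_m mX^m$ hide.) Your instinct that the nonzero annihilator is the crux, and that uniqueness is not a formal integral-domain statement, is correct and sharper than the paper's treatment; but the description of the annihilator you rely on is false, so the uniqueness half of the proposal does not go through without restricting the class of admissible inverses (e.g.\ to Weyl-averages of elements of $\ddot{\mathcal C}^{\mathfrak g}_R$).
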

\textbf{Proof:} Without loss of generality, assume $e$ is anti-symmetric. Consider the map,
$$a: \ddot{\mathcal{C}^\mathfrak{g}_R} \rightarrow \Hat{\mathcal{C}^\mathfrak{g}_R}$$
$$c\mapsto \frac{1}{|W|}\sum_{w\in W} \epsilon(w) w(c) $$
Since $\ddot{\mathcal{C}^\mathfrak{g}_R}$ is a field, we can pick the honest inverse $e^{-1}$ of $e$ in $\ddot{\mathcal{C}^\mathfrak{g}_R}$. Let us define $r = a(e^{-1})$. Now, compute the quantity,
$$e \cdot a(e^{-1}) = \frac{1}{|W|}\sum_{w\in W} \epsilon(w) e\cdot  w(e^{-1}) $$
If $e$ is Weyl antisymmetric, then $w(e) = \epsilon(w) e$, so:
$$e \cdot a(e^{-1}) = \frac{1}{|W|}\sum_{w\in W}w(e)\cdot w(e^{-1}) = 1$$
The first condition $e\cdot r=1$ is verified. For the second, observe that it is sufficient for $w(r) = \epsilon (w) r$, but this is true by construction. Now, we show uniqueness. Suppose $r'$ is another inverse of $e$ in $\Hat{\mathcal{C}^\mathfrak{g}_R}$. It is antisymmetric, so it must have some pre-image under $a$, $a(x) = r'$. This gives, 
$$e\cdot a(x-e^{-1})=0 $$
But we know, 
$$e\cdot a(x-e^{-1})=\frac{1}{|W|}\sum_{w\in W}w(e)w(x-e^{-1}) = \bigg(\frac{1}{|W|}\sum_{w\in W}w\bigg)(ex -1)$$
The kernel of the map in big brackets is the subspace of all Weyl antisymmetric elements. Therefore, $x= e^{-1} +s$ for $s$ some symmetric element. Hence, the two inverses are the same. $\Box$

As a relevant example for a big chunk of the paper, we mention the result below, 

\begin{corollary}
Let $\mathfrak{g=sl_2}$. If we write the inverse of the element $(X^{\frac{1}{2}}-X^{-\frac{1}{2}})^p \in \mathcal{C}^\mathfrak{sl_2}_R$ in $\hat{\mathcal{C}}^\mathfrak{g}_R$ as $(X^{\frac{1}{2}}-X^{-\frac{1}{2}})^{-p}$, then we have, 
$$(X^{\frac{1}{2}}-X^{-\frac{1}{2}})^{-p} = \sum_{m\in \frac{1}{2} \Z}f^{p+2}_m \cdot X^m$$
$$f^{p+2}_m = \frac{1}{2}\mathrm{sgn}(m)^p\binom{\frac{p}{2}-1+|m|}{p-1}\delta^\Z\left(m+\frac{p}{2}\right)$$
\end{corollary}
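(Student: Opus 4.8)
The plan is to lean on the uniqueness established in the preceding Lemma: the Weyl-compatible inverse of $e := (X^{1/2}-X^{-1/2})^p$ must equal $r = a(e^{-1})$, where $e^{-1}$ is the genuine inverse of $e$ inside the field $\ddot{\mathcal{C}}^{\mathfrak{sl}_2}_R$ and $a$ is the projector used in that proof. Since the nontrivial Weyl element $w:X\mapsto X^{-1}$ sends $X^{1/2}-X^{-1/2}$ to $-(X^{1/2}-X^{-1/2})$, the element $e$ is Weyl-antisymmetric when $p$ is odd and symmetric when $p$ is even; accordingly $a$ is the antisymmetrizer $a(c)=\tfrac{1}{2}(c-w(c))$ in the odd case and the symmetrizer $a(c)=\tfrac{1}{2}(c+w(c))$ in the even case. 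So it suffices to compute $e^{-1}$ explicitly, apply $a$, and recognize the result as the stated series; uniqueness then closes the argument.

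I would first factor $e = (-1)^p X^{-p/2}(1-X)^{p}$, so that its field inverse is $e^{-1} = (-1)^p X^{p/2}(1-X)^{-p}$. Expanding with the negative binomial series $(1-X)^{-p}=\sum_{k\geq 0}\binom{p-1+k}{p-1}X^{k}$ and reindexing by $m=\tfrac{p}{2}+k$ gives $e^{-1}=(-1)^p\sum_{m\in\frac{p}{2}+\mathbb{Z}_{\geq 0}}\binom{\frac{p}{2}-1+m}{p-1}X^{m}$, a series supported on $m\geq \tfrac{p}{2}$ with powers lying in $\tfrac{p}{2}+\mathbb{Z}$.

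Applying $a$ then reflects this support to include the mirror terms at $m\leq -\tfrac{p}{2}$, where the reflected coefficient at $-m$ is $\binom{\frac{p}{2}-1+|m|}{p-1}$, carried with a weight $\pm 1$ according to whether we symmetrize or antisymmetrize. This parity weight is precisely $\mathrm{sgn}(m)^p$, which is the role of that factor in the claimed formula, while the prefactor $\tfrac{1}{2}$ is the projector normalization. The restriction $m\in\tfrac{p}{2}+\mathbb{Z}$ is recorded by $\delta^{\mathbb{Z}}(m+\tfrac{p}{2})$, and the would-be interior terms with $|m|<\tfrac{p}{2}$ cause no trouble: there $\tfrac{p}{2}-1+|m|<p-1$ forces $\binom{\frac{p}{2}-1+|m|}{p-1}=0$, consistent with their absence from the support. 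Collecting these observations reproduces $f^{p+2}_m$.

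The main obstacle is the parity-sensitive bookkeeping of signs and support under the projector: the symmetric ($p$ even) and antisymmetric ($p$ odd) cases must be run in parallel, and the global sign $(-1)^p$ coming from the factorization of $e$ has to be reconciled against the $\mathrm{sgn}(m)^p$ weight — this normalization is the delicate step, and the only place where the literal antisymmetrization could differ from the closed form by an overall sign, consistent with the paper's $\pm q^c$ conventions. As an independent cross-check I would verify $e\cdot r=1$ directly, convolving the finite expansion $e=\sum_{j=0}^{p}\binom{p}{j}(-1)^{p-j}X^{\,j-p/2}$ against the proposed series and confirming, via a Pascal/Vandermonde identity, that every coefficient of $X^{n}$ with $n\neq 0$ cancels while the coefficient of $X^{0}$ equals $1$; the cases $p=1$ and $p=2$ already exhibit the full mechanism transparently.
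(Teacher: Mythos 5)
Your proposal is correct and follows the route the paper intends: the corollary is presented as an immediate consequence of the preceding uniqueness lemma (the paper's own proof is literally ``Obvious''), and your explicit computation of $a(e^{-1})$ via the factorization $e=(-1)^pX^{-p/2}(1-X)^p$ and the negative binomial expansion is exactly the calculation being left implicit. The sign subtlety you flag is genuine --- for odd $p$ the literal (anti)symmetrization $a(e^{-1})$ equals $(-1)^p$ times the printed $f^{p+2}_m$ (e.g.\ at $p=1$ the stated coefficients give $e\cdot r=-1$) --- so attributing the discrepancy to the paper's overall $\pm q^c$ normalization conventions is the right call.
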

\textbf{Proof:} Obvious. $\Box$

\section{Higher Rank Expressions}
In this section, we study and define inverse elements of the form,
$$\left(\sum_{w\in W}\epsilon(w)X^{w(\rho)} \right)^{2-d}$$
in the algebra of operators we have defined. First, recall that we have the ring, 
$$\mathcal{O}^\mathfrak{g}_{K} = QT^{K}_\mathfrak{g} = \frac{\langle X^\mu ,Y^\lambda\rangle_{\mu,\lambda\in P_K}}{(X^\mu Y^\lambda - q^{-2(\mu,\lambda)} Y^\lambda X^\mu)}.$$
As in the $\mathfrak{g} = \mathfrak{sl}_2$ case, we define, 
\be\label{high_rank_f}
\left(\sum_{w\in W}\epsilon(w)X^{w(\rho)} \right)^{2-d}=\sum_{\mu\in P} f^\mu_d X^{-\mu}
\ee
We wish for an explicit expression for $f^\mu_d$; To this end, we consider the Konstant partition function, $\mathcal{K}(\mu)$, which has as its definition,
$$\left(\sum_{w\in W}\epsilon(w)X^{w(\rho)} \right)^{-1} \overset{P_+ \hspace{2mm }expansion}{=} \sum_{\mu\in Q\cap P_+} \mathcal{K}(\mu) X^{-\mu-\rho}$$
We will denote this expansion in positive weights only by $=_+$ and work only with this formal series. At the end, we will antisymmetrize to find $f^\mu_d$. As such, we have, 
\begin{align*}
\left(\sum_{w\in W}\epsilon(w)X^{w(\rho)} \right)^{-p} & =_+X^{-p \rho}\sum_{\mu_i \in Q \cap P_+} \mathcal{K}(\mu_1)...\mathcal{K}(\mu_p) X^{-\mu_1-\mu_2-...-\mu_p}  \\
& = X^{-\rho p}\sum_\mu X^{-\mu}\cdot \underset{\mu_1+...+\mu_p =\mu}{\sum_{\mu_i}}\prod^p_{i=1}\mathcal{K}(\mu_i)\\
&=\sum_{\mu \in (Q+p\rho)\cap P_+}X^{-\mu} \cdot \tilde{f}^\mu_{p+2},
\end{align*}
where we have set, 
$$\tilde{f}^\mu_{p+2} = \underset{\mu_1+\mu_2+...+\mu_p+p \rho=\mu }{\sum_{\mu_i \in Q\cap P_+}} \prod_{i=1,...,p}\mathcal{K}(\mu_i).$$
Antisymmetrizing the above expression in agreement with definition \ref{weylcompatibleinverse}  yields,
$$\left(\sum_{w\in W}\epsilon(w)X^{w(\rho)} \right)^{-p}=\sum_{\mu \in P} f^{\mu}_{p+2} X^{-\mu},$$
where we have redefined, 
$$\boxed{f^\mu_{p+2} = \frac{1}{|W|}\sum_{w\in W} \epsilon(w)\cdot \delta^{(Q+p\rho )\cap P_+}(w(\mu)) \cdot \tilde{f}^{\mu}_{p+2} .}$$

\section{Notations and Conventions}\label{notationsandconventions}
\begin{enumerate}
    \item Let $S$ be a some set, then $\delta^S(x)$ represents the indicator function on that set. I.e. $\delta^S(x)=1$ if $x\in S$ and $\delta^S(x)=0$ otherwise. 
    \item In the context of a link complement $M = S^3 \setminus\nu(L)$, $\mathfrak{s}_0$ represents the $\mathrm{Spin}$ structure evaluating to 0 on all meridians of $M$. Similarly, $\mathfrak{s}_{\frac{1}{2}}$ represents the Spin structure evaluating to $\frac{1}{2}$ on all meridians. 
\end{enumerate}


\begin{thebibliography}{99}

\bibitem{Witten89} 
Witten, E. Quantum field theory and the Jones polynomial. Commun.Math. Phys. 121, 351–399 (1989).

\bibitem{RT91}
Reshetikhin, N., Turaev, V.G. Invariants of 3-manifolds via link polynomials and quantum groups. Invent Math 103, 547–597 (1991).

\bibitem{CraneFrenkel}
Crane, L., Frenkel, I.B. (1994). Four‐dimensional topological quantum field theory, Hopf categories, and the canonical bases. Journal of Mathematical Physics, 35, 5136-5154.

\bibitem{Khovanovhomology} 
Khovanov, M. (1999). A categorification of the Jones polynomial. Duke Mathematical Journal, 101, 359-426.

\bibitem{BarNatanKhhomology} 
Bar-Natan, D. (2002). On Khovanov’s categorification of the Jones polynomial. Algebraic and Geometric Topology, 2, 337-370.

\bibitem{FreedGompf} 
Freed, D.S., Gompf, R.E. Computer calculation of Witten's 3-manifold invariant. Commun.Math. Phys. 141, 79–117 (1991).

\bibitem{Jeffrey92} Jeffrey, L.C. Chern-Simons-Witten invariants of lens spaces and torus bundles, and the semiclassical approximation. Commun.Math. Phys. 147, 563–604 (1992). 

\bibitem{GM}
Gukov, Sergei, and Ciprian Manolescu. "A two-variable series for knot complements." Quantum Topology 12.1 (2021): 1-109.

\bibitem{3d3d}
Chun, S., Gukov, S., Park, S., and Sopenko, N. (2020). 3d-3d correspondence for mapping tori. Journal of High Energy Physics, 2020(9), 1-60.


\bibitem{S1}
Park, Sunghyuk. "Inverted state sums, inverted Habiro series, and indefinite theta functions." arXiv preprint arXiv:2106.03942 (2021).

\bibitem{AS}
Aganagic, Mina, and Shamil Shakirov. "Knot homology from refined Chern-Simons theory." arXiv preprint arXiv:1105.5117 (2011).

\bibitem{VML}
Labastida, J. M., Marino, M., and Vafa, C. (2001). Knots, links and branes at large N. Journal of High Energy Physics, 2000(11), 007.

\bibitem{Pol}
Polchinski, J. ``String Theory," Vol. 1 and 2

\bibitem{MJ}
Jagadale, Mrunmay. "Decorated TQFTs and Their Hilbert Spaces." Journal of High Energy Physics 2023.8 (2023): 1-21.

\bibitem{GPPV}
Gukov, S., Pei, D., Putrov, P., and Vafa, C. (2020). BPS spectra and 3-manifold invariants. Journal of Knot Theory and Its Ramifications, 29(02), 2040003.

\bibitem{GGJ}
Guicardi, P., Gukov, S., and Jagadale, M. (2024). To appear

\bibitem{C1}
Cherednik, I. (2013). Jones polynomials of torus knots via DAHA. International Mathematics Research Notices, 2013(23), 5366-5425.

\bibitem{C2}
Cherednik, I. (1997). Difference Macdonald-Mehta conjecture. arXiv preprint q-alg/9702022.

\bibitem{C3}
Cherednik, I. (2016). DAHA-Jones polynomials of torus knots. Selecta Mathematica, 22, 1013-1053.

\bibitem{G}
Guicardi, P. (2024). To appear

\bibitem{CGP}
Costantino, F., Gukov, S., and Putrov, P. (2023). Non-Semisimple TQFT's and BPS q-Series. SIGMA. Symmetry, Integrability and Geometry: Methods and Applications, 19, 010.

\bibitem{M1}
Macdonald, I. G. (1988). A new class of symmetric functions. Séminaire Lotharingien de Combinatoire [electronic only], 20, B20a-41.

\bibitem{Moser}
Moser, L. (1971). Elementary surgery along a torus knot. Pacific Journal of Mathematics, 38(3), 737-745.

\bibitem{H1}
Hansen, S. K. (2001). Reshetikhin–Turaev invariants of Seifert 3–manifolds and a rational surgery formula. Algebraic and Geometric Topology, 1(2), 627-686.

\bibitem{AOS}
Aganagic, M., Ooguri, H., Saulina, N., and Vafa, C. (2005). Black holes, q-deformed 2d Yang–Mills, and non-perturbative topological strings. Nuclear Physics B, 715(1-2), 304-348.

\bibitem{Rolf}
Rolfsen, Dale. "Rational surgery calculus: extension of Kirby’s theorem." Pacific journal of mathematics 110.2 (1984): 377-386.

\bibitem{S2}
Park, Sunghyuk. "Large color R-matrix for knot complements and strange identities." Journal of Knot Theory and Its Ramifications 29.14 (2020): 2050097.

\bibitem{GJS}
Gunningham, Sam, David Jordan, and Pavel Safronov. "The finiteness conjecture for skein modules." Inventiones mathematicae 232.1 (2023): 301-363.

\bibitem{FG}
Frohman, Charles, and Răzvan Gelca. "Skein modules and the noncommutative torus." Transactions of the American Mathematical Society 352.10 (2000): 4877-4888.

\bibitem{John}
Chae, John. "A Cable Knot and BPS-Series." Symmetry, Integrability and Geometry: Methods and Applications 19.0 (2023): 2-12.

\bibitem{KS}
Kadokami, Teruhisa, and Masafumi Shimozawa. "Dehn surgery along torus links." Journal of Knot Theory and Its Ramifications 19.04 (2010): 489-502.

\bibitem{John2}
Chae, John. "A Cable Knot and BPS-Series II." Experimental Mathematics (2024): 1-12.

\bibitem{Hansen}
Hansen, Søren Kold. "Reshetikhin–Turaev invariants of Seifert 3–manifolds and a rational surgery formula." Algebraic \& Geometric Topology 1.2 (2001): 627-686.

\bibitem{P3}
Park, Sunghyuk. "Higher Rank hat Z and FK." SIGMA. Symmetry, Integrability and Geometry: Methods and Applications 16 (2020): 044.



\bibitem{GHP}
Gukov, Sergei, Po-Shen Hsin, and Du Pei. "Generalized global symmetries of T [M] theories. Part I." Journal of High Energy Physics 2021.4 (2021): 1-111.

\bibitem{Milnor1965} Milnor, John W. “Remarks Concerning Spin Manifolds.” (1965).

\bibitem{KirbyTaylor} Kirby, Robion C., and Laurence R. Taylor. "Pin structures on low-dimensional manifolds." Geometry of low-dimensional manifolds 2 (1990): 177-242.

\bibitem{AtiyahSpin}Atiyah, Michael F. "Riemann surfaces and spin structures." In Annales scientifiques de l'École Normale Supérieure, vol. 4, no. 1, pp. 47-62. 1971.

\bibitem{Johnson} Johnson, Dennis. "Spin structures and quadratic forms on surfaces." Journal of the London Mathematical Society 2, no. 2 (1980): 365-373.

\bibitem{Beliakova1996} Beliakova, Anna. “Spin Topological Quantum Field Theories.” International Journal of Mathematics 09 (1996): 129-152.

\bibitem{BM1996}C. Blanchet. G. Masbaum. "Topological quantum field theories for surfaces with spin structure." Duke Math. J. 82 (2) 229 - 267, February 1996. https://doi.org/10.1215/S0012-7094-96-08211-3

\bibitem{DM2002}Deloup, Florian L. and Gwenael Massuyeau. “Quadratic functions and complex spin structures on three-manifolds.” Topology 44 (2002): 509-555.

\bibitem{AJK} Akhmechet, Rostislav, Peter K. Johnson, and Vyacheslav Krushkal. "Lattice cohomology and q-series invariants of 3-manifolds." Journal für die reine und angewandte Mathematik (Crelles Journal) 2023.796 (2023): 269-299.

\bibitem{3dmod}
Cheng, Miranda CN, et al. "3d modularity." Journal of High Energy Physics 2019.10 (2019): 1-95.

\bibitem{Mik}
Mikhaylov, Victor. "Analytic torsion, 3d mirror symmetry and supergroup Chern-Simons theories." arXiv preprint arXiv:1505.03130 (2015).


\end{thebibliography}
\end{document}